\renewcommand{\Re}{\mathrm{Re}}
\newcommand{\alt}{\widetilde{\al}}
\newcommand{\RW}{\mathfrak{R}}
\newcommand{\Ka}{\mathfrak{K}}
\newcommand{\Einfty}{\overset{{\scriptscriptstyle\infty}}{\mathrm{E}}}
\newcommand{\ellmode}{\ell}
\newcommand{\Psic}{\widetilde{\Psi}}
\title{Linear Stability of Schwarzschild-Anti-de Sitter spacetimes II: \\ Logarithmic decay of solutions to the Teukolsky system}
\author[1]{Olivier Graf\thanks{olivier.graf@univ-grenoble-alpes.fr}}
\author[2,3]{Gustav Holzegel\thanks{gholzegel@uni-muenster.de}}
\affil[1]{\small Univ.~Grenoble~Alpes, CNRS, IF, 38000 Grenoble, France \vskip.2pc \ }
\affil[2]{\small Universit\"at M\"unster,
Mathematisches~Institut, Einsteinstrasse~62,~48149~M\"unster,~Bundesrepublik~Deutschland \vskip.2pc \ }
\affil[3]{\small Imperial College London,
Department of Mathematics,
South~Kensington~Campus,~London~SW7~2AZ,~United~Kingdom}
\begin{document}
\maketitle
\begin{abstract}
We prove boundedness and inverse logarithmic decay in time of solutions to the Teukolsky equations on Schwarzschild-Anti-de Sitter backgrounds with standard boundary conditions originating from fixing the conformal class of the non-linear metric on the boundary. The proofs rely on (1) a physical space transformation theory between the Teukolsky equations and the Regge-Wheeler equations on Schwarzschild-Anti de Sitter backgrounds and (2) novel energy and Carleman estimates handling the coupling of the two Teukolsky equations through the boundary conditions thereby generalising earlier work of \cite{Hol.Smu13} for the covariant wave equation. Specifically, we also produce purely physical space Carleman estimates.
%
As shown in our companion paper~\cite{Gra.Hol24b}, the results obtained here are sharp. Finally, the results of the present paper form a central ingredient in our proof of the full linear stability of the Schwarzschild-Anti-de Sitter family under gravitational perturbations presented in~\cite{Gra.Hol}.
\end{abstract}

\hypersetup{linkcolor=black}
\setcounter{tocdepth}{2}
\tableofcontents

\hypersetup{linkcolor=MidnightBlue}

\section{Introduction}\label{sec:intro}
The Teukolsky wave equations on Schwarzschild and Kerr spacetimes, as well as on their de Sitter (dS) and Anti-de Sitter (AdS) counterparts, are the fundamental equations governing the dynamics of linear perturbations of black hole solutions of the vacuum Einstein equations. The equations were originally derived by Teukolsky~\cite{Teu72} in the physics literature and form one of the landmarks of the ``golden age" of black hole physics. In the asymptotically flat case, there is now an extensive mathematical literature on the long time behaviour of general solutions to the Teukolsky equations: The works of ~\cite{Daf.Hol.Rod19,Daf.Hol.Rod19a, Ma20} developed a robust understanding of the boundedness and decay properties of solutions in the Schwarzschild and slowly rotating Kerr case, which forms a key ingredient in the recent proofs of the nonlinear stability of these spacetimes~\cite{Daf.Hol.Rod.Tay21, Gio.Kla.Sze22}. More recently, a treatment of the full-subextremal range $|a|<M$ was provided in \cite{Shl.Tei23}. Detailed asymptotics on the long time behaviour have been proven in \cite{Ma.Zha23,Mil23}.

In the Kerr-dS and Kerr-AdS case, however, only mode stability results have been obtained for the Teukolsky equations, see~\cite{Cas.Tei21} and~\cite{Gra.Hol23}. Nevertheless, for slowly rotating Kerr-dS black holes, non-linear stability has been established  \cite{Hin.Vas18,Fan21,Fan22}. The reason is that the aforementioned papers employ the wave gauge and hence do not require any results about the Teukolsky equations.\footnote{The stronger robustness towards the choice of gauge in the Kerr-dS case may be viewed a consequence of the exponential decay of linear perturbations.} The fate of non-linear perturbations of Kerr-AdS black holes is a major open problem discussed more in our companion paper \cite{Gra.Hol}, to which we also refer the reader for more detailed background and introduction.

In this paper, we shall obtain definite decay results for the evolution of the Teukolsky equations on the Schwarzschild-AdS spacetime. The introduction proceeds directly with the definition of the Schwarzschild-AdS metric and the system of Teukolsky equations in Sections~\ref{sec:defSadS} and \ref{sec:defTeukintro}. The transformation theory for the Teukolsky equations, which is at the algebraic heart of the paper, is reviewed in Section \ref{sec:chandraintro}.
Defining the norms in Section \ref{sec:normsintro} will then allow us to give a precise formulation of our main theorem in Section~\ref{sec:upperboundintro} as well as a brief overview of the proof in Section~\ref{sec:overviewupperintro}. The remainder of the paper, Sections \ref{sec:RWbound}--\ref{sec:RWtoTeuk}, is then concerned with a detailed proof of the main theorem.

\subsection{The Schwarzschild-AdS background}\label{sec:defSadS}
Let $k \geq 0$, $M\geq 0$ be fixed parameters. The family of Schwarzschild-Anti-de Sitter metrics 
%
\begin{align}
  \label{eq:gSadSintro}
  g_{\mathrm{SAdS}} & = -\left(1 + k^2 r^2 -\frac{2M}{r}\right) (\d t^\ast)^2 + \frac{4M}{r(1+k^2 r^2)} dt^\ast \d r +\frac{1 + k^2 r^2 +\frac{2M}{r}}{(1+k^2 r^2)^2} \d r^2 +r^2\le(\d\varth^2+\sin^2\varth\d\varphi^2\ri),
\end{align}
on $\mathcal{M}:=(-\infty,\infty)_{t^\ast} \times [r_+,\infty)_r \times \SSS^2$ (with $r=r_+$ the largest real zero of $\Delta = r^2 + k^2 r^4 - 2Mr $) is the unique spherically symmetric solution of the Einstein equations with negative cosmological constant, $R_{\mu \nu} = -3k^2 g_{\mu \nu}$. The set $\mathcal{H}^+:=\mathcal{M} \cap \{ r= r_+\}$ defines a null boundary of $\mathcal{M}$, called the \emph{future event horizon}. See the Penrose diagram in Figure~\ref{fig:penroseSadS} below for a depiction of the geometry. Its perhaps most distinguishing feature is the existence of a timelike conformal boundary at infinity.

Defining the \emph{radial tortoise coordinate} $r^\ast$ and the time coordinate $t$ by
\begin{align} \label{tortoise}
  \frac{\d r^\ast}{\d r} & := \frac{r^2}{\De}, & r^\ast\le(r=+\infty\ri) & = \frac{\pi}{2}, &  t & := t^\ast - r^\ast + \frac{1}{k}\arctan(k r)
\end{align}
one can express the metric in $(t,r,\varth, \varphi)$ coordinates to obtain the more familiar Schwarzschildean form
\begin{align*}
  g_{\mathrm{SAdS}} =  -\left(1+ k^2r^2 -\frac{2M}{r} \right)\d t^2  +\left(1+k^2r^2-\frac{2M}{r}\right)^{-1} \d r^2  +r^2\le(\d\varth^2+\sin^2\varth\d\varphi^2\ri),
\end{align*}
which is well-defined on the interior of $\mathcal{M}$. We will use both the $(t,r,\theta,\phi)$ and the regular $(t^\ast, r, \theta,\phi)$ coordinate systems as well as the (trivially related by a rescaling of $r$) coordinate systems $(t,r^\ast,\theta,\phi)$ and $(t^\ast, r^\ast, \theta,\phi)$.
\begin{figure}[h!]
  \centering
  \includegraphics[height=6cm]{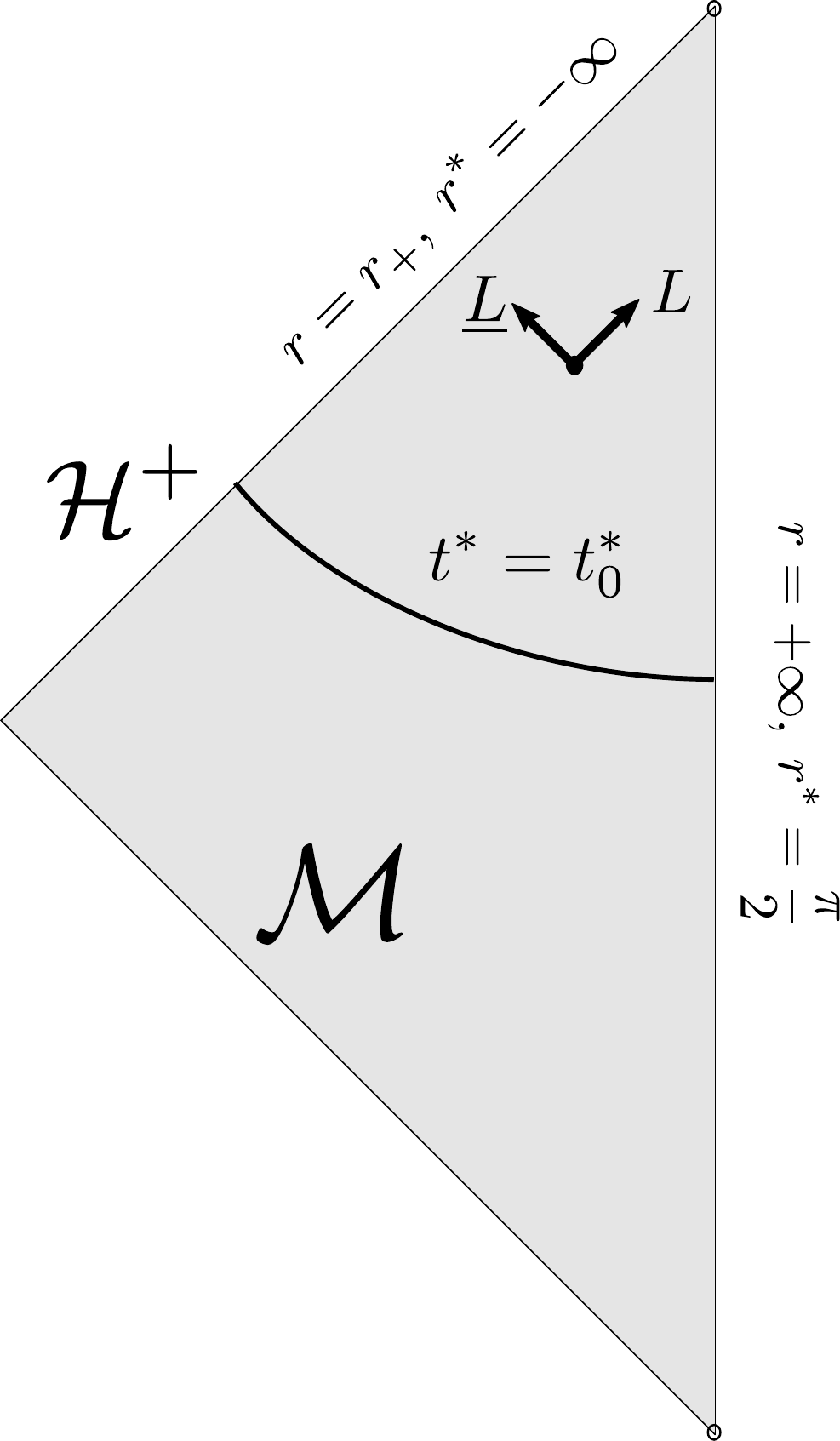}
  \caption{Penrose diagram of the Schwarzschild-AdS spacetime}
  \label{fig:penroseSadS}
\end{figure}

We define the pair of future directed null vectorfields $L,\Lb$, expressed in the respective coordinates by
\begin{subequations}\label{eq:defLLb}
  \begin{align}
    L & := \pr_t |_{(t,r,\varth,\varphi)}+\frac{\De}{r^2}\pr_r |_{(t,r,\varth,\varphi)}= \frac{\De_+}{\De_0}\pr_{t^\ast} |_{(t^\ast,r,\varth,\varphi)}+ \frac{\De}{r^2}\pr_r|_{(t^\ast,r,\varth,\varphi)}   \, , \\ 
  \Lb & :=  \pr_t |_{(t,r,\varth,\varphi)}+\frac{\De}{r^2}\pr_r |_{(t,r,\varth,\varphi)}=  \frac{\De_-}{\De_0}\pr_{t^\ast} |_{(t^\ast,r,\varth,\varphi)}- \frac{\De}{r^2}\pr_r|_{(t^\ast,r,\varth,\varphi)} 
  \, ,   
  \end{align}
 \end{subequations}
  where, here and in the sequel, we abbreviate 
\begin{align} \label{wdef}
  \De_\pm & := r^2+k^2r^4\pm 2Mr, & \De_0 & := r^2+k^2r^4, & w=\frac{\Delta}{r^4} .
\end{align}
Note that $\Delta_-=\Delta$ and that the vectorfields $\Delta^{-1} \underline{L}$ and $L$ extend smoothly to the event horizon $\mathcal{H}^+$. 

\subsection{The Teukolsky equations}\label{sec:defTeukintro}

The Teukolsky equations on Schwarzschild-Anti-de Sitter spacetime $(\mathcal{M}, g_{\mathrm{SAdS}})$ take the following form, which is taken from \cite{Kha83}. Alternatively, the reader can consult our companion paper \cite{Gra.Hol}) for a derivation of (\ref{eq:Teukoriginal}) from the linearised Einstein equations in double null gauge.
\begin{subequations}\label{eq:Teukoriginal}
  \begin{align}
    \begin{aligned}
      0 & = \Box_{g_{\mathrm{SAdS}}}\al^{[+2]} + \frac{2}{r^2}\frac{\d\De}{\d r}\pr_r\al^{[+2]}  + \frac{4}{r^2}\le(\frac{r^2}{2\De}\frac{\d\De}{\d r}-2r\ri)\pr_t\al^{[+2]} + \frac{4}{r^2}i\frac{\cos\varth}{\sin^2\varth}\pr_\varphi\al^{[+2]}\\
      & \quad + \frac{2}{r^2}\le(1+15k^2r^2-2\cot^2\varth\ri)\al^{[+2]},
    \end{aligned}
  \end{align}
  and
  \begin{align} 
      \begin{aligned}
      0 & = \Box_{g_{\mathrm{SAdS}}}\al^{[-2]} - \frac{2}{r^2}\frac{\d\De}{\d r}\pr_r\al^{[-2]}  - \frac{4}{r^2}\le(\frac{r^2}{2\De}\frac{\d\De}{\d r}-2r\ri)\pr_t\al^{[-2]} - \frac{4}{r^2}i\frac{\cos\varth}{\sin^2\varth}\pr_\varphi\al^{[-2]}\\
      & \quad + \frac{2}{r^2}\le(-1+3k^2r^2-2\cot^2\varth\ri)\al^{[-2]},
    \end{aligned}
  \end{align}
    where $\Box_g$ is the d'Alembertian operator associated to the metric $g=g_{\mathrm{SAdS}}$,
  \begin{align*}
  \Box_{g_{\mathrm{SAdS}}} & = -\frac{r^2}{\De}\pr_t^2 + \frac{1}{r^2}\pr_r\le(\De\pr_r\ri) + \frac{1}{r^2\sin\varth}\pr_\varth\le(\sin\varth\pr_\varth\ri) + \frac{1}{r^2\sin^2\varth}\pr_\varphi^2,
  \end{align*}
\end{subequations}
and $\al^{[+2]}$, $\al^{[-2]}$ are complex-valued spin-weighted functions of weight $+2$ and $-2$ respectively.\footnote{See \cite{Gra.Hol23} for a definition. For practical purposes the unfamiliar reader may think of $\al^{[+2]}$, $\al^{[-2]}$ simply as complex scalars.} In view of~\eqref{eq:Teukoriginal}, we define $\LL^{[\pm2]}$ to be the following angular operators
\begin{align*}
  -\LL^{[\pm2]} & := \frac{1}{\sin\varth}\pr_\varth\le(\sin\varth\,\pr_\varth\ri) + \frac{1}{\sin^2\varth}\pr_\varphi^2 + 2(\pm2)i\frac{\cos\varth}{\sin^2\varth}\pr_\varphi - 4\cot^2\varth -4.
\end{align*}
\begin{remark} \label{rem:eigenbasis}
As is well-known, the operators $\LL^{[\pm2]}$ admit a Hilbert basis of eigenfunctions $\le(e^{\pm im\varphi}S_{m\ellmode}(\varth)\ri)_{\ellmode\geq 2,|m|\leq\ellmode}$ with eigenvalues $\ellmode(\ellmode+1)$. See for instance Section 6.1 of \cite{Daf.Hol.Rod19a} and references therein.
\end{remark}

We now briefly discuss the well-posedness theory of the system (\ref{eq:Teukoriginal}) on $\mathcal{M}$ to introduce the class of solutions we would like to study. The solutions will be regular at the horizon and infinity in the following sense:
\begin{definition} \label{def:regular}
A smooth spin-weighted function $\phi$ on $\mathcal{M} \setminus \mathcal{H}^+$ is called regular at the future event horizon if
  \begin{align}\label{eq:defreghorgeneral}
    \begin{aligned}
  \sup_{\mathcal{M} \setminus \mathcal{H}^+ \cap \left[t_1^\ast,t_2^\ast\right] \cap  \{r \leq 3M\}} |    L^p\le(\De^{-1}\Lb\ri)^q\phi | < \infty \ \ \ \textrm{ holds for all $t^\ast_1,t^\ast_2 \in \mathbb{R}$ and all $p,q\in\mathbb{N}$. }
    \end{aligned}
  \end{align}
Similarly, a smooth spin weighted function $\phi$ on $\mathcal{M} \setminus \mathcal{H}^+$ is called regular at infinity if
  \begin{align}\label{eq:defreginfgeneral}
    \begin{aligned}
  \sup_{\mathcal{M}  \cap \left[t_1^\ast,t_2^\ast\right] \cap  \{r \geq 8M\}} |    (r^2 (L-\underline{L}))^p(L+\underline{L})^q \phi| < \infty \ \ \ \textrm{ holds for all $t^\ast_1,t^\ast_2 \in \mathbb{R}$ and all $p,q\in\mathbb{N}$. }
    \end{aligned}
  \end{align}
\end{definition}
It is not hard to see that with these definitions, the quantities inside the norm in (\ref{eq:defreghorgeneral}) and (\ref{eq:defreginfgeneral}) extend continuously for all $p,q \in \mathbb{N}$ to the future event horizon and to the conformal boundary respectively.

\begin{definition} \label{def:alparel}
We will call a solution $(\al^{\pm2})$ of (\ref{eq:Teukoriginal}) on $\mathcal{M} \setminus \mathcal{H}^+$ \underline{future regular} if defining 
\begin{align} \label{refo}
  \widetilde{\al}^{[+2]} & := \De^2r^{-3}\al^{[+2]}, & \widetilde{\al}^{[-2]} & := r^{-3}\al^{[-2]},
\end{align}
then $\alt^{[+2]}$ and $\De^{-2}\alt^{[-2]}$ are regular at the future event horizon $\mathcal{H}^+$ and  $\alt^{[+2]}$ and $\alt^{[-2]}$ are regular at infinity.
\end{definition}

The presence of the conformal boundary at infinity requires to impose boundary conditions on the solution: 

\begin{definition}\label{def:sysTeuk}
We will say that a solution $(\al^{[\pm2]})$ of (\ref{eq:Teukoriginal}) on $\mathcal{M} \setminus \mathcal{H}^+$ satisfies \underline{conformal Teukolsky Anti-de Sitter} \underline{boundary conditions at infinity} if (with the $\ast$ denoting complex conjugation)
\begin{subequations}\label{eq:TeukBC}
\begin{align}
  \widetilde{\al}^{[+2]} - \left(\widetilde{\al}^{[-2]}\right)^\ast & \xrightarrow{r\to+\infty} 0,\label{eq:TeukBCDirichlet} \\
  r^2\pr_r\widetilde{\al}^{[+2]} + r^2\pr_r\big(\widetilde{\al}^{[-2]}\big)^\ast  & \xrightarrow{r\to+\infty} 0.\label{eq:TeukBCNeumann}
\end{align}
\end{subequations}
Finally, we will say that a smooth solution $(\al^{[\pm2]})$ of (\ref{eq:Teukoriginal}) on $\mathcal{M} \setminus \mathcal{H}^+$ is a \underline{solution of the Teukolsky problem} on $\mathcal{M}$ if $(\al^{[\pm2]})$ is future regular and satisfies the boundary conditions~\eqref{eq:TeukBC}.
\end{definition}
See~our \cite{Gra.Hol} and also \cite{Hol.Luk.Smu.War20} for a more detailed discussion and derivation of these boundary conditions. Here it suffices to say that these boundary conditions can be derived from the requirement of metric perturbations preserving the conformal class of the Anti-de Sitter metric at infinity.

\begin{remark}
In the well-posedness theory, solutions to the Teukolsky problem are constructed to the future of a fixed constant $t^\ast$-slice, $\Sigma_{t^\ast_0}$, on which smooth initial data are prescribed. One can of course replace $\mathcal{M}$ by $\mathcal{M} \cap J^+ \left(\Sigma_{t^\ast=t^\ast_0}\right)$ in all of the above definitions to accommodate solutions defined only on $\mathcal{M} \cap J^+ \left(\Sigma_{t^\ast=t^\ast_0}\right)$.
\end{remark}

Finally, to specify initial data on $\Sigma_{t^\ast_0}$ one prescribes the functions ($\al^{[\pm2]}$, $\partial_t \al^{[\pm2]}$) as smooth spin-weighted functions from $\Sigma_{t^\ast_0}$ to $\mathbb{C}$. The equations \eqref{eq:Teukoriginal} will then determine all derivatives of the prospective solution on $\Sigma_{t^\ast_0}$ in terms of the data ($\al^{[\pm2]}$, $\partial_t \al^{[\pm2]}$), as $\Sigma_{t^\ast_0}$ is non-characteristic. One hence may check whether the solution is future regular \emph{on} $\Sigma_{t^\ast_0}$ in the sense of Definition \ref{def:regular} and whether the boundary conditions  (\ref{eq:TeukBC}) are satisfied  \emph{on} $\Sigma_{t^\ast_0}$. If both is the case, we call the prescribed data ($\al^{[\pm2]}$, $\partial_t \al^{[\pm2]}$) admissible. The well-posedness theorem for (\ref{eq:Teukoriginal}) can then be stated as follows; see Theorem 1.4 of \cite{Gra.Hol23}.
\begin{theorem}[Well-posedness]\label{thm:IBVP}
  Let $t^\ast_0\in\RRR$. Given smooth admissible initial data ($\al^{[\pm2]}$, $\partial_t \al^{[\pm2]}$) on $\Sigma_{t^\ast_0}$, there exists a unique solution $(\al^{[\pm2]})$ to the Teukolsky problem on $\mathcal{M} \cap J^+ \left(\Sigma_{t^\ast=t^\ast_0}\right)$ assuming the given data.
\end{theorem}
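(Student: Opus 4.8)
The plan is to regard \eqref{eq:Teukoriginal}--\eqref{eq:TeukBC} as an initial-boundary value problem for a coupled pair of wave equations on $\mathcal{M}$, with a characteristic (null) boundary at $\mathcal{H}^+$, where no condition is needed, and a timelike conformal boundary at $r=\infty$, where the coupled conditions \eqref{eq:TeukBC} are imposed. First I would pass to the rescaled unknowns $\alt^{[\pm2]}$ of \eqref{refo}: in these variables the equations are uniformly non-degenerate at $\mathcal{H}^+$ (this is the content of the ``regular at the future event horizon'' part of Definition~\ref{def:alparel}), while near $r=\infty$ they take the form of conformally-coupled Klein--Gordon-type equations on an asymptotically AdS background, for which \eqref{eq:TeukBCDirichlet} is a Dirichlet-type and \eqref{eq:TeukBCNeumann} a Neumann-type reflecting condition. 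By Remark~\ref{rem:eigenbasis} one may decompose onto the angular eigenfunctions $e^{\pm im\varphi}S_{m\ellmode}(\varth)$; since the operators $\LL^{[\pm2]}$ are thereby diagonalised, each mode satisfies, for fixed $(\ellmode,m)$, a pair of $1{+}1$-dimensional wave equations in $(t,r^\ast)$ on the half-line $r^\ast\in(-\infty,\pi/2)$ coupled only through \eqref{eq:TeukBC} at $r^\ast=\pi/2$. This reduces the problem to a ``system'' version of the well-posedness theory for waves on asymptotically AdS spacetimes developed for the covariant scalar wave equation in \cite{Hol.Smu13}; finite speed of propagation in $t^\ast$ lets one localise the construction near $\mathcal{H}^+$, near $r=\infty$, and in the interior.

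\textbf{Existence.} For each mode I would build the solution from a renormalised (twisted) energy adapted to the timelike Killing field $\pr_{t^\ast}$ (note $\pr_{t^\ast}=\pr_t$ since $t$ and $t^\ast$ differ by a function of $r$). Schematically,
\[
\mathcal{E}[t^\ast]\;=\;\int_{\Sigma_{t^\ast}}\Big(|\pr_{t^\ast}\alt^{[+2]}|^2+w\,|\pr_{r^\ast}\alt^{[+2]}|^2+(\text{potential and angular terms})\Big)+\big(\,[+2]\leftrightarrow[-2]\,\big),
\]
with weights near $r^\ast=\pi/2$ chosen so that the $k^2r^2$-potentials and the first-order $\pr_t$-terms in \eqref{eq:Teukoriginal} are absorbed and so that $\mathcal{E}$ is finite on admissible data. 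The energy identity between two slices produces a flux through $\mathcal{H}^+$ of a favourable sign and a boundary flux at $r^\ast=\pi/2$ of the schematic form $\Re\big[(r^2\pr_r\alt^{[+2]}+r^2\pr_r(\alt^{[-2]})^\ast)\cdot\overline{(\pr_{t^\ast}\alt^{[+2]}-\pr_{t^\ast}(\alt^{[-2]})^\ast)}\big]$, which vanishes in the limit since \eqref{eq:TeukBCDirichlet} kills the second factor and \eqref{eq:TeukBCNeumann} the first; hence $\mathcal{E}$ is non-increasing to the future. With this a priori estimate, existence for each mode follows from a Galerkin approximation in $r^\ast$ using a basis adapted to the boundary behaviour at $r^\ast=\pi/2$ (so that \eqref{eq:TeukBC} survives passing to the limit), the two equations being coupled only through that basis; weak compactness from the uniform energy bound yields a finite-energy solution. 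Summing over $(\ellmode,m)$ and using that smoothness of the data forces super-polynomial decay of the mode energies --- obtained by commuting with $\LL^{[\pm2]}$ --- produces a solution on all of $\mathcal{M}\cap J^+(\Sigma_{t^\ast_0})$.

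\textbf{Regularity and uniqueness.} To upgrade this to a smooth future-regular solution in the sense of Definition~\ref{def:regular}, I would commute the equations with: (i) $T=\pr_{t^\ast}$, which preserves both \eqref{eq:Teukoriginal} and the linear conditions \eqref{eq:TeukBC}, controlling all $\pr_{t^\ast}^k\alt^{[\pm2]}$ in energy; (ii) the redshift-type vectorfields $L$ and $\De^{-1}\Lb$ near $\mathcal{H}^+$, whose favourable commutator sign yields the horizon bounds \eqref{eq:defreghorgeneral}; and (iii) the boundary-regular vectorfields $r^2(L-\Lb)$ and $L+\Lb$ of Definition~\ref{def:regular} near $r=\infty$, combined with using the equations themselves to trade $\pr_{t^\ast}$-derivatives for spatial derivatives, to recover \eqref{eq:defreginfgeneral}. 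Iterating yields $C^\infty$ regularity; the compatibility (corner) conditions at $\Sigma_{t^\ast_0}\cap\{r=\infty\}$ required at each order of commutation are exactly those built into the admissibility of the data, so no extra hypothesis is needed. Uniqueness is then immediate from linearity: the difference of two solutions has vanishing data and vanishing boundary flux, so the energy identity forces $\mathcal{E}\equiv0$ to the future of $\Sigma_{t^\ast_0}$.

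\textbf{Main obstacle.} The crux is the conformal boundary at $r=\infty$ together with the coupled conditions \eqref{eq:TeukBC}. Because the natural energy on AdS is only a \emph{twisted}, boundary-degenerate quantity, one must (a) verify that the reflecting pair \eqref{eq:TeukBCDirichlet}--\eqref{eq:TeukBCNeumann} is of the energy-conservative type for the \emph{coupled} system --- not merely for two independent scalars --- so that the boundary flux genuinely cancels, and (b) pin down the boundary weights and the accompanying Hardy-type inequalities near $r^\ast=\pi/2$ precisely enough that the $k^2r^2$, $\cot^2\varth$ and first-order terms in \eqref{eq:Teukoriginal} are controlled and the Galerkin basis respects the coupling. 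This delicate bookkeeping near infinity is where the bulk of the work lies; the interior and horizon regions are then handled by standard hyperbolic energy and redshift estimates.
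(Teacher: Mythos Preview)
The paper does not actually prove Theorem~\ref{thm:IBVP}; it is quoted from the companion paper \cite{Gra.Hol23} (see the sentence immediately preceding the theorem statement). So there is no ``paper's own proof'' to compare against here. That said, your outline is broadly in the spirit of the asymptotically-AdS well-posedness theory of \cite{Hol09,Hol.Smu13,War15} adapted to the coupled Teukolsky system, and the key structural observation---that the pair \eqref{eq:TeukBCDirichlet}--\eqref{eq:TeukBCNeumann} makes the combined boundary flux vanish---is correct.

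Two points of imprecision are worth flagging. First, your schematic boundary flux is not quite the expression that arises: multiplying each equation by $\pr_{t^\ast}$ of its own unknown and summing, the flux at $r^\ast=\pi/2$ is (with $u=\alt^{[+2]}$, $v=(\alt^{[-2]})^\ast$)
\[
\Re\big[\pr_{r^\ast}u\,\overline{\pr_{t^\ast}u}\big]+\Re\big[\pr_{r^\ast}v\,\overline{\pr_{t^\ast}v}\big]
=\tfrac12\Re\big[(\pr_{r^\ast}u+\pr_{r^\ast}v)\overline{\pr_{t^\ast}(u+v)}\big]+\tfrac12\Re\big[(\pr_{r^\ast}u-\pr_{r^\ast}v)\overline{\pr_{t^\ast}(u-v)}\big],
\]
and it is these two \emph{diagonal} terms that vanish, the first by \eqref{eq:TeukBCNeumann} and the second by \eqref{eq:TeukBCDirichlet}; your cross-term product happens to vanish too, but it is not the flux. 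Second, the Teukolsky equations \eqref{eq:Teukoriginal} carry genuine first-order $\pr_t$-terms with nonvanishing coefficients, so the $T$-energy for $\alt^{[\pm2]}$ is \emph{not} non-increasing as you claim; rather, the energy identity produces bulk terms bounded by the energy itself, and a Gr\"onwall argument on finite $t^\ast$-intervals is what closes the a~priori estimate. This is harmless for well-posedness (and for the commutation argument you describe), but the word ``non-increasing'' should be replaced by ``bounded on compact time intervals''. With these corrections your sketch is a viable route to the theorem.
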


It is the global in time behaviour of solutions arising from Theorem \ref{thm:IBVP} that we wish to study in this paper. 

\subsection{Chandrasekhar transformations and the Regge-Wheeler equations}\label{sec:chandraintro}
We review here the generalisation of the physical space transformation theory introduced for the Teukolsky equation in the asymptotically flat case in \cite{Daf.Hol.Rod19}. 
\begin{definition}\label{def:Chandra}
Recall from (\ref{wdef}) the weight $w=\frac{\Delta}{r^4}$. Given a smooth solution $\al^{[\pm 2]}$ of the Teukolsky problem, the \emph{Chandrasekhar transformations} of $\widetilde{\al}^{[\pm2]}$ are the functions $\psi^{[\pm2]},\Psi^{[\pm2]}$ given by
  \begin{subequations}\label{eq:Chandra}
    \begin{align}
      \psi^{[+2]} & := w^{-1}\Lb\widetilde{\al}^{[+2]}, & \Psi^{[+2]} & := w^{-1}\Lb\psi^{[+2]},\label{eq:Chandrap2}\\
      \psi^{[-2]} & := w^{-1}L\widetilde{\al}^{[-2]}, & \Psi^{[-2]} & := w^{-1}L\psi^{[-2]}.\label{eq:Chandram2}
    \end{align}
  \end{subequations}
\end{definition}
It is straightforward to verify that if $(\al^{[\pm2]})$ is future regular, then $\Psi^{[+2]},\Psi^{[-2]}$ (and $\Psi^D,\Psi^R$ defined below) are also regular at the horizon and infinity in the sense of Definition \ref{def:regular}.
Moreover, as one readily checks, the transformations~\eqref{eq:Chandra} map solutions $\al^{[\pm 2]}$ of the Teukolsky equations~\eqref{eq:Teukoriginal} to quantities $\Psi^{[\pm 2]}$, which satisfy the Regge-Wheeler equations
\begin{align}\label{eq:RW}
  \RW^{[\pm2]} \Psi^{[\pm 2]}:= -L\Lb\Psi^{[\pm2]}- \frac{\De}{r^4}\le(\LL^{[\pm2]}-\frac{6M}{r}\ri)  \Psi^{[\pm 2]}= 0.
\end{align}
Remarkably, as in the Schwarzschild case \cite{Daf.Hol.Rod19}, the Regge-Wheeler equation (\ref{eq:RW}) does not couple to the quantities $\widetilde{\al}^{[\pm2]}$ and $\psi^{[\pm2]}$. However, unlike in the asymptotically flat case, the boundary values for the $\Psi^{[\pm 2]}$ are now coupled to one another \emph{and} to the lower order Teukolsky boundary values as follows:
\begin{align}\label{bcintro}
  \Psi^{[+2]}-(\Psi^{[-2]})^\ast  & \xrightarrow{r\to+\infty} 0, & r^2 \partial_r \Psi^{[+2]} + r^2 \partial_r (\Psi^{[-2]})^\ast & \xrightarrow{r\to+\infty} -6M \widetilde{\al}^{[+2]} -6M (\widetilde{\al}^{[-2]})^\ast. 
\end{align}
See Proposition \ref{prop:ChandraAux} below for a summary of the wave equations and boundary conditions satisfied by the $\widetilde{\al}^{[\pm 2]}$, $\psi^{[\pm 2]}$ and $\Psi^{[\pm 2]}$. 
Using~\eqref{bcintro}, (\ref{eq:RW}) we can show that $\Psi^{[\pm 2]}$ also satisfies higher order boundary conditions
\begin{subequations}\label{eq:RWBC}
  \begin{align}
    \Psi^{[+2]}-(\Psi^{[-2]})^\ast & \xrightarrow{r\to+\infty} 0,\label{eq:RWBCD}\\
    LL\Psi^{[+2]} + \Lb\Lb(\Psi^{[-2]})^\ast + \frac{1}{6M}\LL\le(\LL-2\ri)\le(L\Psi^{[+2]} - \Lb(\Psi^{[-2]})^\ast\ri) & \xrightarrow{r\to+\infty} 0,\label{eq:RWBCN}                    
  \end{align}
\end{subequations}
which eliminate the original $\widetilde{\al}^{[\pm 2]}$ from the equations.\footnote{Here and in the following $\LL := \LL^{[+2]}$. Note that $\big(\LL^{[-2]}\Psi\big)^\ast = \LL^{[+2]}\Psi^\ast $.} Moreover, if we define 
\begin{align}\label{eq:defPsiDR}
    \begin{aligned}
    \Psi^D & := \Psi^{[+2]} - \big(\Psi^{[-2]}\big)^\ast, \\
    \Psi^R & := \le(\Psi^{[+2]} + \big(\Psi^{[-2]}\big)^\ast\ri) +12M \left(\LL(\LL-2)\right)^{-1} \pr_t\le(\Psi^{[+2]} - \big(\Psi^{[-2]}\big)^\ast\ri),
    \end{aligned}
\end{align}
the boundary conditions~\eqref{eq:RWBC} for $\Psi^{[+2]}$ and $\Psi^{[-2]}$ decouple as
\begin{subequations}\label{eq:RWBCdeco}
  \begin{align}
    \Psi^D & \xrightarrow{r\to +\infty} 0, \label{eq:RWBCdecoDirichlet}\\
    2\pr_{t}^2\Psi^R + \frac{\LL(\LL-2)}{6M}\pr_{r^\ast}\Psi^R + k^2\LL\Psi^R & \xrightarrow{r\to+\infty} 0. \label{eq:RWBCdecoRobin}
  \end{align}
\end{subequations}
The first condition~\eqref{eq:RWBCdecoDirichlet} is obviously a Dirichlet boundary condition, while the second condition~\eqref{eq:RWBCdecoRobin} can be interpreted as a ``Robin'' condition for each time and angular frequency.

In conclusion, one may therefore study the following \emph{decoupled Regge-Wheeler problem} for $\Psi^{D}$ (resp. $\Psi^R$), independently from the Teukolsky problem of Definition~\ref{def:sysTeuk}, i.e.~study the following class of solutions:
\begin{align}
  \label{sys:RWdeco}
  \begin{aligned}
    \text{$\Psi^{D}$ (resp. $\Psi^R$) satisfies the Regge-Wheeler equations~\eqref{eq:RW} on $\mathcal{M}$ (or $\mathcal{M} \cap \{t^\ast \geq t^\ast_0\}$)},\\
    \text{$\Psi^{D}$ (resp. $\Psi^R$) is regular at the horizon and infinity in the sense of Definition \ref{def:regular}},\\
    \text{$\Psi^{D}$ (resp. $\Psi^R$) satisfies the boundary condition~\eqref{eq:RWBCdecoDirichlet} (resp.~\eqref{eq:RWBCdecoRobin})}.
  \end{aligned}
\end{align}

\begin{remark}\label{rem:TSid}
  When $\al^{[+2]},\al^{[-2]}$ are the Teukolsky quantities associated to a solution to the linearised gravity equations, $\al^{[+2]},\al^{[-2]}$ satisfy additional, coupled, so-called \emph{Teukolsky-Starobinsky identities}. They will not play a role in this paper. 
\end{remark}

\begin{remark}
Unlike in the asymptotically flat case, the Chandrasekhar transformations~\eqref{eq:Chandra} are actually injective on the space of solutions to the Teukolsky problem. See~\cite{Gra.Hol24b} for further discussion.
\end{remark}

\subsection{Norms and Energies}\label{sec:normsintro}
Given a spin-$\pm2$-weighted function $f$ on $\mathcal{M}$, we define $f_{m\ellmode}$ to be the coefficients of $f$ on the angular Hilbert basis $\le(e^{\pm im\varphi}S_{m\ellmode}(\varth)\ri)_{\ellmode\geq 2,|m|\leq\ellmode}$ of $\LL^{[\pm2]}$, see Remark \ref{rem:eigenbasis}.

For all $\tilde{t}^\ast \in\mathbb{R}$ we write $\Si_{\tilde{t}^\ast} := \{t^\ast=\tilde{t}^\ast\}$ for the spacelike slices of constant $t^\ast=\tilde{t}^\ast$. For all $\tilde{r} \in[r_+,+\infty]$, we write $S_{\tilde{t}^\ast, \tilde{r}} := \{t^\ast=\tilde{t}^\ast,~r=\tilde{r}\}$ for the spheres at constant $t^\ast=\tilde{t}^\ast$ and $r=\tilde{r}$. 

We now define the spin-weighted Sobolev norms on the spheres $S_{{t}^\ast, {r}}$ and the slices $\Si_{t^\ast}$. For later purposes in the paper, it is actually most convenient to define them in (angular) frequency space. We define for $s\in \mathbb{R}$
\begin{align*}
  \norm{f}^2_{H^s(S_{t^\ast,r})} :=  \sum_{\ellmode\geq 2} \sum_{|m|\leq\ellmode} \ellmode^{2s}|f_{m\ellmode}(t^\ast,r)|^2 \, .
\end{align*}
Note that for $s =n \in \mathbb{N}$, the above norm is equivalent to the standard physical space Sobolev norm for spin-$\pm2$-weighted functions on the spheres $S^2_{t^\ast,r}$ (denoted ${}^{[2]}H^n(\sin \vartheta dr d\vartheta d\varphi)$  in~\cite[Section 2.2]{Daf.Hol.Rod19a}).\footnote{Note the measure is that of the \emph{unit} sphere, not the geometrically induced one. The difference is a factor of $r^{2}$.}


Recalling the definitions (\ref{tortoise}) and (\ref{wdef}) we also define 
\begin{align*}
  \norm{f}_{L^2(\Si_{t^\ast})}^2 & := \sum_{\ellmode\geq 2} \sum_{|m|\leq\ellmode} \int_{r_+}^{+\infty} |f_{m\ellmode}(t^\ast,r)|^2\,\frac{\d r}{r^2} = \sum_{\ellmode\geq 2} \sum_{|m|\leq\ellmode} \int_{-\infty}^{\frac{\pi}{2}} |f_{m\ellmode}(t^\ast,r^\ast)|^2\,w\d r^\ast \nonumber \ , \\
  \norm{f}^2_{H^1(\Si_{t^\ast})} & :=\sum_{\ellmode\geq 2} \sum_{|m|\leq\ellmode}\int_{r_+}^{+\infty}\le(\big|(r^2\pr_r)f_{m\ellmode}(t^\ast,r)\big|^2+ \ellmode^2\big|f_{m\ellmode}(t^\ast,r)\big|^2\ri)\, \frac{\d r}{r^2} \\
                                 & = \sum_{\ellmode\geq 2} \sum_{|m|\leq\ellmode}\int_{-\infty}^{\frac{\pi}{2}}\le(w^{-1}\big|\pr_{r^\ast}f_{m\ellmode}(t^\ast,r^\ast)\big|^2+ w\ellmode^2\big|f_{m\ellmode}(t^\ast,r^\ast)\big|^2\ri)\, \d r^\ast.
\end{align*}
These norms are again easily shown to be equivalent to the standard physical space norms for spin-$\pm2$-weighted functions ${}^{[2]}L^2(\Si_{t^\ast}, \frac{1}{r^2} \sin \vartheta dr d\vartheta d\varphi)$ and ${}^{[2]}H^1(\Si_{t^\ast}, \frac{1}{r^2} \sin \vartheta dr d\vartheta d\varphi)$ respectively. See~\cite[Section 2.2]{Daf.Hol.Rod19a}.

We next define the following energy-type norm for the (Teukolsky) quantities $(\alt^{[+2]},\alt^{[-2]})$ on $\Si_{t^\ast}$ :
\begin{align} \label{Tenergy}
  \begin{aligned}
    \mathrm{E}^{\mathfrak{T}}[\alt](t^\ast) & := \big\Vert\pr_{t^\ast}\alt^{[+2]}\big\Vert_{L^2(\Si_{t^\ast})}^2 + \big\Vert w^{-2}\pr_{t^\ast}\alt^{[-2]}\big\Vert_{L^2(\Si_{t^\ast})}^2+\big\Vert\alt^{[+2]}\big\Vert_{H^{1}(\Si_{t^\ast})}^2 + \big\Vert w^{-2}\alt^{[-2]}\big\Vert_{H^{1}(\Si_{t^\ast})}^2 \, . 
  \end{aligned}
\end{align}
We define the following energy-type norm with boundary terms for the (Regge-Wheeler) quantities $\Psi$ on $\Si_{t^\ast}$:
\begin{align} \label{Renergy}
  \begin{aligned}
    \mathrm{E}^{\mathfrak{R}}[\Psi](t^\ast) & := \norm{\pr_{t^\ast}\Psi}^2_{L^2(\Si_{t^\ast})} + \norm{\Psi}^2_{H^1(\Si_{t^\ast})} + \norm{\pr_{t^\ast}\Psi}^2_{H^{-2}(S_{t^\ast,\infty})} + \norm{\Psi}^2_{H^{-1}(S_{t^\ast,\infty})}.
  \end{aligned}
\end{align}
Given any energy $\mathrm{E}$ (such as $\mathrm{E}^{\mathfrak{T}}[\alt]$ or $\mathrm{E}^{\mathfrak{R}}[\Psi]$ above), we define for $n\geq 1$ the higher order commuted energies\footnote{Note that our definition of the norms and energies allow commutation with the operator $\mathcal{L}$ raised to half-integer powers. We could alternatively (and more cumbersome notationally)  commute with the spin-weighted angular momentum vectorfields.}
\begin{align*}
  \mathrm{E}^n[\Phi]  :=  \sum_{i\leq n-1} \sum_{k_0+k_1=i} \mathrm{E}\le[\LL^{k_0/2}\pr_t^{k_1}\Phi\ri] && \text{and} &&  \overline{\mathbb{E}}^n[\Phi]  := \sum_{i\leq n-1} \sum_{k_0+k_1+k_2=i} \mathrm{E}\le[ \LL^{k_0/2}\pr_t^{k_1} (w^{-1}\underline{L})^{k_2}\Phi\ri] .
\end{align*}
Note that $\overline{\mathrm{E}}^n[\Phi]$ controls higher order derivatives transversal to the horizon and hence all regular derivatives. 

Finally, for all $n\geq 3$, using directly the above convention we define the following combined energy
\begin{align}
  \begin{aligned} \label{combiE}
    \mathrm{E}^{\mathfrak{T},\mathfrak{R},n}[\alt] & := \mathrm{E}^{\mathfrak{T},n}[\alt] + \mathrm{E}^{\mathfrak{R},n-2}\le[\Psi^D\ri] + \mathrm{E}^{\mathfrak{R},n-2}\le[\Psi^R\ri] + \mathrm{E}^{\mathfrak{R},n-2}\le[\LL^{-1}(\LL-2)^{-1}\pr_t\Psi^D\ri],
  \end{aligned}
\end{align}
where we recall that $\Psi^D,\Psi^R$ are combinations~\eqref{eq:defPsiDR} of the Chandrasekhar transformations of $\alt^{[\pm2]}$ defined by~\eqref{eq:Chandra}, see Section~\ref{sec:chandraintro}. Similarly, the energy $\overline{\mathbb{E}}^{\mathfrak{T},\mathfrak{R},n}[\alt]$ is defined replacing $\mathrm{E}$ by $\overline{\mathbb{E}}$ everywhere in (\ref{combiE}).

We finally define the energy associated with an individual angular mode. Given any energy $E$ (such as $\mathrm{E}^{\mathfrak{T}}[\alt]$ or $\mathrm{E}^{\mathfrak{R}}[\Psi]$ or the higher order energies introduced above) we set
\begin{align} \label{individualmodeE}
\mathrm{E}_{m \ell} [\Phi] := \mathrm{E}[\Phi_{m \ell} e^{\pm i m \varphi} S_{m \ell}(\vartheta)] .
\end{align}

For convenience, we recapitulate these and also the energies introduced later in the paper in Appendix~\ref{sec:energynorms}.
\begin{remark}\label{rem:combinednorm}
Some of the terms appearing in the energy (\ref{combiE}) are redundant, in fact one may easily verify
  \begin{align}\label{est:remcombinednorm}
    \mathrm{E}^{\mathfrak{T},\mathfrak{R},3}[\alt](t^\ast) & \simeq_{M,k} \mathrm{E}^{\mathfrak{T},3}[\alt](t^\ast) + \norm{\LL^{-2}\pr_t\Psi^R}^2_{H^{-2}(S_{t^\ast,\infty})} + \mathrm{E}^{\mathfrak{R}}\le[\LL^{-1}(\LL-2)^{-1}\pr_t\Psi^D\ri](t^\ast) 
  \end{align}
  and obvious generalisations replacing $3$ by $n>3$. The last two terms in~\eqref{est:remcombinednorm} cannot be controlled by the first and must necessary be included in the energy  for $\alt$ accounting for the specific definition of $\mathrm{E}^{\mathfrak{T},\mathfrak{R},n}$. The penultimate term in~\eqref{est:remcombinednorm} will arise from the boundary terms in the energy identity for the decoupled Regge-Wheeler quantity $\Psi^R$ (see Section~\ref{sec:boundednessintro}) and the last term from relating $\Psi^R$ to $\Psi^D$ and $\alt$ via~\eqref{eq:defPsiDR}. Note also that if at least one angular derivative is applied, the last two terms in~\eqref{est:remcombinednorm}  \underline{can} be absorbed by the first:
\begin{align} \label{equi}
\mathrm{E}^{\mathfrak{T},\mathfrak{R},n}[\LL^{m/2}\alt](t^\ast) \lesssim \mathrm{E}^{\mathfrak{T},n+m}[\alt](t^\ast)  \ \ \ \textrm{for $m >1$ and $n\geq 3$.} 
\end{align}
\end{remark}

\subsection{The main theorem: Boundedness and logarithmic decay}\label{sec:upperboundintro}
The following theorem is the main result of the paper. To state it, we agree on the following standard convention:
\begin{align*}
  A \les_{a,b} B \Leftrightarrow \exists C > 0,~\text{depending only on $a,b$, such that} ~ A \leq C B, 
\end{align*}
and ``$C = C(a,b)>0$'', as a short-hand for ``$C$ is a positive constant depending only on $a,b$''.

\begin{theorem}[Main theorem]\label{thm:main1}
  Let $\al^{[\pm2]}$ be solutions of the Teukolsky problem arising from Theorem \ref{thm:IBVP}. Then the $\alt^{[\pm2]}$ defined by (\ref{refo}) satisfy the following estimates.
  
  \begin{itemize}
\item {\bf Boundedness in time:}
  \begin{align} \label{mtheob}
    \begin{aligned}
      \mathrm{E}^{\mathfrak{T},\mathfrak{R},n}[\alt](t^\ast) & \les_{M,k} \mathrm{E}^{\mathfrak{T},\mathfrak{R},n}[\alt](t^\ast_0)  
    \end{aligned}
  \end{align}
 hold for all $t^\ast\geq t^\ast_0$ and for all $n>2$.
  \item {\bf Inverse logarithmic decay in time:}
  \begin{align}\label{est:main1log}
    \begin{aligned}
      \mathrm{E}^{\mathfrak{T},\mathfrak{R},n}[\alt](t^\ast) & \les_{M,k} \frac{\mathrm{E}^{\mathfrak{T},\mathfrak{R},n}[\LL^{m/2}\alt](t^\ast_0)}{\le(\log(t^\ast-t^\ast_0)\ri)^{2m}}  
    \end{aligned}
  \end{align}
  holds for all $t^\ast>t^\ast_0$, and for all $n > 2$ and any $m \in\mathbb{N}$.
  \item {\bf Exponential decay of each fixed angular mode} with the rate degenerating exponentially in $\ell$:
   \begin{align}\label{est:main1expo}
    \begin{aligned}
      \mathrm{E}_{m\ellmode}^{\mathfrak{T},\mathfrak{R},n}[\alt](t^\ast) & \leq \mathrm{E}_{m\ellmode}^{\mathfrak{T},\mathfrak{R},n}[\alt](t^\ast_0) \exp\le(e^{-C\ellmode}\le(t^\ast_0-t^\ast\ri)\ri),
    \end{aligned}
  \end{align}
holds  for all $t^\ast>t^\ast_0$, for all $\ellmode\geq 2$, $|m|\leq \ellmode$, and all $n>2$, and with $C=C(M,k)>0$.
\end{itemize}
 \end{theorem}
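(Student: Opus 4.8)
The plan is to prove the three assertions in the order boundedness $\Rightarrow$ logarithmic decay $\Rightarrow$ exponential decay of fixed modes, using the transformation theory of Section~\ref{sec:chandraintro} to reduce everything to statements about the decoupled Regge–Wheeler problems~\eqref{sys:RWdeco} for $\Psi^D$ and $\Psi^R$, and then transferring the conclusions back to $\alt^{[\pm2]}$. First I would establish a boundedness estimate for each decoupled Regge–Wheeler quantity: for $\Psi^D$ (Dirichlet) the standard $T$-energy identity, integrating $\Re(\pr_{t^\ast}\Psi^D \cdot \overline{\RW^{[\pm2]}\Psi^D})$ over $\mathcal{M}\cap\{t^\ast_0\le t^\ast\le \tilde t^\ast\}$, gives conservation of $\mathrm{E}^{\mathfrak R}[\Psi^D]$ up to a horizon flux with a good sign (using that $w^{-1}\underline L$ and $L$ are regular there) and a boundary term at infinity that vanishes by~\eqref{eq:RWBCdecoDirichlet}; for $\Psi^R$ the boundary condition~\eqref{eq:RWBCdecoRobin} is a genuine Robin-type condition, so the boundary term at infinity does not simply vanish but, after decomposing into angular modes and using~\eqref{eq:RWBCdecoRobin} to trade $\pr_{r^\ast}\Psi^R$ for $\pr_{t^\ast}^2\Psi^R$ and $k^2\LL\Psi^R$, it can be written as a total time-derivative of a boundary energy (the $H^{-2}$ and $H^{-1}$ sphere-terms appearing in $\mathrm{E}^{\mathfrak R}$) up to controlled errors — this is precisely where the definitions of $\mathrm{E}^{\mathfrak R}$ and the redundant-looking terms in~\eqref{combiE} are engineered to close. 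Commuting with $\LL$ and $\pr_t$ (which commute with $\RW^{[\pm2]}$ and with the boundary conditions) upgrades this to the higher-order energies $\mathrm{E}^{\mathfrak R,n}$, and commuting additionally with $w^{-1}\underline L$ and using the equation to rewrite the bad derivative yields $\overline{\mathbb E}^{\mathfrak R,n}$; finally, integrating the Chandrasekhar relations~\eqref{eq:Chandra} (which express $\Lb\alt^{[+2]}$, $LL\alt^{[+2]}$, etc.\ algebraically in terms of $\psi,\Psi$, and conversely recover $\alt$ from $\Psi$ by integrating transport equations in $r^\ast$ from the boundary, using the Teukolsky boundary conditions~\eqref{eq:TeukBC} as initial conditions) transfers boundedness of $\mathrm{E}^{\mathfrak R,n-2}[\Psi^{D,R}]$ to boundedness of the full combined energy $\mathrm{E}^{\mathfrak T,\mathfrak R,n}[\alt]$, establishing~\eqref{mtheob}.

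For the logarithmic decay~\eqref{est:main1log}, the strategy mirrors Holzegel–Smulevici~\cite{Hol.Smu13}: one proves a Carleman-type (or, equivalently, an exponentially-weighted multiplier) estimate for the Regge–Wheeler operator $\RW^{[\pm2]}$ degenerating at the horizon and at the photon sphere, which yields an integrated local energy decay statement with an exponentially large loss in frequency — concretely, for frequencies $\le\omega$ one controls a spacetime integral of the energy by the initial energy times $e^{C\omega}$, while for frequencies $\ge\omega$ one uses the higher-order commuted energy to gain smallness. Balancing the two contributions with $\omega\sim\log(t^\ast-t^\ast_0)$, and inserting this into the boundedness estimate via a standard pigeonhole/dyadic argument over time, produces the inverse-logarithmic rate with the stated power $2m$ when $m$ extra angular derivatives are available on the data. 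Here one must be careful that the Carleman estimate respects the coupled boundary conditions~\eqref{bcintro}; the decoupling into $\Psi^D,\Psi^R$ is exactly what makes this possible, with the Robin condition~\eqref{eq:RWBCdecoRobin} contributing a boundary term that, after the substitution dictated by the boundary condition, has a favourable sign (or is absorbable) — this is the technical core and the main obstacle, since it is the step that genuinely generalises~\cite{Hol.Smu13} from the scalar wave equation to the coupled spin-$\pm2$ system, and the paper advertises a ``purely physical space'' version of these Carleman estimates, so the weights must be chosen globally in $(t^\ast,r^\ast)$ rather than frequency-by-frequency.

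For the exponential decay of a fixed angular mode~\eqref{est:main1expo}, I would observe that once the angular frequency $\ell$ is fixed, the loss $e^{C\ell}$ in the Carleman/ILED estimate becomes a fixed constant, so the estimate reads: a spacetime energy integral over any time interval is bounded by $e^{C\ell}$ times the energy at the initial slice of that interval. Combined with the boundedness statement, this is precisely the hypothesis of the standard exponential-decay mechanism (as in~\cite{Hol.Smu13}): one obtains $\mathrm{E}_{m\ell}(t^\ast_2)\le (1-c\,e^{-C\ell})\,\mathrm{E}_{m\ell}(t^\ast_1)$ over a unit-length (in a suitably rescaled time) interval, and iterating over $\lfloor (t^\ast-t^\ast_0)\rfloor$ such intervals gives $\mathrm{E}_{m\ell}(t^\ast)\le \mathrm{E}_{m\ell}(t^\ast_0)\,(1-c\,e^{-C\ell})^{\lfloor t^\ast-t^\ast_0\rfloor}\le \mathrm{E}_{m\ell}(t^\ast_0)\exp(-c'e^{-C\ell}(t^\ast-t^\ast_0))$, which is~\eqref{est:main1expo} after adjusting $C$. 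The passage from $\Psi^{D,R}$-energies to the $\alt$-mode energies $\mathrm{E}^{\mathfrak T,\mathfrak R,n}_{m\ell}[\alt]$ is again via the Chandrasekhar transport equations, noting that for a single mode no loss of angular derivatives occurs. I expect the bookkeeping of the boundary terms in the commuted and transversally-commuted energies — making sure the horizon flux always has the right sign and the infinity boundary term always reduces to a time-derivative of something controlled by $\mathrm{E}^{\mathfrak R,n}$ — to be the part requiring the most care, while the conceptual obstacle remains the coupled-boundary-condition Carleman estimate described above.
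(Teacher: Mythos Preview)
Your outline tracks the paper's strategy closely --- reduce to the decoupled Regge--Wheeler problems for $\Psi^D,\Psi^R$, prove boundedness and Carleman-type integrated estimates there, and transfer back to $\alt^{[\pm2]}$ --- but there is one genuine gap and two points that need correcting.

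The gap is in the step ``integrating the Chandrasekhar relations transfers boundedness of $\mathrm{E}^{\mathfrak R,n-2}[\Psi^{D,R}]$ to $\mathrm{E}^{\mathfrak T,\mathfrak R,n}[\alt]$''. The transport equations~\eqref{eq:Chandra} let you integrate in $\Lb$ (for $+2$) or $L$ (for $-2$) to recover $\psi,\alt$ from $\Psi$, but this controls only one null derivative directly and \emph{loses} a derivative in the angular and opposite-null directions; pure transport does not close the $H^1$-type energy $\mathrm{E}^{\mathfrak T}[\alt]$ at the right order. The paper recovers this loss (Section~\ref{sec:RWtoTeuk}, Proposition~\ref{prop:TeukfromRW}) via a package of additional estimates for the \emph{inhomogeneous} Regge--Wheeler equations satisfied by $\psi^{[\pm2]}$ and $\alt^{[\pm2]}$: an energy/redshift estimate, an elliptic estimate (multiplying by $\psi^\ast$ and using that the boundary terms of the $\pm2$ equations cancel through~\eqref{eq:RWintermBC}), a Morawetz estimate to recover $L\psi^{[+2]}$ and $\Lb\psi^{[-2]}$, and a separate redshift estimate for $w^{-1}\psi^{[-2]}$ near the horizon. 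This is not bookkeeping --- without it the derivative count in~\eqref{mtheob} does not close.

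Two smaller corrections. First, the conserved $T$-energy involves the potential $V=w(\ell(\ell+1)-6M/r)$, which is \emph{not} pointwise nonnegative when $r_+$ is small (equivalently $k$ large); coercivity is obtained only via a sharp Hardy inequality that crucially uses the positive boundary term $\Einfty[\Psi^R]$ at infinity (Lemma~\ref{lem:Hardy}, Remark~\ref{rem:Neumann}). Second, the logical order for decay is the reverse of yours: the Carleman estimate gives, for each fixed $\ell$, an integrated bound with loss $e^{C\ell}$, from which one \emph{first} deduces exponential decay of the mode by the iteration you sketch, and \emph{then} obtains the logarithmic rate by splitting the sum over $\ell$ at a threshold $\ell_{\max}$ with $C\ell_{\max}\sim\log(t^\ast-t^\ast_0)$ (Lemma~\ref{lem:interpolation}). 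Your phrase ``degenerating at the photon sphere'' is also off --- the multiplier is $p=e^{\Ka/r}$, concentrated at the horizon, and the analysis concerns the sign of $(pV)'-\tfrac12 pV'-\tfrac14 p'''$ in a shrinking annulus near $r_+$, not trapping at $r=3M$.
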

 
We remark that by standard techniques involving the commuted redshift effect \cite{Daf.Rod08} one can easily strengthen the estimates in the theorem to include the higher order \emph{non-degenerate} energies $\overline{\mathbb{E}}^{\mathfrak{T},\mathfrak{R},n}[\alt]$ introduced after (\ref{combiE}). Since this is standard, we state it without proof as the following corollary. 
\begin{corollary}
Under the assumptions of Theorem \ref{thm:main1}, the estimates (\ref{mtheob}) and (\ref{est:main1log}) remain true if one replaces all energies $\mathrm{E}^{\mathfrak{T},\mathfrak{R},n}[\alt]$ by the non-degenerate energies $\overline{\mathbb{E}}^{\mathfrak{T},\mathfrak{R},n}[\alt]$. 
\end{corollary}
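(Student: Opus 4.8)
The corollary upgrades the statements of Theorem~\ref{thm:main1} from the degenerate energies $\mathrm{E}^{\mathfrak{T},\mathfrak{R},n}[\alt]$ to the non-degenerate (redshift-commuted) energies $\overline{\mathbb{E}}^{\mathfrak{T},\mathfrak{R},n}[\alt]$. The plan is to run the standard red-shift argument of Dafermos–Rodnianski~\cite{Daf.Rod08}, adapted to the Schwarzschild-AdS setting, once for the Regge-Wheeler quantities $\Psi^D,\Psi^R,\LL^{-1}(\LL-2)^{-1}\pr_t\Psi^D$ and once for the Teukolsky quantities $\alt^{[\pm2]}$, and then assemble the pieces. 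First I would recall that the vectorfields $L$ and $\De^{-1}\Lb$ extend smoothly to $\mathcal{H}^+$ (as noted after~\eqref{wdef}), so that the non-degenerate energy $\overline{\mathbb{E}}^{\mathfrak{T},\mathfrak{R},n}$ is precisely the energy controlling all commutations by $\LL^{1/2}$, $\pr_t$ and the regular transversal derivative $w^{-1}\Lb$. Since the equations~\eqref{eq:RW} and~\eqref{eq:Teukoriginal} are linear with coefficients regular (indeed analytic) at $r=r_+$, commuting with $w^{-1}\Lb$ a fixed number of times produces, at the level of each fixed angular mode, a wave equation for the commuted quantity with the \emph{same} principal part plus lower-order terms, together with a favourably-signed zeroth order ``red-shift'' contribution localised near $\mathcal{H}^+$; this is exactly the structure exploited in \cite[Section 7]{Daf.Hol.Rod19a} for the asymptotically flat Teukolsky system and the argument is insensitive to the AdS modification of the potential and to the presence of the conformal boundary, since all the new boundary terms at $r=+\infty$ involve only tangential derivatives already controlled by the degenerate energy.

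The key steps, in order: (i) Establish a local-in-time red-shift estimate: for the Regge-Wheeler equations with Dirichlet (resp. Robin) boundary conditions, a multiplier of the form $L + c\,\De^{-1}\Lb$ with $c$ a suitable positive constant near the horizon, combined with the commutation vectorfields, yields
\begin{align*}
  \overline{\mathbb{E}}^{\mathfrak{R},n}[\Psi](t^\ast_2) + \int_{t^\ast_1}^{t^\ast_2}\overline{\mathbb{E}}^{\mathfrak{R},n}_{\mathrm{near}\ \mathcal{H}^+}[\Psi](s)\,ds \les_{M,k} \overline{\mathbb{E}}^{\mathfrak{R},n}[\Psi](t^\ast_1) + \int_{t^\ast_1}^{t^\ast_2}\mathrm{E}^{\mathfrak{R},n}[\Psi](s)\,ds
\end{align*}
for $\Psi\in\{\Psi^D,\Psi^R\}$, where the last term involves only the degenerate energy (the error terms from the potential and the boundary are controlled by degenerate quantities), and similarly for $\alt^{[\pm 2]}$; the boundary terms in these energy identities at $r=+\infty$ are handled exactly as in the degenerate case treated in the body of the paper, since the red-shift multiplier is supported near $\mathcal{H}^+$ and may be taken to vanish identically for $r\geq 3M$. (ii) Combine (i) with the already-established degenerate bounds~\eqref{mtheob}, \eqref{est:main1log}: a standard pigeonhole/mean-value argument on dyadic time intervals produces a sequence of ``good'' slices on which $\overline{\mathbb{E}}^{\mathfrak{R},n}\les \mathrm{E}^{\mathfrak{R},n}$, and iterating the red-shift estimate from these good slices propagates the non-degenerate control forward with only a bounded loss, yielding $\overline{\mathbb{E}}^{\mathfrak{R},n}[\Psi](t^\ast)\les_{M,k}\overline{\mathbb{E}}^{\mathfrak{R},n}[\Psi](t^\ast_0)$ and the analogous logarithmically-decaying statement (the decay rate is unchanged since the loss is at most a fixed multiplicative constant per dyadic interval and one re-sums). (iii) Repeat (i)–(ii) for $\alt^{[+2]}$ and $\De^{-2}\alt^{[-2]}$, which by Definition~\ref{def:regular} are the quantities regular at $\mathcal{H}^+$; here one uses that $\De^{-2}\alt^{[-2]}$ solves a wave equation of the same type with a red-shift--favourable structure at the horizon, the $\De^{-2}$ weight being exactly the one making the transversal derivatives in $\overline{\mathbb{E}}^{\mathfrak{T}}$ the regular ones. (iv) Assemble: since $\overline{\mathbb{E}}^{\mathfrak{T},\mathfrak{R},n}[\alt]$ is by definition the sum of $\overline{\mathbb{E}}^{\mathfrak{T},n}[\alt]$ and the three Regge-Wheeler contributions $\overline{\mathbb{E}}^{\mathfrak{R},n-2}$ of $\Psi^D,\Psi^R,\LL^{-1}(\LL-2)^{-1}\pr_t\Psi^D$, the bounds from (ii) and (iii) add up directly to~\eqref{mtheob} and~\eqref{est:main1log} with $\mathrm{E}$ replaced by $\overline{\mathbb{E}}$.

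The main (only genuine) obstacle is bookkeeping rather than conceptual: one must check that commuting with $w^{-1}\Lb$ interacts harmlessly with the coupled/decoupled structure of the boundary conditions~\eqref{eq:RWBCdeco} and with the algebraic relations~\eqref{eq:defPsiDR} defining $\Psi^D,\Psi^R$ — in particular that $w^{-1}\Lb$-commutators do not spoil the decoupling at infinity. This is automatic because the red-shift multiplier and the transversal commutations are localised in $\{r\leq 3M\}$, so they see neither the boundary at $r=+\infty$ nor the algebra of~\eqref{eq:defPsiDR}; the only coupling that survives localisation is through the degenerate energy on the right-hand side, which is already controlled. For this reason the standard reference~\cite{Daf.Rod08}, together with the treatment of the Teukolsky red-shift in~\cite{Daf.Hol.Rod19a}, applies essentially verbatim, and we omit the details.
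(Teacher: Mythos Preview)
Your proposal is correct and follows precisely the approach the paper itself indicates: the paper does not actually prove this corollary but states it ``without proof'', remarking that it follows by ``standard techniques involving the commuted redshift effect~\cite{Daf.Rod08}''. Your outline --- the local red-shift estimate with the multiplier supported near $\mathcal{H}^+$ (hence invisible to the conformal boundary and the coupling in~\eqref{eq:defPsiDR}), the pigeonhole/mean-value argument to combine it with the already-established degenerate bounds, and the assembly via the definition of $\overline{\mathbb{E}}^{\mathfrak{T},\mathfrak{R},n}$ --- is exactly the standard argument the authors have in mind and is consistent with the redshift machinery already set up in Section~\ref{sec:Carlemanredshift} (Lemma~\ref{lem:redshift}) of the paper.
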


From (\ref{est:main1log}) and (\ref{equi}) we immediately deduce the following statement:
\begin{corollary}
 Under the assumptions of Theorem \ref{thm:main1} we have for all $t^\ast>t^\ast_0$ and all $n_0 > n > 2$:
  \begin{align}\label{est:main1logbis}
    \begin{aligned}
      \mathrm{E}^{\mathfrak{T},n}[\alt](t^\ast) & \les_{M,k} \frac{\mathrm{E}^{\mathfrak{T},n_0}[\alt](t^\ast_0)}{\le(\log(t^\ast-t^\ast_0)\ri)^{2(n_0-n)}}.
    \end{aligned}
  \end{align}
  \end{corollary}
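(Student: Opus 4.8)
The plan is to read off (\ref{est:main1logbis}) directly from the logarithmic decay estimate (\ref{est:main1log}) of Theorem~\ref{thm:main1} together with the comparison (\ref{equi}) from Remark~\ref{rem:combinednorm}, so there is essentially no new analytic content to produce. The first observation is that the combined energy dominates the pure Teukolsky energy: by the very definition (\ref{combiE}), $\mathrm{E}^{\mathfrak{T},\mathfrak{R},n}[\alt] = \mathrm{E}^{\mathfrak{T},n}[\alt] + (\text{nonnegative Regge--Wheeler terms})$, hence $\mathrm{E}^{\mathfrak{T},n}[\alt](t^\ast) \leq \mathrm{E}^{\mathfrak{T},\mathfrak{R},n}[\alt](t^\ast)$, and it suffices to estimate the right-hand side.

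Next I would set $m := n_0 - n$ and apply (\ref{est:main1log}) with this value of $m$, which gives
\begin{align*}
  \mathrm{E}^{\mathfrak{T},\mathfrak{R},n}[\alt](t^\ast) \les_{M,k} \frac{\mathrm{E}^{\mathfrak{T},\mathfrak{R},n}\le[\LL^{m/2}\alt\ri](t^\ast_0)}{\le(\log(t^\ast-t^\ast_0)\ri)^{2m}}.
\end{align*}
It then remains only to trade the angular commutation for plain higher-order regularity of $\alt$: for $m = n_0 - n > 1$ and $n \geq 3$, estimate (\ref{equi}) yields $\mathrm{E}^{\mathfrak{T},\mathfrak{R},n}[\LL^{m/2}\alt](t^\ast_0) \lesssim \mathrm{E}^{\mathfrak{T},n+m}[\alt](t^\ast_0) = \mathrm{E}^{\mathfrak{T},n_0}[\alt](t^\ast_0)$. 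Substituting this and using $2m = 2(n_0-n)$ produces exactly the claimed inequality.

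The only point requiring a word of care — and the closest thing to an obstacle — is the bookkeeping of the regularity thresholds built into (\ref{equi}), which needs $m>1$ and $n\geq 3$, whereas the corollary is stated for all $n_0>n>2$. In the complementary range I would reduce to the case above: when $n<3$ one uses monotonicity $\mathrm{E}^{\mathfrak{T},n}[\alt]\leq \mathrm{E}^{\mathfrak{T},3}[\alt]$ and applies the estimate with $(3,n_0)$ in place of $(n,n_0)$ whenever $n_0>4$; when instead $n_0-n$ is too small, one interpolates between the boundedness estimate (\ref{mtheob}) and a log-decay estimate with a large integer power $m>1$ obtained as above, using $x^{1-\theta}y^{\theta}\les x+y$ together with the fact that a sufficiently high-order combined energy of the data is controlled by $\mathrm{E}^{\mathfrak{T},n_0}[\alt](t^\ast_0)$ via the Chandrasekhar transformations (\ref{eq:Chandra}). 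None of this is substantive; the content of the statement lies entirely in Theorem~\ref{thm:main1}.
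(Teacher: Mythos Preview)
Your proof is correct and follows exactly the route the paper takes: the paper's own justification is the single line ``From (\ref{est:main1log}) and (\ref{equi}) we immediately deduce the following statement'', and your steps 1--3 spell out precisely that deduction.

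One small remark on your bookkeeping paragraph: since the energies $\mathrm{E}^{\mathfrak{T},n}$ are defined for integer $n$, the hypothesis $n>2$ already forces $n\geq 3$, so your reduction ``when $n<3$'' is vacuous. The only genuine edge case is $m=n_0-n=1$, where (\ref{equi}) as written requires $m>1$; note however that the text preceding (\ref{equi}) says ``if at least one angular derivative is applied'', suggesting the intended range is $m\geq 1$, and indeed the paper invokes (\ref{equi}) for the corollary without further comment. Your interpolation workaround for this case is unnecessarily elaborate and a bit vague as stated; it would be cleaner simply to observe that the argument behind (\ref{equi}) (absorbing the two boundary/negative-order terms in (\ref{est:remcombinednorm}) once at least one $\LL^{1/2}$ has been applied) already works for $m=1$.
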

From standard Sobolev embeddings, we infer moreover the following pointwise decay: 
  \begin{corollary}
   Under the assumptions of Theorem \ref{thm:main1} we have  for all $t^\ast>t^\ast_0$, $r\in[r_+,+\infty)$, $\om\in\SSS^2$ and all $n_0>3$ the estimates
  \begin{align}\label{est:pointwiseremark}
    \big|\alt^{[+2]}(t^\ast,r,\om)\big|^2 + \big|w^{-2}\alt^{[-2]}(t^\ast,r,\om)\big|^2 & \les_{M,k} \frac{\mathrm{E}^{\mathfrak{T},n_0}[\alt](t^\ast_0)}{\le(\log(t^\ast-t^\ast_0)\ri)^{2(n_0-3)}}.
  \end{align}
\end{corollary}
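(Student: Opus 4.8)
The plan is to deduce the final pointwise estimate \eqref{est:pointwiseremark} from the energy decay estimate \eqref{est:main1logbis} by a straightforward Sobolev embedding argument, so the mathematical content is essentially contained in the earlier corollaries. First I would recall that, by definition \eqref{Tenergy}, the energy $\mathrm{E}^{\mathfrak{T}}[\alt](t^\ast)$ controls $\big\Vert\alt^{[+2]}\big\Vert_{H^1(\Si_{t^\ast})}^2 + \big\Vert w^{-2}\alt^{[-2]}\big\Vert_{H^1(\Si_{t^\ast})}^2$, and hence the commuted energy $\mathrm{E}^{\mathfrak{T},n}[\alt](t^\ast)$ controls the sum over $k_0 + k_1 \le n-1$ of $\big\Vert \LL^{k_0/2}\pr_t^{k_1}\alt^{[+2]}\big\Vert_{H^1(\Si_{t^\ast})}^2$ and the analogous $[-2]$ term. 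In particular, taking $k_1 = 0$, it controls $\sum_{j \le n-1}\big\Vert \LL^{j/2}\alt^{[+2]}\big\Vert_{H^1(\Si_{t^\ast})}^2$, i.e.\ a full $H^{n}$-type norm on the slice $\Si_{t^\ast}$ (one derivative from the $H^1$ on each sphere in $r$, up to $n-1$ angular derivatives).

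Second, I would invoke the Sobolev embedding on the three-dimensional slices $\Si_{t^\ast}$. Since $\alt^{[+2]}$ and $w^{-2}\alt^{[-2]}$ are regular at the horizon and at infinity in the sense of Definition~\ref{def:regular} (for the latter one uses that $\alt^{[-2]}$ and hence $w^{-2}\alt^{[-2]}$ extends appropriately, together with $\De^{-2}\alt^{[-2]}$ being regular at $\mathcal{H}^+$), the relevant norms are finite and one has, for $n_0 > 3$, the embedding $H^{n_0 - 1}(\Si_{t^\ast}) \hookrightarrow L^\infty(\Si_{t^\ast})$ with the understanding that spin-weighted functions on $\SSS^2$ embed exactly as scalars (this is precisely why $n_0 > 3$ is required: two angular plus one radial derivative, or equivalently $\lceil 3/2\rceil$ in the frequency-space formulation of the norms in Section~\ref{sec:normsintro}, suffice to bound the sup over $\om \in \SSS^2$ and over $r$). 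Concretely, one estimates
\begin{align*}
  \big|\alt^{[+2]}(t^\ast,r,\om)\big|^2 + \big|w^{-2}\alt^{[-2]}(t^\ast,r,\om)\big|^2 & \les_{M,k} \mathrm{E}^{\mathfrak{T},n_0}[\alt](t^\ast),
\end{align*}
uniformly in $r \in [r_+,+\infty)$ and $\om \in \SSS^2$, where the implicit constant absorbs the various $r$-weights relating the frequency-space norms of Section~\ref{sec:normsintro} to the physical-space Sobolev norms (as recorded there, referencing \cite[Section 2.2]{Daf.Hol.Rod19a}).

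Third, I would combine this with \eqref{est:main1logbis} applied with, say, $n = 4 < n_0$ (or any $n$ with $3 < n < n_0$; note $\mathrm{E}^{\mathfrak{T},4}[\alt] \geq \mathrm{E}^{\mathfrak{T},n_0'}[\alt]$ for $n_0' \le 4$ is not what is needed — rather one uses monotonicity in the index to get the embedding norm bounded by $\mathrm{E}^{\mathfrak{T},n}[\alt]$ for an $n$ with $3 < n < n_0$), obtaining
\begin{align*}
  \big|\alt^{[+2]}(t^\ast,r,\om)\big|^2 + \big|w^{-2}\alt^{[-2]}(t^\ast,r,\om)\big|^2 & \les_{M,k} \mathrm{E}^{\mathfrak{T},n}[\alt](t^\ast) \les_{M,k} \frac{\mathrm{E}^{\mathfrak{T},n_0}[\alt](t^\ast_0)}{\le(\log(t^\ast - t^\ast_0)\ri)^{2(n_0 - n)}},
\end{align*}
and since $n$ can be taken arbitrarily close to $3$ from above — or simply fixed at an integer value with $3 < n \le n_0 - 1$ — the exponent $2(n_0 - n)$ reaches $2(n_0 - 3)$ in the limit, yielding \eqref{est:pointwiseremark} after relabelling (one picks $n$ as large as the Sobolev embedding permits while keeping the decay exponent, which for the stated form means running the argument so that the loss over $3$ derivatives is paid to the embedding and all the remaining $n_0 - 3$ derivatives are converted into logarithmic decay). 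I do not expect any genuine obstacle here; the only point requiring mild care is bookkeeping the $r$-weights in the Sobolev embedding near infinity, ensuring the pointwise bound is uniform up to (and, by the regularity-at-infinity statements, including the limit at) the conformal boundary, and checking that the half-integer angular commutations built into the norms of Section~\ref{sec:normsintro} are compatible with the standard spin-weighted Sobolev embedding — both of which are addressed by the norm equivalences already recorded in that section.
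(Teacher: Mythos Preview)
Your approach is the same as the paper's --- indeed the paper offers no proof beyond the sentence ``From standard Sobolev embeddings, we infer moreover the following pointwise decay'' --- but the bookkeeping in your third paragraph is muddled and, as written, does not actually reach the claimed exponent $2(n_0-3)$.

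The issue is that you insist on $n>3$ strictly for the Sobolev embedding and then try to recover the missing power by letting $n\to 3^+$. This does not work: the energies $\mathrm{E}^{\mathfrak{T},n}$ are defined only for integer $n$, so with $n\geq 4$ you obtain at best the exponent $2(n_0-4)$, and no limiting argument repairs this. The fix is simply to observe that $n=3$ already suffices for the embedding. The energy $\mathrm{E}^{\mathfrak{T},3}[\alt]$ contains the term $\mathrm{E}^{\mathfrak{T}}[\LL\alt]$, which controls $\|\LL\alt^{[+2]}\|_{H^1(\Si_{t^\ast})}^2$ and $\|\LL(w^{-2}\alt^{[-2]})\|_{H^1(\Si_{t^\ast})}^2$; this gives one weighted radial derivative and three angular derivatives, which is more than enough for the three-dimensional embedding $L^\infty(\Si_{t^\ast})\hookleftarrow H^1(\Si_{t^\ast})$-plus-two-angular-commutations (one needs only $s>1$ on $\SSS^2$ and $s>1/2$ in the radial variable). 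Hence
\[
\big|\alt^{[+2]}(t^\ast,r,\om)\big|^2 + \big|w^{-2}\alt^{[-2]}(t^\ast,r,\om)\big|^2 \les_{M,k} \mathrm{E}^{\mathfrak{T},3}[\alt](t^\ast),
\]
and applying \eqref{est:main1logbis} with $n=3$ (which satisfies $n>2$) gives exactly $2(n_0-3)$.
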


\begin{remark}
  Inverse logarithmic decay as exhibited in Theorem \ref{thm:main1} is a general and robust feature of the solutions to the wave-type equations for which energy boundedness holds and for which at least part of the energy can leave the spacetime. It was first obtained on product spacetimes $\mathbb{R}_t\times\NN$ in~\cite{Bur98}, under the assumption that the manifold $\NN$ is diffeomorphic to $\mathbb{R}^d$, $d\geq 2$, with finitely many obstacles, and exactly $\mathbb{R}^d$ outside a compact set. The result of~\cite{Bur98} was generalised in~\cite{Car.Vod02,Car.Vod04} and in~\cite{Rod.Tao15} to more general asymptotically conical Riemannian manifolds $\NN$ and in~\cite{Mos16} to the scalar wave equation on a general class of stationary asymptotically flat spacetimes. For Kerr-Anti-de Sitter spacetimes, the analogous result for the wave equation with Dirichlet conditions was obtained in~\cite{Hol.Smu13}. While in the asymptotically conical/flat manifolds/spacetimes, the energy  escapes through infinity, in Kerr-AdS, the energy leaks through the event horizon of the black hole.  
 \end{remark}

\subsection{Overview of the proof}\label{sec:overviewupperintro}
The proof of Theorem~\ref{thm:main1} has essentially four steps. The first one, already described in detail Section \ref{sec:chandraintro}, is purely algebraic. It consists in deriving conservative wave equations, the Regge-Wheeler equations~\eqref{eq:RW}, from the Teukolsky equations via the transformations of Definition \ref{def:Chandra}. As we have seen, a notable difference with the asymptotically flat case is that the relevant Regge-Wheeler quantities $\Psi^D,\Psi^R$ are non-trivial combinations of the original Chandrasekhar transformed quantities (involving higher order operators, recall (\ref{eq:defPsiDR})), a feature necessitated by the desire to decouple also the boundary conditions.  

The remaining steps are analytic and discussed in the remainder of this section. The overall idea is to exploit that the Regge-Wheeler equations are conservative wave equations and that we can hence adapt the techniques for the standard wave equation.  In Step 2, discussed in Section~\ref{sec:boundednessintro},  one establishes boundedness and coercivity of the energy of the Regge-Wheeler quantities $\Psi^D,\Psi^R$. In Step 3, discussed in Section~\ref{sec:logdecayintro}, one uses the energy boundedness, to prove integral Carleman-type estimates. In the final Step 4, discussed in Section~\ref{sec:SolvingChandraintro}, one revisits the Chandrasekhar transformations to deduce energy boundedness and integral estimates for the original Teukolsky quantities, from which logarithmic decay follows by an interpolation argument.

\subsubsection{Energy boundedness of $\Psi^D$ and $\Psi^R$}\label{sec:boundednessintro}
The Dirichlet and the Robin boundary conditions~\eqref{eq:RWBCdecoDirichlet} and~\eqref{eq:RWBCdecoRobin} for $\Psi^{D}$ and $\Psi^R$ admit (despite first appearance!) a good structure to establish an energy conservation law for $\Psi^{D}$ and $\Psi^R$. In fact, for $\Psi=\Psi^D$ and $\Psi=\Psi^R$, we prove a conservation law at the mode level for the following energy\footnote{Taking the sum on angular modes, one recovers the energy (\ref{Renergy}) introduced earlier.} 
\begin{align}\label{eq:introenergy}
  \begin{aligned}
  & \quad\quad\quad\quad \half\int_{\Si_{t^\ast}}\le(\frac{\De_-\De_+}{\De_0}|\pr_{t^\ast}\Psi_{m\ellmode}|^2 + |\pr_{r^\ast}\Psi_{m\ellmode}|^2 + V|\Psi_{m\ellmode}|^2\ri)\,\d r^\ast \\
    + & \lim_{r\to+\infty} \frac{6M}{(\ellmode-1)\ellmode(\ellmode+1)(\ellmode+2)}\le(|\pr_{t^\ast}\Psi_{m\ellmode}(t^\ast,r)|^2 + \half k^2\ellmode(\ellmode+1)|\Psi_{m\ellmode}(t^\ast,r)|^2\ri),
  \end{aligned}
\end{align}
where we recall that $\Si_{t^\ast}$ is the constant $t^\ast$ slice, and where $V$ is the Regge-Wheeler potential, given by
\begin{align*}
  V(r) & := w(r)\le(\ellmode(\ellmode+1)-\frac{6M}{r}\ri).
\end{align*}
Remarkably, the conserved energy~\eqref{eq:introenergy} contains, besides the familiar $H^1$-terms on the constant time slice $\Sigma_{t^\ast}$, an additional term on the sphere at infinity. In the Dirichlet case $\Psi^D$, this term vanishes and the conserved energy is the familiar spatial integral quantity.

It is \emph{a priori} not clear whether the conserved quantity~\eqref{eq:introenergy} is coercive. Indeed, the potential $V$ is not positive between its two roots $r_+$ and $r_c:=\frac{6M}{\ellmode(\ellmode+1)}$ (and positive otherwise). While in the asymptotically flat case we have $r_+=2M$, and therefore $r_+>r_c$ for all $\ellmode\geq 2$, in the AdS case, $r_+$ can be made arbitrarily small for fixed mass $M$ (in particular, smaller than $r_c$), provided that the cosmological constant $k$ is sufficiently large. See Remark~\ref{rem:Neumann} below. To resolve this problem and to obtain the coercivity of~\eqref{eq:introenergy}, we show that the quantity
\begin{align*}
  \half\int_{\Si_{t^\ast}}\le(|\pr_{r^\ast}\Phi|^2 + V|\Phi|^2\ri)\,\d r^\ast + \frac{3Mk^2}{(\ellmode-1)(\ellmode+2)}\lim_{r\to+\infty}|\Phi|^2
\end{align*}
is coercive. This uses a precise Hardy inequality, which crucially exploits the boundary term at infinity. Once established, the conservation law can be combined with a standard redshift estimate to establish uniform boundedness of $\Psi^{D}$ and $\Psi^R$. See already Theorem \ref{thm:mainRW1a}. 

\subsubsection{Carleman estimates and logarithmic decay for $\Psi^D$ and $\Psi^R$} \label{sec:logdecayintro}
The next step consists in proving a Carleman-type estimate for each fixed angular frequency mode $m\ellmode$ of the Regge-Wheeler quantities $\Psi^D$ and $\Psi^R$. The estimate is based on the use of exponential multipliers and bounds the energy integrated in time by the initial energy multiplied by a constant which \emph{grows exponentially in the angular momentum number $\ellmode$}:
\begin{align}\label{est:introcarleman}
  \begin{aligned}
    \int_{t^\ast_0}^{t^\ast}\mathrm{E}_{m\ellmode}[\Psi](t^{\ast,'})\,\d t^{\ast,'} \leq \exp\le(C\ellmode^{\mathfrak{p}}\ri)\mathrm{E}_{m\ellmode}[\Psi](t^\ast_0),
  \end{aligned}
\end{align}
where $C=C(M,k)$ is a constant, $\mathfrak{p}>0$ and where we have denoted by $\mathrm{E}_{m\ellmode}[\Psi]$ the relevant energy associated with $\Psi_{m\ellmode}$. See already Proposition \ref{prop:integralestimatesv} and Theorem \ref{thm:mainRW1b} below. 

The proof of~\eqref{est:introcarleman} is similar in spirit to \cite{Hol.Smu13} with an important difference: We are able to provide a \underline{purely physical space} proof of~\eqref{est:introcarleman} with $\mathfrak{p}=2$ (\emph{i.e.} using only angular decomposition but no Fourier transform in time). This relies on a careful analysis of the interaction of the Carleman multipliers with the Regge-Wheeler potential. Moreover, contrary to the proof in~\cite{Hol.Smu13} which relied on the Dirichlet boundary condition, our proof does not rely on the specific boundary conditions satisfied by the Regge-Wheeler quantity.

Combining~\eqref{est:introcarleman} with the energy boundedness, one infers the exponential decay for each angular frequency, with decay rate exponentially decreasing in $\ellmode$:
\begin{align}\label{est:expodecayintronotsharp}
  \mathrm{E}_{m\ellmode}[\Psi](t^{\ast}) \les \exp\le(e^{-C\ellmode^{\mathfrak{p}}}(t^\ast-t^\ast_0)\ri)\mathrm{E}_{m\ellmode}[\Psi](t^\ast_0).
\end{align}
Summing~\eqref{est:expodecayintronotsharp} in $m\ellmode$, interpolating it with the energy boundedness for $\LL^{1/2}\Psi$, we obtain
\begin{align*}
  \mathrm{E}[\Psi](t^\ast) \les \frac{\mathrm{E}[\LL^{1/2}\Psi](t^\ast_0)}{\le(\log(t^\ast-t^\ast_0)\ri)^{\frac{2}{\mathfrak{p}}}}.
\end{align*}
The exponent $\mathfrak{p}=2$ does not provide the optimal decay rate $\le(\log(t^\ast-t^\ast_0)\ri)^{-2}$ stated in Theorem~\ref{thm:main1}. Thus, we have to go back to~\eqref{est:introcarleman}, this time Fourier transforming in time as in~\cite{Hol.Smu13}, splitting $\Psi$ between its low and high time-frequencies. The splitting allows to apply Plancherel estimates, controlling $\pr_{t^\ast}\Psi$ by $\Psi$ (or $\Psi$ by $\pr_{t^\ast}\Psi$) in the estimates, which cannot be done with the physical space vectorfield method. 

There are however some benefits in first proving the non-optimal rate in physical space: Recall that in \cite{Hol.Smu13} one needed, even in the Schwarzschild case, to cut-off the solution in time to justify the use of the Fourier transform, which added a rather technical layer to the proof. Here we can avoid the future cut-off since we already established a slightly weaker exponential decay by physical space methods. More importantly perhaps, we believe that establishing a weaker exponential decay by physical space methods can be useful for applications to non-linear problems where taking the Fourier transform is often a source of technical complications. 

\subsubsection{Estimating $\alt^{[+2]}$ and $\alt^{[-2]}$}\label{sec:SolvingChandraintro}
For the final step, we note that by inverting~\eqref{eq:defPsiDR}, one can infer from the uniform boundedness and Carleman integral estimates for $\Psi^D$ and $\Psi^R$, appropriate estimates for $\Psi^{[\pm2]}$, see Section \ref{sec:proofthmTeukbdddecay}. Given estimates on $\Psi^{[\pm2]}$, uniform boundedness and integral estimates for $\alt^{[\pm2]}$ can  be derived by interpreting the transformations \eqref{eq:Chandra} as transport equations in the null directions (see Figure~\ref{fig:integrationRWtoTeuk}): Specifically, one obtains first bounds for $\psi^{[-2]}$ by integrating from data in the outgoing direction (using the bounds on $\Psi^{[-2]}$) and then -- using the corresponding boundary condition relating $\psi^{[-2]}$ to $\psi^{[+2]}$ -- bounds for $\psi^{[+2]}$ by integrating from the boundary in the ingoing direction (using now the bounds on $\Psi^{[+2]}$).  Finally, one obtains bounds for $\widetilde{\al}^{[-2]}$ by integrating from data in the outgoing direction (using the bounds just established on $\psi^{[-2]}$) and then -- now using the corresponding boundary condition relating $\widetilde{\al}^{[-2]}$  to $\widetilde{\al}^{[+2]}$  -- bounds for $\widetilde{\al}^{[+2]}$  by integrating from the boundary in the ingoing direction (using the bounds just established on $\psi^{[+2]}$).

\begin{figure}[h!]
  \centering
  \includegraphics[height=6cm]{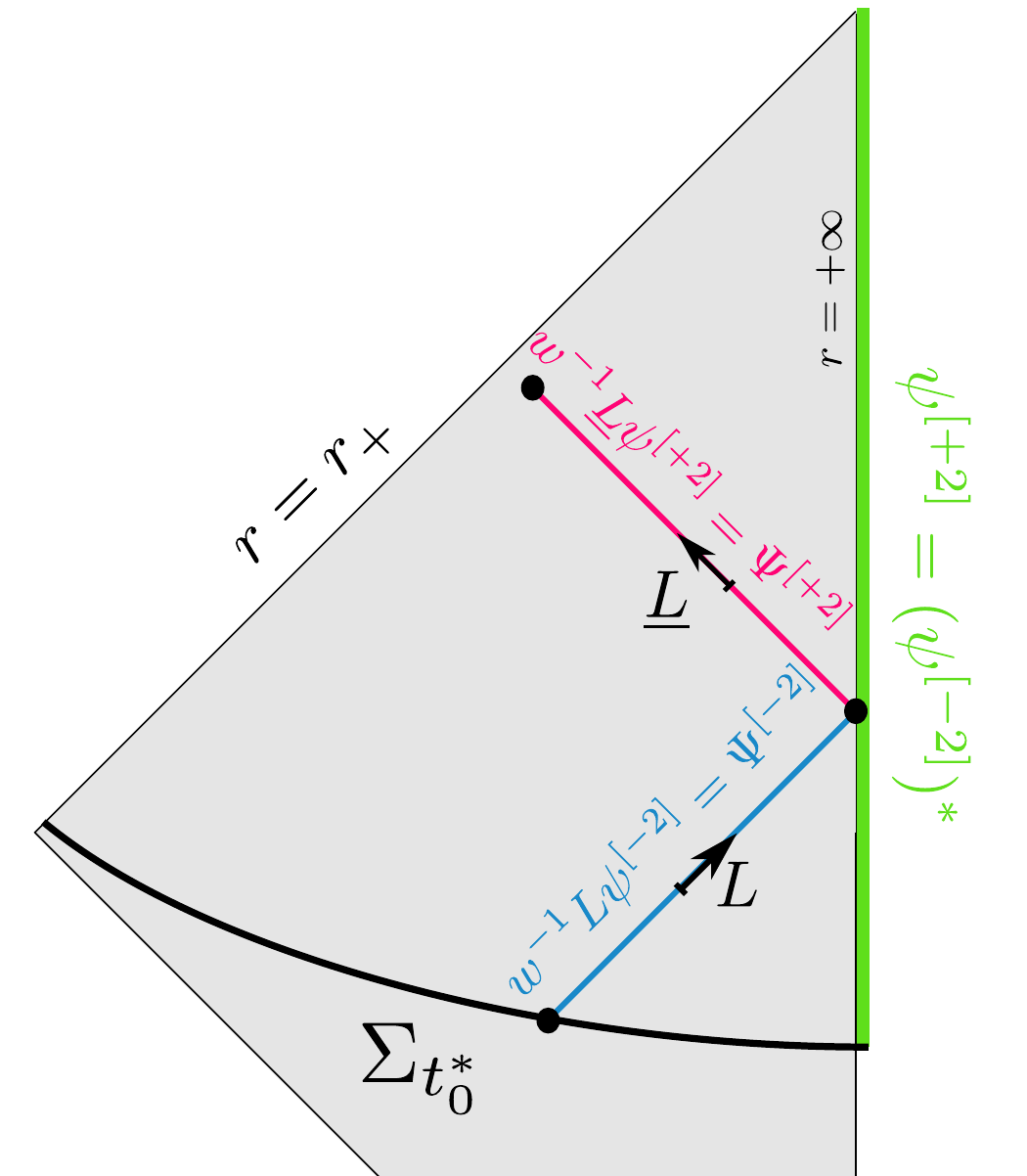}
  \caption{Integrating the Chandrasekhar transformations}
  \label{fig:integrationRWtoTeuk}
\end{figure}

While this procedure, carried out in Section \ref{sec:prelimtrans}, encounters an obvious loss of derivatives, this loss can finally be recovered through elliptic and inhomogeneous Morawetz estimates similar to the arguments carried out in \cite{Daf.Hol.Rod19, Daf.Hol.Rod.Tay21} but now with a careful treatment of the coupling of the boundary terms in these estimates. The details are collected in Sections~\ref{sec:energyestpsipm2}--\ref{sec:redshiftpsim2}. In summary, these estimates yield exponential decay for each angular frequency~\eqref{est:main1expo} as concluded in Section \ref{sec:fipo}, from which inverse logarithmic decay for the whole solution~\eqref{est:main1log} follows by an interpolation argument presented in Section \ref{sec:proofexpodecay}.

\subsection{Relation with the other papers of the series}\label{sec:outlook}
We comment on the links between the present paper and the two other papers~\cite{Gra.Hol,Gra.Hol24b} of the series.

In~\cite{Gra.Hol24b}, we show that the decay estimates of Theorem~\ref{thm:main1} are sharp by constructing a sequence of quasimode solutions to the Teukolsky problem of Definition~\ref{def:sysTeuk}. Quasimode solutions are the exact solutions to the Teukolsky problem associated with approximate solutions which are real modes, \emph{i.e.} oscillating (without decay) in time. The construction is inspired by \cite{Hol.Smu14} for the covariant wave equation but has to overcome the problem of the first order term in the Teukolsky equation. 


In our companion paper \cite{Gra.Hol}, we consider the full system of gravitational perturbations, i.e.~the linearised Einstein equations, on Schwarzschild-AdS in a double null gauge. In this gauge, the Teukolsky quantities are part of the dynamical variables and satisfy the Teukolsky equation, which allows us to apply directly the boundedness and decay results for solutions to the Teukolsky problem of Theorem~\ref{thm:main1}. Integrating in a second step the so-called \emph{null structure equations} in an appropriate order, we obtain the boundedness and decay of all metric and Ricci coefficients of solutions to linearised gravity in double null gauge, up to residual pure gauge solutions and linearised Kerr-AdS solutions. This overall strategy follows closely the one from~\cite{Daf.Hol.Rod19} although the presence of the conformal boundary requires novel arguments to understand the asymptotics. 

\subsection{Acknowledgements} 
G.H.~acknowledges support by the Alexander von Humboldt Foundation in the framework of the Alexander von Humboldt Professorship endowed by the Federal Ministry of Education and Research as well as ERC Consolidator Grant 772249. Both authors acknowledge funding through Germany’s Excellence Strategy EXC 2044 390685587, Mathematics M\"unster: Dynamics-Geometry-Structure.

\section{Energy estimates for the Regge-Wheeler equations}\label{sec:RWbound}
Let us define the following energy quantities expressed in regular $(t^\ast,r^\ast,\varth,\varphi)$ coordinates. They should be thought of arising from the timelike isometry $T=\partial_{t^\ast}$.
\begin{align}\label{eq:defenergiesPsiDR}
  \begin{aligned}
    \widetilde{\mathrm{E}}_{m\ellmode}\le[\Psi\ri](t^\ast) & := \half \int_{-\infty}^{\frac{\pi}{2}} \le(\frac{\De_-\De_+}{\De_0^2}|\pr_{t^\ast}\Psi_{m\ellmode}|^2 + |\pr_{r^\ast}\Psi_{m\ellmode}|^2 + \frac{\De}{r^4}\ellmode(\ellmode+1)|\Psi_{m\ellmode}|^2\ri) \,\d r^\ast,\\
    \widetilde{\mathrm{E}}^\HH_{m\ellmode}[\Psi_{m\ellmode}](t^\ast_2;t^\ast_1) & := \lim_{r\to r_+}\int_{t^\ast_1}^{t^\ast_2} |\pr_{t^\ast}\Psi_{m\ellmode}|^2 \,\d t^{\ast}, \\
    \Einfty_{m\ellmode}[\Psi](t^\ast) & := \lim_{r\to+\infty}\frac{6M}{\ellmode(\ellmode+1)\le(\ellmode(\ellmode+1)-2\ri)} \le[|\pr_{t^\ast}\Psi_{m\ellmode}|^2 + k^2\frac{\ellmode(\ellmode+1)}{2}|\Psi_{m\ellmode}|^2\ri].
  \end{aligned}
\end{align}
For any energy $E_{m \ell}$ (such as the ones defined in (\ref{eq:defenergiesPsiDR})) we agree on the convention that $E$ without the angular subscripts denotes  the summed energies that is we define
\begin{align*}
  \mathrm{E}[\Phi] & := \sum_{\ellmode\geq 2}\sum_{|m|\leq\ellmode} \mathrm{E}_{m\ellmode}[\Phi] .
\end{align*}
Note this is consistent with (\ref{individualmodeE}) and the energies defined earlier in (\ref{Tenergy}) and (\ref{Renergy}). 


The following proposition is the main result of this section. Its proof is carried out in Section~\ref{sec:proofpropenergybounded}.
\begin{proposition}[Energy boundedness for $\Psi^D,\Psi^R$]\label{prop:energybounded}
  For all $t_{1}^\ast,t^\ast_2\in\mathbb{R}$ with $t^\ast_1\leq t^\ast_2$, and all solutions $\Psi^D,\Psi^R$ to the  decoupled Regge-Wheeler problem~\eqref{sys:RWdeco} on $
 \mathcal{M} \cap \{t^\ast \geq t^\ast_1\}$, we have
    \begin{align}\label{est:energyPsiDRmain}
      \begin{aligned}
        \widetilde{\mathrm{E}}\le[\Psi\ri](t^\ast_2) + \widetilde{\mathrm{E}}^\HH[\Psi](t^\ast_2;t^\ast_1) + \Einfty[\Psi](t^\ast_2) & \les_{M,k} \widetilde{\mathrm{E}}\le[\Psi\ri](t^\ast_1) + \Einfty[\Psi](t^\ast_1),
      \end{aligned}
    \end{align}
    for $\Psi=\Psi^D$ and for $\Psi=\Psi^R$.
\end{proposition}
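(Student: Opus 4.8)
The plan is to establish a multiplier (energy) identity for each angular mode $\Psi_{m\ell}$ associated with the Killing vectorfield $T = \partial_{t^\ast}$, and to show that the boundary terms it produces at infinity are precisely the time-derivative of $\Einfty_{m\ell}$, so that after integrating over the spacetime slab $\{t^\ast_1 \le t^\ast \le t^\ast_2\}$ one obtains an \emph{almost-conservation law} with only a good (signed) flux through the horizon and a controllable error coming from the potential not being positive. First I would rewrite the Regge-Wheeler equation~\eqref{eq:RW} at the mode level in $(t^\ast, r^\ast)$ coordinates; since $L, \underline L$ both equal $\partial_t \pm \tfrac{\Delta}{r^2}\partial_r$ in the static chart but differ when expressed in $t^\ast$ by the factors $\Delta_\pm/\Delta_0$ recorded in~\eqref{eq:defLLb}, the operator $L\underline L$ acquires the coefficient $\tfrac{\Delta_-\Delta_+}{\Delta_0^2}$ in front of $\partial_{t^\ast}^2$ — which is exactly the weight appearing in $\widetilde{\mathrm E}_{m\ell}$. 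Multiplying the equation by $\overline{\partial_{t^\ast}\Psi_{m\ell}}$, taking real parts, and integrating $\d r^\ast$ over $(-\infty, \pi/2)$ yields $\partial_{t^\ast}\widetilde{\mathrm E}_{m\ell}[\Psi] = [\text{boundary term at }r^\ast = -\infty] - [\text{boundary term at }r^\ast = \pi/2]$.

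The key computation is the boundary term at infinity. Here the Dirichlet case~\eqref{eq:RWBCdecoDirichlet} is immediate: $\Psi^D \to 0$ forces $\partial_{t^\ast}\Psi^D \to 0$, the flux vanishes, and since $\Einfty_{m\ell}[\Psi^D] = 0$ (its coefficient is finite but $\Psi^D$ and $\partial_{t^\ast}\Psi^D$ vanish there) there is nothing to add; one gets honest conservation of $\widetilde{\mathrm E}$ up to the horizon flux. For $\Psi^R$ one uses the Robin condition~\eqref{eq:RWBCdecoRobin}: the boundary flux from the multiplier identity is (schematically) $-\lim_{r\to\infty}\Re(\partial_{r^\ast}\Psi^R \, \overline{\partial_{t^\ast}\Psi^R})$, and substituting $\partial_{r^\ast}\Psi^R = -\tfrac{6M}{\ell(\ell+1)-2}\big(2\partial_{t^\ast}^2\Psi^R + k^2\ell(\ell+1)\Psi^R\big)$ from the boundary condition turns this into $\tfrac{6M}{\ell(\ell+1)(\ell(\ell+1)-2)}\Re\big((2\partial_{t^\ast}^2\Psi^R + k^2\ell(\ell+1)\Psi^R)\overline{\partial_{t^\ast}\Psi^R}\big) = \partial_{t^\ast}\Einfty_{m\ell}[\Psi^R]$, which is a perfect time-derivative — this is the structural miracle flagged in Section~\ref{sec:boundednessintro}. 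Moving it to the left gives $\partial_{t^\ast}\big(\widetilde{\mathrm E}_{m\ell} + \Einfty_{m\ell}\big)[\Psi^R] = -\lim_{r\to r_+}|\partial_{t^\ast}\Psi_{m\ell}|^2 \cdot (\text{positive factor})$, i.e. the horizon flux has the right sign. Care is needed to justify the existence of the limits $r \to \infty$ and $r \to r_+$ of the relevant pointwise quantities — this follows from the regularity assumptions in~\eqref{sys:RWdeco} together with Definition~\ref{def:regular}, after noting (as the text remarks) that the Chandrasekhar-transformed quantities extend continuously to both boundaries.

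Integrating this identity in $t^\ast$ from $t^\ast_1$ to $t^\ast_2$ and summing over $m, \ell$ then gives $\widetilde{\mathrm E}[\Psi](t^\ast_2) + \widetilde{\mathrm E}^{\mathcal H}[\Psi](t^\ast_2;t^\ast_1) + \Einfty[\Psi](t^\ast_2) = \widetilde{\mathrm E}[\Psi](t^\ast_1) + \Einfty[\Psi](t^\ast_1)$ — in fact with an equality and a favourably-signed horizon term, even better than the claimed inequality — \emph{provided} one can absorb the indefinite potential. The main obstacle is precisely this coercivity issue: $V(r) = w(r)(\ell(\ell+1) - 6M/r)$ is negative on $(r_+, r_c)$ with $r_c = 6M/(\ell(\ell+1))$, and in the AdS regime $r_+$ may be smaller than $r_c$, so $\widetilde{\mathrm E}$ need not even be a nonnegative quantity. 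I would handle this exactly as sketched in Section~\ref{sec:boundednessintro}: prove a sharp Hardy-type inequality on $(r_+, \pi/2)$ (in $r^\ast$) showing that $\tfrac12\int(|\partial_{r^\ast}\Phi|^2 + V|\Phi|^2)\d r^\ast + \tfrac{3Mk^2}{(\ell-1)(\ell+2)}\lim_{r\to\infty}|\Phi|^2 \gtrsim 0$ (and controls a positive $H^1$-type quantity), the boundary term at infinity being essential to close the Hardy estimate against the negative contribution of $V$ near the horizon. Combined with the exact conservation law above and a standard redshift commutation near $\mathcal H^+$ to upgrade the degenerate $T$-energy to a genuine nondegenerate one, this yields~\eqref{est:energyPsiDRmain}; the constant $\lesssim_{M,k}$ (rather than $1$) enters only through the Hardy/coercivity step. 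I expect the delicate point to be getting the Hardy inequality with the correct sharp constant so that it genuinely dominates $V$ uniformly in $\ell$, and checking the boundary-term bookkeeping at $r^\ast = \pi/2$ is consistent between the multiplier identity and the definition of $\Einfty_{m\ell}$.
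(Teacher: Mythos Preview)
Your overall strategy matches the paper's proof exactly: derive the $\partial_{t^\ast}$-multiplier identity, observe that the boundary flux at infinity becomes $\partial_{t^\ast}\Einfty_{m\ell}[\Psi^R]$ via the Robin condition (and vanishes for $\Psi^D$), note the horizon flux has a good sign, and then repair coercivity via a Hardy inequality with the boundary term at infinity. That is precisely Lemmas~\ref{lem:energyid} and~\ref{lem:Hardy} followed by Section~\ref{sec:proofpropenergybounded}.

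There is, however, a persistent notational confusion in your write-up that you should straighten out. The multiplier identity does \emph{not} yield $\partial_{t^\ast}\widetilde{\mathrm E}_{m\ell}$: the zeroth-order term produced is $V|\Psi_{m\ell}|^2$ with $V=w(\ell(\ell+1)-6M/r)$, so the conserved energy is the one the paper denotes $\mathring{\mathrm E}_{m\ell}$, \emph{not} $\widetilde{\mathrm E}_{m\ell}$. By definition $\widetilde{\mathrm E}_{m\ell}$ carries the manifestly nonnegative potential $w\,\ell(\ell+1)$ and is always $\geq 0$; your sentence ``$\widetilde{\mathrm E}$ need not even be a nonnegative quantity'' is therefore wrong as stated --- the coercivity problem is for $\mathring{\mathrm E}$, and the Hardy inequality is used to show $\widetilde{\mathrm E}\lesssim_{M,k}\mathring{\mathrm E}+\Einfty$, which together with the trivial $\mathring{\mathrm E}\leq\widetilde{\mathrm E}$ and the exact conservation law gives~\eqref{est:energyPsiDRmain}. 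Also, no redshift step is needed here: Proposition~\ref{prop:energybounded} concerns only the degenerate energy $\widetilde{\mathrm E}$; the redshift upgrade to $\overline{\mathrm E}$ is a separate later statement (Proposition~\ref{prop:redshift}). Finally, a minor slip: in your substitution for $\partial_{r^\ast}\Psi^R$ the prefactor should be $-\tfrac{6M}{\ell(\ell+1)(\ell(\ell+1)-2)}$, since $\mathcal L(\mathcal L-2)$ evaluates to $\ell(\ell+1)(\ell(\ell+1)-2)$ on the mode.
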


\begin{remark}
Note that if $\alt^{[\pm2]}$ are solutions of the Teukolsky problem arising from Theorem \ref{thm:IBVP}, then we obtain solutions $\Psi^D,\Psi^R$ to the  decoupled Regge-Wheeler problem~\eqref{sys:RWdeco} on $\mathcal{M} \cap \{t^\ast \geq t^\ast_0\}$, see Section \ref{sec:chandraintro}.
\end{remark}

\begin{remark}
  Each term in the energies~\eqref{eq:defenergiesPsiDR} is non-negative and Proposition~\ref{prop:energybounded} hence provides a full boundedness result for both $\Psi^D$ and $\Psi^R$. Note however that the regular derivative transversal to the event horizon $\mathcal{H}^+$ degenerates for $\widetilde{\mathrm{E}}_{m\ellmode}\le[\Psi\ri]$ as is expected from the fact that the energy arises from the vectorfield $T$.  
\end{remark}
\begin{remark}
  In the Dirichlet case $\Psi=\Psi^D$, we simply have $\Einfty[\Psi](t^\ast)=0$ for all $t^\ast\in\RRR$.
\end{remark}
\begin{remark}
Obviously, (\ref{est:energyPsiDRmain}) also holds for each individual mode, i.e.~putting a subscript $m\ell$ on all energies (recall (\ref{individualmodeE})). In fact, (\ref{est:energyPsiDRmain})  will be proven first for the individual modes and then summed. 
\end{remark}


\subsection{Energy identities}
Let us define the energy
\begin{align*}
  \mathring{\mathrm{E}}_{m\ellmode}[\Psi](t^\ast)  & := \half \int_{-\infty}^{\frac{\pi}{2}} \le(\frac{\De_-\De_+}{\De_0^2}|\pr_{t^\ast}\Psi_{m\ellmode}|^2 + |\pr_{r^\ast}\Psi_{m\ellmode}|^2 + \frac{\De}{r^2}\le(\frac{\ellmode(\ellmode+1)}{r^2}-\frac{6M}{r^3}\ri)|\Psi_{m\ellmode}|^2\ri) \,\d r^\ast.
\end{align*}

We have the following lemma.
\begin{lemma}[Energy identities]\label{lem:energyid}
  For all $t_{1}^\ast,t^\ast_2\in\mathbb{R}$ with $t^\ast_1\leq t^\ast_2$, and all smooth solutions $\Psi^D,\Psi^R$ to the  decoupled Regge-Wheeler problem~\eqref{sys:RWdeco} on $\mathcal{M} \cap \{ t \geq t_1^\ast\}$, we have
    \begin{align}\label{eq:energyPsiDRmain}
      \begin{aligned}
        \mathring{\mathrm{E}}\le[\Psi\ri](t^\ast_2) + \widetilde{\mathrm{E}}^\HH[\Psi](t^\ast_2;t^\ast_1) + \Einfty[\Psi](t^\ast_2) & = \mathring{\mathrm{E}}\le[\Psi\ri](t^\ast_1) + \Einfty[\Psi](t^\ast_1),
      \end{aligned}
    \end{align}
    for $\Psi=\Psi^D$ and for $\Psi=\Psi^R$.
\end{lemma}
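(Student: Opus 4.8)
# Proof Plan for Lemma \ref{lem:energyid}

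The plan is to derive the energy identity by the standard multiplier method, using the timelike Killing vectorfield $T = \partial_{t^\ast}$ as multiplier, but keeping careful track of the boundary term at infinity where the non-trivial (Robin) boundary condition enters. First I would work at the level of a fixed angular mode $m\ell$, so that the Regge-Wheeler operator $\RW^{[\pm2]}$ reduces to the $1{+}1$-dimensional operator
\begin{align*}
  \RW_{m\ell}\Psi_{m\ell} = -\frac{\De_-\De_+}{\De_0^2}\pr_{t^\ast}^2\Psi_{m\ell} - (\text{mixed } \pr_{t^\ast}\pr_{r^\ast} \text{ terms}) + \pr_{r^\ast}^2\Psi_{m\ell} - \frac{\De}{r^4}\le(\ell(\ell+1) - \frac{6M}{r}\ri)\Psi_{m\ell},
\end{align*}
(in the $(t^\ast, r^\ast)$ coordinates; the precise coefficient of the cross term comes from rewriting $L\Lb$ in terms of $\pr_{t^\ast},\pr_{r^\ast}$ using \eqref{eq:defLLb}). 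Multiplying the equation $\RW_{m\ell}\Psi_{m\ell} = 0$ by $\overline{\pr_{t^\ast}\Psi_{m\ell}}$, taking the real part, and integrating over the region $\mathcal{M} \cap \{t^\ast_1 \le t^\ast \le t^\ast_2\}$, I would organise the integrand into a spacetime divergence $\pr_{t^\ast}(\ldots) + \pr_{r^\ast}(\ldots)$; the bulk term associated to the $T$-energy current vanishes because the coefficients of the equation are $t^\ast$-independent (stationarity).

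The $\pr_{t^\ast}$-divergence integrates to $\mathring{\mathrm{E}}_{m\ell}[\Psi](t^\ast_2) - \mathring{\mathrm{E}}_{m\ell}[\Psi](t^\ast_1)$, once one recognises the quadratic form $\tfrac12(\tfrac{\De_-\De_+}{\De_0^2}|\pr_{t^\ast}\Psi_{m\ell}|^2 + |\pr_{r^\ast}\Psi_{m\ell}|^2 + \tfrac{\De}{r^2}(\tfrac{\ell(\ell+1)}{r^2} - \tfrac{6M}{r^3})|\Psi_{m\ell}|^2)$ as the natural energy density; this is why $\mathring{\mathrm{E}}_{m\ell}$ is defined exactly as it is. The $\pr_{r^\ast}$-divergence contributes two boundary terms: one at $r = r_+$ ($r^\ast = -\infty$) and one at $r = +\infty$ ($r^\ast = \tfrac{\pi}{2}$). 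At the horizon, using regularity of $\Psi$ in the sense of Definition \ref{def:regular} and the behaviour of $\De$, one checks the flux through $\mathcal{H}^+$ is exactly $\widetilde{\mathrm{E}}^\HH_{m\ell}[\Psi](t^\ast_2;t^\ast_1) = \lim_{r\to r_+}\int_{t^\ast_1}^{t^\ast_2}|\pr_{t^\ast}\Psi_{m\ell}|^2\,\d t^\ast$ (this is the familiar redshift-sign flux for the $T$-current, with the correct normalisation following from \eqref{eq:defLLb}). The only delicate point is the boundary term at infinity: there the flux is (up to constants) $\mathrm{Re}\int_{t^\ast_1}^{t^\ast_2}\pr_{r^\ast}\Psi_{m\ell}\,\overline{\pr_{t^\ast}\Psi_{m\ell}}\,\d t^\ast$ evaluated at $r=+\infty$, and one must substitute the Robin boundary condition \eqref{eq:RWBCdecoRobin}, i.e. $\tfrac{\ell(\ell+1)(\ell(\ell+1)-2)}{6M}\pr_{r^\ast}\Psi_{m\ell} \to -2\pr_{t^\ast}^2\Psi_{m\ell} - k^2\ell(\ell+1)\Psi_{m\ell}$ at infinity, to turn this flux into a perfect $t^\ast$-derivative of $\Einfty_{m\ell}[\Psi]$. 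Concretely, $\mathrm{Re}(\pr_{r^\ast}\Psi\,\overline{\pr_{t^\ast}\Psi})$ becomes proportional to $\mathrm{Re}((-2\pr_{t^\ast}^2\Psi - k^2\ell(\ell+1)\Psi)\overline{\pr_{t^\ast}\Psi}) = -\pr_{t^\ast}(|\pr_{t^\ast}\Psi|^2 + \tfrac{k^2\ell(\ell+1)}{2}|\Psi|^2)$, which integrates to give exactly $-(\Einfty_{m\ell}[\Psi](t^\ast_2) - \Einfty_{m\ell}[\Psi](t^\ast_1))$ after accounting for the normalising constant $\tfrac{6M}{\ell(\ell+1)(\ell(\ell+1)-2)}$. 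In the Dirichlet case $\Psi = \Psi^D$ the boundary term at infinity vanishes outright (since $\Psi^D \to 0$ forces $\pr_{t^\ast}\Psi^D \to 0$ there, and the flux has a decaying $\De/r^4$-type weight in the original $L\Lb$ form), consistent with $\Einfty[\Psi^D] = 0$.

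Assembling these pieces yields \eqref{eq:energyPsiDRmain} at the mode level; summing over $\ell \ge 2$, $|m|\le \ell$ gives the stated identity. The main obstacle I anticipate is not conceptual but bookkeeping: correctly converting between the $L,\Lb$ form of the Regge-Wheeler operator and its $(\pr_{t^\ast},\pr_{r^\ast})$ form, tracking the cross term $\pr_{t^\ast}\pr_{r^\ast}\Psi$ (which contributes to the $r^\ast$-flux but not to $\mathring{\mathrm{E}}$ because it is a total $r^\ast$-derivative of $\mathrm{Re}(\pr_{t^\ast}\Psi\,\overline{\pr_{t^\ast}\Psi})$-type times a $t^\ast$-independent coefficient — actually it reorganises into the $\tfrac{\De_-\De_+}{\De_0^2}$ weight on $|\pr_{t^\ast}\Psi|^2$), and verifying that all boundary limits exist and have the claimed values using the regularity assumptions. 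One should also justify differentiating under the integral sign / integrating by parts on the non-compact slice $\Sigma_{t^\ast}$, which follows from the decay built into Definition \ref{def:regular} and Definition \ref{def:alparel}. The Robin-to-perfect-derivative computation at infinity is the genuinely new ingredient compared to the asymptotically flat setting, and is the step I would write out most carefully.
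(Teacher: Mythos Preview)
Your proposal is correct and follows essentially the same approach as the paper: multiply the mode-projected Regge--Wheeler equation by $\pr_{t^\ast}\Psi_{m\ell}^\ast$, integrate by parts, and use the Robin condition~\eqref{eq:RWBCdecoRobin} at infinity to convert the flux term $-\Re(\pr_{r^\ast}\Psi\,\pr_{t^\ast}\Psi^\ast)$ into a perfect $t^\ast$-derivative yielding $\Einfty$. The only point worth tightening is your treatment of the cross terms: in the paper these contribute a bulk term $\pr_{r^\ast}(\tfrac{\De_-}{\De_0})|\pr_{t^\ast}\Psi|^2 + \tfrac12\tfrac{\De_--\De_+}{\De_0}\pr_{r^\ast}|\pr_{t^\ast}\Psi|^2$ which vanishes identically after one integration by parts (using $\De_-+\De_+=2\De_0$), and whose boundary contribution at $r_+$ is precisely what produces the horizon flux $\widetilde{\mathrm{E}}^\HH$.
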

\begin{proof}
  Using~\eqref{eq:defLLb} the Regge-Wheeler operator of~\eqref{eq:RW} rewrites in the $t^\ast,r^\ast$ coordinates as\footnote{Here and in the following we write $\RW^{[\pm2]}_{m\ellmode}$ to denote the projections of the Teukolsky operators $\RW^{[\pm2]}$ onto the Hilbert basis of the angular operator $\LL^{[\pm2]}$. See Remark \ref{rem:eigenbasis}.}
  \begin{align}\label{eq:pfenergid1}
    \begin{aligned}
    -\RW_{m\ellmode}\Psi_{m\ellmode} & = \le(\frac{\De_+}{\De_0}\pr_{t^\ast} + \pr_{r^\ast}\ri)\le(\frac{\De_-}{\De_0}\pr_{t^\ast} - \pr_{r^\ast}\ri) \Psi_{m\ellmode} + \frac{\De}{r^4}\le(\LL-\frac{6M}{r}\ri)\Psi_{m\ellmode} \\
    & = \frac{\De_-\De_+}{\De_0^2}\pr_{t^\ast}^2\Psi_{m\ellmode} -\pr_{r^\ast}^2\Psi_{m\ellmode} + \pr_{r^\ast}\le(\frac{\De_-}{\De_0}\pr_{t^\ast}\Psi_{m\ellmode}\ri) - \frac{\De_+}{\De_0}\pr_{r^\ast}\pr_{t^\ast}\Psi_{m\ellmode} \\
    & \quad + \frac{\De}{r^4}\le(\ellmode(\ellmode+1)-\frac{6M}{r}\ri)\Psi_{m\ellmode}.
    \end{aligned}
  \end{align}
  Multiplying~\eqref{eq:pfenergid1} by $\pr_{t^\ast}\Psi^\ast_{m\ellmode}$ and taking the real part, we get
  \begin{align}\label{eq:pfenergid2}
    \begin{aligned}
      -\Re\le(\RW_{m\ellmode}\Psi_{m\ellmode}\pr_{t^\ast}\Psi_{m\ellmode}^\ast\ri) & = \half \pr_{t^\ast}\le(\frac{\De_-\De_+}{\De_0^2} |\pr_{t^\ast}\Psi_{m\ellmode}|^2 \ri) - \pr_{r^\ast}\Re\le(\pr_{r^\ast}\Psi_{m\ellmode}\pr_{t^\ast}\Psi_{m\ellmode}^\ast\ri) + \half \pr_{t^\ast}\le(|\pr_{r^\ast}\Psi_{m\ellmode}|^2\ri) \\
      & \quad + \half\frac{\De_--\De_+}{\De_0}\pr_{r^\ast}\le(|\pr_{t^\ast}\Psi_{m\ellmode}|^2\ri) + \pr_{r^\ast}\le(\frac{\De_-}{\De_0}\ri) |\pr_{t^\ast}\Psi_{m\ellmode}|^2 \\
      & \quad + \half\frac{\De}{r^4}\le(\ellmode(\ellmode+1)-\frac{6M}{r}\ri)\pr_{t^\ast}\le(|\Psi_{m\ellmode}|^2\ri).
    \end{aligned}
  \end{align}
  Integrating~\eqref{eq:pfenergid2} on the domain $(t^\ast_1,t^\ast_2)_{t^\ast}\times(-\infty,\frac{\pi}{2})_{r^\ast}\times\SSS^2_{\varth,\varphi}$, using that
  \begin{align*}
    -\half\pr_{r^\ast}\le(\frac{\De_--\De_+}{\De_0}\ri) + \pr_{r^\ast}\le(\frac{\De_-}{\De_0}\ri) = 0,
  \end{align*}
  that
  \begin{align*}
    \half\frac{\De_--\De_+}{\De_0} & \xrightarrow{r\to+\infty} 0, & \half\frac{\De_--\De_+}{\De_0} & \xrightarrow{r\to r_+} -1,  
  \end{align*}
  and that by~\eqref{eq:defLLb} and the regularity at the horizon conditions~\eqref{eq:defreghorgeneral} for $\Psi$, we have
  \begin{align*}
    \pr_{r^\ast}\Psi & = \Lb\Psi + \half\frac{\De_{-}}{\De_0}\le(L+\Lb\ri)\Psi \xrightarrow{r\to r_+} 0,
  \end{align*}
  we obtain
  \begin{align}\label{eq:pfenergid3}
    \begin{aligned}
      & \mathring{\mathrm{E}}_{m\ellmode}\le[\Psi\ri](t^\ast_2) + \widetilde{\mathrm{E}}_{m\ellmode}^\HH[\Psi](t^\ast_2;t^\ast_1) - \lim_{r\to+\infty} \int_{t^\ast_1}^{t^\ast_2}\Re\le(\pr_{r^\ast}\Psi_{m\ellmode}\pr_{t^\ast}\Psi^\ast_{m\ellmode}\ri)\,\d t^\ast \\
      = & \; \mathring{\mathrm{E}}_{m\ellmode}\le[\Psi\ri](t^\ast_1) - \int_{t^\ast_1}^{t^\ast_2}\int_{-\infty}^{\frac{\pi}{2}}\Re\le(\RW_{m\ellmode}\Psi_{m\ellmode}\pr_{t^\ast}\Psi_{m\ellmode}^\ast\ri)\,\d t^\ast\d r^\ast.
    \end{aligned}
  \end{align}
  In the Dirichlet case $\Psi=\Psi^D$, we directly have\footnote{The boundary term would also vanish with Neumann boundary conditions (but would not be equal to $\Einfty[\Psi]$).}
  \begin{align*}
    -\lim_{r\to+\infty}\int_{t^\ast_1}^{t^\ast_2}\Re\le(\pr_{r^\ast}\Psi^D_{m\ellmode}\pr_{t^\ast}\big(\Psi^D_{m\ellmode}\big)^\ast\ri)\,\d t^\ast & = 0 = \Einfty_{m\ellmode}[\Psi^D](t^\ast_{2})-\Einfty_{m\ellmode}[\Psi^D](t^\ast_1)
  \end{align*}
  and, using that $\RW\Psi=0$,~\eqref{eq:energyPsiDRmain} follows. In the Robin case $\Psi=\Psi^R$, the boundary condition~\eqref{eq:RWBCdecoRobin} implies 
  \begin{align}\label{eq:pfenergid4}
    \begin{aligned}
      & -\lim_{r\to+\infty}\int_{t^\ast_1}^{t^\ast_2}\Re\le(\pr_{r^\ast}\Psi^R_{m\ellmode}\pr_{t^\ast}\big(\Psi^R_{m\ellmode}\big)^\ast\ri)\,\d t^\ast  \\
       = &\; \lim_{r\to+\infty}\int_{t^\ast_1}^{t^\ast_2}\Re\le(\frac{6M}{\ellmode(\ellmode+1)\le(\ellmode(\ellmode+1)-2\ri)}\le(2\pr_{t^\ast}^2\Psi^R_{m\ellmode}+k^2\ellmode(\ellmode+1)\Psi^R_{m\ellmode}\ri)\pr_{t^\ast}\big(\Psi^R_{m\ellmode}\big)^\ast\ri)\,\d t^\ast \\
      = & \; \Einfty_{m\ellmode}[\Psi^R](t^\ast_2)-\Einfty_{m\ellmode}[\Psi^R](t^\ast_1).
    \end{aligned}
  \end{align}
  Plugging~\eqref{eq:pfenergid4} in~\eqref{eq:pfenergid3}, using that $\RW\Psi=0$, this finishes the proof of~\eqref{eq:energyPsiDRmain} when $\Psi=\Psi^R$.
\end{proof}

\subsection{Hardy estimates}
We have the following lemma.
\begin{lemma}[Hardy estimate]\label{lem:Hardy}
  Let $\ellmode\geq 2$. For all smooth radial functions $\Phi:(r_+,+\infty)\to\CCC$ satisfying the regularity condition at the horizon~\eqref{eq:defreghorgeneral} and having a limit at infinity, we have
  \begin{align}
    \label{eq:Hardymain}
    \begin{aligned}
      \int_{-\infty}^{\frac{\pi}{2}} \frac{\De}{r^4} \frac{\ellmode(\ellmode+1)}{\le(1+\frac{6M}{(\ellmode(\ellmode+1)-2)r}\ri)}|\Phi|^2\,\d r^\ast & \leq \int_{-\infty}^{\frac{\pi}{2}} \le(|\pr_{r^\ast}\Phi|^2 + \frac{\De}{r^4}\le(\ellmode(\ellmode+1) - \frac{6M}{r}\ri)|\Phi|^2\ri)\,\d r^\ast \\
      & \quad + \lim_{r\to+\infty}\le(k^2\frac{6M}{\ellmode(\ellmode+1)-2} |\Phi|^2\ri).   
    \end{aligned}
  \end{align}
\end{lemma}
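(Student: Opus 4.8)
The plan is to establish~\eqref{eq:Hardymain} as a one-dimensional Hardy inequality obtained by completing a square against an explicit weight, the point being to \emph{generate} rather than discard the boundary term at infinity. I would start by introducing the weight function
\begin{align*}
  h(r) := \frac{6M\De}{r^3\le(\le(\ellmode(\ellmode+1)-2\ri)r+6M\ri)}
\end{align*}
(equivalently $h = \pr_{r^\ast}\log\big(r/((\ellmode(\ellmode+1)-2)r+6M)\big)$). Since $\ellmode(\ellmode+1)\geq 6>2$ for $\ellmode\geq 2$ and $\De>0$ on $(r_+,+\infty)$, this $h$ is strictly positive there, vanishes at the horizon (where $\De=0$), and admits the finite limit $h\xrightarrow{r\to+\infty} k^2\frac{6M}{\ellmode(\ellmode+1)-2}$.

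Next, expanding the non-negative quantity $0\leq\int_{-\infty}^{\frac{\pi}{2}}|\pr_{r^\ast}\Phi+h\Phi|^2\,\d r^\ast$, using that $h$ is real so that $2h\,\Re(\Phi^\ast\pr_{r^\ast}\Phi)=h\,\pr_{r^\ast}|\Phi|^2$, and integrating the cross-term by parts in $r^\ast$, one obtains the master inequality
\begin{align*}
  \int_{-\infty}^{\frac{\pi}{2}}\le(\pr_{r^\ast}h - h^2\ri)|\Phi|^2\,\d r^\ast \leq \int_{-\infty}^{\frac{\pi}{2}}|\pr_{r^\ast}\Phi|^2\,\d r^\ast + \lim_{r\to+\infty}\le(k^2\frac{6M}{\ellmode(\ellmode+1)-2}|\Phi|^2\ri).
\end{align*}
Here the boundary contribution at $r^\ast\to-\infty$ vanishes because $h\to 0$ while $|\Phi|$ is bounded near the horizon by~\eqref{eq:defreghorgeneral}, and the one at $r^\ast\to\frac{\pi}{2}$ equals $h(+\infty)\lim_{r\to+\infty}|\Phi|^2$, using that $\Phi$ has a limit there; if the right-hand side is infinite there is nothing to prove, which justifies all the manipulations.

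It then only remains to identify $\pr_{r^\ast}h - h^2$ with the difference of the two radial potentials in~\eqref{eq:Hardymain}, i.e.~to check the algebraic identity
\begin{align*}
  \pr_{r^\ast}h - h^2 = \frac{\De}{r^4}\le(\frac{\ellmode(\ellmode+1)}{1+\frac{6M}{(\ellmode(\ellmode+1)-2)r}} - \ellmode(\ellmode+1) + \frac{6M}{r}\ri),
\end{align*}
after which~\eqref{eq:Hardymain} follows by moving the term $-\frac{\De}{r^4}(\ellmode(\ellmode+1)-\frac{6M}{r})|\Phi|^2$ to the right. Writing $q:=(\ellmode(\ellmode+1)-2)r+6M$ and using $\pr_{r^\ast}=\frac{\De}{r^2}\pr_r$, one computes $\pr_{r^\ast}h - h^2 = \frac{6M\De}{r^6 q^2}(\De' rq - 3\De q - \De rq' - 6M\De)$; a direct polynomial computation using $\De=r^2+k^2r^4-2Mr$ and $q'=\ellmode(\ellmode+1)-2$ shows the bracket factorises as $-2r(r-3M)q$, so that $\pr_{r^\ast}h-h^2 = -\frac{12M\De(r-3M)}{r^5 q}$, and the right-hand side of the displayed identity simplifies to the same expression. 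The only genuinely non-obvious point is thus the choice of $h$: it is forced by demanding that the Riccati-type expression $\pr_{r^\ast}h - h^2$ reproduce exactly the potential difference while $h$ simultaneously vanishes at $r_+$ and tends to the prescribed constant $k^2\frac{6M}{\ellmode(\ellmode+1)-2}$ at infinity — and, perhaps surprisingly, this overdetermined problem admits the above elementary rational solution. Everything else is bookkeeping.
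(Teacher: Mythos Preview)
Your proof is correct and is essentially the same argument as the paper's: the paper writes $0\leq\int f^2|\pr_{r^\ast}(\Phi/f)|^2\,\d r^\ast$ with $f=1+\frac{c}{r}$, $c=\frac{6M}{\ellmode(\ellmode+1)-2}$, and your weight $h$ is exactly $-f'/f$, so the two completions of the square coincide. The only cosmetic difference is that you present $h$ directly and verify the Riccati identity for $\pr_{r^\ast}h-h^2$ by brute-force factoring (your observation $rq'+6M=q$ makes this clean), whereas the paper computes $f''/f$ from the explicit $f$; both arrive at the same expression and the same boundary contributions.
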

\begin{proof}
  Let $f:(-\infty,\frac{\pi}{2})_{r^\ast} \to \RRR$ be a regular function. We have
  \begin{align}\label{eq:mainHardyPsiRpf}
    \begin{aligned}
      0 \leq \int_{-\infty}^{\frac{\pi}{2}} f^2 \le|\pr_{r^\ast}\le(\frac{\Phi}{f}\ri)\ri|^2 \,\d r^\ast & =  \int_{-\infty}^{\frac{\pi}{2}} \le(|\pr_{r^\ast}\Phi|^2 - \frac{f'}{f} \pr_{r^\ast}\le(|\Phi|^2\ri) + \le(\frac{f'}{f}\ri)^2|\Phi|^2\ri)\,\d r^\ast \\
      & =  \int_{-\infty}^{\frac{\pi}{2}} \le(|\pr_{r^\ast}\Phi|^2 + \frac{f''}{f} |\Phi|^2\ri)\,\d r^\ast - \le[\frac{f'}{f}|\Phi|^2\ri]^{\frac{\pi}{2}}_{-\infty},
    \end{aligned}
  \end{align}
  provided that $\frac{f'}{f}$ is sufficiently regular at $r^\ast=-\infty$ and $r^\ast=\frac{\pi}{2}$. Let us take $f(r^\ast) := 1 + \frac{c}{r(r^\ast)}$ with $c:= \frac{6M}{\le(\ellmode(\ellmode+1)-2\ri)}$, such that
  \begin{align}\label{eq:HardyPsiRpff}
    f' & = -cw, & f'' & = w\frac{c}{r}\le(2 - \frac{6M}{r}\ri),  
  \end{align}
  and in particular
  \begin{align}\label{eq:limitsHardyPsiR}
    \frac{f'}{f} & \xrightarrow{r\to r_+} 0, & \frac{f'}{f} & \xrightarrow{r\to+\infty} -ck^2. 
  \end{align}
  Plugging~\eqref{eq:HardyPsiRpff} and~\eqref{eq:limitsHardyPsiR} in~\eqref{eq:mainHardyPsiRpf}, using the regularity condition~\eqref{eq:defreghorgeneral} for $\Phi$, we obtain that
  \begin{align}\label{est:mainHardyPsiRpf2}
    0 & \leq \int_{-\infty}^{\frac{\pi}{2}} \le(|\pr_{r^\ast}\Phi|^2 + w\frac{c}{r\le(1+\frac{c}{r}\ri)}\le(2 - \frac{6M}{r}\ri)|\Phi|^2\ri)\,\d r^\ast + \lim_{r\to+\infty} \le(ck^2 |\Phi|^2\ri).
  \end{align}
  Using that $6M = c(\ellmode(\ellmode+1)-2)$, we check that
  \begin{align*}
    \frac{c}{r\le(1+\frac{c}{r}\ri)}\le(2 - \frac{6M}{r}\ri) & = \le(\ellmode(\ellmode+1)-\frac{6M}{r}\ri)- \frac{\ellmode(\ellmode+1)}{1+\frac{c}{r}}.
  \end{align*}
  Plugging the above in~\eqref{est:mainHardyPsiRpf2}, we get~\eqref{eq:Hardymain} and this finishes the proof of the lemma.
\end{proof}

\subsection{Proof of boundedness of the degenerate energy}\label{sec:proofpropenergybounded}
We now prove Proposition~\ref{prop:energybounded}.
Let $\Psi=\Psi^D,\Psi^R$ be a solution to the  Regge-Wheeler problem~\eqref{sys:RWdeco}. The Hardy inequality~\eqref{eq:Hardymain} implies that on all slices $\Si_{t^\ast}$
\begin{align*}
  \int_{-\infty}^{\frac{\pi}{2}} \frac{\De}{r^4} \ellmode(\ellmode+1) |\Psi_{m\ellmode}|^2\,\d r^\ast \leq \le(1+\frac{3M}{2r_+}\ri) \int_{-\infty}^{\frac{\pi}{2}} \frac{\De}{r^4} \frac{\ellmode(\ellmode+1)}{\le(1+\frac{6M}{(\ellmode(\ellmode+1)-2)r}\ri)}|\Psi_{m\ellmode}|^2\,\d r^\ast \les_{M,k} \mathring{\mathrm{E}}_{m\ellmode}[\Psi](t^\ast) + \Einfty_{m\ellmode}[\Psi](t^\ast). 
\end{align*}
From the above and the definition of the energy quantities, we infer $\widetilde{\mathrm{E}}[\Psi](t^\ast) \les_{M,k} \mathring{\mathrm{E}}[\Psi](t^\ast) + \Einfty[\Psi](t^\ast)$. Plugging this, together with the obvious $\mathring{\mathrm{E}}[\Psi](t^\ast) \leq \widetilde{\mathrm{E}}[\Psi](t^\ast)$, in the energy identity~\eqref{eq:energyPsiDRmain} yields the desired energy estimate~\eqref{est:energyPsiDRmain} and this finishes the proof of Proposition~\ref{prop:energybounded}.   

\begin{remark}\label{rem:Neumann}
  If $\Psi$ satisfies Neumann boundary conditions $r^2\pr_r\Psi\xrightarrow{r\to+\infty}0$ instead of the Dirichlet or ``Robin'' conditions~\eqref{eq:RWBCdeco}, we still have a conservation law
  \begin{align*}
    \mathring{\mathrm{E}}\le[\Psi\ri](t^\ast_2) + \widetilde{\mathrm{E}}^\HH[\Psi](t^\ast_2;t^\ast_1) & = \mathring{\mathrm{E}}\le[\Psi\ri](t^\ast_1),
  \end{align*}
  for all $t^\ast_2\geq t^\ast_1$. However, as suggested by the Hardy estimate~\eqref{eq:Hardymain}, the energy $\mathring{\mathrm{E}}\le[\Psi\ri]$ alone -- \emph{i.e.} without $\Einfty[\Psi]$ -- \underline{is not coercive}. In fact, for $\Psi_{m\ellmode}=1$, we have
  \begin{align*}
    \mathring{\mathrm{E}}_{m\ellmode}\le[\Psi\ri] = \half\int_{-\infty}^{\frac{\pi}{2}}\frac{\De}{r^4}\le(\ellmode(\ellmode+1)-\frac{6M}{r}\ri)\,\d r^\ast = \frac{1}{2r_+}\le(\ellmode(\ellmode+1) - \frac{3M}{r_+}\ri),
  \end{align*}
  which is negative if the radius $r_+$ is sufficiently small (depending on $\ellmode$ and $M$). Recall that $r_+$ can be made arbitrarily small, provided that the cosmological constant $k$ is chosen sufficiently large. See also Section~\ref{sec:boundednessintro}.
\end{remark}

\section{Integral estimates for the Regge-Wheeler equations}\label{sec:RWhigh}

\subsection{Redshift and non-degenerate boundedness estimates}
Let $\ellmode\geq2$, $|m|\leq\ellmode$. We recall (\ref{eq:defenergiesPsiDR}) and define the following energy norms:
\begin{align}\label{eq:defenergynormsRWhigh}
  \begin{aligned}
    \overline{\mathrm{E}}_{m\ellmode}[\Psi](t^\ast) & := \widetilde{\mathrm{E}}_{m\ellmode}[\Psi](t^\ast) + \int_{-\infty}^{\frac{\pi}{2}}w^{-1}|\pr_{r^\ast}\Psi_{m\ellmode}|^2\,\d r^\ast, \\
    \overline{\mathrm{E}}^\HH_{m\ellmode}[\Psi](t^\ast_2;t^\ast_1) & := \widetilde{\mathrm{E}}^\HH_{m\ellmode}[\Psi](t^\ast_2;t^\ast_1) + \lim_{r\to r_+}\int_{t^\ast_1}^{t^\ast_2} \ellmode^2|\Psi_{m\ellmode}|^2 \,\d t^{\ast}, \\
    \overline{\mathrm{E}}^\II_{m\ellmode}[\Psi](t^\ast_2;t^\ast_1) & := \lim_{r\to+\infty}\int_{t^\ast_1}^{t^\ast_2}\le(|\pr_{t^\ast}\Psi_{m\ellmode}|^2 + |\pr_{r^\ast}\Psi_{m\ellmode}|^2 + k^2\ellmode(\ellmode+1)|\Psi_{m\ellmode}|^2\ri)\,\d t^\ast.
  \end{aligned}
\end{align}
Note that these energies have stronger weights at the horizon compared with  (\ref{eq:defenergiesPsiDR}) and that $\partial_{r^\ast}=\partial_{r^\ast}|_{(t^\ast,r^\ast,\theta,\phi)}$ in the above, so in particular $w^{-1} \partial_{r^\ast}$ extends regularly to $\mathcal{H}^+$.  With the definitions of Section~\ref{sec:normsintro} and~\eqref{eq:defenergynormsRWhigh}, we have the relation
\begin{align} \label{merel}
  \mathrm{E}^{\mathfrak{R}}_{m\ellmode}[\Psi](t^\ast) \sim_{M,k} \overline{\mathrm{E}}_{m\ellmode}[\Psi](t^\ast) + \Einfty_{m\ellmode}[\Psi](t^\ast).
\end{align}

Below we will prove the following redshift estimate.
\begin{proposition}[Redshift integral estimate]\label{prop:redshift}
  Let $\Psi \in \{\Psi^D,\Psi^R\}$ be a solution to the decoupled Regge-Wheeler problem~\eqref{sys:RWdeco} on $\mathcal{M} \cap \{t^\ast \geq t^\ast_1\}$. Then, for all $t^\ast_2\geq t^\ast_1$, we have
  \begin{align}\label{est:redshiftmain}
    \begin{aligned}
      & \mathrm{E}^{\mathfrak{R}}[\Psi](t^\ast_2) + \int_{t^\ast_1}^{t^\ast_2} \mathrm{E}^{\mathfrak{R}}[\Psi]\,\d t^\ast + \overline{\mathrm{E}}^\HH[\Psi](t^\ast_2;t^\ast_1) \les_{M,k}  \mathrm{E}^{\mathfrak{R}}[\Psi](t^\ast_1) + \int_{t^\ast_1}^{t^\ast_2}\le(\widetilde{\mathrm{E}}[\Psi](t^\ast) + \Einfty[\Psi](t^\ast)\ri)\,\d t^\ast.
    \end{aligned}
  \end{align}
\end{proposition}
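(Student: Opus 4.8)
The plan is to combine the degenerate energy boundedness of Proposition~\ref{prop:energybounded} with a commuted redshift multiplier estimate localised at the event horizon, in the spirit of~\cite{Daf.Rod08}, and then to trade the resulting near-horizon gain against the equivalence of the degenerate and non-degenerate energies in the region bounded away from $\mathcal{H}^+$. I would fix $0<\delta_1<\delta_0$ small (depending only on $M,k$) and choose a future-directed timelike vectorfield $N$ on $\mathcal{M}$ with $N\equiv T=\pr_{t^\ast}$ for $r\geq r_++\delta_0$ which, near $\mathcal{H}^+$, carries the standard redshift weight built from the positivity of the surface gravity $\kappa=\half\frac{\d}{\d r}\le(\frac{\De}{r^2}\ri)\big|_{r=r_+}>0$. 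Applying, for each fixed angular mode $(m,\ellmode)$, the vectorfield multiplier $N$ to~\eqref{eq:RW}, the resulting energy identity on $\{t^\ast_1\leq t^\ast\leq t^\ast_2\}$ features a spacetime bulk term which is bounded below by the energy density of $\overline{\mathrm{E}}_{m\ellmode}$ restricted to $\{r\leq r_++\delta_1\}$ (up to a term absorbable into the density of $\widetilde{\mathrm{E}}_{m\ellmode}$), is bounded in absolute value by the degenerate ($T$-)energy density of $\widetilde{\mathrm{E}}_{m\ellmode}$ on the transition region $\{\delta_1\leq r-r_+\leq\delta_0\}$, and vanishes outside $\{r\leq r_++\delta_0\}$ (where $N\equiv T$, exactly as in Lemma~\ref{lem:energyid}). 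Crucially, since $N\equiv T$ for large $r$, the only boundary contribution at $r=+\infty$ in this identity is the term $\Einfty_{m\ellmode}[\Psi](t^\ast_2)-\Einfty_{m\ellmode}[\Psi](t^\ast_1)$ already treated in the proof of Lemma~\ref{lem:energyid} -- trivial for $\Psi=\Psi^D$, and obtained from~\eqref{eq:RWBCdecoRobin} for $\Psi=\Psi^R$.

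The $N$-energy $\mathcal{E}^N$ is \emph{not} manifestly coercive, the Regge-Wheeler potential $V=w\le(\ellmode(\ellmode+1)-\frac{6M}{r}\ri)$ being sign-indefinite near $\mathcal{H}^+$ for small $r_+$, exactly the difficulty of Section~\ref{sec:RWbound}; reusing the Hardy estimate of Lemma~\ref{lem:Hardy} precisely as in the proof of Proposition~\ref{prop:energybounded} (so that again the boundary term at infinity is essential) upgrades it to $\mathcal{E}^N[\Psi](t^\ast)+\Einfty[\Psi](t^\ast)\sim_{M,k}\overline{\mathrm{E}}[\Psi](t^\ast)+\Einfty[\Psi](t^\ast)$. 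Absorbing the transition-region error and the potential terms into $\int_{t^\ast_1}^{t^\ast_2}\widetilde{\mathrm{E}}[\Psi]\,\d t^\ast$, summing over $(m,\ellmode)$ with $\ellmode$-independent constants, and invoking~\eqref{merel} to bound the initial-slice terms, one obtains
\begin{align*}
  & \overline{\mathrm{E}}[\Psi](t^\ast_2) + \Einfty[\Psi](t^\ast_2) + \overline{\mathrm{E}}^\HH[\Psi](t^\ast_2;t^\ast_1) + \int_{t^\ast_1}^{t^\ast_2}\overline{\mathrm{E}}^{\mathrm{loc}}[\Psi]\,\d t^\ast \\
  & \qquad \les_{M,k} \mathrm{E}^{\mathfrak{R}}[\Psi](t^\ast_1) + \int_{t^\ast_1}^{t^\ast_2}\le(\widetilde{\mathrm{E}}[\Psi] + \Einfty[\Psi]\ri)\,\d t^\ast,
\end{align*}
where $\overline{\mathrm{E}}^{\mathrm{loc}}_{m\ellmode}[\Psi]$ denotes the part of $\overline{\mathrm{E}}_{m\ellmode}[\Psi]$ with the $r^\ast$-integration restricted to $\{r\leq r_++\delta_1\}$. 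It then remains to upgrade this localised spacetime integral to the full one: on $\{r\geq r_++\delta_1\}$ the weight $w$ is bounded above and below by constants depending only on $M,k$, hence there the restriction of $\overline{\mathrm{E}}[\Psi]$ is $\sim_{M,k}$ that of $\widetilde{\mathrm{E}}[\Psi]$ and in particular $\les\widetilde{\mathrm{E}}[\Psi]$, so that $\int_{t^\ast_1}^{t^\ast_2}\overline{\mathrm{E}}[\Psi]\,\d t^\ast\les_{M,k}\int_{t^\ast_1}^{t^\ast_2}\overline{\mathrm{E}}^{\mathrm{loc}}[\Psi]\,\d t^\ast+\int_{t^\ast_1}^{t^\ast_2}\widetilde{\mathrm{E}}[\Psi]\,\d t^\ast$. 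Substituting this into the displayed inequality, using~\eqref{merel} once more on the left to pass from $\overline{\mathrm{E}}+\Einfty$ to $\mathrm{E}^{\mathfrak{R}}$, and noting that $\int_{t^\ast_1}^{t^\ast_2}\Einfty[\Psi]\,\d t^\ast$ already appears on the right-hand side of~\eqref{est:redshiftmain}, yields the claimed estimate.

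I expect the main difficulty to be organisational rather than analytic: one must carry the conformal-boundary term $\Einfty[\Psi]$ through every step and ensure that the redshift multiplier, being supported near $\mathcal{H}^+$, truly leaves untouched the delicate boundary contribution at $r=+\infty$ which was the crux of Lemma~\ref{lem:energyid} -- this is precisely what forces $N\equiv T$ for large $r$, and is also why both cases $\Psi=\Psi^D$ and $\Psi=\Psi^R$ are handled in parallel. The one substantive point is that the $N$-energy, like the energy $\mathring{\mathrm{E}}$ of Section~\ref{sec:RWbound}, is not coercive on its own because of the negative part of $V$ near a small horizon (cf.~Remark~\ref{rem:Neumann}), which is overcome by re-invoking the Hardy inequality of Lemma~\ref{lem:Hardy} and its boundary term at infinity -- this is where the hypothesis that $\Psi$ satisfies the boundary conditions of~\eqref{sys:RWdeco} enters. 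Finally, the transition-region error must be controllable by the $T$-energy density alone, which is why the transition region $\{\delta_1\leq r-r_+\leq\delta_0\}$ is placed strictly inside $\{r>r_+\}$, where all regular derivatives are comparable to the $T$-energy.
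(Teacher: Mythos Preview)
Your proposal is correct and follows the classical timelike redshift vectorfield construction of Dafermos--Rodnianski; the paper takes a technically different route. Rather than building a timelike $N$, the paper applies the Morawetz identity of Lemma~\ref{lem:Moridentity} with the radial multiplier $p\,\pr_{r^\ast}$ for $p=-w^{-1}\chi$, where $\chi$ is a cutoff supported in $[r_+,4M]$ and equal to $1$ on $[r_+,3M]$. This choice produces directly a bulk term controlling $\int\overline{\mathrm{E}}$ and a future-slice flux controlling $\overline{\mathrm{E}}(t^\ast_2)$, up to errors bounded by the weak energy $\mathrm{E}^w\les\widetilde{\mathrm{E}}$; because $p$ vanishes for $r\geq 4M$, the boundary terms at conformal infinity drop out entirely and the $\Einfty$-contributions never need to be tracked through the redshift step (they enter only when Proposition~\ref{prop:energybounded} is invoked afterward). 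The horizon zeroth-order term $\ell^2|\Psi|^2|_{r=r_+}$ in $\overline{\mathrm{E}}^\HH$ is recovered via an elementary Hardy inequality~\eqref{hardy2}. The result is packaged as a general Lemma~\ref{lem:redshift} valid for any $\Psi$ regular at the horizon, with no boundary hypothesis at infinity; Proposition~\ref{prop:redshift} then follows in one line by combining this lemma with Proposition~\ref{prop:energybounded}. Your approach buys conceptual familiarity and geometric robustness; the paper's approach buys uniformity of framework (the same identity~\eqref{est:pfCarleman1int1} underlies both the redshift and the subsequent Carleman estimates) and sidesteps the need to re-argue coercivity of an $N$-flux via Lemma~\ref{lem:Hardy}, since the radial multiplier yields spacetime bulk positivity directly rather than slice-energy positivity.
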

\begin{remark}
  The redshift estimate of Proposition~\ref{prop:redshift} is by now classical, see~\cite{Daf.Rod08} for a general formulation for non-degenerate Killing horizons and \cite{Hol09} where the covariant wave equation on exact Kerr-AdS case is treated. We will prove (\ref{est:redshiftmain}) using the same framework as for the Carleman estimates of Proposition~\ref{prop:integralestimatesv}. See Lemma~\ref{lem:Moridentity} for the basic integral identity and Section~\ref{sec:Carlemanredshift} for the proof.
\end{remark}

As is well known, Proposition~\ref{est:redshiftmain} and Proposition~\ref{prop:energybounded} can be combined to obtain uniform boundedness by a pigeonhole argument (see~\cite{Daf.Rod08} for the original appearance of the argument, which we will not repeat here). This proves the following theorem.

\begin{theorem}[Boundedness for $\Psi^{D},\Psi^R$]\label{thm:mainRW1a}
  Let $\ellmode\geq 2$, $|m|\leq\ellmode$. Let $\Psi \in \{\Psi^D,\Psi^R\}$ be a solution to the decoupled Regge-Wheeler problem~\eqref{sys:RWdeco} on $\mathcal{M} \cap \{t^\ast \geq t^\ast_1\}$. We have for all $t^\ast_2\geq t^\ast_1$ the boundedness estimate 
    \begin{align}\label{est:boundednessPsiDR}
      \begin{aligned}
        \mathrm{E}^{\mathfrak{R}}[\Psi](t^\ast_2) + \widetilde{\mathrm{E}}^\HH[\Psi](t^\ast_2;t^\ast_1) & \les_{M,k} \mathrm{E}^{\mathfrak{R}}[\Psi](t^\ast_1). 
      \end{aligned}
    \end{align}
 \end{theorem}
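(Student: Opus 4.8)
The plan is to combine the two results just established --- the degenerate energy boundedness of Proposition~\ref{prop:energybounded} and the redshift integral estimate of Proposition~\ref{prop:redshift} --- via the classical argument of \cite{Daf.Rod08}. Since the Regge--Wheeler equation~\eqref{eq:RW} and all of the conditions in~\eqref{sys:RWdeco} decouple over the angular modes, I would fix $\ellmode\geq 2$, $|m|\leq\ellmode$ once and for all, argue for $\Psi_{m\ellmode}$, and recover the statement for the summed energy at the end.

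\emph{Step 1: the degenerate energy as a substitute for $T$-energy conservation.} By the energy identity of Lemma~\ref{lem:energyid} the quantity $\mathring{\mathrm{E}}_{m\ellmode}[\Psi]+\Einfty_{m\ellmode}[\Psi]$ is non-increasing in $t^\ast$ (the horizon flux being non-negative), and by the Hardy estimate of Lemma~\ref{lem:Hardy} it is comparable to $\widetilde{\mathrm{E}}_{m\ellmode}[\Psi]+\Einfty_{m\ellmode}[\Psi]$. Together with~\eqref{merel}, Proposition~\ref{prop:energybounded} then gives
\begin{align*}
  \widetilde{\mathrm{E}}_{m\ellmode}[\Psi](t^\ast)+\Einfty_{m\ellmode}[\Psi](t^\ast) \les_{M,k} \mathrm{E}^{\mathfrak{R}}_{m\ellmode}[\Psi](t^\ast_1) \qquad \text{for all } t^\ast\in[t^\ast_1,t^\ast_2].
\end{align*}
The only structural novelty compared with the asymptotically flat Schwarzschild case is that this almost-conserved quantity carries the boundary-at-infinity contribution $\Einfty_{m\ellmode}$, which is itself part of the content of Proposition~\ref{prop:energybounded}.

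\emph{Step 2: absorbing the redshift error.} Inserting Step 1 into the redshift estimate~\eqref{est:redshiftmain}, the issue is its right-hand side error $\int_{t^\ast_1}^{t^\ast_2}\big(\widetilde{\mathrm{E}}_{m\ellmode}[\Psi]+\Einfty_{m\ellmode}[\Psi]\big)\,\d t^\ast$: bounding it crudely by $(t^\ast_2-t^\ast_1)\,\mathrm{E}^{\mathfrak{R}}_{m\ellmode}[\Psi](t^\ast_1)$ only yields polynomial-in-time growth and is useless. Instead one exploits that the left-hand side of~\eqref{est:redshiftmain} contains the time-integrated \emph{non-degenerate} energy $\int_{t^\ast_1}^{t^\ast_2}\mathrm{E}^{\mathfrak{R}}_{m\ellmode}[\Psi]\,\d t^\ast$, that $\mathrm{E}^{\mathfrak{R}}_{m\ellmode}[\Psi]$ pointwise in $t^\ast$ dominates $\widetilde{\mathrm{E}}_{m\ellmode}[\Psi]+\Einfty_{m\ellmode}[\Psi]$ by~\eqref{merel}, and that the redshift multiplier underlying Proposition~\ref{prop:redshift} produces this bulk term with a constant that can be taken large relative to the error constant (this is precisely the redshift positivity, traceable to the strict negativity of the surface-gravity-type quantity $\tfrac12(\De_--\De_+)/\De_0$ at $r_+$). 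The error is therefore absorbed into the bulk --- this is the pigeonhole step of \cite{Daf.Rod08}, which uses exactly the monotonicity supplied by Step 1 --- and one is left with
\begin{align*}
  \mathrm{E}^{\mathfrak{R}}_{m\ellmode}[\Psi](t^\ast_2) + \int_{t^\ast_1}^{t^\ast_2}\mathrm{E}^{\mathfrak{R}}_{m\ellmode}[\Psi]\,\d t^\ast + \widetilde{\mathrm{E}}^\HH_{m\ellmode}[\Psi](t^\ast_2;t^\ast_1) \les_{M,k} \mathrm{E}^{\mathfrak{R}}_{m\ellmode}[\Psi](t^\ast_1),
\end{align*}
which in particular contains~\eqref{est:boundednessPsiDR}; summing over $m,\ellmode$ completes the proof.

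I expect the main (indeed essentially the only) obstacle to be the one isolated in Step 2: the error in the redshift estimate is a time integral of the degenerate energy, which for these solutions decays only logarithmically and is hence not integrable in time, so it cannot be disposed of by any soft argument --- one genuinely needs the redshift mechanism (equivalently, the pigeonhole of \cite{Daf.Rod08}) to close the estimate. The AdS-specific feature is purely a matter of bookkeeping: the boundary term $\Einfty$ must be threaded consistently through the monotone degenerate quantity of Step 1, through the redshift error, and back into $\mathrm{E}^{\mathfrak{R}}_{m\ellmode}[\Psi](t^\ast_2)$ on the left via~\eqref{merel}; no new idea is required for it.
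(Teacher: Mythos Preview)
Your proposal is correct and matches the paper's approach: the theorem is stated immediately after Proposition~\ref{prop:redshift}, and the paper explicitly combines it with Proposition~\ref{prop:energybounded} via the pigeonhole argument of \cite{Daf.Rod08} without repeating the details. Your two-step outline faithfully expands that argument, including the AdS-specific bookkeeping of the boundary term $\Einfty$.
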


\subsection{Integrated decay estimates}

The following proposition is the key, angular dependent, integrated decay estimate of the paper. 
\begin{proposition}[Carleman integral estimates]\label{prop:integralestimatesv}
  Let $\ellmode\geq 2$, $|m|\leq\ellmode$. Let $\Psi \in \{\Psi^D,\Psi^R\}$ be a solution to the  decoupled Regge-Wheeler problem~\eqref{sys:RWdeco} on $\mathcal{M} \cap \{t^\ast \geq t^\ast_1\}$. Then, for all $t^\ast_2\geq t^\ast_1$, we have
  \begin{align}
    \label{est:Carlemanmain}
    \begin{aligned}
      \int_{t^\ast_1}^{t^\ast_2} \mathrm{E}^{\mathfrak{R}}_{m\ellmode}[\Psi](t^\ast)\,\d t^\ast + \overline{\mathrm{E}}^\II_{m\ellmode}[\Psi](t^\ast_2;t^\ast_1) & \leq e^{C\ellmode^\mathfrak{p}}\mathrm{E}^{\mathfrak{R}}_{m\ellmode}[\Psi](t^\ast_1),
    \end{aligned}
  \end{align}
  where $C=C(M,k)>0$ and where $\mathfrak{p}=1$. Moreover, we establish~\eqref{est:Carlemanmain} in the weaker case $\mathfrak{p}=2$, purely based on vectorfields estimates (\emph{i.e.} without time-frequency considerations).
\end{proposition}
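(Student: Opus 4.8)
The plan is to prove the Carleman integral estimate \eqref{est:Carlemanmain} for each fixed angular mode $\Psi_{m\ellmode}$ by combining a multiplier identity for the reduced one-dimensional Regge-Wheeler operator \eqref{eq:pfenergid1} with a carefully chosen exponential Carleman weight. First I would fix $\Psi\in\{\Psi^D,\Psi^R\}$ and work with the mode-projected equation $\RW_{m\ellmode}\Psi_{m\ellmode}=0$ in $(t^\ast,r^\ast)$ coordinates. The skeleton is the energy-momentum/multiplier method: contract the equation against a vectorfield multiplier of the form $X = e^{\lambda F(r^\ast)}\le(a(r^\ast)\pr_{r^\ast} + \ldots\ri)$ plus a zeroth-order term $e^{\lambda F}\,g(r^\ast)\,\Psi_{m\ellmode}$, where $\lambda$ will be taken proportional to a power of $\ellmode$, and $F$ is a bounded weight function chosen so that $F'$ is supported in the region where the Regge-Wheeler potential $V = w(\ellmode(\ellmode+1)-6M/r)$ and its derivative behave favourably (e.g.\ a logarithmic-type or cutoff profile interpolating between the horizon value and the value at infinity, as in \cite{Hol.Smu13}). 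I would record the resulting divergence identity as a separate lemma — this is the analogue of the $\mathring{\mathrm{E}}$-identity in Lemma~\ref{lem:energyid} but with the Carleman weight inserted — isolating a bulk term $\int\int$ controlling $\mathrm{E}^{\mathfrak R}_{m\ellmode}$, boundary terms at $r^\ast=\pm\infty$, and boundary terms at $t^\ast=t^\ast_1,t^\ast_2$.

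Next I would treat the two boundary ends. At the horizon $r^\ast=-\infty$, the regularity and the $T$-energy identity already give control of $\widetilde{\mathrm{E}}^\HH$ and, with the redshift estimate of Proposition~\ref{prop:redshift}, of $\overline{\mathrm{E}}^\HH$; I would choose $F$ so the Carleman boundary contribution there has a good sign or is dominated by these known horizon fluxes. At infinity $r^\ast=\pi/2$, I would \emph{not} invoke the specific Dirichlet/Robin conditions — the point emphasised in the overview is that the estimate is robust — instead I would arrange $F'(\pi/2)$ to vanish (or the weight to be chosen so that the boundary term at infinity is nonnegative), so that $\overline{\mathrm{E}}^\II_{m\ellmode}$ is produced with a good sign on the left; the energy boundedness of Theorem~\ref{thm:mainRW1a} then controls the remaining timelike boundary terms at $t^\ast_1,t^\ast_2$ by $\mathrm{E}^{\mathfrak R}_{m\ellmode}(t^\ast_1)$. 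The exponential constant $e^{C\ellmode^{\mathfrak p}}$ arises precisely because, to dominate the indefinite part of the potential $V$ (which can be negative between $r_+$ and $r_c=6M/(\ellmode(\ellmode+1))$, cf.\ Remark~\ref{rem:Neumann}) and the lower-order commutator terms generated by the weight, one must take $\lambda \sim \ellmode^{\mathfrak p}$, and the multiplier carries $e^{\lambda(F(\pi/2)-F(-\infty))}$; a bounded oscillation of $F$ then yields $\mathfrak p = 2$ in the purely physical-space version (one power of $\ellmode$ from dominating $V$, another from closing the first-order terms without Plancherel in $t^\ast$), while a time-Fourier split as in \cite{Hol.Smu13} allows trading $\pr_{t^\ast}$ for $\Psi$ and sharpens this to $\mathfrak p=1$.

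The $\mathfrak p = 1$ refinement I would carry out by Fourier-transforming $\Psi_{m\ellmode}$ in $t^\ast$ (legitimate here without a future cutoff, since the weaker exponential decay already established in physical space provides the needed integrability), splitting into $|\omega|\lesssim \ellmode$ and $|\omega|\gtrsim\ellmode$ pieces. On the low-frequency piece one runs essentially the stationary Carleman estimate with $\lambda\sim\ellmode$ since $\omega^2$ is subordinate to $\ellmode^2$; on the high-frequency piece the $\omega^2$ term is itself coercive and dominates the bad part of the potential, so only $\lambda\sim\ellmode$ is needed there too, and Plancherel converts the frequency-space bounds back into the spacetime integral $\int_{t^\ast_1}^{t^\ast_2}\mathrm{E}^{\mathfrak R}_{m\ellmode}\,\d t^\ast$.

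The main obstacle I expect is the interaction of the Carleman weight with the Regge-Wheeler potential in the bulk term: one needs the weighted first-order quantity $\lambda F' a$ (or more precisely a combination of $(a e^{\lambda F})'$, $\lambda F' a$, and $g$) to dominate both $|\pr_{r^\ast}\Psi_{m\ellmode}|^2$ and $V|\Psi_{m\ellmode}|^2$ with a definite sign \emph{uniformly} down to the horizon and out to infinity, despite $V$ changing sign and despite $w=\Delta/r^4$ degenerating at $r_+$ — this is exactly the ``careful analysis of the interaction of the Carleman multipliers with the Regge-Wheeler potential'' flagged in the overview, and getting it to work in physical space (rather than frequency space, where one can simply add $\omega^2$) is the delicate point; the choice of the profile $F$ and the auxiliary functions $a,g$, together with absorbing the zeroth-order error $f''/f$-type terms using a Hardy inequality as in Lemma~\ref{lem:Hardy}, is where the real work lies.
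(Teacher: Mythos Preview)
Your overall scaffolding is right --- Carleman multiplier with exponential weight, energy/redshift to control timelike boundaries and horizon flux, then a time-Fourier split to sharpen $\mathfrak{p}=2$ to $\mathfrak{p}=1$ --- and this is indeed what the paper does. But several concrete mechanisms are misidentified and would cause trouble if you tried to execute the sketch.

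First, the boundary at infinity. The paper's weight is $p=e^{\Ka/r}$, and crucially $p'\to -k^2\Ka$ and $p''\to k^4\Ka^2$ do \emph{not} vanish there. The combination of Morawetz and Lagrangian multipliers (Lemma~\ref{lem:basicintegrals}) produces at $r=\infty$ the expression $\tfrac{3}{2}(|\pr_{r^\ast}\Psi|^2+|\pr_t\Psi|^2)+\tfrac{1}{2}k^2\Ka\,\Re(\Psi^\ast\pr_{r^\ast}\Psi)+(\tfrac{1}{4}k^4\Ka^2-\tfrac{3}{2}k^2\ellmode(\ellmode+1))|\Psi|^2$, which is coercive and yields $\overline{\mathrm{E}}^\II$ only because the $\Ka^2|\Psi|^2$ term dominates once $\Ka\gtrsim\ellmode$ (see~\eqref{est:bdyineasyCarleman}). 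Your option of taking $F'(\pi/2)=0$ would kill this term and leave a $-\ellmode^2|\Psi|^2$ boundary contribution with the wrong sign; you cannot get $\overline{\mathrm{E}}^\II$ that way without invoking the boundary condition.

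Second, the origin of $\mathfrak{p}=2$ is not ``one power from $V$, another from first-order terms''. In the paper's identity~\eqref{est:easyCarleman} the $|\pr_t\Psi|^2$ and $|\pr_{r^\ast}\Psi|^2$ bulk terms already carry a good sign with coefficient $\Ka w e^f$; all the difficulty lives in the zeroth-order potential $2(pV)'-\tfrac12 pV'-\tfrac14 p'''$, which changes sign in a thin interval near $r_{\Ka,0}=r_++\Ka^{-1}r_+^2$ (Lemma~\ref{lem:easypotentialsigns} and Appendix~\ref{sec:Carlemanmultipliers}). The width and depth of that bad interval are such that the absorption argument (controlling the sup there by the integral over the good region, via the $\Ka w e^f|\pr_{r^\ast}\Psi|^2$ term) closes only when $\Ka^{-1}\ellmode^2$ is small. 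This near-horizon potential microstructure, not any cross term, is what forces $\Ka\sim\ellmode^2$.

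Third, for $\mathfrak{p}=1$ the low-frequency piece $\Psic^\flat$ is \emph{not} handled by a Carleman estimate at all: the paper uses the plain elliptic identity~\eqref{eq:ellipticFourier} with $f=1$ and Plancherel $|\pr_t\Psic^\flat|^2\lesssim(\delta\ellmode)^2|\Psic^\flat|^2$ to make $V|\Psic^\flat|^2$ dominate the time-derivative term. Moreover, contrary to your claim of robustness throughout, the specific Dirichlet/Robin conditions \emph{do} enter here (Lemma~\ref{lem:Plancherelellipticboundary}), to show that the boundary contribution $-\Re(\Psic^{\flat,\ast}\pr_{r^\ast}\Psic^\flat)$ at infinity has a good sign in the Robin case. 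Only the high-frequency piece $\Psic^\sharp$ reuses the Carleman weight $e^{\Ka/r}$, now with Plancherel supplying $\ellmode^2|\Psic^\sharp|^2\lesssim\Ka|\pr_t\Psic^\sharp|^2$ near the horizon (Lemma~\ref{lem:PlancherelsharpCarlemanised}) so that $\Ka\sim\ellmode$ suffices.
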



As a trivial consequence, combining (\ref{est:Carlemanmain}) and (\ref{est:boundednessPsiDR})  we have the following theorem:

\begin{theorem}[Integrated decay for $\Psi^{D},\Psi^R$]\label{thm:mainRW1b}
  Let $\ellmode\geq 2$, $|m|\leq\ellmode$. Let $\Psi \in \{\Psi^D,\Psi^R\}$ be a solution to the decoupled Regge-Wheeler problem~\eqref{sys:RWdeco} on $\mathcal{M} \cap \{t^\ast \geq t^\ast_1\}$. We have
  \begin{align}\label{est:pfcorCarl1}
      \begin{aligned}
        \mathrm{E}^{\mathfrak{R}}_{m\ellmode}[\Psi](t^\ast_2)+\int_{t^\ast_1}^{t^\ast_2} \mathrm{E}^{\mathfrak{R}}_{m\ellmode}[\Psi](t^\ast)\,\d t^\ast  + \overline{\mathrm{E}}^\HH_{m\ellmode}[\Psi](t^\ast_2;t^\ast_1) + \overline{\mathrm{E}}^\II_{m\ellmode}[\Psi](t^\ast_2;t^\ast_1) \leq e^{C\ellmode^{\mathfrak{p}}}\mathrm{E}^{\mathfrak{R}}_{m\ellmode}[\Psi](t^\ast_1)
      \end{aligned}
    \end{align}
    for all $t^\ast_2\geq t^\ast_1$, with $\mathfrak{p}=1$, and where $C=C(M,k)>0$.
\end{theorem}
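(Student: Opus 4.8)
\textbf{Proof proposal for Theorem~\ref{thm:mainRW1b}.}

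The plan is to simply combine the two preceding results---Theorem~\ref{thm:mainRW1a} (uniform boundedness) and Proposition~\ref{prop:integralestimatesv} (the Carleman integral estimate)---at the level of a single fixed angular mode $m\ellmode$. All four terms on the left-hand side of~\eqref{est:pfcorCarl1} are already controlled individually by $e^{C\ellmode^\mathfrak{p}}\mathrm{E}^{\mathfrak{R}}_{m\ellmode}[\Psi](t^\ast_1)$ by one or the other of these inputs, so the argument is purely a matter of bookkeeping.

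First I would record that, by Proposition~\ref{prop:integralestimatesv}, the pointwise-in-time energy that appears inside the time integral in~\eqref{est:Carlemanmain}, namely $\mathrm{E}^{\mathfrak{R}}_{m\ellmode}[\Psi](t^\ast)$, together with the $\mathcal{I}$-flux $\overline{\mathrm{E}}^\II_{m\ellmode}[\Psi](t^\ast_2;t^\ast_1)$, is bounded by $e^{C\ellmode^\mathfrak{p}}\mathrm{E}^{\mathfrak{R}}_{m\ellmode}[\Psi](t^\ast_1)$ with $\mathfrak{p}=1$ (and also, with a worse constant, $\mathfrak{p}=2$ by the purely vectorfield argument). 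This immediately takes care of the second and fourth terms on the left of~\eqref{est:pfcorCarl1}. Second I would invoke Theorem~\ref{thm:mainRW1a}, which gives $\mathrm{E}^{\mathfrak{R}}_{m\ellmode}[\Psi](t^\ast_2)+\widetilde{\mathrm{E}}^\HH_{m\ellmode}[\Psi](t^\ast_2;t^\ast_1)\les_{M,k}\mathrm{E}^{\mathfrak{R}}_{m\ellmode}[\Psi](t^\ast_1)$; since $1\leq e^{C\ellmode^\mathfrak{p}}$, this constant can be absorbed into $e^{C\ellmode^\mathfrak{p}}$ and handles the first term as well as the $T$-weighted horizon flux $\widetilde{\mathrm{E}}^\HH_{m\ellmode}$. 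The only point requiring a small remark is the \emph{non-degenerate} horizon flux $\overline{\mathrm{E}}^\HH_{m\ellmode}[\Psi](t^\ast_2;t^\ast_1)$ appearing in~\eqref{est:pfcorCarl1}, which is strictly stronger than $\widetilde{\mathrm{E}}^\HH_{m\ellmode}$; for this I would note that the redshift estimate of Proposition~\ref{prop:redshift}, applied mode-by-mode, controls $\overline{\mathrm{E}}^\HH_{m\ellmode}[\Psi](t^\ast_2;t^\ast_1)$ by $\mathrm{E}^{\mathfrak{R}}_{m\ellmode}[\Psi](t^\ast_1)+\int_{t^\ast_1}^{t^\ast_2}\big(\widetilde{\mathrm{E}}_{m\ellmode}[\Psi]+\Einfty_{m\ellmode}[\Psi]\big)\,\d t^\ast$, and the spacetime integral on the right is in turn bounded by $\int_{t^\ast_1}^{t^\ast_2}\mathrm{E}^{\mathfrak{R}}_{m\ellmode}[\Psi]\,\d t^\ast\leq e^{C\ellmode^\mathfrak{p}}\mathrm{E}^{\mathfrak{R}}_{m\ellmode}[\Psi](t^\ast_1)$ by~\eqref{est:Carlemanmain} again.

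Adding up these contributions and renaming the constant $C=C(M,k)>0$ (absorbing the finitely many $M,k$-dependent multiplicative constants from Theorems~\ref{thm:mainRW1a},~\ref{prop:redshift} into the exponent, which only costs an additive constant times $\ellmode^0\leq\ellmode^\mathfrak{p}$) yields~\eqref{est:pfcorCarl1} with $\mathfrak{p}=1$, and the same line of reasoning with the weaker $\mathfrak{p}=2$ version of~\eqref{est:Carlemanmain} gives the corresponding statement without time-frequency analysis. I do not anticipate a genuine obstacle here; the only mild subtlety is making sure the non-degenerate (redshift-improved) horizon flux is fed through Proposition~\ref{prop:redshift} rather than read off directly, since Proposition~\ref{prop:integralestimatesv} as stated only outputs the degenerate $\widetilde{\mathrm{E}}^\HH$ through $\mathrm{E}^{\mathfrak{R}}$. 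Everything else is a one-line combination of previously established estimates.
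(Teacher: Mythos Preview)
Your proposal is correct and matches the paper's approach: the paper states Theorem~\ref{thm:mainRW1b} as a ``trivial consequence, combining~\eqref{est:Carlemanmain} and~\eqref{est:boundednessPsiDR}''. You are in fact slightly more careful than the paper's one-line justification, since you explicitly note that the non-degenerate horizon flux $\overline{\mathrm{E}}^\HH_{m\ellmode}$ (as opposed to $\widetilde{\mathrm{E}}^\HH_{m\ellmode}$) is not directly provided by either~\eqref{est:Carlemanmain} or~\eqref{est:boundednessPsiDR} and must be recovered via the redshift estimate of Proposition~\ref{prop:redshift}; this is indeed the correct way to close that term.
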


The remainder of this section is dedicated to the proofs of Propositions~\ref{prop:redshift} and~\ref{prop:integralestimatesv}. In Section~\ref{sec:threebasicestimates} we collect the basic integral identities and estimates that will be used in the proofs. In Section~\ref{sec:Carlemanredshift} we prove the redshift estimates of Proposition~\ref{prop:redshift}. In Section~\ref{sec:Carlemanestimatesfirst} we prove the Carleman estimates of Proposition~\ref{prop:integralestimatesv} in the $\mathfrak{p}=2$ case using only vectorfield estimates (\emph{i.e.} without any time-frequency consideration). To that end, we need detailed asymptotics towards the horizon of the potentials appearing in these estimates, which are postponed to Appendix~\ref{sec:Carlemanmultipliers}. In Section~\ref{sec:Fourier} we improve the Carleman estimates of Proposition~\ref{prop:integralestimatesv} from $\mathfrak{p}=2$ to $\mathfrak{p}=1$ using time-frequency cut-off and a separated treatment of the low and the high time-frequencies.


\begin{remark}
  The proof of the Carleman estimates in the strong $\mathfrak{p}=1$ case is an adaptation of Sections 7 (low-frequency estimates) and 8 (high-frequency estimates) of~\cite{Hol.Smu13}. We realised that the basic estimate used in the high-frequency case in~\cite{Hol.Smu13} can be used to obtain a global, unique Carleman estimate \underline{for all time frequencies}, provided that we consider a weaker $\mathfrak{p}=2$ power of $\ellmode$ in~\eqref{est:Carlemanmain}. Thus, we obtain that each angular frequency decays exponentially in time before having to take the Fourier transform in time. This allows us to take the Fourier transform without having to cut-off at large times. See the cut-off argument in Section~\ref{sec:Fourier}.
\end{remark}

\begin{remark}
  In~\cite{Hol.Smu13}, only Dirichlet boundary conditions are considered (which corresponds to the $\Psi=\Psi^D$ case in the present paper). In the proof of Proposition~\ref{prop:integralestimatesv} in the $\mathfrak{p}=2$ case, we do not rely on any specific boundary condition and we show that the Carleman estimates hold true in general, provided that $\Psi$ satisfies an energy boundedness statement. Indeed, it turns out that the (integrated) Dirichlet and Neumann values of $\Psi$ enter with a good sign and can be controlled simultaneously as the other spacetime integrals. See (the signs of) the boundary terms in estimates~\eqref{est:easyCarleman} and~\eqref{est:bdyineasyCarleman}. In the $\mathfrak{p}=1$ case, the specific (Dirichlet or Robin) boundary conditions come into play in the estimates of the low time-frequencies $\Psic^\flat$ in Section~\ref{sec:Psiflat}. We show that the Robin condition in the $\Psic^\flat$ case produces a good sign on the boundary which allows to close the estimate. See Lemma~\ref{lem:Plancherelellipticboundary}. 
\end{remark}

In the next sections we assume that $\Psi=\Psi_{m\ellmode}$ and that the Regge-Wheeler equation~\eqref{eq:RW} is projected onto the eigenspaces of $\LL$ since there is no possible confusion.

\subsection{Basic integral identities}\label{sec:threebasicestimates}
In this section, we will derive a few multiplier identities to prepare for the proof of Propositions~\ref{prop:redshift} and~\ref{prop:integralestimatesv}. In order not to clutter the notation we assume a priori that all integrals appearing in the formulae of the following two lemmas are finite. In all applications of the paper, in particular Lemma \ref{lem:basicintegrals} below, this is easily verified using the regularity of $\Phi$ at the horizon and infinity and the explicit form of the multipliers chosen later. 

\begin{lemma}[Morawetz identity]\label{lem:Moridentity}
  Let $p:(r_+,+\infty)\to\RRR$ be a smooth function. For all $t^\ast_1 \leq t^\ast_2$ and for all smooth functions $\Phi$ on $[t^\ast_1,t^\ast_2]\times(r_+,+\infty)$, we have
  \begin{align}\label{est:pfCarleman1int1}
    \begin{aligned}
      0 = &\int_{t^\ast_1}^{t^\ast_2}\int_{-\infty}^{\frac{\pi}{2}} \bigg(\half\pr_{r^\ast}\le(\frac{\De_-\De_+}{\De_0^2}p\ri)|\pr_{t^\ast}\Phi|^2 + \half p'|\pr_{r^\ast}\Phi|^2 \\
      & \quad\quad +  \pr_{r^\ast}\le(\frac{\De_-}{\De_0}\ri)p\Re\le(\pr_{r^\ast}\Phi^\ast\pr_{t^\ast}\Phi\ri) - \half \le(pV\ri)'|\Phi|^2\bigg) \,\d t^\ast\d r^\ast\\
      & -\half \lim_{r\to+\infty}\int_{t_1^\ast}^{t_2^\ast}p\le(|\pr_{r^\ast}\Phi|^2 + |\pr_{t^\ast}\Phi|^2\ri)\,\d t^\ast  + \half k^2\ellmode(\ellmode+1) \lim_{r\to+\infty}\int_{t^\ast_1}^{t_2^\ast}p|\Phi|^2\,\d t^\ast  \\
      & + \half\lim_{r\to r_+}\int_{t^\ast_1}^{t^\ast_2}\le(\frac{\De_-\De_+}{\De_0^2}p|\pr_{t^\ast}\Phi|^2+p|\pr_{r^\ast}\Phi|^2 - pV |\Phi|^2\ri) \,\d t^\ast\\
      & + \half \le[\int_{-\infty}^{\frac{\pi}{2}}\le(\frac{\De_--\De_+}{\De_0}\ri)p|\pr_{r^\ast}\Phi|^2\,\d r^\ast\ri]_{t^\ast_1}^{t^\ast_2} +  \le[\int_{-\infty}^{\frac{\pi}{2}}\frac{\De_-\De_+}{\De_0^2} p \Re\le(\pr_{r^\ast}\Phi^\ast\pr_{t^\ast}\Phi\ri)\,\d r^\ast\ri]_{t^\ast_1}^{t^\ast_2} \\
      & + \int_{t^\ast_1}^{t^\ast_2}\int_{-\infty}^{\frac{\pi}{2}}p\Re\le(\RW\Phi\pr_{r^\ast}\Phi^\ast\ri)  \,\d t^\ast\d r^\ast. 
    \end{aligned}
  \end{align}
\end{lemma}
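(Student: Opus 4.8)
\textbf{Proof proposal for the Morawetz identity (Lemma~\ref{lem:Moridentity}).}

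The plan is to multiply the Regge-Wheeler equation, written in $(t^\ast,r^\ast)$ coordinates as in~\eqref{eq:pfenergid1}, by $p\,\pr_{r^\ast}\Phi^\ast$, take the real part, and integrate over the rectangle $(t^\ast_1,t^\ast_2)_{t^\ast}\times(-\infty,\tfrac{\pi}{2})_{r^\ast}\times\SSS^2$. This is the standard $\pr_{r^\ast}$-multiplier (Morawetz) manipulation, so the work is entirely in reorganizing the resulting terms into exact derivatives. Concretely, I would start from
\[
-\RW\Phi = \frac{\De_-\De_+}{\De_0^2}\pr_{t^\ast}^2\Phi - \pr_{r^\ast}^2\Phi + \pr_{r^\ast}\!\le(\frac{\De_-}{\De_0}\pr_{t^\ast}\Phi\ri) - \frac{\De_+}{\De_0}\pr_{r^\ast}\pr_{t^\ast}\Phi + V\Phi,
\]
with $V=\tfrac{\De}{r^4}(\ellmode(\ellmode+1)-\tfrac{6M}{r})$, multiply by $p\,\pr_{r^\ast}\Phi^\ast$, and take real parts term by term.

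The key algebraic steps are: (i) $\Re\big(p\,\pr_{r^\ast}^2\Phi\,\pr_{r^\ast}\Phi^\ast\big)=\tfrac12\pr_{r^\ast}(p|\pr_{r^\ast}\Phi|^2)-\tfrac12 p'|\pr_{r^\ast}\Phi|^2$, producing the $\tfrac12 p'|\pr_{r^\ast}\Phi|^2$ bulk term plus a total $r^\ast$-derivative that will contribute only at the endpoints $r^\ast=-\infty,\tfrac{\pi}{2}$; (ii) $\Re\big(pV\Phi\,\pr_{r^\ast}\Phi^\ast\big)=\tfrac12 \pr_{r^\ast}(pV|\Phi|^2)-\tfrac12(pV)'|\Phi|^2$, giving the potential bulk term $-\tfrac12(pV)'|\Phi|^2$ and another total $r^\ast$-derivative; (iii) for the two time-derivative terms one writes, using that the multiplier $p$ is $t^\ast$-independent, $\Re\big(\tfrac{\De_-\De_+}{\De_0^2}p\,\pr_{t^\ast}^2\Phi\,\pr_{r^\ast}\Phi^\ast\big)$ and the mixed term as $\pr_{t^\ast}$ of a density plus $\pr_{r^\ast}$ of a density plus a bulk term $\tfrac12\pr_{r^\ast}(\tfrac{\De_-\De_+}{\De_0^2}p)|\pr_{t^\ast}\Phi|^2$, together with the cross term $\pr_{r^\ast}(\tfrac{\De_-}{\De_0})\,p\,\Re(\pr_{r^\ast}\Phi^\ast\pr_{t^\ast}\Phi)$. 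The $\pr_{t^\ast}$-densities integrate to the $[\cdots]_{t^\ast_1}^{t^\ast_2}$ boundary terms on the right-hand side (the two square-bracket terms with the $\tfrac{\De_--\De_+}{\De_0}$ and $\tfrac{\De_-\De_+}{\De_0^2}$ weights), while the $\pr_{r^\ast}$-densities integrate in $r^\ast$ to horizon and infinity limits. At $r^\ast=-\infty$ one uses, exactly as in the proof of Lemma~\ref{lem:energyid}, that $\tfrac12\tfrac{\De_--\De_+}{\De_0}\to-1$ and that $\pr_{r^\ast}\Phi\to 0$ and $V\to 0$ by the regularity condition~\eqref{eq:defreghorgeneral} together with the relation $\pr_{r^\ast}\Psi=\Lb\Psi+\tfrac12\tfrac{\De_-}{\De_0}(L+\Lb)\Psi$, collecting the $\tfrac12\lim_{r\to r_+}\int(\cdots)$ horizon term; at $r^\ast=\tfrac{\pi}{2}$ one simply records the various limits as $r\to+\infty$, yielding the $-\tfrac12\lim_{r\to+\infty}\int p(|\pr_{r^\ast}\Phi|^2+|\pr_{t^\ast}\Phi|^2)\,\d t^\ast$ and $\tfrac12 k^2\ellmode(\ellmode+1)\lim_{r\to+\infty}\int p|\Phi|^2\,\d t^\ast$ terms (here $V\to k^2\ellmode(\ellmode+1)$ at infinity since $\tfrac{\De}{r^4}\to k^2$ and the $\tfrac{6M}{r}$ piece vanishes). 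Finally the term $\int\int p\,\Re(\RW\Phi\,\pr_{r^\ast}\Phi^\ast)$ is kept on the right-hand side since here $\Phi$ is a general smooth function, not assumed to solve $\RW\Phi=0$.

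I do not expect a genuine obstacle here — this is a bookkeeping exercise in integration by parts — but the one point requiring care is the precise matching of signs and weights in the mixed first-order term $\pr_{r^\ast}(\tfrac{\De_-}{\De_0})\,p\,\Re(\pr_{r^\ast}\Phi^\ast\pr_{t^\ast}\Phi)$: this term arises both from differentiating $\tfrac{\De_-}{\De_0}\pr_{t^\ast}\Phi$ and from the $-\tfrac{\De_+}{\De_0}\pr_{r^\ast}\pr_{t^\ast}\Phi$ contribution, and one must verify that the leftover does not combine into a total derivative but instead remains as the stated bulk term (this is the analogue of the identity $-\tfrac12\pr_{r^\ast}(\tfrac{\De_--\De_+}{\De_0})+\pr_{r^\ast}(\tfrac{\De_-}{\De_0})=0$ used in Lemma~\ref{lem:energyid}, but now weighted by $p$ rather than by $1$, so the cancellation is only partial and the surviving piece involves $p'$ and $\pr_{r^\ast}(\tfrac{\De_-}{\De_0})$). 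Keeping the finiteness of all integrals as a standing a priori assumption, as stated before the lemma, all the boundary manipulations above are legitimate, and assembling the pieces gives precisely~\eqref{est:pfCarleman1int1}.
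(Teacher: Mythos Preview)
Your approach is correct and essentially identical to the paper's: multiply~\eqref{eq:pfenergid1} by $p\,\pr_{r^\ast}\Phi^\ast$, take the real part, rewrite each term as a sum of exact $\pr_{t^\ast}$- and $\pr_{r^\ast}$-derivatives plus bulk, and integrate over the rectangle. One small inaccuracy: you invoke the regularity condition~\eqref{eq:defreghorgeneral} to kill horizon contributions, but the lemma is stated for a general smooth $\Phi$ (with finiteness of all integrals assumed a priori), and the identity~\eqref{est:pfCarleman1int1} retains the full horizon limit $\tfrac12\lim_{r\to r_+}\int(\tfrac{\De_-\De_+}{\De_0^2}p|\pr_{t^\ast}\Phi|^2+p|\pr_{r^\ast}\Phi|^2-pV|\Phi|^2)$ without simplification---so no regularity is needed at this stage, and you should simply record the boundary terms as they stand.
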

\begin{proof}
  Multiplying~\eqref{eq:pfenergid1} by $p(r)\pr_r\Phi^\ast$ and taking the real part, we get
  \begin{align}\label{eq:pfCarleman1}
    \begin{aligned}
      -p\Re\le(\RW\Phi\pr_r\Phi^\ast\ri) & = \frac{\De_-\De_+}{\De_0^2}\pr_{t^\ast}\Re\le(p\pr_{r^\ast}\Phi^\ast\pr_{t^\ast}\Phi\ri) - \half\pr_{r^\ast}\le(\frac{\De_-\De_+}{\De_0^2}p|\pr_{t^\ast}\Phi|^2\ri) + \half\pr_{r^\ast}\le(\frac{\De_-\De_+}{\De_0^2}p\ri)|\pr_{t^\ast}\Phi|^2\\
      & \quad - \half \pr_{r^\ast}\le(p|\pr_{r^\ast}\Phi|^2\ri) + \half p'|\pr_{r^\ast}\Phi|^2 \\
      & \quad + \pr_{r^\ast}\le(\frac{\De_-}{\De_0}\ri)p\Re\le(\pr_{r^\ast}\Phi^\ast\pr_{t^\ast}\Phi\ri) + \half\pr_{t^\ast}\le(\frac{\De_--\De_+}{\De_0} p |\pr_{r^\ast}\Phi|^2\ri) \\
      & \quad + \half \pr_{r^\ast}\le(pV|\Phi|^2\ri) - \half \pr_{r^\ast}\le(pV\ri)|\Phi|^2.
    \end{aligned}
  \end{align}
  Integrating~\eqref{eq:pfCarleman1} over the domain $(t_1^\ast,t_2^\ast)_{t^\ast}\times(-\infty,\frac{\pi}{2})_{r^\ast}$, we obtain~\eqref{est:pfCarleman1int1} and this finishes the proof of the lemma.
\end{proof}

\begin{lemma}[Elliptic identity]\label{lem:ellipticFourierbasic}
  Let $f:(r_+,+\infty)\to\RRR$ be a smooth function. For all $t^\ast_1 \leq t^\ast_2$ and for all smooth functions $\Phi$ on $[t^\ast_1,t^\ast_2]\times(r_+,+\infty)$, we have
  \begin{align}\label{eq:ellipticFourier}
    \begin{aligned}
      0 & = \int_{t^\ast_1}^{t^\ast_2}\int_{-\infty}^{\frac{\pi}{2}} \bigg(f |\pr_{r^\ast}\Phi|^2 + \le(fV-\half f''\ri) |\Phi|^2- f\frac{\De_-\De_+}{\De_0^2}|\pr_{t^\ast}\Phi|^2 - f\frac{\De_--\De_+}{\De_0}\Re\le(\pr_{t^\ast}\Phi^\ast\pr_{r^\ast}\Phi\ri) \bigg) \,\d t^\ast\d r^\ast  \\
      & \quad +\lim_{r\to+\infty}\int_{t^\ast_1}^{t^\ast_2}\le(-f\Re\le(\Phi^\ast\pr_{r^\ast}\Phi\ri) +\half f'|\Phi|^2\ri)\,\d t^\ast +\lim_{r\to r_+}\int_{t^\ast_1}^{t^\ast_2}\le(f\Re\le(\Phi^\ast\pr_{r^\ast}\Phi\ri) -\half f'|\Phi|^2\ri)\,\d t^\ast \\
      & \quad +\le[\int_{-\infty}^{\frac{\pi}{2}}\le(f\frac{\De_-\De_+}{\De_0^2}\Re\le(\Phi^\ast\pr_{t^\ast}\Phi\ri)+ \half f\pr_{r^\ast}\le(\frac{\De_-}{\De_0}\ri)|\Phi|^2 + f\frac{\De_--\De_+}{\De_0}\Re\le(\Phi^\ast\pr_{r^\ast}\Phi\ri)\ri)\,\d r^\ast\ri]_{t^\ast_1}^{t^\ast_2} \\
      & \quad +\int_{t^\ast_1}^{t^\ast_2}\int_{-\infty}^{\frac{\pi}{2}}f\Re\le(\Phi^\ast\RW\Phi\ri)\,\d t^\ast\d r^\ast. 
    \end{aligned}
  \end{align}
\end{lemma}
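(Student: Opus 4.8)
The plan is to prove the elliptic identity~\eqref{eq:ellipticFourier} by the same multiplier strategy used in the Morawetz identity of Lemma~\ref{lem:Moridentity}, but now multiplying the Regge-Wheeler equation by $f(r)\Phi^\ast$ rather than by $p(r)\pr_{r^\ast}\Phi^\ast$. Concretely, I would start from the expression~\eqref{eq:pfenergid1} for $-\RW_{m\ellmode}\Psi_{m\ellmode}$ in $(t^\ast,r^\ast)$ coordinates, multiply it by $f\Phi^\ast$, and take the real part. The guiding principle is that every term on the right should be rewritten as a total $\pr_{r^\ast}$-derivative, a total $\pr_{t^\ast}$-derivative, or a genuine bulk density, collecting the ``leftover'' terms. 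The main point is the Leibniz rearrangement $f\,\Re(\pr_{r^\ast}^2\Phi\,\Phi^\ast)=\pr_{r^\ast}\big(f\Re(\pr_{r^\ast}\Phi\,\Phi^\ast)\big)-f|\pr_{r^\ast}\Phi|^2-\half f'\pr_{r^\ast}(|\Phi|^2)$, and then a further integration by parts on the last term to produce $-\half f''|\Phi|^2$ plus a boundary term $-\half[f'|\Phi|^2]$. This is exactly the mechanism that yields the $fV-\half f''$ coefficient of $|\Phi|^2$ in the bulk.

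Next I would track the first-order (in $\pr_{t^\ast}$) and mixed terms. The $\frac{\De_-\De_+}{\De_0^2}\pr_{t^\ast}^2\Phi$ term, multiplied by $f\Phi^\ast$ and with real part taken, becomes $\pr_{t^\ast}\big(f\frac{\De_-\De_+}{\De_0^2}\Re(\Phi^\ast\pr_{t^\ast}\Phi)\big)-f\frac{\De_-\De_+}{\De_0^2}|\pr_{t^\ast}\Phi|^2$ (the coefficient is $t^\ast$-independent, so no extra term). The two first-order terms $\pr_{r^\ast}\big(\frac{\De_-}{\De_0}\pr_{t^\ast}\Psi\big)-\frac{\De_+}{\De_0}\pr_{r^\ast}\pr_{t^\ast}\Psi$ combine: writing $\frac{\De_-}{\De_0}\pr_{r^\ast}\pr_{t^\ast}\Psi-\frac{\De_+}{\De_0}\pr_{r^\ast}\pr_{t^\ast}\Psi=-\frac{\De_+-\De_-}{\De_0}\pr_{r^\ast}\pr_{t^\ast}\Psi=\frac{\De_--\De_+}{\De_0}\pr_{r^\ast}\pr_{t^\ast}\Psi$, and keeping the $\pr_{r^\ast}\big(\frac{\De_-}{\De_0}\big)\pr_{t^\ast}\Psi$ piece separately; multiplying by $f\Phi^\ast$ and taking real parts gives the $-f\frac{\De_--\De_+}{\De_0}\Re(\pr_{t^\ast}\Phi^\ast\pr_{r^\ast}\Phi)$ bulk term (after moving one $\pr_{r^\ast}$ onto $f\Phi^\ast$ and recognizing the remaining mixed $\pr_{t^\ast}$-derivative of $|\Phi|^2$ gets absorbed into the $\big[\cdots\big]_{t^\ast_1}^{t^\ast_2}$ bracket, producing the $\half f\pr_{r^\ast}(\frac{\De_-}{\De_0})|\Phi|^2$ and $f\frac{\De_--\De_+}{\De_0}\Re(\Phi^\ast\pr_{r^\ast}\Phi)$ terms there). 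Then I would integrate the resulting pointwise identity over $(t^\ast_1,t^\ast_2)\times(-\infty,\tfrac{\pi}{2})$, using the fundamental theorem of calculus in each variable, which turns the $\pr_{r^\ast}$-derivatives into the limits $r\to r_+$ and $r\to+\infty$ and the $\pr_{t^\ast}$-derivatives into the $\big[\cdots\big]_{t^\ast_1}^{t^\ast_2}$ brackets, matching~\eqref{eq:ellipticFourier} term by term (with the opposite orientation of the horizon and infinity boundary normals accounting for the sign flip between the two limit integrals).

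I do not expect a serious obstacle here: this is a bookkeeping computation entirely parallel to Lemma~\ref{lem:Moridentity}, and the hypothesis that ``all integrals appearing are finite'' is assumed a priori as stated in the preamble to the lemmas, so no justification of convergence or of the interchange of integration and differentiation is needed. The one place requiring mild care is the mixed first-order term: one must decide whether to land its $|\Phi|^2$ contribution in the spatial bulk, in the boundary limits, or in the time bracket, and the stated form of~\eqref{eq:ellipticFourier} dictates that the antisymmetric-in-time piece goes into $\big[\cdots\big]_{t^\ast_1}^{t^\ast_2}$ while the $\pr_{r^\ast}$-derivative piece contributes to the horizon and infinity limits; getting the coefficients $\half f\pr_{r^\ast}(\frac{\De_-}{\De_0})$ exactly right is the only step where a sign slip is plausible. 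Finally I would note that the $\Re(\Phi^\ast\RW\Phi)$ bulk term simply collects whatever was not turned into a derivative, so in applications with $\RW\Phi=0$ it drops, exactly as in the Morawetz case.
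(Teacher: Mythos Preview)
Your approach is correct and essentially identical to the paper's: multiply the expression~\eqref{eq:pfenergid1} by $f\Phi^\ast$, take the real part, rewrite every term as a total derivative plus a bulk density, and integrate over $(t^\ast_1,t^\ast_2)\times(-\infty,\tfrac{\pi}{2})$. One small slip in your write-up: for the mixed term $f\frac{\De_--\De_+}{\De_0}\Re(\Phi^\ast\pr_{r^\ast}\pr_{t^\ast}\Phi)$ you want to pull out the $\pr_{t^\ast}$ (not the $\pr_{r^\ast}$) via $\Re(\Phi^\ast\pr_{r^\ast}\pr_{t^\ast}\Phi)=\pr_{t^\ast}\Re(\Phi^\ast\pr_{r^\ast}\Phi)-\Re(\pr_{t^\ast}\Phi^\ast\pr_{r^\ast}\Phi)$, which is what lands the $f\frac{\De_--\De_+}{\De_0}\Re(\Phi^\ast\pr_{r^\ast}\Phi)$ piece in the time bracket rather than in the spatial boundary limits; you clearly have the right destination in mind, so this is only a wording issue.
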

\begin{proof}
  Multiplying~\eqref{eq:pfenergid1} by $f\Phi^\ast$ and taking the real part, we get
  \begin{align}
    \label{eq:pfellipticFourier}
    \begin{aligned}
      -f\Re\le(\Phi^\ast\RW\Phi\ri) & = \pr_{t^\ast}\le(f\frac{\De_-\De_+}{\De_0^2}\Re\le(\Phi^\ast\pr_{t^\ast}\Phi\ri)\ri) - f\frac{\De_-\De_+}{\De_0^2}|\pr_{t^\ast}\Phi|^2 \\
      & \quad -\pr_{r^\ast}\le(f\Re\le(\Phi^\ast\pr_{r^\ast}\Phi\ri)\ri) + f |\pr_{r^\ast}\Phi|^2 + \half \pr_{r^\ast}\le( f' |\Phi|^2\ri) - \half f'' |\Phi|^2 \\
      & \quad +\half \pr_{t^\ast}\le(\pr_{r^\ast}\le(\frac{\De_-}{\De_0}\ri)f|\Phi|^2\ri) + \pr_{t^\ast}\le(f\frac{\De_--\De_+}{\De_0}\Re\le(\Phi^\ast\pr_{r^\ast}\Phi\ri)\ri) \\
      & \quad - f\frac{\De_--\De_+}{\De_0}\Re\le(\pr_{t^\ast}\Phi^\ast\pr_{r^\ast}\Phi\ri) + fV|\Phi|^2.
    \end{aligned}
  \end{align}
  Integrating~\eqref{eq:pfellipticFourier} over the domain $(t_1^\ast,t_2^\ast)_{t^\ast}\times(-\infty,\frac{\pi}{2})_{r^\ast}$, we obtain~\eqref{eq:ellipticFourier} completing the proof of the lemma.
\end{proof}

We now combine the two lemmata to derive the basic integral estimate we will use. 

\begin{lemma}[Integral estimates]\label{lem:basicintegrals}
  Let $p:(r_+,+\infty)\to\RRR$ be a real function, smooth on $[r_+,+\infty)$ on $(-\infty,\frac{\pi}{2}]_{r^\ast}$.\footnote{In particular, arbitrary many derivatives of $w^{-1} \partial_{r^\ast}|_{(t^\ast,r^\ast,\theta,\phi)}$ applied to $p$ extend continuously to $r=r_+$ and arbitrary many derivatives of $\partial_{r^\ast}|_{(t^\ast,r^\ast,\theta,\phi)}$ applied to $p$ extend continuously to $r=\infty$.} Let $t^\ast_2\geq t^\ast_1$ and let $\Phi$ be a smooth function on $[t^\ast_1,t^\ast_2]\times(r_+,+\infty)$, regular at the horizon~\eqref{eq:defreghorgeneral} and at infinity~\eqref{eq:defreginfgeneral} on $[t^\ast_1,t^\ast_2]$. We have
  \begin{align}\label{est:easytouseCarleman}
    \begin{aligned}
      &\bigg|\int_{t^\ast_1}^{t^\ast_2}\int_{-\infty}^{\frac{\pi}{2}} \le(p' \le(2\frac{\De_-\De_+}{\De_0^2}+\quar\le(\frac{\De_--\De_+}{\De_0}\ri)^2\ri)|\pr_{t^\ast}\Phi|^2 + p'|\pr_{r^\ast}\Phi|^2\ri)\,\d t^\ast\d r^\ast \\
      & -\int_{t^\ast_1}^{t^\ast_2}\int_{-\infty}^{\frac{\pi}{2}}\le(2 (pV)' + \half\le(-pV' - \half p'''\ri)\ri)|\Phi|^2\,\d t^\ast\d r^\ast\\
      & - \frac{3}{2}\lim_{r\to+\infty}\int_{t_1^\ast}^{t_2^\ast}p\le(|\pr_{r^\ast}\Phi|^2 + |\pr_{t^\ast}\Phi|^2\ri)\,\d t^\ast  + \half \lim_{r\to+\infty}\int_{t_1^\ast}^{t_2^\ast}p'\Re\le(\Phi^\ast\pr_{r^\ast}\Phi\ri)\,\d t^\ast \\
      & + \lim_{r\to+\infty}\int_{t^\ast_1}^{t^\ast_2} \le(-\quar p''+ \frac{3}{2}k^2\ellmode(\ellmode+1)p\ri)|\Phi|^2 \,\d t^\ast\bigg| \\
      & \les_{M,k} \le(\sup |p|+\le(\sup w^{-1}|p'|\ri)\ri) \bigg(\widetilde{\mathrm{E}}[\Phi](t^\ast_2) + \widetilde{\mathrm{E}}[\Phi](t^\ast_1) + \widetilde{\mathrm{E}}^\HH[\Phi](t^\ast_2;t^\ast_1) \\
      & \quad \quad \quad \quad \quad \quad \quad \quad \quad \quad \quad \quad \quad + \int_{t^\ast_1}^{t^\ast_2}\norm{w^{-1}\RW\Phi}_{L^2(\Si_{t^\ast})}\le(\widetilde{\mathrm{E}}[\Phi](t^\ast)\ri)^{1/2}\,\d t^\ast\bigg).
    \end{aligned}
  \end{align}
\end{lemma}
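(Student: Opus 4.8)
The plan is to combine the Morawetz identity of Lemma~\ref{lem:Moridentity} with the elliptic identity of Lemma~\ref{lem:ellipticFourierbasic} in a particular linear combination and then bound the resulting ``error'' terms (the ones not appearing on the left-hand side of~\eqref{est:easytouseCarleman}) by the right-hand side. Concretely, I would first apply Lemma~\ref{lem:Moridentity} with the given multiplier $p$ and, separately, apply Lemma~\ref{lem:ellipticFourierbasic} with the choice $f = -\tfrac{1}{2}p'$ (this is forced by matching: the $|\pr_{r^\ast}\Phi|^2$ coefficient $f$ in~\eqref{eq:ellipticFourier} must combine with $\tfrac12 p'$ from~\eqref{est:pfCarleman1int1}, and the $|\Phi|^2$ coefficient $fV-\tfrac12 f''$ becomes $-\tfrac12 p' V + \tfrac14 p'''$, which reproduces the $-\tfrac12(pV'+\tfrac12 p''')$ structure after adding $-\tfrac12(pV)'$ from the Morawetz term since $(pV)' = p'V + pV'$). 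Taking an appropriate multiple — the coefficients $2$, $\tfrac14$, $\tfrac32$, $-\tfrac14$ in~\eqref{est:easytouseCarleman} suggest multiplying the elliptic identity by a constant of order one and adding — I would collect the bulk terms in $|\pr_{t^\ast}\Phi|^2$, $|\pr_{r^\ast}\Phi|^2$, $|\Phi|^2$ and the boundary terms at $r\to\infty$, which together form exactly the quantity inside the absolute value on the left of~\eqref{est:easytouseCarleman}.

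Next I would identify all the remaining terms that must be moved to the right-hand side and estimated. These fall into three groups: (i) the cross term $\pr_{r^\ast}(\De_-/\De_0)\,p\,\Re(\pr_{r^\ast}\Phi^\ast\pr_{t^\ast}\Phi)$ and the analogous cross term $-f(\De_--\De_+)/\De_0\,\Re(\pr_{t^\ast}\Phi^\ast\pr_{r^\ast}\Phi)$ in the bulk, together with the ``$f\tfrac{\De_-\De_+}{\De_0^2}|\pr_{t^\ast}\Phi|^2$'' correction — these are controlled by $\sup|p| + \sup w^{-1}|p'|$ times $\int \widetilde{\mathrm{E}}[\Phi]\,\d t^\ast$, since $\pr_{r^\ast}(\De_-/\De_0)$ and $(\De_--\De_+)/\De_0$ are bounded and vanish at $r_+$ appropriately (as used already in Lemma~\ref{lem:energyid}); (ii) the timelike-slice boundary terms $[\cdots]_{t^\ast_1}^{t^\ast_2}$ from both identities, which by Cauchy--Schwarz and the boundedness of the geometric coefficients are bounded by $\sup|p|(\widetilde{\mathrm{E}}[\Phi](t^\ast_1) + \widetilde{\mathrm{E}}[\Phi](t^\ast_2))$ — here one uses the regularity at infinity~\eqref{eq:defreginfgeneral} to drop the boundary contribution at $r=\infty$ in the $\d r^\ast$ integrals (or to see that it is finite and absorbed); (iii) the horizon boundary integral $\tfrac12\lim_{r\to r_+}\int(\tfrac{\De_-\De_+}{\De_0^2}p|\pr_{t^\ast}\Phi|^2 + p|\pr_{r^\ast}\Phi|^2 - pV|\Phi|^2)\,\d t^\ast$ from the Morawetz identity, plus the $f$-weighted horizon terms from the elliptic identity, which are controlled by $\widetilde{\mathrm{E}}^\HH[\Phi](t^\ast_2;t^\ast_1)$ after using $\pr_{r^\ast}\Phi\to 0$ at $r_+$ (again as in Lemma~\ref{lem:energyid}) and the fact that $V$ is bounded there. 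Finally the two source terms $\int\int p\,\Re(\RW\Phi\,\pr_{r^\ast}\Phi^\ast)$ and $\int\int f\,\Re(\Phi^\ast\RW\Phi)$ are estimated by Cauchy--Schwarz in the form $\sup|p|\int\norm{w^{-1}\RW\Phi}_{L^2(\Si_{t^\ast})}\norm{w\,\pr_{r^\ast}\Phi \text{ or } w\Phi}_{L^2(\Si_{t^\ast})}\,\d t^\ast$, and the latter factor is bounded by $(\widetilde{\mathrm{E}}[\Phi](t^\ast))^{1/2}$ after accounting for the weight $w$ in the measure on $\Si_{t^\ast}$.

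The main technical obstacle, as flagged by the footnote in the statement, is the behaviour at the two endpoints $r=r_+$ and $r=\infty$: one needs to verify that the combination $f=-\tfrac12 p'$ is such that $w^{-1}\pr_{r^\ast}$-derivatives of $p$ (equivalently $r^2\pr_r$-derivatives) are regular at $r_+$ and $\pr_{r^\ast}$-derivatives are regular at $\infty$, so that all the boundary limits actually exist and the integrations by parts producing~\eqref{est:pfCarleman1int1} and~\eqref{eq:ellipticFourier} are legitimate. In particular the horizon term requires matching $p|\pr_{r^\ast}\Phi|^2 = p\,w^2\,|w^{-1}\pr_{r^\ast}\Phi|^2$ against the non-degenerate horizon energy, which forces the weight $\sup w^{-1}|p'|$ (rather than merely $\sup|p'|$) to appear on the right — this is the one place where the precise weighted norms of $p$ matter and must be tracked carefully. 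Once the endpoint analysis is in place, the estimate is a bookkeeping exercise: add the two identities with the right weights, move everything except the named bulk and $r\to\infty$ boundary terms to the right, and apply Cauchy--Schwarz term by term using the boundedness of the explicit geometric factors $\De_\pm/\De_0$, $\pr_{r^\ast}(\De_-/\De_0)$, $V$ and their limits at $r_+$ and $\infty$.
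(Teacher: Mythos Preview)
Your proposal has a genuine gap: the bulk cross terms cannot be handled the way you suggest. You write that the terms of the form $\pr_{r^\ast}(\De_-/\De_0)\,p\,\Re(\pr_{r^\ast}\Phi^\ast\pr_{t^\ast}\Phi)$ and $-f\tfrac{\De_--\De_+}{\De_0}\Re(\pr_{t^\ast}\Phi^\ast\pr_{r^\ast}\Phi)$ are ``controlled by $\sup|p|+\sup w^{-1}|p'|$ times $\int\widetilde{\mathrm{E}}[\Phi]\,\d t^\ast$''. But look at the right-hand side of~\eqref{est:easytouseCarleman}: there is \emph{no} spacetime integral $\int_{t^\ast_1}^{t^\ast_2}\widetilde{\mathrm{E}}[\Phi]\,\d t^\ast$ there, only the slice energies at $t^\ast_1,t^\ast_2$, the horizon flux, and the $\RW\Phi$ source term. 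Since the cross terms are indefinite in sign and genuinely spacetime-integrated, they cannot be absorbed into the left-hand side either. So your scheme, as written, would prove a weaker estimate with an extra $\int\widetilde{\mathrm{E}}[\Phi]\,\d t^\ast$ on the right, which is not what the lemma claims and would be useless for the later applications (the whole point is that this integral appears only on the \emph{left}).

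The paper resolves this by introducing a \emph{third} multiplier identity beyond the Morawetz and elliptic ones: one multiplies~\eqref{eq:pfenergid1} by $q(r)\pr_{t^\ast}\Phi^\ast$ for a suitable $q$, obtaining a bulk term $q'\Re(\pr_{r^\ast}\Phi\,\pr_{t^\ast}\Phi^\ast)$ together with a $-\tfrac12\tfrac{\De_--\De_+}{\De_0}q'|\pr_{t^\ast}\Phi|^2$ term. One then \emph{chooses} $q$ (by integrating from infinity) so that $q'$ equals exactly the coefficient of the unwanted cross term coming from $3\times$(Morawetz)$-\tfrac12\times$(Elliptic with $f=p'$); this kills the cross term entirely. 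The remaining $|\pr_{t^\ast}\Phi|^2$ contributions then combine, via the algebraic identity
\[
\tfrac12\tfrac{\De_--\De_+}{\De_0}q' + \tfrac32\pr_{r^\ast}\Big(\tfrac{\De_-\De_+}{\De_0^2}p\Big) + \tfrac12 p'\tfrac{\De_-\De_+}{\De_0^2} = \Big(2\tfrac{\De_-\De_+}{\De_0^2}+\tfrac14\big(\tfrac{\De_--\De_+}{\De_0}\big)^2\Big)p',
\]
to produce precisely the clean coefficient you see on the left of~\eqref{est:easytouseCarleman}. The $q$-identity itself only throws off slice-boundary and horizon-flux terms (no bulk integrals), which is why the right-hand side stays as stated. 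Note also that the correct combination is $3\times$Morawetz $-\tfrac12\times$Elliptic with $f=p'$, not $f=-\tfrac12 p'$ as you guessed; the factor $3$ is what makes the boundary coefficients $\tfrac32$ and the $|\Phi|^2$-potential coefficient $2(pV)'+\tfrac12(-pV'-\tfrac12 p''')$ come out right.
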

\begin{proof}
  Using that
  \begin{align*}
    -3(pV)' - \half\le(p'V-\half p'''\ri) = -\le(2(pV)' + \half \le(pV' -\half p'''\ri)\ri),
  \end{align*}
  $3$-times~\eqref{est:pfCarleman1int1} minus $\half$-times~\eqref{eq:ellipticFourier}, with $f=p'$ and where we use the regularity at the horizon~\eqref{eq:defreghorgeneral} for $\Phi$, gives
  \begin{align}\label{est:easytouseCarlemanproof}
    \begin{aligned}
      &\bigg|\int_{t^\ast_1}^{t^\ast_2}\int_{-\infty}^{\frac{\pi}{2}} \bigg(\le(\frac{3}{2}\pr_{r^\ast}\le(\frac{\De_-\De_+}{\De_0^2}p\ri)+\half p'\frac{\De_-\De_+}{\De_0^2}\ri)|\pr_{t^\ast}\Phi|^2 \\
      & \quad + \le(3\pr_{r^\ast}\le(\frac{\De_-}{\De_0}\ri)p +\half p'\frac{\De_--\De_+}{\De_0}\ri)\Re\le(\pr_{r^\ast}\Phi^\ast\pr_{t^\ast}\Phi\ri)  + p'|\pr_{r^\ast}\Phi|^2\bigg)\,\d t^\ast\d r^\ast \\
      & -\int_{t^\ast_1}^{t^\ast_2}\int_{-\infty}^{\frac{\pi}{2}}\le(2 (pV)' + \half\le(-pV' - \half p'''\ri)\ri)|\Phi|^2\,\d t^\ast\d r^\ast\\
      & - \frac{3}{2}\lim_{r\to+\infty}\int_{t_1^\ast}^{t_2^\ast}p\le(|\pr_{r^\ast}\Phi|^2 + |\pr_{t^\ast}\Phi|^2\ri)\,\d t^\ast  + \half \lim_{r\to+\infty}\int_{t_1^\ast}^{t_2^\ast}p'\Re\le(\Phi^\ast\pr_{r^\ast}\Phi\ri)\,\d t^\ast \\
      & + \lim_{r\to+\infty}\int_{t^\ast_1}^{t^\ast_2} \le(-\quar p''+ \frac{3}{2}k^2\ellmode(\ellmode+1)p\ri)|\Phi|^2 \,\d t^\ast\bigg| \\
      & \les_{M,k} (\sup |p|+\sup w^{-1}|p'|)\le(\widetilde{\mathrm{E}}[\Phi](t^\ast_1)+\widetilde{\mathrm{E}}[\Phi](t^\ast_2) + \int_{t^\ast_1}^{t^\ast_2}\norm{w^{-1}\RW\Phi}_{L^2(\Si_{t^\ast})}\le(\widetilde{\mathrm{E}}[\Phi](t^\ast)\ri)^{1/2}\,\d t^\ast\ri).
    \end{aligned}
  \end{align} 
  Let $q:(r_+,+\infty)\to\RRR$ be a function such that $q$ is smooth on $[r_+,+\infty)_r$ and $q\xrightarrow{r\to+\infty}0$, which will be determined in the sequel. Multiplying~\eqref{eq:pfenergid1} by $q(r)\pr_{t^\ast}\Phi^\ast$ and taking the real part, we get
  \begin{align}\label{eq:pfredshift1}
    \begin{aligned}
      -q\Re\le(\RW\Phi\pr_{t^\ast}\Phi^\ast\ri) & = \half \pr_{t^\ast}\le(\frac{\De_-\De_+}{\De_0^2} q |\pr_{t^\ast}\Phi|^2 \ri) - \pr_{r^\ast}\Re\le(q\pr_{r^\ast}\Phi\pr_{t^\ast}\Phi^\ast\ri) + \half \pr_{t^\ast}\le(q|\pr_{r^\ast}\Phi|^2\ri) \\
      & \quad + q'\Re\le(\pr_{r^\ast}\Phi\pr_{t^\ast}\Phi^\ast\ri) \\
      & \quad + \half\frac{\De_--\De_+}{\De_0}\pr_{r^\ast}\le(q|\pr_{t^\ast}\Phi|^2\ri) + \pr_{r^\ast}\le(\frac{\De_-}{\De_0}\ri) q |\pr_{t^\ast}\Phi|^2 \\
      & \quad - \half\frac{\De_--\De_+}{\De_0}q'|\pr_{t^\ast}\Phi|^2 + \half V\pr_{t^\ast}\le(q|\Phi|^2\ri).
    \end{aligned}
  \end{align}
  Integrating~\eqref{eq:pfredshift1} over the domain $(t_1^\ast,t_2^\ast)_{t^\ast}\times(-\infty,\frac{\pi}{2})_{r^\ast}$, using the regularity at the horizon~\eqref{eq:defreghorgeneral} for $\Phi$, we get
  \begin{align}\label{est:pfredshift1}
    \begin{aligned}
      &\bigg|\int_{t^\ast_1}^{t^\ast_2}\int_{-\infty}^{\frac{\pi}{2}} \bigg(q'\Re\le(\pr_{r^\ast}\Phi\pr_{t^\ast}\Phi^\ast\ri)- \half\frac{\De_--\De_+}{\De_0}q'|\pr_{t^\ast}\Phi|^2\bigg) \,\d t^\ast\d r^\ast \,\d t^\ast\bigg| \\
      & \les_{M,k} \le(\sup |q|\ri) \bigg(\widetilde{\mathrm{E}}[\Phi](t^\ast_2) + \widetilde{\mathrm{E}}[\Phi](t^\ast_1) + \widetilde{\mathrm{E}}^\HH[\Phi](t^\ast_2;t^\ast_1) +\int_{t^\ast_1}^{t^\ast_2}\int_{-\infty}^{\frac{\pi}{2}}|\RW\Phi||\pr_{t^\ast}\Phi|\,\d t^\ast\d r^\ast\bigg).
    \end{aligned}
  \end{align}
  Take
  \begin{align*}
    q(r^\ast) & := -\int_{r^\ast}^{\frac{\pi}{2}} \le(3\pr_{r^\ast}\le(\frac{\De_-}{\De_0}\ri)p +\half p'\frac{\De_--\De_+}{\De_0}\ri) \,\d r^{\ast,'},
  \end{align*}
  so that $q\xrightarrow{r\to+\infty}0$ and $q' = 3\pr_{r^\ast}\le(\frac{\De_-}{\De_0}\ri)p +\half p'\frac{\De_--\De_+}{\De_0}$. Note that $q$ is a smooth function on $[r_+,+\infty)_r$ and $\sup |q| \les_{M,k}(\sup |p|+\sup w^{-1}|p'|)$. Moreover, we have the identity
  \begin{align}\label{eq:derivDeltasproofCarlemanlemma}
    \begin{aligned}
      \half \frac{\De_--\De_+}{\De_0} q' + \le(\frac{3}{2}\pr_{r^\ast}\le(\frac{\De_-\De_+}{\De_0^2}p\ri)+\half p'\frac{\De_-\De_+}{\De_0^2}\ri) = \le(\quar\le(\frac{\De_--\De_+}{\De_0}\ri)^2+2\frac{\De_-\De_+}{\De_0^2}\ri)p'.
    \end{aligned}
  \end{align}  
  Combining~\eqref{est:easytouseCarlemanproof} and~\eqref{est:pfredshift1}, using~\eqref{eq:derivDeltasproofCarlemanlemma}, we obtain~\eqref{est:easytouseCarleman} and this finishes the proof of the lemma.
\end{proof}

\subsection{Proof of the redshift estimate}\label{sec:Carlemanredshift}
We now prove Proposition~\ref{prop:redshift}. It will be an immediate consequence of the following redshift lemma, which is also used repeatedly later (see Sections~\ref{sec:Carlemanestimatesfirst}, \ref{sec:Fourier} and \ref{sec:energyestpsipm2}).
To state it, we define the auxiliary degenerate energy
\begin{align} \label{est:weakenergy}
  \mathrm{E}^{w}[\Psi] := \int_{-\infty}^{\frac{\pi}{2}} w  \le(|\pr_{t^\ast}\Psi|^2 + |\pr_{r^\ast}\Psi|^2 +  \ellmode^2|\Psi|^2\ri)\,\d r^\ast    \les_{M,k} \widetilde{\mathrm{E}}[\Psi].
\end{align}

\begin{lemma}[Redshift lemma]\label{lem:redshift}
For all spacetime functions $\Psi$ regular at the horizon in the sense of~\eqref{eq:defreghorgeneral}, we have for all $t^\ast_2\geq t^\ast_1$ the estimate 
  \begin{align}\label{est:redshift3}  
    \begin{aligned}
      & \overline{\mathrm{E}}[\Psi](t^\ast_2)  + \int_{t^\ast_1}^{t^\ast_2} \overline{\mathrm{E}}[\Psi](t^\ast)\,\d t^\ast + \overline{\mathrm{E}}^\HH[\Psi](t^\ast_2;t^\ast_1) \\
      \les_{M,k} & \; \overline{\mathrm{E}}[\Psi](t^\ast_1) + \widetilde{\mathrm{E}}[\Psi](t^\ast_2)  + \widetilde{\mathrm{E}}^\HH[\Psi](t^\ast_2;t^\ast_1) + \int_{t^\ast_1}^{t^\ast_2} {\mathrm{E}}^{w}[\Psi](t^\ast)\,\d t^\ast + \int_{t^\ast_1}^{t^\ast_2}\int_{-\infty}^{\frac{\pi}{2}}w^{-1}|\RW\Psi|^2\,\d t^\ast\d r^\ast \, .
    \end{aligned}
  \end{align}
\end{lemma}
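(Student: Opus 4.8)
The plan is to establish the redshift Lemma~\ref{lem:redshift} by constructing a multiplier of the form $Y\Psi = q(r)\pr_{t^\ast}\Psi$ localised near the horizon, adapted to the structure of the problem through the basic integral identities of Section~\ref{sec:threebasicestimates}. Concretely, I would apply Lemma~\ref{lem:basicintegrals} with a weight $p$ that is supported in a neighbourhood $\{r_+\leq r\leq r_0\}$ of the event horizon (so that all boundary terms at $r=\infty$ drop out), and with $p$ chosen so that $w^{-1}p'$ is strictly positive there; this is exactly the classical red-shift choice (a multiplier degenerating at $\mathcal{H}^+$ but whose $r^\ast$-derivative does not). The left-hand side of~\eqref{est:easytouseCarleman} then controls, from below, the \emph{non-degenerate} derivatives $w^{-1}|\pr_{r^\ast}\Psi|^2$, together with $|\pr_{t^\ast}\Psi|^2$ and $\ellmode^2|\Psi|^2$, in the region $r\leq r_0$, which is precisely the extra content of $\overline{\mathrm{E}}$ over $\widetilde{\mathrm{E}}$; meanwhile the right-hand side is bounded by $\sup|p|+\sup w^{-1}|p'|$ times the degenerate energies at $t^\ast_1,t^\ast_2$, the horizon flux $\widetilde{\mathrm{E}}^\HH$, and the inhomogeneity term $\int\|w^{-1}\RW\Psi\|_{L^2(\Si)}\,(\widetilde{\mathrm{E}})^{1/2}$.

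The first step is to record the relevant asymptotics of the coefficients $\tfrac{\De_-\De_+}{\De_0^2}$, $\tfrac{\De_--\De_+}{\De_0}$, $V$ and their derivatives as $r\to r_+$ — in particular that $\tfrac{\De_--\De_+}{\De_0}\to -1$ and $\tfrac{\De_-\De_+}{\De_0^2}\to 0$ at the horizon — so that near $\mathcal{H}^+$ the quadratic form appearing on the left of~\eqref{est:easytouseCarleman} is, to leading order, $p'\bigl(\quar|\pr_{t^\ast}\Psi|^2 + |\pr_{r^\ast}\Psi|^2\bigr)$ plus potential and lower-order terms. Since $p'\simeq w$ near $r_+$, the derivative terms give precisely $w|\pr_{t^\ast}\Psi|^2 + w|\pr_{r^\ast}\Psi|^2$, and after dividing by the appropriate power of $w$ one recovers the non-degenerate quantities $w^{-1}|\pr_{r^\ast}\Psi|^2$ entering $\overline{\mathrm{E}}$. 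The second step is to handle the zeroth-order term: the coefficient $2(pV)'+\half(-pV'-\half p''')$ need not have a favourable sign, but near the horizon it is lower order relative to the positive derivative terms and can be absorbed, while the remaining contribution is bounded by $\int \mathrm{E}^w[\Psi]\,\d t^\ast$, which is allowed on the right-hand side of~\eqref{est:redshift3}. The third step is a standard Grönwall/pigeonhole bootstrap: combining the pointwise-in-$r\leq r_0$ coercivity with the already-available degenerate control of $\widetilde{\mathrm{E}}$ in $r\geq r_0$ (from the definitions and the trivial bound $\mathrm{E}^w\les_{M,k}\widetilde{\mathrm{E}}$), one obtains control of $\overline{\mathrm{E}}(t^\ast_2)$ and of $\int_{t^\ast_1}^{t^\ast_2}\overline{\mathrm{E}}\,\d t^\ast$ and $\overline{\mathrm{E}}^\HH$ in terms of the stated right-hand side.

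The main obstacle I anticipate is twofold. First, extracting the coercive lower bound on the left of~\eqref{est:easytouseCarleman}: the cross term $p'\frac{\De_--\De_+}{\De_0}\Re(\pr_{r^\ast}\Phi^\ast\pr_{t^\ast}\Phi)$ (already partly absorbed into the coefficient $2\frac{\De_-\De_+}{\De_0^2}+\quar(\frac{\De_--\De_+}{\De_0})^2$ in the lemma) must be shown to preserve positive-definiteness of the quadratic form in $(\pr_{t^\ast}\Psi,\pr_{r^\ast}\Psi)$ on the whole support of $p$, not merely at $r=r_+$; this requires choosing $r_0$ close enough to $r_+$ that the error coefficients are small, which is the quantitative heart of the red-shift mechanism and uses the non-degeneracy of the horizon (surface gravity $>0$). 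Second, bookkeeping the inhomogeneity: the term $\int\|w^{-1}\RW\Psi\|_{L^2(\Si_{t^\ast})}(\widetilde{\mathrm{E}})^{1/2}\,\d t^\ast$ coming out of Lemma~\ref{lem:basicintegrals} must be traded, via Cauchy--Schwarz and the already-controlled $\widetilde{\mathrm{E}}$, into the cleaner $\int w^{-1}|\RW\Psi|^2$ form appearing in~\eqref{est:redshift3}. Once these are in place, the rest is the routine pigeonhole argument of~\cite{Daf.Rod08}, so I would state that part without elaboration. Note in our application $\RW\Psi=0$ for the Regge--Wheeler quantities, but the lemma is stated (and will be used in Sections~\ref{sec:Fourier} and~\ref{sec:energyestpsipm2}) for $\Psi$ solving an \emph{inhomogeneous} equation, which is why the $\RW\Psi$ terms are retained.
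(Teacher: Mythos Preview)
Your plan has a genuine gap concerning the choice of multiplier and the identity you invoke. Lemma~\ref{lem:basicintegrals} is stated for $p$ smooth on $[r_+,\infty)$ (see the footnote there), which forces $p' = r^2 w\,\pr_r p = O(w)$ at the horizon. With such a $p$, the bulk term $p'|\pr_{r^\ast}\Psi|^2$ on the left of~\eqref{est:easytouseCarleman} is at best $\sim w|\pr_{r^\ast}\Psi|^2$, i.e.\ the \emph{degenerate} quantity already contained in $\mathrm{E}^w$; it cannot dominate the non-degenerate $w^{-1}|\pr_{r^\ast}\Psi|^2$ that distinguishes $\overline{\mathrm{E}}$ from $\widetilde{\mathrm{E}}$. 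Your sentence ``since $p'\simeq w$ \ldots\ after dividing by the appropriate power of $w$ one recovers $w^{-1}|\pr_{r^\ast}\Psi|^2$'' is not a valid step: one cannot pass from an integral of $w|\pr_{r^\ast}\Psi|^2$ to one of $w^{-1}|\pr_{r^\ast}\Psi|^2$ by any manipulation. The same obstruction prevents the right-hand side of~\eqref{est:easytouseCarleman}, which is proportional to $\sup|p|+\sup w^{-1}|p'|$, from being used with a $p$ that is singular at $r_+$.

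What the paper does instead is to apply the raw Morawetz identity of Lemma~\ref{lem:Moridentity} (not Lemma~\ref{lem:basicintegrals}) with the \emph{singular} choice $p=-w^{-1}\chi$, where $\chi$ is a cut-off supported in $[r_+,4M]$. The point is that $w^{-1}\pr_{r^\ast}=r^2\pr_r$ is the regular transversal derivative, so $p\pr_{r^\ast}$ is a regular vector field; nonetheless $p'\sim w^{-1}$ near $r_+$, and this is precisely what makes the bulk term $p'|\pr_{r^\ast}\Psi|^2$ control the non-degenerate $w^{-1}|\pr_{r^\ast}\Psi|^2$. One then checks directly that all boundary terms in~\eqref{est:pfCarleman1int1} remain finite (the factors $\tfrac{\De_-}{\De_0}\sim w$ and $V\sim w$ cancel the $w^{-1}$ in $p$), yielding the fluxes $\overline{\mathrm{E}}(t^\ast_2)$, $\overline{\mathrm{E}}^\HH$ on the left and $\overline{\mathrm{E}}(t^\ast_1)$, $\widetilde{\mathrm{E}}(t^\ast_2)$, $\widetilde{\mathrm{E}}^\HH$ on the right. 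A residual horizon term $-\tfrac{3M}{r_+}\int|\Psi|^2|_{r=r_+}\,\d t^\ast$ (coming from the constant part of $-pV$) is handled by a Hardy inequality splitting it into $\delta\int\overline{\mathrm{E}}+C_\delta\int\mathrm{E}^w$, and the inhomogeneity $\int p\,\RW\Psi\,\pr_{r^\ast}\Psi^\ast$ is estimated directly by Cauchy--Schwarz against $\delta\int w^{-1}|\pr_{r^\ast}\Psi|^2+C_\delta\int w^{-1}|\RW\Psi|^2$. No Gr\"onwall or pigeonhole argument is needed for the lemma itself; the pigeonhole step belongs to the proof of Theorem~\ref{thm:mainRW1a}, not here.
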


\begin{remark}
With slightly more work, one can prove the estimate (\ref{est:redshift3}) replacing $E^w[\Psi]$ on the right hand side by an energy supported strictly away from the horizon. However, the above weaker version is sufficient for our purposes and can be proven using only the multiplier identity \eqref{est:pfCarleman1int1}.
\end{remark}

\begin{proof}[Proof of Proposition~\ref{prop:redshift}]  
  Using the energy boundedness~\eqref{est:energyPsiDRmain} for solutions to the  Regge-Wheeler problem~\eqref{sys:RWdeco}, estimate~\eqref{est:redshiftmain} follows from Lemma~\ref{lem:redshift} and this proves Proposition~\ref{prop:redshift}.
\end{proof}
\begin{proof}[Proof of Lemma~\ref{lem:redshift}]
  We recall the weight $w=\frac{\Delta_-}{r^4}$ and choose $p=-w^{-1}\chi$ in~\eqref{est:pfCarleman1int1}, where $\chi$ is a cut-off function with support in $[r_+,4M]$  and equal to $1$ in $[r_+,3M]$. We now estimate all the terms appearing in the multiplier identity~\eqref{est:pfCarleman1int1}. We have
   \begin{align} \label{display1}
  \textrm{terms in the first two lines of \eqref{est:pfCarleman1int1}}
 \geq  c_{M,k} \int_{t^\ast_1}^{t^\ast_2}  \overline{\mathrm{E}}[\Psi](t^\ast)\,\d t^\ast  - C_{M,k}  \int_{t^\ast_1}^{t^\ast_2} {\mathrm{E}}^{w}[\Psi](t^\ast)\,\d t^\ast 
 \end{align}
 for constants $C_{M,k}$ and $c_{M,k}$ depending only on $M$ and $k$. The boundary terms at infinity in the third line of \eqref{est:pfCarleman1int1} vanish because the chosen $p$ vanishes for $r \geq 4M$. For the boundary terms at $t^\ast_1$ and $t^\ast_2$ appearing in line five of~\eqref{est:pfCarleman1int1} we have 
    \begin{align} \label{display2}
  \textrm{boundary-terms on constant $t^\ast$ of  \eqref{est:pfCarleman1int1}}
 \geq c_{M,k} \cdot \overline{\mathrm{E}}[\Psi](t^\ast_2) - C_{M,k} \cdot  \widetilde{\mathrm{E}}[\Psi](t^\ast_2) -C_{M,k}\cdot \overline{\mathrm{E}}[\Psi](t^\ast_1)  \, .
  \end{align}
 For the boundary term on the horizon appearing in line four  of \eqref{est:pfCarleman1int1} it is easy to see that
  \begin{align} \label{display3}
    \textrm{boundary-terms on $r=r_+$ of \eqref{est:pfCarleman1int1}} \geq c_{M,k}  \cdot \overline{\mathrm{E}}^\HH[\Psi](t^\ast_2;t^\ast_1) - C_{M,k} \widetilde{\mathrm{E}}^\HH[\Psi](t^\ast_2;t^\ast_1)- \frac{3M}{r_+} \int_{t_1^\ast}^{t_2^\ast} |\Psi|^2 |_{r=r_+} dt^\ast  . 
  \end{align}
 The following Hardy inequality (whose simple proof we leave to the reader) allows to estimate  for any $\delta>0$
  \begin{align} \label{hardy2}
  \int_{t_1^\ast}^{t_2^\ast} |\Psi|^2 |_{r=r_+} dt^\ast \leq \delta  \int_{t^\ast_1}^{t^\ast_2}  \overline{\mathrm{E}}[\Psi](t^\ast)\,\d t^\ast  + C_{\delta}  \int_{t^\ast_1}^{t^\ast_2} {\mathrm{E}}^{w}[\Psi](t^\ast)\,\d t^\ast \, .
  \end{align}
 Finally, for the last term in appearing in~\eqref{est:pfCarleman1int1} we have for any $\delta>0$
  \begin{align} \label{display4}
   \textrm{last term in \eqref{est:pfCarleman1int1}} \geq -  \delta \int_{t^\ast_1}^{t^\ast_2}\overline{\mathrm{E}}[\Psi](t^\ast) \d t^\ast- C_{M,k}\int_{t^\ast_1}^{t^\ast_2} {\mathrm{E}}^{w}[\Psi](t^\ast)\d t^\ast +\frac{C_{M,k}}{\delta} \int_{t^\ast_1}^{t^\ast_2}\int_{-\infty}^{\frac{\pi}{2}}\frac{|\RW\Psi|^2}{w} \,\d t^\ast\d r^\ast \, .
  \end{align} 
Adding the estimates (\ref{display1}), (\ref{display2}), (\ref{display3}) and (\ref{display4}) and inserting (\ref{hardy2}) into (\ref{display3}), we produce the desired (\ref{est:redshift3}) after choosing $\delta$ sufficiently small so that we can absorb the two terms proportional to $\delta$ by the good first term on the right hand side of \eqref{display1}. 
 \end{proof}

\subsection{Carleman estimates without time-frequency decompositions}\label{sec:Carlemanestimatesfirst}
In this subsection, we prove Proposition~\ref{prop:integralestimatesv} in the weaker $\mathfrak{p}=2$ case using only physical space multipliers. 

Let us for simplicity set
\begin{align*}
  \EE[\Psi](t^\ast_2;t^\ast_1) & := \le(\widetilde{\mathrm{E}}[\Psi](t^\ast_2) + \widetilde{\mathrm{E}}[\Psi](t^\ast_1) + \widetilde{\mathrm{E}}^\HH[\Psi](t^\ast_2;t^\ast_1)\ri)
\end{align*}
in this section.\footnote{The quantity $\EE[\Psi](t^\ast_2;t^\ast_1)$ will eventually be controlled using the energy boundedness~\eqref{est:energyPsiDRmain}.} We also define $p=e^f$ with $f=\frac{\Ka}{r}$, where $\Ka>0$ is a constant which will be determined. Setting $p=e^f$ with $f=\frac{\Ka}{r}$ in the integral estimate~\eqref{est:easytouseCarleman}, using that $p'=-\Ka w e^f$ and that $\le(\sup |p|\ri)+\le(\sup w^{-1}|p'|\ri) \leq (1+\Ka)e^{\frac{\Ka}{r_+}}=e^{C\Ka}$ with $C=C(M,k)>0$, we get
\begin{align}\label{est:easyCarleman}
  \begin{aligned}
    &\int_{t^\ast_1}^{t^\ast_2}\int_{-\infty}^{\frac{\pi}{2}} \le(\Ka w e^f \le(2\frac{\De_-\De_+}{\De_0^2}+\quar\le(\frac{\De_--\De_+}{\De_0}\ri)^2\ri)|\pr_{t^\ast}\Psi|^2 + \Ka w e^f|\pr_{r^\ast}\Psi|^2\ri)\,\d t^\ast\d r^\ast \\
    & +\int_{t^\ast_1}^{t^\ast_2}\int_{-\infty}^{\frac{\pi}{2}}\le(2 (pV)' + \half\le(-pV' - \half p'''\ri)\ri)|\Psi|^2\,\d t^\ast\d r^\ast\\
    & + \frac{3}{2}\lim_{r\to+\infty}\int_{t_1^\ast}^{t_2^\ast}\le(|\pr_{r^\ast}\Psi|^2 + |\pr_{t^\ast}\Psi|^2\ri)\,\d t^\ast  + \half \lim_{r\to+\infty}\int_{t_1^\ast}^{t_2^\ast}k^2\Ka\Re\le(\Psi^\ast\pr_{r^\ast}\Psi\ri)\,\d t^\ast \\
    & + \lim_{r\to+\infty}\int_{t^\ast_1}^{t^\ast_2} \le(\quar k^4\Ka^2- \frac{3}{2}k^2\ellmode(\ellmode+1)\ri)|\Psi|^2 \,\d t^\ast \\
    & \les_{M,k} e^{C\Ka}\EE[\Psi](t^\ast_2;t^\ast_1). 
  \end{aligned}
\end{align}
For the boundary terms at infinity in~\eqref{est:easyCarleman}, we have
\begin{align*}
  \begin{aligned}
  \frac{3}{2}|\pr_{r^\ast}\Psi|^2 + \half k^2\Ka\Re\le(\Psi^\ast\pr_{r^\ast}\Psi\ri) + \quar k^4\Ka^2|\Psi|^2  = \frac{1}{2}|\pr_{r^\ast}\Psi|^2 + \quar\le|2\pr_{r^\ast}\Psi  + \half k^2\Ka\Psi\ri|^2 + \frac{1}{8} k^4\Ka^2|\Psi|^2.  
  \end{aligned}
\end{align*}
Hence, provided that $\Ka$ is such that $\frac{1}{8}k^4\Ka^2 - 2 k^2\ellmode(\ellmode+1)\geq 0$, \emph{i.e.} $\Ka \geq 8k^{-1}\sqrt{\ellmode(\ellmode+1)}$, we have
\begin{align}\label{est:bdyineasyCarleman}
  \begin{aligned}
    & \half \lim_{r\to+\infty}\int_{t_1^\ast}^{t_2^\ast}\le(|\pr_{r^\ast}\Psi|^2 + |\pr_{t^\ast}\Psi|^2 + k^2\ellmode(\ellmode+1)|\Psi|^2\ri)\,\d t^\ast \\
    \leq & \; \frac{3}{2}\lim_{r\to+\infty}\int_{t_1^\ast}^{t_2^\ast}\le(|\pr_{r^\ast}\Psi|^2 + |\pr_{t^\ast}\Psi|^2\ri)\,\d t^\ast  + \half \lim_{r\to+\infty}\int_{t_1^\ast}^{t_2^\ast}k^2\Ka\Re\le(\Psi^\ast\pr_{r^\ast}\Psi\ri)\,\d t^\ast \\
    & + \lim_{r\to+\infty}\int_{t^\ast_1}^{t^\ast_2} \le(\quar k^4\Ka^2- \frac{3}{2}k^2\ellmode(\ellmode+1)\ri)|\Psi|^2 \,\d t^\ast.
  \end{aligned}
\end{align}

To estimate the potential terms, we have the following lemmas.
\begin{lemma}\label{lem:easypotentialsigns}
  Let $\ellmode\geq 2$. There exists $\varep_0(M,k)>0$ such that, for all $\Ka\geq \varep^{-1}_0\ellmode^2$, there exists $r_{\Ka,-1},r_{\Ka,+1}\in(r_+,+\infty)$ satisfying $r_+<r_{\Ka,-1}<r_{\Ka,0}<r_{\Ka,+1}$, where $r_{\Ka,0}:=r_++\Ka^{-1}r_+^2$, and we have
  \begin{align}\label{eq:defrK0rK2insec}
    \begin{aligned}
      2(pV)' -\half pV' - \quar p''' & \geq 0, && \text{if $r\leq r_{\Ka,-1}$,}\\
      2(pV)' -\half pV' - \quar p''' & \leq 0, && \text{if $r_{\Ka,-1}\leq r\leq r_{\Ka,+1}$,}\\
      2(pV)' -\half pV' - \quar p''' & \geq 0, && \text{if $r\geq r_{\Ka,+1}$.}
    \end{aligned}
  \end{align}
  Moreover, we have
  \begin{align}\label{eq:DLrK0rK2insec}
    \begin{aligned}
    \le|r_{\Ka,-1} -\le(r_+ +\Ka^{-1}r_+^2 - \Ka^{-3/2}r_+^2\sqrt{\frac{\le(\ellmode(\ellmode+1)-\half\ri)}{\le(\frac{6M}{r_+^2}-\frac{2}{r_+}\ri)}}\ri)\ri| & \les_{M,k} \Ka^{-1} (\Ka^{-1}\ellmode^2), \\
    \le|r_{\Ka,+1} -\le(r_+ +\Ka^{-1}r_+^2 + \Ka^{-3/2}r_+^2\sqrt{\frac{\le(\ellmode(\ellmode+1)-\half\ri)}{\le(\frac{6M}{r_+^2}-\frac{2}{r_+}\ri)}} \ri)\ri| & \les_{M,k} \Ka^{-1} (\Ka^{-1}\ellmode^2).
    \end{aligned}
  \end{align}
\end{lemma}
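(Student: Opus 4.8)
The plan is to compute the quantity $G(r) := 2(pV)' - \half pV' - \quar p'''$ explicitly with $p = e^f$, $f = \Ka/r$, and $V = w(\ell(\ell+1) - 6M/r)$, and to analyse its sign as $\Ka \to \infty$. First I would record the derivatives of $p$: since $f' = -\Ka w$ (using $w = \De/r^4$ and $dr^\ast/dr = r^2/\De$, so $\pr_{r^\ast} f = (r^2/\De) f_{,r} = -\Ka/r^2 \cdot r^2/\De \cdot \De/r^2 $, i.e. one checks $f' = -\Ka w$ in $r^\ast$ derivatives), one has $p' = f' p = -\Ka w\, p$, and then $p'' = ((f')^2 + f'') p$, $p''' = ((f')^3 + 3 f' f'' + f''') p$. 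The dominant term as $\Ka \to \infty$ is the $(f')^3 = -\Ka^3 w^3$ contribution in $p'''$, giving $-\quar p''' \sim \tfrac14 \Ka^3 w^3 p$, which is positive. Collecting all terms and factoring out $p$, I would write $G(r) = p(r)\, \big( \tfrac14 \Ka^3 w^3 + O(\Ka^2) \big)$ where the $O(\Ka^2)$ coefficient is an explicit smooth function of $r$; away from $r_+$ (where $w$ stays bounded below) the leading term dominates and $G>0$, so any sign change must occur in the region where $w(r)$ is small, i.e. near $r = r_+$.

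The heart of the matter is therefore the local analysis near $r_+$. I would introduce the rescaled variable $r = r_+ + \Ka^{-1} r_+^2 x$ (so that $r_{\Ka,0}$ corresponds to $x=1$) and Taylor-expand. Near $r_+$ one has $w(r) \approx w'(r_+)(r-r_+) = \frac{\De_{,r}(r_+)}{r_+^4}(r - r_+)$ to leading order, so $w \sim c_0 \Ka^{-1} x$ with $c_0 = r_+^{-2}\De_{,r}(r_+)$ (note $\De_{,r}(r_+) = 2r_+ + 4k^2 r_+^3 - 2M = \ldots$; the sign of $6M/r_+^2 - 2/r_+$ entering \eqref{eq:DLrK0rK2insec} is exactly what governs the discriminant below). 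Then $f' = -\Ka w \sim -c_0 x$, $p' \sim -c_0 x\, p$, and after multiplying through by $\Ka^{3}$ (the correct power so that all surviving terms are $O(1)$), $G$ reduces to $p(r)\,\Ka^{-3}\cdot\big(\text{quadratic polynomial in }x\big) + (\text{lower order in }\Ka^{-1})$. I expect this quadratic in $x$ to be, up to normalisation, $Q(x) = A x^2 - B$ (no linear term, by the structure of the computation — the would-be linear term involves $2(pV)'$ cancelling against $-\half pV'$ at leading order), with $A$ proportional to $\big(\frac{6M}{r_+^2} - \frac{2}{r_+}\big)$ and $B$ proportional to $\big(\ell(\ell+1) - \half\big)$. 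Its two roots are then $x = 1 \pm \sqrt{B/A}$ (after re-centering; the $+1$ shift coming from the $r_{\Ka,0}$ term), giving precisely the main terms $r_+ + \Ka^{-1}r_+^2 \pm \Ka^{-3/2} r_+^2 \sqrt{(\ell(\ell+1)-\half)/(\frac{6M}{r_+^2}-\frac{2}{r_+})}$ asserted in \eqref{eq:DLrK0rK2insec}. The sign pattern in \eqref{eq:defrK0rK2insec} is then immediate: $G>0$ outside the roots, $G<0$ between them, since the leading coefficient $A>0$ is exactly the condition $\frac{6M}{r_+^2} > \frac2{r_+}$ — and for this to be guaranteed I would need $\Ka$ large relative to $\ell^2$ (the $\varep_0^{-1}\ell^2$ threshold) so that the perturbative regime is valid and the quadratic's discriminant has the sign dictated by its leading terms rather than by error terms.

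The error estimates \eqref{eq:DLrK0rK2insec} would then follow from the implicit function theorem / a quantitative Rouché-type argument: one has an exact equation $G(r)=0$, whose leading-order reduction is $Q(x)=0$ with simple roots $x_\pm = 1 \pm \sqrt{B/A}$, and the correction terms are smaller by a factor $\Ka^{-1}\ell^2$ (all the neglected terms carry an extra power of $r - r_+ \sim \Ka^{-1}$ or an extra $w$, while the relevant scale of the roots is $\Ka^{-3/2}\ell$, hence the relative error $\Ka^{-1}\ell^2$, giving absolute error $\les_{M,k} \Ka^{-1}(\Ka^{-1}\ell^2)$ as stated). \textbf{The main obstacle} I anticipate is purely bookkeeping: carrying the Taylor expansions of $w$, $V$, and $p$ to sufficiently high order near $r_+$ while tracking which terms survive at each power of $\Ka^{-1}$, and verifying that the linear-in-$x$ term genuinely cancels so that the roots are symmetric about $r_{\Ka,0}$ to leading order — a cancellation that is not obvious a priori and must be checked carefully from the precise form of $V$ and the relation $f' = -\Ka w$. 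The detailed asymptotic expansions of these potentials are deferred to Appendix~\ref{sec:Carlemanmultipliers}, where this calculation is carried out in full.
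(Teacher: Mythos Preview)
Your strategy—a local perturbation analysis of $G(r)$ near $r_+$—is close in spirit to the paper's, but two things go wrong. First, the structure of the leading quadratic: the paper's key move is to factor $G = we^f\,\VV$ (so that $\mathrm{sign}(G)=\mathrm{sign}(\VV)$ since $we^f>0$) and observe that $\VV$ contains the perfect square $\tfrac14\Ka\big(\Ka w + (\tfrac2r - \tfrac{6M}{r^2})\big)^2$. The centering at $r_{\Ka,0}$ arises because $\Ka w(r) + (\tfrac{2}{r}-\tfrac{6M}{r^2})$ vanishes to leading order precisely at $r_{\Ka,0}$ (from the Taylor expansion of $w$ at $r_+$), \emph{not} from a cancellation between $2(pV)'$ and $-\half pV'$ as you claim. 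Consequently your form $Q(x)=Ax^2-B$ is incorrect: in the $x$-variable there \emph{is} a linear term (of size $O(\ell^2)$, against $O(\Ka)$ for the quadratic); the correct leading quadratic is in the variable $(x-1)$. The $\Ka^{-3}$ prefactor you write also does not occur, and the inequality $\tfrac{6M}{r_+^2} > \tfrac{2}{r_+}$ is not a condition to be arranged via large $\Ka$—it holds identically for every Schwarzschild--AdS horizon, since $2M/r_+=1+k^2r_+^2$ gives $\tfrac{6M}{r_+^2}-\tfrac{2}{r_+}=(1+3k^2r_+^2)/r_+>0$.

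Second, and more importantly, your argument does not establish the global sign pattern \eqref{eq:defrK0rK2insec}. Your local quadratic analysis yields two roots near $r_+$, and for $r$ bounded away from $r_+$ the term $\tfrac14\Ka^3 w^3$ indeed dominates; but in between (e.g.\ $r-r_+$ of order $\Ka^{-1/2}$) neither argument applies and additional zeros are not excluded. The paper closes this gap by showing that $\VV$ is \emph{convex in $r$} on a fixed interval $(r_+,r_m)$ with $r_m=r_m(M,k)$, via $\partial_r^2\VV \approx \tfrac14\Ka^3\partial_r^2(w^2) \gtrsim \Ka^3$ there; combined with the sign checks $\VV(r_+)>0$, $\VV(r_{\Ka,0})<0$, and $\VV>0$ for $r\geq r_m$, convexity forces exactly two zeros with the stated pattern. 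For the asymptotics \eqref{eq:DLrK0rK2insec} the paper uses a two-step bootstrap rather than an implicit-function argument: first trap the zeros crudely in $[r_+,r_++2\Ka^{-1}r_+^2]$ by evaluating $\VV$ at that right endpoint, then substitute the resulting bound $|\Ka w(r_{\Ka,\pm1})+(\tfrac{2}{r}-\tfrac{6M}{r^2})|\lesssim 1$ back into $\VV(r_{\Ka,\pm1})=0$ to sharpen to $\lesssim\Ka^{-1/2}\ell$, and iterate once more. Your implicit-function route would also work once the correct quadratic structure and the convexity-based exclusion of extra zeros are in place.
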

\begin{proof}
  See Appendix~\ref{sec:Carlemanmultipliers}.
\end{proof}
Define
\begin{align}\label{eq:defrKapm2insec}
  \begin{aligned}
    r_{\Ka,\pm2} & := r_+ +\Ka^{-1}r_+^2\pm \Ka^{-1}r_+^2\le(\Ka^{-1}\ellmode^2\ri)^{1/4}.
  \end{aligned}
\end{align}
Using~\eqref{eq:DLrK0rK2insec}, we have
\begin{align}\label{eq:comparaisonrKaiinsec}
  \begin{aligned}
    r_+ < r_{\Ka,-2} \leq r_{\Ka,-1} \leq r_{\Ka,0} \leq r_{\Ka,+1} \leq r_{\Ka,+2},
  \end{aligned}
\end{align}
for $\Ka^{-1}\ellmode^2$ sufficiently small depending on $M,k$. Between these radii, we have the following bounds for the potential.
\begin{lemma}\label{lem:easyboundspotential}
  Let $\ellmode\geq 2$. There exists $\varep_{1}(M,k)>0$ such that for all $\Ka \geq \varep_1^{-1}\ellmode^2$, we have
  \begin{align}
    2(pV)' -\half pV' - \quar p''' & \gtrsim_{M,k} w\ellmode^2e^f, && \text{for $r_{+} < r < r_{\Ka,-2}$,} \label{est:Br+rKa-2insec}\\
    \le|2(pV)' -\half pV' - \quar p'''\ri| & \les_{M,k} w \ellmode^2 e^f, && \text{for $r_{\Ka,-1} < r < r_{\Ka,+1}$,} \label{est:BrKa+-rKa++insec}\\
    2(pV)' -\half pV' - \quar p''' & \gtrsim_{M,k} \le(\Ka^{-1}\ellmode^2\ri)^{1/2} (\Ka w)^2 e^f \gtrsim_{M,k} w \ellmode^2 e^f, && \text{for $r_{\Ka,+2} < r < +\infty$.} \label{est:BrKa2+inftyinsec}
  \end{align}
\end{lemma}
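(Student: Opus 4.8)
The plan is to upgrade the qualitative sign information of Lemma~\ref{lem:easypotentialsigns} and the zero-location asymptotics~\eqref{eq:DLrK0rK2insec} into the three quantitative estimates~\eqref{est:Br+rKa-2insec}--\eqref{est:BrKa2+inftyinsec}. Since $p=e^f>0$, it suffices to control the rescaled potential $\Pi:=e^{-f}\le(2(pV)'-\half pV'-\quar p'''\ri)$, whose explicit expression in terms of $w$, $w'$, $w''$, $\Ka$ and $\ell(\ell+1)-\frac{6M}{r}$ is exactly the object analysed in Appendix~\ref{sec:Carlemanmultipliers} in the course of proving Lemma~\ref{lem:easypotentialsigns}. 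From that appendix I would extract, and then reuse, two structural facts: (i) the near-horizon expansions $w'=\kappa_\ast w+O((r-r_+)w)$ and $w''=\kappa_\ast^2 w+O((r-r_+)w)$ with $\kappa_\ast=\frac{6M}{r_+^2}-\frac{2}{r_+}>0$, which on the relevant scale $r-r_+\les_{M,k}\Ka^{-1}$ reduce $\Pi$ to a leading polynomial in $\Ka w$ plus $O(\Ka^{-1})$ relative errors; and (ii) the factorisation underlying~\eqref{eq:DLrK0rK2insec}, which exhibits $r_{\Ka,-1}$, $r_{\Ka,+1}$ as the two simple zeros of $\Pi$ nearest $r_{\Ka,0}$ and supplies a slope bound $|\Pi'(r_{\Ka,\pm1})|\gtrsim_{M,k}\Ka\kappa_\ast^2$ there.

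\textbf{The interval $(r_{\Ka,-1},r_{\Ka,+1})$, i.e.~\eqref{est:BrKa+-rKa++insec}.} Here $r-r_+\sim_{M,k}\Ka^{-1}$, so $w\sim_{M,k}\Ka^{-1}$ and it is enough to prove $|\Pi|\les_{M,k}\Ka^{-1}\ell^2$. Using the factorisation from (ii), write $\Pi=c(r)(r-r_{\Ka,-1})(r-r_{\Ka,+1})$ on a neighbourhood of $[r_{\Ka,-1},r_{\Ka,+1}]$, where the regular factor $c$ is bounded on that interval (in $M,k$, by $\Ka^2$) thanks to the near-horizon asymptotics of $\Pi''$ from (i); since $|r-r_{\Ka,\pm1}|\leq r_{\Ka,+1}-r_{\Ka,-1}\les_{M,k}\Ka^{-3/2}\ell$ by~\eqref{eq:DLrK0rK2insec}, this yields $|\Pi|\les_{M,k}\Ka^{2}\cdot(\Ka^{-3/2}\ell)^{2}=\Ka^{-1}\ell^2$. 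The point is that one must exploit that $\Pi$ vanishes at \emph{both} endpoints; expanding about a single zero is too lossy.

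\textbf{The regions $(r_+,r_{\Ka,-2})$ and $(r_{\Ka,+2},+\infty)$, i.e.~\eqref{est:Br+rKa-2insec} and~\eqref{est:BrKa2+inftyinsec}.} In both cases the key is that $r$ stays a definite fraction of the transition scale away from the relevant zero: combining~\eqref{eq:DLrK0rK2insec}, \eqref{eq:defrKapm2insec} and the ordering~\eqref{eq:comparaisonrKaiinsec}, for $\varep_1=\varep_1(M,k)$ small one has $r_{\Ka,-1}-r_{\Ka,-2}\gtrsim_{M,k}\Ka^{-1}r_+^{2}(\Ka^{-1}\ell^2)^{1/4}$ and likewise for $r_{\Ka,+2}-r_{\Ka,+1}$. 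On $(r_+,r_{\Ka,-2})$, $\Pi\geq0$ by Lemma~\ref{lem:easypotentialsigns}, and (i) shows that the leading positive part of $\Pi$ near the horizon -- the redshift-type term $\tfrac14\Ka w''$ together with the remaining positive contributions -- dominates the possibly negative angular contribution, of modulus $\les_{M,k}\ell^2 w$, once $\varep_1$ is chosen small enough to beat the latter and the $O(\Ka^{-1})$ errors; this gives $\Pi\gtrsim_{M,k}\Ka\kappa_\ast^2 w\gtrsim_{M,k}\ell^2 w$ using $\Ka\geq\varep_1^{-1}\ell^2$. On $(r_{\Ka,+2},+\infty)$ I would split at a fixed radius $r_0=r_0(M,k)$: for $r\geq r_0$, $w\gtrsim_{M,k}1$ and $\tfrac14\Ka^3 w^3$ dominates, so trivially $\Pi\gtrsim_{M,k}\Ka^3 w^3\gtrsim_{M,k}(\Ka^{-1}\ell^2)^{1/2}(\Ka w)^2$; for $r_{\Ka,+2}<r<r_0$, $\Pi>0$ and expanding about the last zero $r_{\Ka,+1}$, with the gap estimate and the slope bound from (ii), yields $\Pi\gtrsim_{M,k}\Ka\kappa_\ast^{2}\cdot\Ka^{-1}r_+^2(\Ka^{-1}\ell^2)^{1/4}\gtrsim_{M,k}(\Ka^{-1}\ell^2)^{1/4}$, which dominates $(\Ka^{-1}\ell^2)^{1/2}(\Ka w)^2$ there since $\Ka w\gtrsim_{M,k}1$ and $\Ka^{-1}\ell^2\leq1$. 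The concluding comparison $(\Ka^{-1}\ell^2)^{1/2}(\Ka w)^2 e^f\gtrsim_{M,k}w\ell^2 e^f$ on $\{r>r_{\Ka,+2}\}$ is immediate from $\Ka w\gtrsim_{M,k}1$ and $\Ka\gtrsim\ell^2$.

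\textbf{Main obstacle.} The substantive step is (i)--(ii): the careful near-horizon asymptotic analysis of $\Pi$ performed in Appendix~\ref{sec:Carlemanmultipliers} -- isolating the correct leading polynomial in the combined variable $\Ka w$, propagating the $O(\Ka^{-1})$ relative errors through the products $\Ka^2 ww'$ and $\Ka^3 w^3$, and pinning the two simple zeros down to the accuracy of~\eqref{eq:DLrK0rK2insec}. The delicacy is the competition between the exponential weight $e^{\Ka/r}$, which varies on scale $\sim\Ka^{-1}$ near $r_+$, and the Regge-Wheeler potential $V$. Granting that, the three estimates above are bookkeeping resting on Lemma~\ref{lem:easypotentialsigns} and the ordering~\eqref{eq:comparaisonrKaiinsec}.
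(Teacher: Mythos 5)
Your treatment of the two outer regions \eqref{est:Br+rKa-2insec} and \eqref{est:BrKa+-rKa++insec} can be made to work, but the plan for \eqref{est:BrKa2+inftyinsec} has a genuine gap on the intermediate zone between $r_{\Ka,+2}$ and your fixed radius $r_0$, i.e.\ where $1\les_{M,k}\Ka w\les_{M,k}\Ka$. Your only tool there is a linear lower bound anchored at the zero $r_{\Ka,+1}$, and it is doubly insufficient. First, the slope is overstated: at the simple zeros the rescaled potential $\VV$ of \eqref{eq:easypotential} has slope of size $\Ka^{3/2}\ellmode$, so $\Pi=\VV w$ has slope $\sim \Ka(\Ka^{-1}\ellmode^2)^{1/2}$, not $\gtrsim_{M,k}\Ka\kappa_\ast^2$; combined with the gap $r-r_{\Ka,+1}\gtrsim \Ka^{-1}(\Ka^{-1}\ellmode^2)^{1/4}$ this yields at best $\Pi\gtrsim_{M,k}(\Ka^{-1}\ellmode^2)^{3/4}\leq 1$. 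Second, and more importantly, the quantity you must dominate, $(\Ka^{-1}\ellmode^2)^{1/2}(\Ka w)^2 e^{-f}\,\Pi$-wise, grows like $(\Ka^{-1}\ellmode^2)^{1/2}\Ka^2=\Ka^{3/2}\ellmode$ as $r$ approaches any fixed $r_0=r_0(M,k)$ (there $w\sim_{M,k}1$, so $\Ka w\sim\Ka$); your justification ``since $\Ka w\gtrsim_{M,k}1$'' bounds the right-hand side from \emph{below}, not above, so the claimed domination $(\Ka^{-1}\ellmode^2)^{1/4}\gtrsim(\Ka^{-1}\ellmode^2)^{1/2}(\Ka w)^2$ fails as soon as $r-r_+$ exceeds a few multiples of $\Ka^{-1}$. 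What is missing is precisely the paper's two-step treatment of this zone: on $(r_{\Ka,+2},r_{\Ka,\eta^{-1}})$ with $r_{\Ka,\eta^{-1}}=r_++\eta^{-1}\Ka^{-1}r_+^2$ (where $\Ka w\les_{M,k}1$) one uses monotonicity of $\VV$, coming from its convexity on $(r_+,r_m)$, together with the explicitly computed value $\VV(r_{\Ka,\pm2})\gtrsim_{M,k}\Ka(\Ka^{-1}\ellmode^2)^{1/2}$; and on $(r_{\Ka,\eta^{-1}},r_m)$ one must show that the square term $\quar\Ka\big(\Ka w-r^2\pr_r w\big)^2$ survives the partial cancellation, which requires the quantitative bound $|r^2\pr_r w|\leq\half\Ka w$ there, obtained by choosing $\eta=\eta(M,k)$ via $|\pr_r w/w|\leq c/(r-r_+)$. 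Your sketch contains no substitute for either step, and no pointwise lower bound of the form $\VV\gtrsim_{M,k}\Ka(\Ka w)^2$ in the region where $\Ka w$ is large but $r<r_0$.

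Two further, smaller points. On $(r_+,r_{\Ka,-2})$ your final conclusion is correct, but the intermediate claim $\Pi\gtrsim_{M,k}\Ka\kappa_\ast^2 w$ is false near $r_{\Ka,-2}$: there the square term degenerates to $\sim\kappa_\ast^2(\Ka\ellmode^2)^{1/2}w$, which is what actually beats the $O(\ellmode^2 w)$ negative terms once $\varep_1$ is small (this degeneration is exactly why the factor $(\Ka^{-1}\ellmode^2)^{1/2}$ appears in \eqref{est:BrKa2+inftyinsec}); the paper avoids this bookkeeping altogether by using monotonicity of $\VV$ down from $r_{\Ka,-2}$ plus the single evaluation $\VV(r_{\Ka,-2})$. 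For \eqref{est:BrKa+-rKa++insec}, your double-zero factorisation is workable but unnecessary and costs you an additional bound on second divided differences of $\Pi$ that you would still have to prove; in fact ``expanding about both zeros'' is not needed at all: on $(r_{\Ka,-1},r_{\Ka,+1})$ one has $\VV\leq0$ by \eqref{eq:defrK0rK2insec}, and dropping the manifestly positive terms in \eqref{eq:easypotential} gives $\VV\geq -C_{M,k}(1+\Ka w)\ellmode^2\gtrsim_{M,k}-\ellmode^2$ since $\Ka w\les_{M,k}1$ there, whence $|\VV|\les_{M,k}\ellmode^2$ directly.
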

\begin{proof}
  See Appendix~\ref{sec:Carlemanmultipliers}.
\end{proof}

With these two lemmas, we can now control the second line of~\eqref{est:easyCarleman}. This is the following lemma.
\begin{lemma}
  Let $\ellmode\geq 2$. There exists $\varep_2(M,k)>0$ such that for all $\Ka \geq \varep_3^{-1}\ellmode^2$, we have
  \begin{align}\label{est:finaleasypotentialestimate}
    \begin{aligned}
      \int_{-\infty}^{\frac{\pi}{2}} \ellmode^2we^f |\Psi|^2\,\d r^\ast & \les_{M,k} \int_{-\infty}^{\frac{\pi}{2}}\Ka w e^f|\pr_{r^\ast}\Psi|^2 \,\d r^\ast +\int_{-\infty}^{\frac{\pi}{2}}\le(2 (pV)' -\half pV' - \quar p'''\ri)|\Psi|^2\,\d r^\ast,
    \end{aligned}
  \end{align}
  for all functions $\Psi$.
\end{lemma}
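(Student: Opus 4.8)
The plan is to treat this as a purely weighted one--dimensional inequality (no equation or boundary condition for $\Psi$ enters), and to prove it by partitioning $(-\infty,\tfrac{\pi}{2})_{r^\ast}$ into three pieces dictated by the sign and size information for the potential combination $\mathfrak U:=2(pV)'-\half pV'-\quar p'''$ furnished by Lemmas~\ref{lem:easypotentialsigns} and~\ref{lem:easyboundspotential}. I would set $\mathcal I^-:=(r_+,r_{\Ka,-2})$, $\mathcal I^0:=(r_{\Ka,-2},r_{\Ka,+2})$ and $\mathcal I^+:=(r_{\Ka,+2},+\infty)$, regarded as subsets of the $r^\ast$--line (recall $r\to r_+$ corresponds to $r^\ast\to-\infty$, so $\mathcal I^-$ is a half--line), and write $r^\ast_{\Ka,-2}$ for the $r^\ast$--coordinate of $r_{\Ka,-2}$. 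On $\mathcal I^-$ and $\mathcal I^+$, estimates~\eqref{est:Br+rKa-2insec} and~\eqref{est:BrKa2+inftyinsec} give $\mathfrak U\gtrsim_{M,k}\ellmode^2we^f\geq 0$, so the left--hand integrand is pointwise $\les_{M,k}\mathfrak U$ there. It then remains to control $\int_{\mathcal I^0}\ellmode^2we^f|\Psi|^2\,\d r^\ast$ and to convert the two (non--negative) quantities $\int_{\mathcal I^\pm}\mathfrak U|\Psi|^2\,\d r^\ast$ back into the full integral $\int_{-\infty}^{\frac{\pi}{2}}\mathfrak U|\Psi|^2\,\d r^\ast$.

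For the middle region I would use a Poincaré inequality, exploiting that $\mathcal I^0$ is \emph{short}: from~\eqref{eq:defrKapm2insec} together with the elementary bounds $w\sim_{M,k}\Ka^{-1}$ and $|f-\Ka r_+^{-1}|\les_{M,k}1$ valid on $\mathcal I^0$, one gets $|\mathcal I^0|_{r^\ast}\les_{M,k}(\Ka^{-1}\ellmode^2)^{1/4}=:\lambda$ and $we^f\sim_{M,k}\Ka^{-1}e^{\Ka/r_+}$ throughout $\mathcal I^0$. Writing $\Psi(r^\ast)=\Psi(r^\ast_{\Ka,-2})+\int_{r^\ast_{\Ka,-2}}^{r^\ast}\pr_s\Psi\,\d s$, a Cauchy--Schwarz with weight $\Ka we^f$ and then integration in $r^\ast$ give
\begin{align*}
\int_{\mathcal I^0}\ellmode^2we^f|\Psi|^2\,\d r^\ast\les_{M,k}\ellmode^2\Ka^{-1}\lambda\,e^{\Ka/r_+}\big|\Psi(r^\ast_{\Ka,-2})\big|^2+\ellmode^2\Ka^{-1}\lambda^2\int_{-\infty}^{\frac{\pi}{2}}\Ka we^f|\pr_{r^\ast}\Psi|^2\,\d r^\ast,
\end{align*}
and the last term is absorbable since $\ellmode^2\Ka^{-1}\lambda^2\les_{M,k}(\ellmode^2\Ka^{-1})^{3/2}\leq\varep_2^{3/2}$. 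To estimate the boundary value I would average the fundamental theorem of calculus over the unit interval $J:=(r^\ast_{\Ka,-2}-1,r^\ast_{\Ka,-2})\subset\mathcal I^-$, on which one still has $we^f\sim_{M,k}\Ka^{-1}e^{\Ka/r_+}$ and $\mathfrak U\gtrsim_{M,k}\ellmode^2we^f$ by~\eqref{est:Br+rKa-2insec}, obtaining
\begin{align*}
\big|\Psi(r^\ast_{\Ka,-2})\big|^2\les_{M,k}\frac{\Ka e^{-\Ka/r_+}}{\ellmode^2}\int_{\mathcal I^-}\mathfrak U|\Psi|^2\,\d r^\ast+e^{-\Ka/r_+}\int_{-\infty}^{\frac{\pi}{2}}\Ka we^f|\pr_{r^\ast}\Psi|^2\,\d r^\ast.
\end{align*}
Inserting this, and using $\lambda\leq\varep_2^{1/4}$ and $\ellmode^2\Ka^{-1}\lambda\leq\varep_2^{5/4}$, yields $\int_{\mathcal I^0}\ellmode^2we^f|\Psi|^2\,\d r^\ast\les_{M,k}\varep_2^{1/4}\int_{\mathcal I^-}\mathfrak U|\Psi|^2\,\d r^\ast+\int_{-\infty}^{\frac{\pi}{2}}\Ka we^f|\pr_{r^\ast}\Psi|^2\,\d r^\ast$.

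To close, I would note that on the transition zones $(r_{\Ka,-2},r_{\Ka,-1})$ and $(r_{\Ka,+1},r_{\Ka,+2})$ one has $\mathfrak U\geq 0$ by Lemma~\ref{lem:easypotentialsigns}, while on $(r_{\Ka,-1},r_{\Ka,+1})$ one has $|\mathfrak U|\les_{M,k}\ellmode^2we^f$ by~\eqref{est:BrKa+-rKa++insec}; hence $-\int_{\mathcal I^0}\mathfrak U|\Psi|^2\leq\int_{(r_{\Ka,-1},r_{\Ka,+1})}|\mathfrak U||\Psi|^2\les_{M,k}\int_{\mathcal I^0}\ellmode^2we^f|\Psi|^2$, so that
\begin{align*}
\int_{\mathcal I^-}\mathfrak U|\Psi|^2\,\d r^\ast+\int_{\mathcal I^+}\mathfrak U|\Psi|^2\,\d r^\ast=\int_{-\infty}^{\frac{\pi}{2}}\mathfrak U|\Psi|^2\,\d r^\ast-\int_{\mathcal I^0}\mathfrak U|\Psi|^2\,\d r^\ast\les_{M,k}\int_{-\infty}^{\frac{\pi}{2}}\mathfrak U|\Psi|^2\,\d r^\ast+\int_{\mathcal I^0}\ellmode^2we^f|\Psi|^2\,\d r^\ast.
\end{align*}
Assembling the pieces and choosing $\varep_2=\varep_2(M,k)$ small enough to reabsorb the terms proportional to $\varep_2^{1/4}\int_{\mathcal I^-}\mathfrak U|\Psi|^2$ (which are themselves $\les_{M,k}\int_{-\infty}^{\frac{\pi}{2}}\mathfrak U|\Psi|^2+\int_{\mathcal I^0}\ellmode^2we^f|\Psi|^2$) then yields~\eqref{est:finaleasypotentialestimate}. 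The main obstacle is the power counting in $\Ka$: one must verify that both the Poincaré prefactor on $\mathcal I^0$ and the cost of transporting $|\Psi(r^\ast_{\Ka,-2})|^2$ into the coercive region $\mathcal I^-$ carry strictly positive powers of $\ellmode^2/\Ka=O(\varep_2)$, which forces the simultaneous use of the smallness of $|\mathcal I^0|_{r^\ast}\sim(\ellmode^2/\Ka)^{1/4}$ and of the comparability $we^f\sim\Ka^{-1}e^{\Ka/r_+}$ across $\mathcal I^-\cup\mathcal I^0$ near the horizon; a secondary bookkeeping point is tracking the sign of $\mathfrak U$ on the two transition zones.
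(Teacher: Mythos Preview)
Your proof is correct and follows the same overall decomposition into $\mathcal I^-$, $\mathcal I^0$, $\mathcal I^+$ as the paper, but handles the middle region $\mathcal I^0$ by a genuinely different route. The paper controls $\sup_{\mathcal I^0}\big(w\ellmode^2e^f|\Psi|^2\big)$ via the fundamental theorem of calculus integrating \emph{toward infinity} (anchored in a cut--off region between $r=2M$ and $r=3M$); this produces an extra factor of $\Ka w$ in the resulting integral over $\mathcal I^+$, which is then absorbed using the refined lower bound $\mathfrak U\gtrsim(\Ka^{-1}\ellmode^2)^{1/2}(\Ka w)^2e^f$ of~\eqref{est:BrKa2+inftyinsec}, yielding smallness $(\ellmode^2/\Ka)^{3/4}$. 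You instead control $|\Psi(r^\ast_{\Ka,-2})|^2$ by integrating \emph{toward the horizon} over a unit $r^\ast$--interval in $\mathcal I^-$, exploiting that $we^f\sim_{M,k}\Ka^{-1}e^{\Ka/r_+}$ remains essentially constant there; this needs only the cruder bound $\mathfrak U\gtrsim\ellmode^2we^f$ on $\mathcal I^-$ from~\eqref{est:Br+rKa-2insec} and yields smallness $(\ellmode^2/\Ka)^{1/4}$. Your approach is thus slightly more elementary in that it never invokes the $(\Ka w)^2$ part of~\eqref{est:BrKa2+inftyinsec}. The one point worth making explicit is that the unit $r^\ast$--interval $J$ indeed stays in the region $r-r_+\sim_{M,k}\Ka^{-1}$; this follows from the asymptotic $r^\ast-r^\ast_{\Ka,-2}\sim_{M,k}\log\big((r-r_+)/(r_{\Ka,-2}-r_+)\big)$ near the horizon, so that going back one unit in $r^\ast$ only shrinks $r-r_+$ by a bounded $(M,k)$--dependent factor.
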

\begin{proof}
  Using~\eqref{est:Br+rKa-2insec} and~\eqref{est:BrKa2+inftyinsec}, it is immediate that 
  \begin{align}
    \label{est:firstoutsiderKapm2easywell2ef}
    \begin{aligned}
    & \int_{-\infty}^{r^\ast_{\Ka,-2}} w\ellmode^2 e^f |\Psi|^2\,\d r^\ast + \int_{r^\ast_{\Ka,+2}}^{\frac{\pi}{2}} w\ellmode^2 e^f |\Psi|^2\,\d r^\ast \\
    \les_{M,k} & \; \int_{-\infty}^{r^\ast_{\Ka,-2}}\le(2(pV)' - \half pV' - \quar p'''\ri)|\Psi|^2\,\d r^\ast + \int_{r^\ast_{\Ka,+2}}^{\frac{\pi}{2}}\le(2(pV)' - \half pV' - \quar p'''\ri)|\Psi|^2\,\d r^\ast ,
    \end{aligned}
  \end{align}
 where we have used the bijection between $r$ and $r^\ast$ (\ref{tortoise}) to write $r^\ast_{\Ka, \pm 2}$ for the $r^\ast$-value corresponding to $r_{\Ka, \pm 2}$.
  Using that $w(r) = O_{r\to r_+}(r-r_+)$ and the definition~\eqref{eq:defrKapm2insec} of $r_{\Ka,\pm2}$, we have $\Ka w \sim_{M,k}1$ on $(r_{\Ka,-2},r_{\Ka,2})$. Thus, using the change of variable $\d r^\ast = w^{-1}r^{-2}\d r$, we have
  \begin{align}\label{est:errintpV'v1bis2}
    \begin{aligned}
      \int_{r^\ast_{\Ka,-2}}^{r^\ast_{\Ka,2}}w\ellmode^2e^f|\Psi|^2\,\d r^\ast & \les_{M,k}\Ka\underbrace{\le|r_{\Ka,2}-r_{\Ka,-2}\ri|}_{\les \Ka^{-1}\Ka^{-1/4}\ellmode^{1/2}}\sup_{r^\ast_{\Ka,-2}\leq r^\ast \leq r^\ast_{\Ka,2}}\le(w\ellmode^2e^f|\Psi|^2\ri) \\
      & \les_{M,k} \le(\Ka^{-1}\ellmode^2\ri)^{1/4}\sup_{r^\ast_{\Ka,-2}\leq r^\ast \leq r^\ast_{\Ka,2}}\le(w\ellmode^2e^f|\Psi|^2\ri).
    \end{aligned}
  \end{align}
  Using the fundamental theorem of calculus (with a cut-off between $r=2M$ and $r=3M$), using that $|(w\ellmode^2e^f)'|\les_{M,k}(\Ka w)w\ellmode^2e^f$ on $(r_{\Ka,-2},+\infty)$, that $\Ka w \les_{M,k}1$ on $(r_{\Ka,-2},r_{\Ka,2})$ and that $\Ka w\gtrsim_{M,k} 1$ on $(r^\ast_{\Ka,+2},+\infty)$, we get
  \begin{align}\label{est:controlsuppVPsiv1bis}
    \begin{aligned}
      \sup_{r^\ast_{\Ka,-2}\leq r^\ast \leq r^\ast_{\Ka,2}} \le(w\ellmode^2e^f|\Psi|^2\ri) & \les_{M,k} \int_{r^\ast_{\Ka,-2}}^{\frac{\pi}{2}} (\Ka w)w\ellmode^2e^f|\Psi|^2\,\d r^\ast + \underbrace{\int_{r=2M}^{r=3M} w\ellmode^2e^f|\Psi|^2\,\d r^\ast}_{\les \int_{r^\ast_{\Ka,2}}^{\frac{\pi}{2}} (\Ka w) w\ellmode^2e^f|\Psi|^2\,\d r^\ast} \\
      & \quad + \int_{r^\ast_{\Ka,-2}}^{\frac{\pi}{2}}w\ellmode^2e^f|\Psi||\pr_{r^\ast}\Psi| \, \d r^\ast\\
      & \les_{M,k} \int_{r^\ast_{\Ka,-2}}^{r^\ast_{\Ka,2}} (\Ka w)w\ellmode^2e^f|\Psi|^2\,\d r^\ast + \int_{r^\ast_{\Ka,2}}^{\frac{\pi}{2}} (\Ka w)w\ellmode^2e^f|\Psi|^2\,\d r^\ast \\
      & \quad + \int_{r^\ast_{\Ka,-2}}^{\frac{\pi}{2}}w\ellmode^2e^f|\Psi|^2\,\d r^\ast + \int_{r^\ast_{\Ka,-2}}^{\frac{\pi}{2}}w\ellmode^2e^f|\pr_{r^\ast}\Psi|^2\,\d r^\ast \\
      & \les_{M,k} \int_{r^\ast_{\Ka,-2}}^{r^\ast_{\Ka,2}} w\ellmode^2e^f|\Psi|^2\,\d r^\ast + \int_{r^\ast_{\Ka,2}}^{\frac{\pi}{2}} (\Ka w)w\ellmode^2e^f|\Psi|^2\,\d r^\ast \\
      & \quad + \le(\Ka^{-1}\ellmode^2\ri) \int_{r^\ast_{\Ka,-2}}^{\frac{\pi}{2}}\Ka w e^f|\pr_{r^\ast}\Psi|^2\,\d r^\ast.
    \end{aligned}
  \end{align}
  The first term on the RHS of~\eqref{est:controlsuppVPsiv1bis} can be absorbed on the LHS by re-using estimate~\eqref{est:errintpV'v1bis2} provided that $\Ka^{-1}\ellmode^2$ is sufficiently small depending on $M,k$. From that argument and the bound~\eqref{est:BrKa2+inftyinsec} on $2(pV)'-\half pV'-\quar p'''$, we get from~\eqref{est:controlsuppVPsiv1bis} that
  \begin{align}
    \sup_{r^\ast_{\Ka,-2}\leq r^\ast \leq r^\ast_{\Ka,2}} \le(w\ellmode^2e^f|\Psi|^2\ri) & \les_{M,k} \int_{r^\ast_{\Ka,2}}^{\frac{\pi}{2}} (\Ka w)w\ellmode^2e^f|\Psi|^2\,\d r^\ast + \le(\Ka^{-1}\ellmode^2\ri) \int_{r^\ast_{\Ka,-2}}^{\frac{\pi}{2}}\Ka w e^f|\pr_{r^\ast}\Psi|^2\,\d r^\ast \nonumber\\
                                                                                     & \les_{M,k} \le(\Ka^{-1}\ellmode^2\ri)\le(\int_{r^\ast_{\Ka,2}}^{\frac{\pi}{2}} (\Ka w)^2e^f|\Psi|^2\,\d r^\ast + \int_{r^\ast_{\Ka,-2}}^{\frac{\pi}{2}}\Ka w e^f|\pr_{r^\ast}\Psi|^2\,\d r^\ast\ri) \label{est:controlsuppVPsiv2bis}\\
                                                                                     & \les_{M,k}  \le(\Ka^{-1}\ellmode^2\ri)^{1/2} \bigg(\int_{r^\ast_{\Ka,2}}^{\frac{\pi}{2}}\le(2(pV)'-\half pV'-\quar p'''\ri)|\Psi|^2\,\d r^\ast+ \int_{r^\ast_{\Ka,-2}}^{\frac{\pi}{2}}\Ka w e^f|\pr_{r^\ast}\Psi|^2\,\d r^\ast\bigg).\nonumber
  \end{align}
  In particular, using~\eqref{est:errintpV'v1bis2}, \eqref{est:controlsuppVPsiv2bis} yields
  \begin{align}\label{est:intwell2efrKapm2}
    \int_{r^\ast_{\Ka,-2}}^{r^\ast_{\Ka,+2}}w\ellmode^2e^f|\Psi|^2\,\d r^\ast \les_{M,k} \int_{r^\ast_{\Ka,2}}^{\frac{\pi}{2}}\le(2(pV)'-\half pV'-\quar p'''\ri)|\Psi|^2\,\d r^\ast+ \int_{r^\ast_{\Ka,-2}}^{\frac{\pi}{2}}\Ka w e^f|\pr_{r^\ast}\Psi|^2\,\d r^\ast.
  \end{align}
  From~\eqref{eq:comparaisonrKaiinsec},~\eqref{est:BrKa+-rKa++insec} and~\eqref{est:controlsuppVPsiv2bis}, we get
  \begin{align}\label{est:errintpV'finalbis}
    \begin{aligned}
      -\int_{r^\ast_{\Ka,-1}}^{r^\ast_{\Ka,+1}}\le(2(pV)' - \half pV' - \quar p'''\ri)|\Psi|^2\,\d r^\ast & \les_{M,k} \int_{r^\ast_{\Ka,-2}}^{r^\ast_{\Ka,+2}}w\ellmode^2e^f|\Psi|^2\,\d r^\ast \\
      & \les_{M,k} \le(\Ka^{-1}\ellmode^2\ri)^{3/4} \bigg(\int_{r^\ast_{\Ka,2}}^{\frac{\pi}{2}}\le(2(pV)' - \half pV' - \quar p'''\ri)|\Psi|^2\,\d r^\ast \\
      & \quad \quad\quad\quad + \int_{r^\ast_{\Ka,-2}}^{\frac{\pi}{2}}\Ka w e^f|\pr_{r^\ast}\Psi|^2\,\d r^\ast\bigg).
    \end{aligned}
  \end{align}
  Hence, using~\eqref{eq:defrK0rK2insec} and for $\Ka^{-1}\ellmode^2$ sufficiently small depending only on $M,k$, an absorption argument gives
  \begin{align}\label{est:onlypositiveintheRHS}
    \begin{aligned}
      & \int_{-\infty}^{\frac{\pi}{2}}\Ka w e^f|\pr_{r^\ast}\Psi|^2 \,\d r^\ast +\int_{-\infty}^{r^\ast_{\Ka,-2}}\le(2 (pV)' -\half pV' - \quar p'''\ri)|\Psi|^2\,\d r^\ast \\
      & \quad +\int_{r^\ast_{\Ka,+2}}^{\frac{\pi}{2}}\le(2 (pV)' -\half pV' - \quar p'''\ri)|\Psi|^2\,\d r^\ast \\
      \leq & \; 2 \bigg(\int_{-\infty}^{\frac{\pi}{2}}\Ka w e^f|\pr_{r^\ast}\Psi|^2 \,\d r^\ast +\int_{-\infty}^{\frac{\pi}{2}}\le(2 (pV)' -\half pV' - \quar p'''\ri)|\Psi|^2\,\d r^\ast \bigg).
    \end{aligned}
  \end{align}
  Combining~\eqref{est:firstoutsiderKapm2easywell2ef} and~\eqref{est:intwell2efrKapm2} with~\eqref{est:onlypositiveintheRHS}, we have obtained the desired~\eqref{est:finaleasypotentialestimate} and this finishes the proof of the lemma.
\end{proof}
Now, for all $\Ka$ such that
\begin{align}\label{est:Kabound}
  \Ka \geq \max\le(8k^{-1}\sqrt{\ellmode(\ellmode+1)},\varep_{2}^{-1}\ellmode^2\ri),
\end{align}
we infer, by plugging~\eqref{est:bdyineasyCarleman} and~\eqref{est:finaleasypotentialestimate} in~\eqref{est:easyCarleman}, that
\begin{align*}
  &\int_{t^\ast_1}^{t^\ast_2}\int_{-\infty}^{\frac{\pi}{2}} \le(\Ka w e^f \le(2\frac{\De_-\De_+}{\De_0^2}+\quar\le(\frac{\De_--\De_+}{\De_0}\ri)^2\ri)|\pr_{t^\ast}\Psi|^2 + \Ka w e^f|\pr_{r^\ast}\Psi|^2\ri)\,\d t^\ast\d r^\ast \\
    & +\int_{t^\ast_1}^{t^\ast_2}\int_{-\infty}^{\frac{\pi}{2}}\ellmode^2we^f|\Psi|^2\,\d t^\ast\d r^\ast + \half \lim_{r\to+\infty}\int_{t_1^\ast}^{t_2^\ast}\le(|\pr_{r^\ast}\Psi|^2 + |\pr_{t^\ast}\Psi|^2 + k^2\ellmode(\ellmode+1)|\Psi|^2\ri)\,\d t^\ast \\
    & \les_{M,k} e^{C\Ka}\EE[\Psi](t^\ast_2;t^\ast_1).
\end{align*}
Using that $2\frac{\De_-\De_+}{\De_0^2}+\quar\le(\frac{\De_--\De_+}{\De_0}\ri)^2\gtrsim_{M,k}1$ and that $\Ka \gtrsim_{M,k} 1$ and $e^f\geq 1$, we have
\begin{align}\label{est:proofCarlemanfinalestimatebis}
  \begin{aligned}
    & \int_{t^\ast_1}^{t^\ast_2}\int_{-\infty}^{\frac{\pi}{2}}\le(w|\pr_{t^\ast}\Psi|^2 + w |\pr_{r^\ast}\Psi|^2 + w\ellmode^2|\Psi|^2\ri)\,\d t^\ast\d r^\ast \\
    & + \lim_{r\to+\infty}\int_{t^\ast_1}^{t^\ast_2}\le(|\pr_{t^\ast}\Psi|^2 + |\pr_{r^\ast}\Psi|^2 + \ellmode^2|\Psi|^2\ri)\,\d t^\ast \\
  \les_{M,k} & \; e^{C\Ka} \EE[\Psi](t^\ast_2;t^\ast_1).
  \end{aligned}
\end{align}
Using the redshift estimate from Lemma~\ref{lem:redshift}, recalling the energy norm definitions~\eqref{eq:defenergynormsRWhigh}, estimate~\eqref{est:proofCarlemanfinalestimatebis} upgrades as
\begin{align*}
  \begin{aligned}
    \int_{t^\ast_1}^{t^\ast_2}\overline{\mathrm{E}}[\Psi](t^\ast)\,\d t^\ast + \overline{\mathrm{E}}^\II[\Psi](t^\ast_2;t^\ast_1) & \les_{M,k} \overline{\mathrm{E}}[\Psi](t^\ast_1) + e^{C\Ka}\EE[\Psi](t^\ast_2;t^\ast_1),
  \end{aligned}
\end{align*}
which, using the energy boundedness~\eqref{est:energyPsiDRmain} for solutions to the  Regge-Wheeler problem~\eqref{sys:RWdeco} and recalling~\eqref{est:Kabound}, finishes the proof of Proposition~\ref{prop:integralestimatesv} in the $\mathfrak{p}=2$ case.

\subsection{Carleman estimates with time-frequency decompositions}\label{sec:Fourier}
In this subsection, we prove Proposition~\ref{prop:integralestimatesv} in the stronger $\mathfrak{p}=1$ case using frequency methods.

Note first that it suffices to prove the estimate (\ref{est:Carlemanmain}) for $t^\ast_2 \geq t^\ast_1+1$ and replacing $t_1^\ast$ on the left by $t_1^\ast+1$. Indeed, for $t_2^\ast \leq t_1^\ast +1$ the estimate (\ref{est:Carlemanmain}) is easily established as follows. For the first term term on the left hand side  (\ref{est:Carlemanmain}) is immediate from the boundedness statement of Proposition \ref{prop:energybounded}. For the second term, one may use the identity (\ref{est:pfCarleman1int1}) with a $p$ that is equal to $1$ at the conformal boundary and zero near the horizon and again control all terms except $|\partial_{r^\ast} \Psi|^2$ and $|\partial_t \Psi|^2$ on the conformal boundary by the Proposition \ref{prop:energybounded}. This argument in fact only uses $\ell$ polynomially. 
 
To prove the restricted estimate, let $\chi$ be a smooth cut-off function such that $\chi|_{[1,+\infty)}=1$ and $\chi|_{(-\infty,0]}=0$. Define
\begin{align*}
\Psic & := \chi(t^\ast-t^\ast_1)\Psi.
\end{align*}
Note that $\Psic=\Psi$ for $t^\ast \geq t_1^\ast+1$. 
By the result of the previous section (Proposition~\ref{prop:integralestimatesv} in the $\mathfrak{p}=2$ case) and the above definition of $\Psic$, we have that (for each angular frequency) $\Psic$ is $L^2$ in time. 

Let $\eta$ be a smooth cut-off function such that $\eta|_{[-1,1]}=1$ and $\eta|_{(-\infty,-2]\cup[+2,+\infty)}=0$ and $\de>0$. One can define the following Fourier decomposition in time
\begin{align*}
  \widetilde{\Psi} & = \mathcal{F}^{-1}\le(\eta\le(\frac{\xi}{\de\ellmode}\ri)\mathcal{F}(\widetilde{\Psi})(\xi)\ri) + \mathcal{F}^{-1}\le((1-\eta)\le(\frac{\xi}{\de\ellmode}\ri)\mathcal{F}(\widetilde{\Psi})(\xi)\ri) =: \Psic^\flat +\Psic^\sharp.
\end{align*}
By Plancherel we have
\begin{align}\label{est:Plancherelsharp}
  \begin{aligned}
    \int_{-\infty}^{+\infty} |\pr_{t^\ast}\Psic^\sharp|^2 \,\d t^\ast = \int_{-\infty}^{+\infty} |\xi|^2|\mathcal{F}(\Psic^\sharp)|^2 \,\d \xi \gtrsim (\de\ellmode)^2 \int_{-\infty}^{+\infty} |\mathcal{F}(\Psic^\sharp)|^2 \,\d \xi = (\de\ellmode)^2 \int_{-\infty}^{+\infty} |\Psic^\sharp|^2 \,\d t^\ast, 
  \end{aligned}
\end{align}
and
\begin{align}\label{est:Plancherelflat}
  \begin{aligned}
    \int_{-\infty}^{+\infty} |\pr_{t^\ast}\Psic^\flat|^2 \,\d t^\ast \les (\de\ellmode)^2 \int_{-\infty}^{+\infty} |\Psic^\flat|^2 \,\d t^\ast  ,
  \end{aligned}
\end{align}
as well as 
\begin{align}\label{est:PlancherelErrorsourceterm}
  \int_{-\infty}^{+\infty}|\RW\Psic^\flat|^2\,\d t^\ast  + \int_{-\infty}^{+\infty}|\RW\Psic^\sharp|^2\,\d t^\ast \les \int_{-\infty}^{+\infty}|\RW\widetilde{\Psi}|^2\,\d t^\ast = \int_{t^\ast_1}^{t^\ast_1+1} |\RW\widetilde{\Psi}|^2 \,\d t^\ast .
\end{align}
Now from the definition of $\Psic$ and the fact that $\Psi$ satisfies $\RW \Psi=0$, we deduce
\begin{align}\label{est:PlancherelErrorsourcetermbis}
  \int_{-\infty}^{\frac{\pi}{2}}\int_{-\infty}^{+\infty} w^{-1}|\RW\widetilde{\Psi}|^2 \,\d t^\ast\d r^\ast = \int_{-\infty}^{\frac{\pi}{2}}\int_{t_1^\ast}^{t_1^\ast+1} w^{-1}|\RW\widetilde{\Psi}|^2 \,\d t^\ast\d r^\ast  & \les_{M,k} \mathrm{E}^{\mathfrak{R}}[\Psi](t^\ast_1),
\end{align}
with the last step following from  the energy and redshift estimates of Propositions~\ref{prop:energybounded} and~\ref{prop:redshift}.

Arguing along similar lines, we have
\begin{align}\label{est:Plancherelhorizon}
  \begin{aligned}
    \widetilde{\mathrm{E}}^\HH[\Psic^\flat](+\infty,-\infty) + \widetilde{\mathrm{E}}^\HH[\Psic^\sharp](+\infty,-\infty) & \les_{M,k} \lim_{r\to r_+}\int_{-\infty}^{+\infty}|\pr_{t^\ast}\widetilde{\Psi}|^2\,\d t^\ast \\
    & \les_{M,k} \overline{\mathrm{E}}^\HH[\Psi](+\infty,0) \les_{M,k} \mathrm{E}^{\mathfrak{R}}[\Psi](t_1^\ast),
    \end{aligned}
\end{align}
and depending on the fact that $\Psi$ satisfies the Dirichlet boundary condition~\eqref{eq:RWBCdecoDirichlet} or the Robin boundary condition~\eqref{eq:RWBCdecoRobin}, we have respectively
\begin{subequations}\label{eq:RWdecoFourier}
  \begin{align}
    \Psic^\flat,\Psic^\sharp & \xrightarrow{r\to+\infty} 0,\\
    \lim_{r\to+\infty}\int_{-\infty}^{+\infty}\le( \big|\mathfrak{r}\Psic^\flat\big|^2 +\big|\mathfrak{r}\Psic^\sharp\big|^2\ri)\,\d t^\ast & \les \lim_{r\to+\infty}\int_{t_1^\ast}^{t_1^\ast+1}\le(|\pr_t\Psi|^2+|\Psi|^2\ri)\,\d t^\ast \les \Einfty[\Psi](t_1^\ast) \les \mathrm{E}^{\mathfrak{R}}[\Psi](t_1^\ast),
  \end{align}
\end{subequations}
where we have used the notation 
\begin{align*}
  \mathfrak{r}\Psi & := 2\pr_{t}^2\Psi + \frac{\LL(\LL-2)}{6M}\pr_{r^\ast}\Psi + k^2\LL\Psi.
\end{align*}

\subsubsection{Estimates for $\Psic^\flat$}\label{sec:Psiflat}
Applying~\eqref{eq:ellipticFourier} with $f=1$ and $\Phi=\Psic^\flat$, we get
\begin{align}\label{eq:ellipticFourierrewrite}
  \begin{aligned}
    0 & = \int_{t^\ast_1}^{t^\ast_2}\int_{-\infty}^{\frac{\pi}{2}} \bigg(|\pr_{r^\ast}\Psic^\flat|^2 + V|\Psic^\flat|^2- \frac{\De_-\De_+}{\De_0^2}|\pr_{t^\ast}\Psic^\flat|^2 - \frac{\De_--\De_+}{\De_0}\Re\le(\pr_{t^\ast}\Psi^{\flat,\ast}\pr_{r^\ast}\Psic^\flat\ri) \bigg) \,\d t^\ast\d r^\ast  \\
      & \quad -\lim_{r\to+\infty}\int_{t^\ast_1}^{t^\ast_2}\Re\le(\Psi^{\flat,\ast}\pr_{r^\ast}\Psic^\flat\ri) \,\d t^\ast +\lim_{r\to r_+}\int_{t^\ast_1}^{t^\ast_2}\Re\le(\Psi^{\flat,\ast}\pr_{r^\ast}\Psic^\flat\ri)\,\d t^\ast \\
      & \quad +\le[\int_{-\infty}^{\frac{\pi}{2}}\le(\frac{\De_-\De_+}{\De_0^2}\Re\le(\Psi^{\flat,\ast}\pr_{t^\ast}\Psic^\flat\ri)+ \half \pr_{r^\ast}\le(\frac{\De_-}{\De_0}\ri)|\Psic^\flat|^2 + \frac{\De_--\De_+}{\De_0}\Re\le(\Psi^{\flat,\ast}\pr_{r^\ast}\Psic^\flat\ri)\ri)\,\d r^\ast\ri]_{t^\ast_1}^{t^\ast_2} \\
      & \quad +\int_{t^\ast_1}^{t^\ast_2}\int_{-\infty}^{\frac{\pi}{2}}\Re\le(\Psi^{\flat,\ast}\RW\Psic^\flat\ri)\,\d t^\ast\d r^\ast.
  \end{aligned}
\end{align}
Using that $\Psi$, and therefore $\Psic^\flat$, is regular at the horizon~\eqref{eq:defreghorgeneral}, we have
\begin{align}\label{eq:boundaryellipticFourier}
  \begin{aligned}
    & -\lim_{r\to+\infty}\int_{t^\ast_1}^{t^\ast_2}\Re\le(\Psi^{\flat,\ast}\pr_{r^\ast}\Psic^\flat\ri) \,\d t^\ast + \lim_{r\to r_+}\int_{t^\ast_1}^{t^\ast_2}\Re\le(\Psi^{\flat,\ast}\pr_{r^\ast}\Psic^\flat\ri)\,\d t^\ast \\
    = & \; \lim_{r\to+\infty}\frac{6M}{\ellmode(\ellmode+1)\le(\ellmode(\ellmode+1)-2\ri)}\int_{t^\ast_1}^{t^\ast_2}\le(-2|\pr_t\Psic^\flat|^2+k^2\ellmode(\ellmode+1)|\Psic^\flat|^2\ri)\,\d t^\ast\\
    & \quad + \lim_{r\to+\infty}\frac{12M}{\ellmode(\ellmode+1)\le(\ellmode(\ellmode+1)-2\ri)}\le[\Re\le(\Psic^\flat\pr_t\Psic^\flat\ri)\ri]_{t^\ast_1}^{t^\ast_2}\\
    & \quad + \lim_{r\to+\infty}\frac{6M}{\ellmode(\ellmode+1)\le(\ellmode(\ellmode+1)-2\ri)}\int_{t^\ast_1}^{t^\ast_2}\Re\le(\Psi^{\flat,\ast}\mathfrak{r}\Psic^\flat\ri)\,\d t^\ast.
  \end{aligned}
\end{align}
Plugging~\eqref{eq:boundaryellipticFourier} in~\eqref{eq:ellipticFourier} and letting $t^\ast_2\to+\infty$ and $t^\ast_1\to-\infty$, using that $\Psic^\flat\to0$ when $t^\ast\to\pm\infty$, we get 
\begin{align}\label{eq:ellipticFourierrewritebis}
  \begin{aligned}
    0 & = \int_{-\infty}^{+\infty}\int_{-\infty}^{\frac{\pi}{2}} \bigg(|\pr_{r^\ast}\Psic^\flat|^2 + V|\Psic^\flat|^2- \frac{\De_-\De_+}{\De_0^2}|\pr_{t^\ast}\Psic^\flat|^2 - \frac{\De_--\De_+}{\De_0}\Re\le(\pr_{t^\ast}\Psi^{\flat,\ast}\pr_{r^\ast}\Psic^\flat\ri) \bigg) \,\d t^\ast\d r^\ast  \\
      & \quad +\lim_{r\to+\infty}\frac{6M}{\ellmode(\ellmode+1)\le(\ellmode(\ellmode+1)-2\ri)}\int_{-\infty}^{+\infty}\le(-2|\pr_t\Psic^\flat|^2+k^2\ellmode(\ellmode+1)|\Psic^\flat|^2\ri)\,\d t^\ast\\
      & \quad +\int_{-\infty}^{+\infty}\int_{-\infty}^{\frac{\pi}{2}}\Re\le(\Psi^{\flat,\ast}\RW\Psic^\flat\ri)\,\d t^\ast\d r^\ast + \lim_{r\to+\infty}\frac{6M}{\ellmode(\ellmode+1)\le(\ellmode(\ellmode+1)-2\ri)}\int_{-\infty}^{+\infty}\Re\le(\Psi^{\flat,\ast}\mathfrak{r}\Psic^\flat\ri)\,\d t^\ast.
  \end{aligned}
\end{align}

\begin{lemma}\label{lem:Plancherelelliptic}
  There exists $\de_0=\de_0(M,k)>0$ and $\ellmode_0=\ellmode_0(M,k)>0$ such that for all $\de\leq \de_0$ and $\ellmode\geq\ellmode_0$,
  \begin{align}\label{est:ellipticPlancherelspacetime}
    \begin{aligned}
    & \int_{-\infty}^{+\infty}\int_{-\infty}^{\frac{\pi}{2}}\le(w|\pr_{t^\ast}\Psic^\flat| + w \ellmode^2 |\Psic^\flat|^2\ri)\,\d t^\ast\d r^\ast \\
      \les_{M,k} & \; \int_{-\infty}^{+\infty}\int_{-\infty}^{\frac{\pi}{2}} \bigg(V|\Psic^\flat|^2- \frac{\De_-\De_+}{\De_0^2}|\pr_{t^\ast}\Psic^\flat|^2 - \frac{\De_--\De_+}{\De_0}\Re\le(\pr_{t^\ast}\Psi^{\flat,\ast}\pr_{r^\ast}\Psic^\flat\ri) \bigg) \,\d t^\ast\d r^\ast \\
      & \quad + \de \int_{-\infty}^{+\infty}\int_{-\infty}^{\frac{\pi}{2}}w^{-1}|\pr_{r^\ast}\Psic^\flat|^2\,\d t^\ast\d r^\ast.
    \end{aligned}
  \end{align}
\end{lemma}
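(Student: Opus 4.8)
The plan is to deduce \eqref{est:ellipticPlancherelspacetime} directly from the low time-frequency localisation of $\Psic^\flat$ recorded in \eqref{est:Plancherelflat} together with the coercivity of the Regge--Wheeler potential $V$ for large $\ellmode$ — the Regge--Wheeler equation itself is not used, and all the integrals below are finite for each angular mode by the $\mathfrak{p}=2$ case of Proposition~\ref{prop:integralestimatesv} and the definition of $\Psic^\flat$. First I would record the pointwise coercivity of $V$: since $0<\tfrac{6M}{r}\leq\tfrac{6M}{r_+}$ on $(r_+,+\infty)$, fixing $\ellmode_0=\ellmode_0(M,k)$ with $\ellmode_0(\ellmode_0+1)\geq\tfrac{12M}{r_+}$ gives, for all $\ellmode\geq\ellmode_0$, the bound $V=w\le(\ellmode(\ellmode+1)-\tfrac{6M}{r}\ri)\geq\half w\,\ellmode(\ellmode+1)\gtrsim_{M,k} w\,\ellmode^2$ on the whole interval. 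Hence $\int\int w\,\ellmode^2|\Psic^\flat|^2\,\d t^\ast\d r^\ast\lesssim_{M,k}\int\int V|\Psic^\flat|^2\,\d t^\ast\d r^\ast$, and it suffices to bound $\int\int w\,\ellmode^2|\Psic^\flat|^2$ by the right-hand side of \eqref{est:ellipticPlancherelspacetime}; the left-hand term $\int\int w|\pr_{t^\ast}\Psic^\flat|^2$ is then automatic, because applying \eqref{est:Plancherelflat} for each fixed $r^\ast$ and integrating against $w\geq 0$ yields $\int\int w|\pr_{t^\ast}\Psic^\flat|^2\lesssim(\de\ellmode)^2\int\int w|\Psic^\flat|^2=\de^2\int\int w\,\ellmode^2|\Psic^\flat|^2$.

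Next I would rewrite, trivially, $\int\int V|\Psic^\flat|^2$ as the bracket on the right-hand side of \eqref{est:ellipticPlancherelspacetime} plus the two corrections $\int\int\tfrac{\De_-\De_+}{\De_0^2}|\pr_{t^\ast}\Psic^\flat|^2$ and $\int\int\tfrac{\De_--\De_+}{\De_0}\Re\le(\pr_{t^\ast}\Psi^{\flat,\ast}\pr_{r^\ast}\Psic^\flat\ri)$. Since $\tfrac{\De_-\De_+}{\De_0^2}\sim_{M,k} w$ on $[r_+,+\infty]$, the first correction is controlled via \eqref{est:Plancherelflat} by $\lesssim_{M,k}\int\int w|\pr_{t^\ast}\Psic^\flat|^2\lesssim\de^2\int\int w\,\ellmode^2|\Psic^\flat|^2$. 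For the cross term I would use a weighted Young inequality that assigns the weight $w^{-1}$ to the $\pr_{r^\ast}$-factor: using $\big|\tfrac{\De_--\De_+}{\De_0}\big|=\tfrac{4M}{r(1+k^2r^2)}\leq C_{M,k}$,
\begin{align*}
  \le|\frac{\De_--\De_+}{\De_0}\ri|\,|\pr_{t^\ast}\Psic^\flat|\,|\pr_{r^\ast}\Psic^\flat| \;\leq\; \frac{\de}{2}\,w^{-1}|\pr_{r^\ast}\Psic^\flat|^2 \;+\; \frac{C_{M,k}}{2\de}\,w\,|\pr_{t^\ast}\Psic^\flat|^2,
\end{align*}
and \eqref{est:Plancherelflat} then turns $\tfrac1\de\int\int w|\pr_{t^\ast}\Psic^\flat|^2$ into $\lesssim_{M,k}\de\int\int w\,\ellmode^2|\Psic^\flat|^2$. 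Thus both corrections together are bounded by $C_{M,k}(\de+\de^2)\int\int w\,\ellmode^2|\Psic^\flat|^2+\tfrac{\de}{2}\int\int w^{-1}|\pr_{r^\ast}\Psic^\flat|^2$.

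Collecting the above gives $\int\int w\,\ellmode^2|\Psic^\flat|^2\lesssim_{M,k}\mathcal B+C_{M,k}(\de+\de^2)\int\int w\,\ellmode^2|\Psic^\flat|^2+\de\int\int w^{-1}|\pr_{r^\ast}\Psic^\flat|^2$, where $\mathcal B$ denotes the bracket in \eqref{est:ellipticPlancherelspacetime}; choosing $\de\leq\de_0(M,k)$ small enough that $C_{M,k}(\de+\de^2)\leq\tfrac12$, I absorb the middle term into the left-hand side, which together with the first paragraph yields \eqref{est:ellipticPlancherelspacetime}. I expect the only genuine — and rather mild — subtlety to be the choice of weights in the Young inequality for the cross term: the $\pr_{r^\ast}\Psic^\flat$-factor must be given the weight $w^{-1}$, so that it contributes exactly to the designated error $\de\int\int w^{-1}|\pr_{r^\ast}\Psic^\flat|^2$ (to be absorbed later in this section using the redshift estimate of Lemma~\ref{lem:redshift}), while the complementary $w$-weighted $\pr_{t^\ast}\Psic^\flat$-factor becomes negligible precisely because of the low time-frequency cut-off. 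Everything else is a routine combination of the pointwise potential bound, Plancherel, and an absorption argument.
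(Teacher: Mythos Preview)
Your proof is correct and follows essentially the same approach as the paper's: both use the pointwise bound $w\ellmode^2\lesssim V$ for large $\ellmode$, the comparison $\tfrac{\De_-\De_+}{\De_0^2}\sim_{M,k} w$, a weighted Young inequality on the cross term producing the $\de\,w^{-1}|\pr_{r^\ast}\Psic^\flat|^2$ error, and the Plancherel estimate~\eqref{est:Plancherelflat} to make the resulting $w|\pr_{t^\ast}\Psic^\flat|^2$ terms absorbable. The organisation differs only cosmetically (you split the two corrections before applying Young, whereas the paper combines them into a single $(1+\de^{-1})w|\pr_{t^\ast}\Psic^\flat|^2$ term), but the argument is the same.
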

\begin{proof}
Let us show the estimate in the form where we take the terms with minus signs on the right to the left hand side.  
For $\ellmode$ sufficiently large depending only on $M,k$, we have
  \begin{align}\label{est:ell2byV}
    w\ellmode^2 \leq V ,
  \end{align}
  which deals with the last term on the left hand side. 
For the other terms we notice  
  \begin{align}\label{est:timederivprePlancherel}
    \begin{aligned}
      & \bigg|\int_{-\infty}^{+\infty}\int_{-\infty}^{\frac{\pi}{2}}\bigg(\frac{\De_-\De_+}{\De_0^2}|\pr_{t^\ast}\Psic^\flat|^2 - \frac{\De_--\De_+}{\De_0}\Re\le(\pr_{t^\ast}\Psi^{\flat,\ast}\pr_{r^\ast}\Psic^\flat\ri) \bigg) \,\d t^\ast\d r^\ast\bigg| \\
      \les_{M,k} & \; \int_{-\infty}^{+\infty}\int_{-\infty}^{\frac{\pi}{2}} \le( w\le(1 + \de^{-1}\ri) |\pr_{t^\ast}\Psic^\flat|^2 + \de w^{-1}|\pr_{r^\ast}\Psic^\flat|^2 \ri)\,\d t^\ast\d r^\ast.
    \end{aligned}
  \end{align}
  Using the Plancherel estimate~\eqref{est:Plancherelflat}, we have
  \begin{align}\label{est:Plancherelinellipticproof}
    \begin{aligned}
      \int_{-\infty}^{+\infty}\int_{-\infty}^{\frac{\pi}{2}} w\le(1 + \de^{-1}\ri) |\pr_{t^\ast}\Psic^\flat|^2 \,\d t^\ast\d r^\ast \les_{M,k} \de^2\le(1 + \de^{-1}\ri)\int_{-\infty}^{+\infty}\int_{-\infty}^{\frac{\pi}{2}} w\ellmode^2 |\Psic^\flat|^2\,\d t^\ast\d r^\ast,
    \end{aligned}
  \end{align}
  Combining~\eqref{est:ell2byV},~\eqref{est:timederivprePlancherel} and~\eqref{est:Plancherelinellipticproof}, we obtain~\eqref{est:ellipticPlancherelspacetime} and this finishes the proof of the lemma.
\end{proof}
Along the same lines as in Lemma~\ref{lem:Plancherelelliptic}, we have the following consequence of Plancherel estimate at the boundary. The proof is left to the reader.
\begin{lemma}\label{lem:Plancherelellipticboundary}
  There exists $\de_1=\de_1(M,k)>0$ such that for all $\de\leq \de_1$,
  \begin{align}\label{est:Plancherelellipticboundary}
    \begin{aligned}
    & \lim_{r\to+\infty}\frac{6M}{\ellmode(\ellmode+1)\le(\ellmode(\ellmode+1)-2\ri)}\int_{-\infty}^{+\infty}\le(|\pr_t\Psic^\flat|^2+k^2\ellmode(\ellmode+1)|\Psic^\flat|^2\ri)\,\d t^\ast\\
      \les_{M,k} & \; \lim_{r\to+\infty}\frac{6M}{\ellmode(\ellmode+1)\le(\ellmode(\ellmode+1)-2\ri)}\int_{-\infty}^{+\infty}\le(-2|\pr_t\Psic^\flat|^2+k^2\ellmode(\ellmode+1)|\Psic^\flat|^2\ri)\,\d t^\ast.
    \end{aligned}
  \end{align}
\end{lemma}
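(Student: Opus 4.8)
The plan is to reduce~\eqref{est:Plancherelellipticboundary} to the boundary analogue of the low time-frequency Plancherel bound~\eqref{est:Plancherelflat}, in the same spirit as the proof of Lemma~\ref{lem:Plancherelelliptic}; the only difference is that here both the coercive term $k^2\ellmode(\ellmode+1)|\Psic^\flat|^2$ and the ``bad sign'' term $-2|\pr_t\Psic^\flat|^2$ live on the sphere at infinity, so once the Plancherel bound is available the absorption is purely algebraic. Abbreviate
\begin{align*}
  A & := \lim_{r\to+\infty}\int_{-\infty}^{+\infty}|\pr_t\Psic^\flat|^2\,\d t^\ast, & B & := \lim_{r\to+\infty}\int_{-\infty}^{+\infty}|\Psic^\flat|^2\,\d t^\ast ,
\end{align*}
both of which are finite: the trace of $\widetilde\Psi$ at infinity is controlled by $\Einfty[\Psi](t^\ast_1)\les_{M,k}\mathrm{E}^{\mathfrak{R}}[\Psi](t^\ast_1)$ as recorded in~\eqref{eq:RWdecoFourier}, and after the time cut-off $\chi$ and the low-pass filter $\eta(\xi/(\de\ellmode))$ the trace of $\Psic^\flat$ is a smooth, $L^2$-in-$t^\ast$ function whose time-Fourier transform is supported in $\{|\xi|\leq 2\de\ellmode\}$. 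After dividing out the common positive prefactor $\frac{6M}{\ellmode(\ellmode+1)(\ellmode(\ellmode+1)-2)}$, the asserted inequality reads $A + k^2\ellmode(\ellmode+1)B \les_{M,k} -2A + k^2\ellmode(\ellmode+1)B$.

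First I would establish the frequency-localisation step: since the time-Fourier transform of the boundary trace of $\Psic^\flat$ is supported in $\{|\xi|\leq 2\de\ellmode\}$, Plancherel in $t^\ast$ (the boundary version of~\eqref{est:Plancherelflat}; note that $\pr_t$ and $\pr_{t^\ast}$ coincide at $r=+\infty$ up to an irrelevant constant shift in $t$) gives
\begin{align*}
  A = \lim_{r\to+\infty}\int_{-\infty}^{+\infty}|\xi|^2\big|\mathcal{F}(\Psic^\flat)\big|^2\,\d\xi \leq (2\de\ellmode)^2\,B \leq 4\de^2\,\ellmode(\ellmode+1)\,B ,
\end{align*}
using $\ellmode^2\leq\ellmode(\ellmode+1)$. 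The second step is algebraic: choosing $\de_1=\de_1(M,k)>0$ so that $16\de_1^2\leq k^2$, then for all $\de\leq\de_1$ one has $A\leq\quar k^2\ellmode(\ellmode+1)B$, hence $-2A + k^2\ellmode(\ellmode+1)B \geq \half k^2\ellmode(\ellmode+1)B\geq 0$ and $A + k^2\ellmode(\ellmode+1)B \leq 2k^2\ellmode(\ellmode+1)B \leq 4\big(-2A + k^2\ellmode(\ellmode+1)B\big)$. Re-inserting the prefactor $\frac{6M}{\ellmode(\ellmode+1)(\ellmode(\ellmode+1)-2)}$ yields~\eqref{est:Plancherelellipticboundary} with an absolute implied constant, in particular independent of $\ellmode$.

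There is essentially no hard step here, which is presumably why the authors leave it to the reader; the one point demanding a word of care is the justification that $A$ and $B$ exist, are finite, and that Plancherel may be applied to the trace at infinity. This is handled exactly as in the derivation of~\eqref{eq:RWdecoFourier}, using the energy and redshift bounds of Propositions~\ref{prop:energybounded} and~\ref{prop:redshift} together with the fact that the frequency cut-off $\eta(\xi/(\de\ellmode))$ commutes with restriction to $r=+\infty$. This realises the mechanism advertised before Section~\ref{sec:threebasicestimates}: through the identity~\eqref{eq:boundaryellipticFourier} the Robin condition feeds in the combination $-2|\pr_t\Psic^\flat|^2 + k^2\ellmode(\ellmode+1)|\Psic^\flat|^2$, which for low time-frequencies is coercive and therefore still dominates $|\pr_t\Psic^\flat|^2 + k^2\ellmode(\ellmode+1)|\Psic^\flat|^2$.
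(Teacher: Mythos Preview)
Your proof is correct and follows precisely the approach the paper indicates (``along the same lines as in Lemma~\ref{lem:Plancherelelliptic}''): apply the low-frequency Plancherel bound~\eqref{est:Plancherelflat} to the boundary trace so that $A\leq 4\de^2\ellmode(\ellmode+1)B$, then absorb algebraically for $\de$ small depending only on $k$. The paper leaves this to the reader, and your write-up fills in exactly what is expected.
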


Now, let $\de\leq \max(\de_0,\de_1)$ and $\ellmode\geq\ellmode_0$. Plugging~\eqref{est:ellipticPlancherelspacetime} and~\eqref{est:Plancherelellipticboundary} in \eqref{eq:ellipticFourierrewritebis}, we get
\begin{align}\label{est:ellipticFourierrewritebisbis}
  \begin{aligned}
    & \int_{-\infty}^{+\infty}\int_{-\infty}^{\frac{\pi}{2}} \bigg(w|\pr_{t^\ast}\Psic^\flat|^2 + |\pr_{r^\ast}\Psic^\flat|^2 + w\ellmode^2|\Psic^\flat|^2\bigg) \,\d t^\ast\d r^\ast  \\
    & +\lim_{r\to+\infty}\frac{6M}{\ellmode(\ellmode+1)\le(\ellmode(\ellmode+1)-2\ri)}\int_{-\infty}^{+\infty}\le(|\pr_t\Psic^\flat|^2+k^2\ellmode(\ellmode+1)|\Psic^\flat|^2\ri)\,\d t^\ast\\
    \les_{M,k} & \; \int_{-\infty}^{+\infty}\int_{-\infty}^{\frac{\pi}{2}}|\Psi^{\flat,\ast}||\RW\Psic^\flat|\,\d t^\ast\d r^\ast + \lim_{r\to+\infty}\frac{6M}{\ellmode(\ellmode+1)\le(\ellmode(\ellmode+1)-2\ri)}\int_{-\infty}^{+\infty}|\Psi^{\flat,\ast}||\mathfrak{r}\Psic^\flat|\,\d t^\ast \\
    & + \de \int_{-\infty}^{+\infty}\int_{-\infty}^{\frac{\pi}{2}}w^{-1}|\pr_{r^\ast}\Psic^\flat|^2 \,\d t^\ast\d r^\ast.
  \end{aligned}
\end{align}
Estimate~\eqref{est:ellipticFourierrewritebisbis} (where we use an absorption argument for the inhomogeneous terms) together with the redshift estimate from Lemma~\ref{lem:redshift} with $t^\ast_1\to-\infty, t^\ast_2\to+\infty$, yields
\begin{align}\label{est:ellipticFourierrewritebisbisbis}
  \begin{aligned}
    & \int_{-\infty}^{+\infty}\overline{\mathrm{E}}[\Psic^\flat](t^\ast)\,\d t^\ast +\lim_{r\to+\infty}\frac{6M}{\ellmode(\ellmode+1)\le(\ellmode(\ellmode+1)-2\ri)}\int_{-\infty}^{+\infty}\le(|\pr_t\Psic^\flat|^2+k^2\ellmode(\ellmode+1)|\Psic^\flat|^2\ri)\,\d t^\ast\\
    \les_{M,k} & \; \int_{-\infty}^{+\infty}\int_{-\infty}^{\frac{\pi}{2}}w^{-1}|\RW\Psic^\flat|^2\,\d t^\ast\d r^\ast  + \lim_{r\to+\infty}\frac{6M}{\ellmode(\ellmode+1)\le(\ellmode(\ellmode+1)-2\ri)}\int_{-\infty}^{+\infty}|\mathfrak{r}\Psic^\flat|^2\,\d t^\ast \\
    & +\widetilde{\mathrm{E}}^\HH[\Psic^\flat](+\infty,-\infty) + \de \int_{-\infty}^{+\infty} \overline{\mathrm{E}}[\Psic^\flat](t^\ast)\,\d t^\ast, 
  \end{aligned}
\end{align}
where we recall the norm definitions~\eqref{eq:defenergynormsRWhigh}. In the Robin case, using Plancherel and the definition of $\Psic^\flat$, we have
\begin{align}\label{est:DirichletorRobintoNeumannellipticFourierRobin}
  \begin{aligned}
    \lim_{r\to+\infty} \int_{-\infty}^{+\infty}|\pr_{r^\ast}\Psic^\flat|^2\,\d t^\ast & \les_{M,k} \lim_{r\to+\infty} \ellmode^{-8} \int_{-\infty}^{+\infty}\le(|\pr_{t}^2\Psic^\flat|^2 + \ellmode^4|\Psic^\flat|^2\ri)\,\d t^\ast  + \lim_{r\to+\infty} \ellmode^{-8} \int_{-\infty}^{+\infty} |\mathfrak{r}\Psic^\flat|^2\,\d t^\ast\\
                                                                                     & \les_{M,k}\lim_{r\to+\infty} \ellmode^{-4} \int_{-\infty}^{+\infty}|\Psic^\flat|^2\,\d t^\ast + \mathrm{E}^{\mathfrak{R}}[\Psi](1).
  \end{aligned}
\end{align}
In the Dirichlet case, using the fundamental theorem of calculus and Plancherel, we have
\begin{align}\label{est:DirichletorRobintoNeumannellipticFourierDirichlet}
  \begin{aligned}
    \lim_{r\to+\infty} \int_{-\infty}^{+\infty}|\pr_{r^\ast}\Psic^\flat|^2\,\d t^\ast & \les_{M,k} \int_{-\infty}^{+\infty}\int_{r^\ast_{3M}}^{\frac{\pi}{2}}|\pr_{r^\ast}\Psic^\flat||\pr_{r^\ast}^2\Psic^\flat|\,\d t^\ast\d r^\ast\\
                                                                                     & \les_{M,k} \int_{-\infty}^{+\infty}\int_{r^\ast_{3M}}^{\frac{\pi}{2}}|\pr_{r^\ast}\Psic^\flat|\le(|\pr_{t^\ast}^2\Psic^\flat| + |\pr_{t^\ast}\pr_{r^\ast}\Psic^\flat| + \ellmode^2|\Psic^\flat| + |\RW\Psic^\flat|\ri)\,\d t^\ast\d r^\ast\\
                                                                                     & \les_{M,k} \ellmode\int_{-\infty}^{+\infty}\overline{\mathrm{E}}[\Psic^\flat](t^\ast)\,\d t^\ast.
  \end{aligned}
\end{align}
Using~\eqref{est:PlancherelErrorsourceterm},~\eqref{est:PlancherelErrorsourcetermbis},~\eqref{est:Plancherelhorizon},~\eqref{eq:RWdecoFourier},~\eqref{est:DirichletorRobintoNeumannellipticFourierRobin} or~\eqref{est:DirichletorRobintoNeumannellipticFourierDirichlet} and an absorption argument with $\de$ sufficiently small depending only on $M,k$ in~\eqref{est:ellipticFourierrewritebisbisbis}, yields\footnote{The $\ellmode$ dependency on the RHS comes from the Dirichlet to Neumann estimate~\eqref{est:DirichletorRobintoNeumannellipticFourierDirichlet}. It will be absorbed by the $e^{C\ellmode}$ dependency of the high-frequency estimates of the next section.}
\begin{align}\label{est:ellipticFourierfinal}
  \begin{aligned}
    \int_{-\infty}^{+\infty}\overline{\mathrm{E}}[\Psic^\flat](t^\ast)\,\d t^\ast + \overline{\mathrm{E}}^\II[\Psic^\flat](+\infty,-\infty) & \les_{M,k} \ellmode \cdot \mathrm{E}^{\mathfrak{R}}[\Psi](1).
  \end{aligned}
\end{align}

\subsubsection{Estimates for $\Psic^\sharp$}\label{sec:Psisharp}
The proof of the Carleman estimates follow the same scheme as in Section~\ref{sec:Carlemanestimatesfirst}. Take $p=e^f$ with $f=\frac{\Ka}{r}$, where $\Ka>0$ is a constant which will be determined. Setting $p=e^f$ with $f=\frac{\Ka}{r}$ in the integral estimate~\eqref{est:easytouseCarleman} with $t^\ast_1\to-\infty, t^\ast_2\to+\infty$, using that $p'=-\Ka w e^f$, that $\le(\sup |p|\ri)+\le(\sup w^{-1}|p'|\ri) \leq (1+\Ka)e^{\frac{\Ka}{r_+}}=e^{C\Ka}$ with $C=C(M,k)>0$, and the estimates~\eqref{est:PlancherelErrorsourcetermbis},~\eqref{est:Plancherelhorizon}, we get
\begin{align}\label{est:CarlemanPlancherel}
  \begin{aligned}
    &\int_{-\infty}^{+\infty}\int_{-\infty}^{\frac{\pi}{2}} \le(\Ka w e^f \le(2\frac{\De_-\De_+}{\De_0^2}+\quar\le(\frac{\De_--\De_+}{\De_0}\ri)^2\ri)|\pr_{t^\ast}\Psic^\sharp|^2 + \Ka w e^f|\pr_{r^\ast}\Psic^\sharp|^2\ri)\,\d t^\ast\d r^\ast \\
    & +\int_{-\infty}^{+\infty}\int_{-\infty}^{\frac{\pi}{2}}\le(2 (pV)' + \half\le(-pV' - \half p'''\ri)\ri)|\Psic^\sharp|^2\,\d t^\ast\d r^\ast\\
    & + \half \lim_{r\to+\infty}\int_{-\infty}^{+\infty}\le(|\pr_{r^\ast}\Psic^\sharp|^2 + |\pr_{t^\ast}\Psic^\sharp|^2 + k^2\ellmode(\ellmode+1)|\Psic^\sharp|^2\ri)\,\d t^\ast \\
    & \les_{M,k} e^{C\Ka}\bigg(\mathrm{E}^{\mathfrak{R}}[\Psi](1) + \le(\mathrm{E}^{\mathfrak{R}}[\Psi](1)\ri)^{1/2}\le(\int_{-\infty}^{+\infty}\widetilde{\mathrm{E}}[\Psic^\sharp](t^\ast)\,\d t^\ast\ri)^{1/2}\bigg),
  \end{aligned}
\end{align}
where the boundary terms were already controlled by~\eqref{est:bdyineasyCarleman}, provided that $\Ka\geq 8k^{-1}\sqrt{\ellmode(\ellmode+1)}$. The difference is that one can rely on Plancherel estimates to bound the potential terms by the time-derivative terms. This gives the following lemma.
\begin{lemma}\label{lem:PlancherelsharpCarlemanised}
  For all $\de>0$, there exists $r_{\mathrm{Pla}}=r_{\mathrm{Pla}}(M,k,\de)>r_+$ and $\ellmode_{\mathrm{Pla}} = \ellmode_{\mathrm{Pla}}(M,k,\de) \geq 2$, such that, for all $\ellmode\geq \ellmode_{\mathrm{Pla}}$ and for all $\Ka \geq \ellmode$, setting $p=e^f=e^{\frac{\Ka}{r}}$, we have
  \begin{align}\label{est:PlancherelsharpCarlemanised}
    \begin{aligned}
      \int_{-\infty}^{+\infty} \ellmode^2 w e^f |\Psic^{\sharp}|^2 \,\d t^\ast & \les_{M,k,\de} \half \int_{-\infty}^{+\infty}\Ka w e^{\frac{\Ka}{r}}\le(\frac{\De_-\De_+}{\De_0^2} + \quar\le(\frac{\De_--\De_+}{\De_0}\ri)^2\ri) |\pr_{t^\ast}\Psic^{\sharp}|^2 \,\d t^\ast \\
      & \quad\quad\quad\quad+ \int_{-\infty}^{+\infty} \le(2(pV)' -\half pV'-\quar p'''\ri)|\Psic^{\sharp}|^2 \,\d t^\ast,
    \end{aligned}
  \end{align}
  for all $r_+<r<r_{\mathrm{Pla}}$.
\end{lemma}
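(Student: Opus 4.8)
The plan is to reduce~\eqref{est:PlancherelsharpCarlemanised} to a pointwise-in-$r$ scalar inequality and then verify it from the near-horizon behaviour of the Carleman multiplier $p=e^{\Ka/r}$. First I would record the polynomial identity
\begin{align*}
  \frac{\De_-\De_+}{\De_0^2}+\quar\le(\frac{\De_--\De_+}{\De_0}\ri)^2=\frac{\De_0^2-4M^2r^2}{\De_0^2}+\frac{4M^2r^2}{\De_0^2}=1,
\end{align*}
immediate from $\De_\pm=\De_0\pm 2Mr$, so that the first term on the right of~\eqref{est:PlancherelsharpCarlemanised} is simply $\half\Ka we^f\int|\pr_{t^\ast}\Psic^\sharp|^2\,\d t^\ast$. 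Write $\mathfrak{V}:=2(pV)'-\half pV'-\quar p'''$ for the potential combination and $\mathfrak{V}_-:=\max(-\mathfrak{V},0)$ for its negative part. At a fixed $r$: if $\ellmode^2we^f\leq\mathfrak{V}$ the claimed inequality is immediate (its left side is then $\leq\int\mathfrak{V}|\Psic^\sharp|^2\,\d t^\ast$); otherwise $\ellmode^2we^f-\mathfrak{V}\leq\ellmode^2we^f+\mathfrak{V}_-$, and applying the Plancherel bound~\eqref{est:Plancherelsharp} in the form $\int|\Psic^\sharp|^2\,\d t^\ast\leq(\de\ellmode)^{-2}\int|\pr_{t^\ast}\Psic^\sharp|^2\,\d t^\ast$ reduces the whole statement to the scalar inequality
\begin{align*}
  \de^{-2}+\de^{-2}\ellmode^{-2}\,\frac{\mathfrak{V}_-}{we^f}\;\leq\;\half\Ka\qquad\text{for all }r_+<r<r_{\mathrm{Pla}},\ \Ka\geq\ellmode.
\end{align*}

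The crux is a bound on $\mathfrak{V}_-/(we^f)$ near the horizon that is uniform in $\Ka\geq\ellmode$. Using $p'=-\Ka we^f$ and the fact that $w,w',w''$ all vanish to first order at $r_+$ and are mutually comparable there, one computes that, to leading order in $r-r_+$,
\begin{align*}
  -\quar p''' = \quar\,\Ka e^f w\,(\Ka w-a)(\Ka w-b)\,\big(1+O(r-r_+)\big),
\end{align*}
with $0<a<b$ depending only on $M,k$ (the roots of the relevant quadratic in the variable $\Ka w$). The key observation is that this quantity is \emph{positive} unless $\Ka w\in(a,b)$, and on that bounded window $|{-\quar p'''}|\les_{M,k}\Ka we^f$; thus the large, $\Ka^3 w^2 e^f$-sized part of $p'''$ always enters with the favourable sign and never needs to be absorbed. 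Combined with the elementary near-horizon bounds $|2(pV)'|\les_{M,k}\ellmode^2 we^f(1+\Ka w)$ and $|pV'|\les_{M,k}\ellmode^2 we^f$ (which use $V>0$, hence $|V|\les_{M,k} \ellmode^2 w$ and $|V'|\les_{M,k}\ellmode^2 w$, near $r_+$ for $\ellmode$ large), this yields $\mathfrak{V}_-/(we^f)\les_{M,k}\ellmode^2(1+\Ka w)+\Ka$.

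Substituting, the scalar inequality follows once $\de^{-2}\big(1+\Ka w+\Ka\ellmode^{-2}\big)\les_{M,k}\Ka$, which I would arrange by choosing, in order: $\ellmode_{\mathrm{Pla}}$ so large that $\ellmode^2\geq c(M,k)\de^{-2}$ (this also forces $\Ka\geq\ellmode\geq\ellmode_{\mathrm{Pla}}\geq c(M,k)\de^{-2}$, handling the pure constant and the $\Ka\ellmode^{-2}$ term), and then $r_{\mathrm{Pla}}$ so close to $r_+$ that $w(r_{\mathrm{Pla}})\leq c(M,k)\de^2$ (possible since $w(r_+)=0$, handling the $\Ka w$ term). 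The main obstacle is the computation in the previous paragraph: pinning down the sign and magnitude of $-\quar p'''$, and of the lower-order pieces $2(pV)'-\half pV'$, \emph{uniformly for all $\Ka\geq\ellmode$}, rather than only in the regime $\Ka\gtrsim\ellmode^2$ used in Lemmas~\ref{lem:easypotentialsigns} and~\ref{lem:easyboundspotential}. This requires the refined near-horizon expansions of the Carleman multipliers of exactly the type carried out in Appendix~\ref{sec:Carlemanmultipliers}, which I would invoke or adapt; the remaining steps are routine bookkeeping with $\de$, $\ellmode_{\mathrm{Pla}}$ and $r_{\mathrm{Pla}}$.
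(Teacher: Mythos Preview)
Your proposal is correct and follows the same overall strategy as the paper: use Plancherel~\eqref{est:Plancherelsharp} to convert $|\Psic^\sharp|^2$ into $|\pr_{t^\ast}\Psic^\sharp|^2$, and bound the negative part of the potential $\mathfrak{V}=2(pV)'-\half pV'-\quar p'''$ by something that can be absorbed into $\half\Ka we^f$ for $\Ka\geq\ellmode$ large and $w$ small. Your identity $\frac{\De_-\De_+}{\De_0^2}+\quar\big(\frac{\De_--\De_+}{\De_0}\big)^2=1$ is a tidy observation the paper does not make explicit (it only uses $\gtrsim_{M,k}1$).

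The one point where your route differs is the potential bound. The paper invokes the already--derived expression~\eqref{eq:easypotential}, in which the large-$\Ka$ part of $\mathfrak{V}/(we^f)$ appears as the manifest non-negative square $\quar\Ka\big(\Ka w+(\tfrac{2}{r}-\tfrac{6M}{r^2})\big)^2$; dropping it immediately yields the clean bound $-\mathfrak{V}/(we^f)\lesssim_{M,k}\ellmode^2(1+\Ka w)$ valid for all $r>r_+$ (this is~\eqref{est:easyboundsontheerrorpotentialsforFourier}). You instead decompose $\mathfrak{V}$ into its three constituents and perform a separate near-horizon sign analysis of $-\quar p'''$ via the quadratic $(\Ka w)^2-3g(\Ka w)+g^2$ with $g=w'/w$. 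That factorisation is correct (the roots $\frac{3\pm\sqrt 5}{2}\,g(r_+)$ are indeed positive constants of $M,k$ to leading order), and your resulting bound $\mathfrak{V}_-/(we^f)\lesssim\ellmode^2(1+\Ka w)+\Ka$ carries an extra $\Ka$ term which you then correctly handle via $\Ka\ellmode^{-2}\leq\Ka^{-1}\cdot(\Ka/\ellmode)^2\lesssim\ellmode^{-1}$ being small. So your approach works, but it re-derives in a more hands-on fashion what the paper gets for free from the completed-square structure in~\eqref{eq:easypotential}; quoting that formula would shorten your argument and remove the need for the separate quadratic analysis and the extra $\Ka$ term.
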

\begin{proof}
  From the signs in the expression~\eqref{eq:easypotential}, we have the following general estimate
  \begin{align}\label{est:easyboundsontheerrorpotentialsforFourier}
    \begin{aligned}
      -2(pV)'+\half pV' +\quar p''' & \les_{M,k} \ellmode^2\le(1+\Ka w\ri)we^f, 
    \end{aligned}
  \end{align}
  for all $r>r_+$. Using Plancherel~\eqref{est:Plancherelsharp}, we have
  \begin{align*}
    \begin{aligned}
      & \int_{-\infty}^{+\infty}\Ka w e^{f}\le(\frac{\De_-\De_+}{\De_0^2} + \quar\le(\frac{\De_--\De_+}{\De_0}\ri)^2\ri) |\pr_{t^\ast}\Psic^{\sharp}|^2 \,\d t^\ast \\
      & \gtrsim_{M,k} \int_{-\infty}^{+\infty} \Ka w e^{f} |\pr_{t^\ast}\Psic^{\sharp}|^2 \,\d t^\ast \\
      & \geq C\de^2\int_{-\infty}^{+\infty}\Ka \ellmode^2 w e^{f} |\Psic^{\sharp}|^2 \,\d t^\ast.
    \end{aligned}
  \end{align*}
  where $C=C(M,k)>0$. Hence, using~\eqref{est:easyboundsontheerrorpotentialsforFourier}, we get
  \begin{align*}
    \begin{aligned}
      & \half \int_{-\infty}^{+\infty}\Ka w e^{\frac{\Ka}{r}}\le(\frac{\De_-\De_+}{\De_0^2} + \quar\le(\frac{\De_--\De_+}{\De_0}\ri)^2\ri) |\pr_{t^\ast}\Psic^{\sharp}|^2 \,\d t^\ast \\
      & + \int_{-\infty}^{+\infty} \le(2(pV)' -\half pV'-\quar p'''\ri)|\Psic^{\sharp}|^2 \,\d t^\ast \\
      \geq & \; \de^2\int_{-\infty}^{+\infty} \le(\half C\Ka - \de^{-2}(1+\Ka w)\ri) \ellmode^2we^f |\Psic^{\sharp}|^2 \,\d t^\ast.      
    \end{aligned}
  \end{align*}
  Taking $r_{\mathrm{Pla}}>r_+$ and $\ellmode_{\mathrm{Pla}}\geq 2$ such that
  \begin{align*}
    \de^{-2}w(r_{\mathrm{Pla}}) & \leq \frac{C}{8}, & \de^{-2} \leq \frac{1}{8} C\ellmode_{\mathrm{Pla}},
  \end{align*}
  using that, by assumption, $\Ka\geq\ellmode$, and that $w$ is increasing, this concludes the proof of the lemma.  
\end{proof}

\begin{lemma}\label{lem:goodpotFourier}
  Let $r_{\mathrm{Pla}}>r_+$. There exists $\varep_{\mathrm{Pla}}=\varep_{\mathrm{Pla}}(M,k,r_{\mathrm{Pla}})>0$ such that, for all $\Ka \geq \varep^{-1}_{\mathrm{Pla}} \ellmode$, we have
  \begin{align}\label{est:goodpotFourier}
    2 (pV)' - \half pV'-\quar p''' & \gtrsim_{M,k,r_{\mathrm{Pla}}} \ellmode^2 w e^f,
  \end{align}
  for all $r>r_{\mathrm{Pla}}$.
\end{lemma}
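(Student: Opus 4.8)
The plan is to expand the combination $2(pV)'-\half pV'-\quar p'''$ explicitly as a polynomial in $\Ka$ and to observe that, on the region $\{r>r_{\mathrm{Pla}}\}$, which is bounded away from the horizon, the leading term $\quar\Ka^3 w^2\cdot we^f$ produced by $-\quar p'''$ dominates everything else, and in particular beats $\ellmode^2 we^f$, provided $\Ka$ is taken large compared to $\ellmode$. This is precisely the mechanism that fails near the horizon (where $w\to0$) and that is handled there separately via the radii $r_{\Ka,\pm i}$; here it is completely elementary.

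First I would compute the derivatives of $p=e^f$ with $f=\frac{\Ka}{r}$, using $f'=-\Ka w$ (here and below $'=\pr_{r^\ast}$, so $f''=-\Ka w'$, $f'''=-\Ka w''$). One finds $p'=-\Ka we^f$, $p''=(\Ka^2 w^2-\Ka w')e^f$, and
\begin{align*}
  p''' = \big(f'''+3f'f''+(f')^3\big)p = \big(-\Ka^3 w^3 + 3\Ka^2 w w' - \Ka w''\big)e^f .
\end{align*}
Together with $V'=w'\big(\ellmode(\ellmode+1)-\tfrac{6M}{r}\big)+6Mw^2$ and $2(pV)'-\half pV'-\quar p'''=2p'V+\tfrac32 pV'-\quar p'''$, collecting powers of $\Ka$ yields
\begin{align*}
  \frac{2(pV)'-\half pV'-\quar p'''}{we^f} = \quar w^2\,\Ka^3 - \tfrac34 w'\,\Ka^2 + a_1\,\Ka + a_0 ,
\end{align*}
with $a_1 = -2w\ellmode(\ellmode+1)+\tfrac{12Mw}{r}+\tfrac{w''}{4w}$ and $a_0 = \tfrac{3w'}{2w}\big(\ellmode(\ellmode+1)-\tfrac{6M}{r}\big)+9Mw$.

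Next I would use that on $\{r>r_{\mathrm{Pla}}\}$ all of these coefficients are controlled by $M,k,r_{\mathrm{Pla}}$. Since $w=\tfrac1{r^2}+k^2-\tfrac{2M}{r^3}$ is positive and bounded on $(r_{\mathrm{Pla}},+\infty)$, and bounded below there by a positive constant (recall $w\to k^2>0$ as $r\to+\infty$), there exist $0<w_0\leq w\leq W$ and $C\geq1$ (all depending only on $M,k,r_{\mathrm{Pla}}$) with $|w'|+|w''|+\tfrac1w+\tfrac1r\leq C$ on that region. Hence $\quar w^2\geq\quar w_0^2>0$, $\tfrac34|w'|\leq C$, and $|a_1|\les_{M,k,r_{\mathrm{Pla}}}\ellmode^2$, $|a_0|\les_{M,k,r_{\mathrm{Pla}}}\ellmode^2$, so that
\begin{align*}
  \frac{2(pV)'-\half pV'-\quar p'''}{we^f} \geq \quar w_0^2\,\Ka^3 - C\Ka^2 - C\ellmode^2\Ka - C\ellmode^2 .
\end{align*}
Splitting $\quar w_0^2\Ka^3$ into four equal pieces and using $\Ka\geq\varep_{\mathrm{Pla}}^{-1}\ellmode\geq2\varep_{\mathrm{Pla}}^{-1}$ (hence $\Ka^2\geq\varep_{\mathrm{Pla}}^{-2}\ellmode^2$ and $\Ka^3\geq\varep_{\mathrm{Pla}}^{-3}\ellmode^3\geq2\varep_{\mathrm{Pla}}^{-3}\ellmode^2$), one checks that for $\varep_{\mathrm{Pla}}=\varep_{\mathrm{Pla}}(M,k,r_{\mathrm{Pla}})>0$ chosen small enough the first three pieces absorb $C\Ka^2$, $C\ellmode^2\Ka$ and $C\ellmode^2$ respectively, while the fourth piece is $\geq c\,\ellmode^2$ with $c=c(M,k,r_{\mathrm{Pla}})>0$. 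This gives $2(pV)'-\half pV'-\quar p'''\geq c\,\ellmode^2 we^f$ on $\{r>r_{\mathrm{Pla}}\}$, which is exactly~\eqref{est:goodpotFourier}.

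The only genuine work is the bookkeeping in the $\Ka$-expansion. The single point worth highlighting is that the potentially bad $O(\Ka^2)$ term $-\tfrac34 w'\Ka^2$, whose sign changes at $r=3M$, is harmless because it is dominated by the genuinely positive cubic term $\quar w^2\Ka^3$ once $\Ka$ is large; and that the $O(\Ka)$ contribution $-2\Ka w\ellmode(\ellmode+1)$ of the angular part of the potential is exactly what forces the scaling $\Ka\gtrsim\ellmode$ (not merely $\Ka$ bounded below), which is the hypothesis of the lemma and, downstream, the source of the exponential-in-$\ellmode$ loss $e^{C\ellmode}$.
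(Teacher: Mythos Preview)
Your proof is correct and follows essentially the same approach as the paper: both expand $2(pV)'-\half pV'-\quar p'''$ as a polynomial in $\Ka$ (the paper via the precomputed formula~\eqref{eq:easypotential}, you by direct computation), isolate the leading term $\quar\Ka^3 w^2\cdot we^f$, and use that on $\{r>r_{\mathrm{Pla}}\}$ the weight $w$ is bounded below so that this cubic term dominates the $O(\Ka^2)+O(\Ka\ellmode^2)+O(\ellmode^2)$ remainder once $\Ka\gtrsim_{M,k,r_{\mathrm{Pla}}}\ellmode$. Your version is simply more self-contained, carrying out the expansion explicitly rather than quoting the appendix.
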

\begin{proof}
  From~\eqref{eq:easypotential}, we have
  \begin{align*}
    \le|2 (pV)' - \half pV'-\quar p''' -\frac{1}{8}\Ka^3we^f\ri| & \les_{M,k,r_{\mathrm{Pla}}} \le(\Ka^2 + \Ka \ellmode^2\ri) we^f.
  \end{align*}
  Hence, provided that $\Ka^{-1}\ellmode$ is sufficiently small depending only on $M,k,r_{\mathrm{Pla}}$, we have
  \begin{align*}
    2 (pV)' - \half pV'-\quar p''' > \frac{1}{16}\Ka^3we^f \gtrsim_{M,k} \ellmode^2 w e^f,
  \end{align*}
  which is the desired~\eqref{est:goodpotFourier}.
\end{proof}

Now, taking $\Ka$ such that
\begin{align*}
  \Ka \geq \max\le(8k^{-1}\sqrt{\ellmode(\ellmode+1)},\ellmode,\varep_{\mathrm{Pla}}^{-1}\ellmode\ri),
\end{align*}
and plugging~\eqref{est:PlancherelsharpCarlemanised} and~\eqref{est:goodpotFourier} in~\eqref{est:CarlemanPlancherel}, we get
\begin{align}\label{est:CarlemanPlancherelfinal}
  \begin{aligned}
    & \int_{-\infty}^{+\infty}\int_{-\infty}^{\frac{\pi}{2}} w e^f \le(|\pr_{t^\ast}\Psic^\sharp|^2 + |\pr_{r^\ast}\Psic^\sharp|^2 +\ellmode^2|\Psic^\sharp|^2\ri)\,\d t^\ast\d r^\ast \\
    & + \lim_{r\to+\infty}\int_{-\infty}^{+\infty}\le(|\pr_{r^\ast}\Psic^\sharp|^2 + |\pr_{t^\ast}\Psic^\sharp|^2 + \ellmode^2 |\Psic^\sharp|^2 \ri) \,\d t^\ast \\
    & \les_{M,k} e^{C\Ka}\bigg(\mathrm{E}^{\mathfrak{R}}[\Psi](1) + \le(\mathrm{E}^{\mathfrak{R}}[\Psi](1)\ri)^{1/2}\le(\int_{-\infty}^{+\infty}\widetilde{\mathrm{E}}[\Psic^\sharp](t^\ast)\,\d t^\ast\ri)^{1/2}\bigg),
  \end{aligned}
\end{align}
from which, using the redshift estimate of Lemma~\ref{lem:redshift}, we get
\begin{align}\label{est:CarlemanPlancherelfinalbis}
  \begin{aligned}
    \int_{-\infty}^{+\infty}\overline{\mathrm{E}}[\Psic^\sharp](t^\ast)\,\d t^\ast + \overline{\mathrm{E}}^\II[\Psic^\sharp](+\infty,-\infty) & \leq e^{C\ellmode}\mathrm{E}^{\mathfrak{R}}[\Psi](t^\ast_1),
  \end{aligned}
\end{align}
where the last integral term of~\eqref{est:CarlemanPlancherelfinal} was absorbed on the LHS of~\eqref{est:CarlemanPlancherelfinalbis}.

\subsubsection{Conclusion}
Combining~\eqref{est:ellipticFourierfinal} and~\eqref{est:CarlemanPlancherelfinalbis}, we get for all $t^\ast_2 \geq t^\ast_1+1$
\begin{align}
  \begin{aligned}
    \int_{t^\ast_1+1}^{t^\ast_2}\overline{\mathrm{E}}[\Psi](t^\ast)\,\d t^\ast + \overline{\mathrm{E}}^\II[\Psi](t^\ast_2,t^\ast_1) \les_{M,k} \int_{-\infty}^{+\infty}\overline{\mathrm{E}}[\widetilde{\Psi}](t^\ast)\,\d t^\ast + \overline{\mathrm{E}}^\II[\widetilde{\Psi}](+\infty,-\infty) & \leq e^{C\ellmode}\mathrm{E}^{\mathfrak{R}}[\Psi](t^\ast_1),
  \end{aligned}
\end{align}
where $C=C(M,k)>0$. Recalling (\ref{merel}) and the fact that $\overline{\mathrm{E}}^\II[\Psi](t^\ast_2,t^\ast_1)$ controls $\int_{t_1^\ast}^{t_2^\ast} dt^\ast \Einfty_{m\ellmode}[\Psi](t^\ast)$, this finishes the proof of Proposition~\ref{prop:integralestimatesv} in the (strong) $\mathfrak{p}=1$ case.

\section{Estimates for the Teukolsky quantities}\label{sec:RWtoTeuk}
We now establish estimates on the original Teukolsky quantities from the estimates on the Regge-Wheeler quantities proven in the previous section. The main result is the following:

\begin{theorem}[Boundedness and integrated decay estimates for $\alt^{[\pm2]}$]\label{thm:Teukboundednessdecay}
  Let $(\al^{[\pm2]})$  be solutions of the Teukolsky problem on $\mathcal{M}$ as in Definition~\ref{def:sysTeuk}. Let $\Psi^{[\pm2]}$ be the Chandrasekhar transformations of $\alt^{[\pm2]}$ defined by~\eqref{eq:Chandra}. Let $n>2$, $\ellmode\geq 2$ and $|m|\leq\ellmode$. We have
  \begin{itemize}
  \item \underline{Boundedness of $\alt^{[\pm2]}$:}
    \begin{align}
      \label{est:boundednessalt}
      \begin{aligned}
        \mathrm{E}^{\mathfrak{T},\mathfrak{R},n}[\alt](t^\ast_2) & \les_{M,k,n} \mathrm{E}^{\mathfrak{T},\mathfrak{R},n}[\alt](t^\ast_1),
      \end{aligned}
    \end{align}
    for all $t^\ast_2\geq t^\ast_1$,
  \item \underline{Integrated decay of $\alt^{[\pm2]}$:}
    \begin{align}\label{est:Teukfullestimates}
      \begin{aligned}
        \mathrm{E}^{\mathfrak{T},\mathfrak{R},n}_{m\ellmode}[\alt](t^\ast_2) + \int_{t^\ast_1}^{t^\ast_2}\mathrm{E}^{\mathfrak{T},\mathfrak{R},n}_{m\ellmode}[\alt](t^\ast)\,\d t^\ast \leq e^{C\ellmode^\mathfrak{p}}\mathrm{E}^{\mathfrak{T},\mathfrak{R},n}_{m\ellmode}[\alt](t^\ast_1),
      \end{aligned}
    \end{align}
    for all $t^\ast_2\geq t^\ast_1$, where $\mathfrak{p}=1$ (see Proposition~\ref{prop:integralestimatesv}) and with $C=C(M,k,n)>0$.
  \end{itemize}
\end{theorem}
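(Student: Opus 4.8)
The plan is to run the transport-equation scheme sketched in Section~\ref{sec:SolvingChandraintro}, propagating the boundedness and integrated-decay estimates for $\Psi^{[\pm2]}$ (obtained by inverting~\eqref{eq:defPsiDR} and applying Theorems~\ref{thm:mainRW1a} and~\ref{thm:mainRW1b}) down the Chandrasekhar ladder to $\psi^{[\pm2]}$ and then to $\alt^{[\pm2]}$, and finally recovering the lost derivatives via elliptic and inhomogeneous Morawetz estimates. First I would invert~\eqref{eq:defPsiDR}: since $\Psi^D = \Psi^{[+2]}-(\Psi^{[-2]})^\ast$ and $\Psi^R$ involves $\Psi^{[+2]}+(\Psi^{[-2]})^\ast$ plus a lower-order $\LL(\LL-2)^{-1}\pr_t\Psi^D$ correction, the operator relating $(\Psi^D,\Psi^R)$ to $(\Psi^{[+2]},(\Psi^{[-2]})^\ast)$ is triangular and invertible at fixed angular frequency $\ell\geq 2$ (the factor $\LL(\LL-2)$ is bounded below by $\ell(\ell+1)(\ell(\ell+1)-2)>0$), with an $\ell$-dependent but harmless bound; this transfers \eqref{est:boundednessPsiDR} and \eqref{est:pfcorCarl1} to $\Psi^{[+2]}$, $\Psi^{[-2]}$. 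Note the extra pieces in the combined energy $\mathrm{E}^{\mathfrak{T},\mathfrak{R},n}$ (the $\Einfty$-type boundary terms and the $\LL^{-1}(\LL-2)^{-1}\pr_t\Psi^D$ term) are precisely tailored so that this inversion closes — this is why Remark~\ref{rem:combinednorm} is in the paper.

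Next, the Chandrasekhar relations~\eqref{eq:Chandra} read $\Psi^{[+2]} = w^{-1}\Lb\psi^{[+2]}$, $\psi^{[+2]} = w^{-1}\Lb\alt^{[+2]}$ (and the mirror statements with $L$ for the $[-2]$ quantities). Following Figure~\ref{fig:integrationRWtoTeuk}: integrate $w^{-1}\Lb\psi^{[-2]} = \Psi^{[-2]}$ — wait, for spin $-2$ it is $L$, so integrate $w^{-1}L\psi^{[-2]}=\Psi^{[-2]}$ along outgoing null directions from data on $\Sigma_{t^\ast_1}$ to bound $\psi^{[-2]}$ in terms of $\Psi^{[-2]}$ and the data; then use the boundary condition relating $\psi^{[-2]}$ to $\psi^{[+2]}$ at infinity (from Proposition~\ref{prop:ChandraAux}) and integrate $w^{-1}\Lb\psi^{[+2]}=\Psi^{[+2]}$ ingoing from the conformal boundary to bound $\psi^{[+2]}$; repeat the same two-step pattern one rung down to get $\alt^{[-2]}$ (outgoing, from data) and then $\alt^{[+2]}$ (ingoing, from the boundary condition~\eqref{eq:TeukBCDirichlet}). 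Each transport step is a one-dimensional ODE estimate: $L^2$-in-$(t^\ast, r^\ast)$ bounds for the left side follow by Cauchy–Schwarz/Gr\"onwall from $L^2$ bounds on the right side plus the relevant slice/boundary data, and analogously for the time-integrated quantities; the weights $w$, $\De^2$ in~\eqref{refo} and the regularity conditions of Definition~\ref{def:alparel} ensure the integrations converge at $\mathcal{H}^+$ and at infinity. This produces boundedness and integrated decay for $\alt^{[\pm2]}$ but with a loss of (angular and/or transversal) derivatives relative to the energy one controls for $\Psi$.

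The main obstacle — and the bulk of the work referenced to Sections~\ref{sec:energyestpsipm2}--\ref{sec:redshiftpsim2} — is recovering that derivative loss. Here one runs direct energy estimates for the inhomogeneous Teukolsky-type (or rather Regge-Wheeler-type with right-hand side) equations satisfied by $\psi^{[\pm2]}$ and $\alt^{[\pm2]}$ as recorded in Proposition~\ref{prop:ChandraAux}: a $T$-energy identity for boundedness, supplemented by a Morawetz/Lagrangian-current estimate (as in~\cite{Daf.Hol.Rod19, Daf.Hol.Rod.Tay21}) and a commuted redshift estimate near $\mathcal{H}^+$, to bound the top-order derivatives of $\alt^{[\pm2]}$ by lower-order terms (already controlled by the transport step) plus the $\Psi$-energies. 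The genuinely new difficulty compared to the asymptotically flat case is the \emph{coupling of the boundary terms} at the conformal boundary: the boundary conditions~\eqref{eq:TeukBC}, \eqref{bcintro} mix $[+2]$ and $[-2]$ quantities, so the boundary terms generated in the energy identities for $\alt^{[+2]}$ and $\alt^{[-2]}$ must be handled \emph{together}, showing that the combination has a favourable sign or can be absorbed — this is exactly the step that forces the particular shape of $\mathrm{E}^{\mathfrak{T},\mathfrak{R},n}$ and where the $\Einfty$ boundary energies enter. Assembling these ingredients at each order $i\leq n-1$, commuting with $\LL^{k_0/2}\pr_t^{k_1}$ (and $(w^{-1}\Lb)^{k_2}$ for the non-degenerate corollary), and summing the fixed-frequency integrated-decay bounds $e^{C\ell^{\mathfrak p}}$ yields~\eqref{est:boundednessalt} and~\eqref{est:Teukfullestimates}. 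I would expect the sign/absorption bookkeeping of the coupled boundary terms, together with keeping the $\ell$-dependence of the transport and inversion steps from exceeding $e^{C\ell^{\mathfrak p}}$, to be the crux; everything else is by now standard vectorfield technology.
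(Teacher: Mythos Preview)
Your proposal is correct and follows essentially the same approach as the paper: invert \eqref{eq:defPsiDR} to transfer the Regge--Wheeler estimates to $\Psi^{[\pm2]}$ (this is Section~\ref{sec:proofthmTeukbdddecay}), then combine with Proposition~\ref{prop:TeukfromRW}, whose proof (Sections~\ref{sec:prelimtrans}--\ref{sec:fipo}) comprises exactly the transport-with-loss step, the inhomogeneous energy/elliptic/Morawetz/redshift estimates, and the summation of the $[+2]$ and $[-2]$ identities so that the coupled boundary terms at infinity cancel, just as you describe. One small addition worth making explicit: for the boundedness statement~\eqref{est:boundednessalt} the combination of~\eqref{est:TeukfromRW} with mere \emph{boundedness} (not integrated decay) of $\Psi^{[\pm2]}$ yields only an estimate whose right-hand side grows like $(t^\ast_2-t^\ast_1)\cdot\mathrm{E}^{\mathfrak{R},n-2}[\Psi^{[\pm2]}](t^\ast_1)$, and the paper then invokes a standard pigeonhole argument to upgrade this to the uniform bound~\eqref{est:boundednessalt}.
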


Theorem \ref{thm:Teukboundednessdecay} will be a consequence of the following key-proposition and the decay estimates for $\Psi^{D},\Psi^R$ obtained in already in Theorem~\ref{thm:mainRW1a} and Theorem~\ref{thm:mainRW1b}:

\begin{proposition}\label{prop:TeukfromRW}
  Let $(\al^{[\pm2]})$  be solutions of the Teukolsky problem on $\mathcal{M}$ as in Definition~\ref{def:sysTeuk}. Let $\Psi^{[\pm2]}$ be the Chandrasekhar transformations of $\alt^{[\pm2]}$ defined by~\eqref{eq:Chandra}. For all integers $n>2$ and all $t^\ast_2\geq t^\ast_1$, we have
  \begin{align}\label{est:TeukfromRW}
    \begin{aligned}
      & \mathrm{E}^{\mathfrak{T},n}[\alt](t^\ast_2) + \int_{t^\ast_1}^{t^\ast_2} \mathrm{E}^{\mathfrak{T},n}[\alt](t^\ast)[\alt](t^\ast)\,\d t^\ast + \overline{\mathrm{E}}^{\HH,n}[\alt^{[+2]}](t^\ast_2;t^\ast_1) + \overline{\mathrm{E}}^{\HH,n}[w^{-2}\alt^{[-2]}](t^\ast_2;t^\ast_1)   \\
      \les_{M,k,n} & \; \mathrm{E}^{\mathfrak{T},n}[\alt](t^\ast_1) + \overline{\mathrm{E}}^{\HH,n-2}[\Psi^{[\pm2]}](t^\ast_2;t^\ast_1) + \overline{\mathrm{E}}^{n-2}[\Psi^{[\pm2]}](t^\ast_1) + \overline{\mathrm{E}}^{n-2}[\Psi^{[\pm2]}](t^\ast_2) \\
      & \quad\quad+ \int_{t^\ast_1}^{t^\ast_2}\overline{\mathrm{E}}^{n-2}[\Psi^{[\pm2]}](t^\ast)\,\d t^{\ast}.
    \end{aligned}
  \end{align}
  \emph{For convenience, we have re-collected all the definitions of the energy norms of the paper in Appendix~\ref{sec:energynorms}.}
\end{proposition}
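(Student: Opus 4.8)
The plan is to recover $\alt^{[\pm2]}$ (and its derivatives) from $\Psi^{[\pm2]}$ by integrating the Chandrasekhar transport equations~\eqref{eq:Chandra} in the null directions, exactly along the chain indicated in the introduction (see Figure~\ref{fig:integrationRWtoTeuk}). Concretely, $\Psi^{[-2]}=w^{-1}L\psi^{[-2]}$ is integrated along the outgoing $L$-direction from data on $\Sigma_{t^\ast_1}$ to obtain $\psi^{[-2]}$; then the boundary condition relating $\psi^{[-2]}$ to $\psi^{[+2]}$ (see Proposition~\ref{prop:ChandraAux}) is used to start the integration of $\Psi^{[+2]}=w^{-1}\Lb\psi^{[+2]}$ along the ingoing $\Lb$-direction from infinity, yielding $\psi^{[+2]}$; the same two-step scheme is then repeated at the level of $\alt$, first obtaining $\alt^{[-2]}$ by integrating $\psi^{[-2]}=w^{-1}L\alt^{[-2]}$ from data, and then $\alt^{[+2]}$ by integrating $\psi^{[+2]}=w^{-1}\Lb\alt^{[+2]}$ from the boundary using the Teukolsky boundary condition~\eqref{eq:TeukBCDirichlet}. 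Each transport integration is carried out after commuting with $\partial_{t^\ast}$, $\LL^{1/2}$ and $w^{-1}\underline L$ up to the allowed orders; commuting the transport operators with these vectorfields produces only lower-order error terms with good weights (this is the standard computation from \cite{Daf.Hol.Rod19}, which must be checked to go through with the AdS weights $w=\Delta/r^4$). A Cauchy--Schwarz / Grönwall argument in each null direction then bounds the pointwise/energy norms of $\psi^{[\pm2]}$ and $\alt^{[\pm2]}$ on $\Sigma_{t^\ast_2}$, the horizon flux terms $\overline{\mathrm{E}}^{\HH,n}$, and the spacetime integrals $\int_{t_1^\ast}^{t_2^\ast}\mathrm{E}^{\mathfrak{T},n}[\alt]$, in terms of the right-hand side of~\eqref{est:TeukfromRW}.

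The second ingredient is the recovery of the derivatives lost in the transport step. Integrating the transport equations controls, say, $\alt^{[\pm2]}$ and its $t^\ast$- and horizon-transversal derivatives, but loses angular regularity relative to $\Psi^{[\pm2]}$ (schematically one gains two orders going $\alt\to\psi\to\Psi$ but the transport integration only gives back one-sided control). To close at the stated orders one uses the Teukolsky equations~\eqref{eq:Teukoriginal} themselves as elliptic equations on the slices $\Sigma_{t^\ast}$: writing $\Box_{g_{\mathrm{SAdS}}}$ in the form separating the radial-angular elliptic part from the $\partial_{t^\ast}^2$ and first-order terms, and using that the transport steps already control $\partial_{t^\ast}^2\alt^{[\pm2]}$ and the first-order terms, one gets elliptic estimates recovering the full $H^1$-norm (and, after commutation, $H^n$) of $\alt^{[\pm2]}$ on each slice, as well as after integration in $t^\ast$ the spacetime version (this is the ``elliptic and inhomogeneous Morawetz" step referred to in Section~\ref{sec:SolvingChandraintro}, cf.~\cite{Daf.Hol.Rod19, Daf.Hol.Rod.Tay21}). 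Crucially the boundary terms generated in these elliptic estimates at $r=\infty$ must be handled using the coupled Teukolsky boundary conditions~\eqref{eq:TeukBC}; the two spins are estimated simultaneously so that the cross-terms $\alt^{[+2]}$ versus $(\alt^{[-2]})^\ast$ appearing in~\eqref{eq:TeukBCDirichlet}--\eqref{eq:TeukBCNeumann} combine with a favorable sign (or are absorbed), analogously to the treatment of the Regge-Wheeler boundary terms in Section~\ref{sec:Carlemanestimatesfirst}.

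The main obstacle, and the place where genuinely new work is needed beyond \cite{Daf.Hol.Rod19}, is the bookkeeping of the boundary terms at infinity throughout both the transport and the elliptic steps, and in particular verifying that the boundary contributions one is forced to include — reflected in the precise, slightly redundant definition of $\mathrm{E}^{\mathfrak{T},\mathfrak{R},n}$ and highlighted in Remark~\ref{rem:combinednorm} — are exactly the ones that appear and that they close the estimate. A secondary difficulty is ensuring the weights: the factors $\Delta^{2}r^{-3}$ in~\eqref{refo} and the $w^{-1}$, $w^{-2}$ weights in the energies~\eqref{Tenergy} must be tracked so that the transport integrations converge both at the horizon (where $w\to0$, hence the $w^{-1}\underline L$ regularization) and at infinity (where the $r$-weights must match the regularity-at-infinity Definition~\ref{def:regular}); getting these weights consistent with the boundary conditions~\eqref{eq:TeukBC} is what makes the argument AdS-specific. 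Once Proposition~\ref{prop:TeukfromRW} is in place, Theorem~\ref{thm:Teukboundednessdecay} follows by inserting the boundedness estimate~\eqref{est:boundednessPsiDR} and the integrated decay estimate~\eqref{est:pfcorCarl1} for $\Psi^D,\Psi^R$ (and hence, via~\eqref{eq:defPsiDR}, for $\Psi^{[\pm2]}$) into the right-hand side of~\eqref{est:TeukfromRW}.
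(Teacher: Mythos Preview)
Your high-level plan --- transport estimates to get $\psi^{[\pm2]}$ and $\alt^{[\pm2]}$ from $\Psi^{[\pm2]}$, then recover the lost derivatives via elliptic/Morawetz estimates --- matches the paper's strategy, and your identification of the boundary coupling as the main AdS-specific difficulty is correct. However, the execution in the paper differs from your second paragraph in a way that matters.

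You propose to recover derivatives by using the Teukolsky equations~\eqref{eq:Teukoriginal} as elliptic equations on slices $\Sigma_{t^\ast}$. The paper does not do this. Instead it uses the \emph{inhomogeneous Regge-Wheeler equations} satisfied by $\psi^{[\pm2]}$ and $\alt^{[\pm2]}$ (Proposition~\ref{prop:ChandraAux}, equations~\eqref{eq:Teukinho}--\eqref{eq:Rwinterminho}) and runs a sequence of \emph{spacetime} multiplier estimates on those: first energy/redshift estimates on $\psi^D:=\psi^{[+2]}-(\psi^{[-2]})^\ast$ and $\psi^N:=\psi^{[+2]}+(\psi^{[-2]})^\ast$ (which satisfy clean Dirichlet/Neumann conditions), then a spacetime ``elliptic'' estimate obtained by multiplying~\eqref{eq:Rwinterminho} by $(\psi^{[+2]})^\ast$ and $k^4w^{-2}(\psi^{[-2]})^\ast$ and summing --- the sum being chosen precisely so that the boundary terms at $r\to\infty$ cancel via~\eqref{eq:RWintermBCD} and~\eqref{eq:RWBCD}. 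This yields control of $\ell^2w|\psi^{[+2]}|^2$ and $\ell^2w|w^{-1}\psi^{[-2]}|^2$ in spacetime. A Morawetz multiplier $p=-1/r$ then controls $L\psi^{[+2]}$ and $\Lb\psi^{[-2]}$. Your slice-elliptic approach would have to confront the degeneracy of the spatial part of the Teukolsky operator at the horizon and the first-order $\partial_r$ term directly; the inhomogeneous-RW route avoids this.

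You are also missing one step that is essential for the weights: after the Morawetz estimate one has control of $\Lb\psi^{[-2]}$ but not of $\Lb(w^{-1}\psi^{[-2]})$ with the non-degenerate horizon weight needed for $\overline{\mathrm{E}}[w^{-1}\psi^{[-2]}]$. The paper closes this with a separate redshift estimate (Section~\ref{sec:redshiftpsim2}) for the wave equation satisfied by $w^{-1}\psi^{[-2]}$, derived by differentiating~\eqref{eq:Teukinho}. The whole scheme is then repeated verbatim one level down for $\alt^{[+2]}$ and $w^{-2}\alt^{[-2]}$, and only at the end are $\partial_{t^\ast}$ and $\LL^{1/2}$ commuted in (not $w^{-1}\Lb$, contrary to your first paragraph) to reach order $n$.
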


Section~\ref{sec:hierarchyRW} to Section~\ref{sec:estaltpm2proofpropTeukfromRW} are dedicated to the proof of Proposition~\ref{prop:TeukfromRW}. The proof of Theorem~\ref{thm:Teukboundednessdecay} then easily follows and is presented in Section \ref{sec:proofthmTeukbdddecay}. In Section~\ref{sec:proofexpodecay} we finally prove our main theorem, Theorem~\ref{thm:main1}, as a consequence of Theorem~\ref{thm:Teukboundednessdecay} and an application of a pigeonhole and an interpolation argument.

In Sections~\ref{sec:hierarchyRW} to~\ref{sec:estaltpm2proofpropTeukfromRW}, we shall lighten the notation and drop the $m\ellmode$ indices for $(\al^{[\pm2]})$ and $\Psi^{[\pm2]}$, it always being implicit that a fixed $m\ell$ mode is considered. The proof of Proposition~\ref{prop:TeukfromRW} is inspired (and simpler in some aspects) by~\cite[Chapter 11]{Daf.Hol.Rod.Tay21}. For the benefit of the reader, we give a brief outline of the proof:
\begin{itemize}
\item In Section~\ref{sec:hierarchyRW} we recall the hierarchy of (inhomogeneous) Regge-Wheeler equations satisfied by the quantities $\alt^{[\pm2]},\psi^{[\pm2]},\Psi^{[\pm2]}$ as well as their boundary conditions at the conformal boundary. 
\item In Section~\ref{sec:prelimtrans} we derive preliminary transport estimates for $\alt^{[\pm2]}$ and $\psi^{[\pm2]}$ which lose derivatives but which can be used in the subsequent sections to estimate the lower order terms in the inhomogeneous Regge-Wheeler equations satisfied by $\alt^{[\pm2]}$ and $\psi^{[\pm2]}$.
\item In Section~\ref{sec:energyestpsipm2} we derive energy and redshift estimates for $\psi^{[+2]}$ and $\psi^{[-2]}$ using the inhomogeneous Regge-Wheeler equations satisfied by $\psi^{[+2]}$ and $\psi^{[-2]}$, considering $\Psi^{[\pm2]}$ as source terms. This is similar to the energy and redshift estimates applied in Sections~\ref{sec:RWbound} and~\ref{sec:RWhigh}. These estimates provide bounds for the energies of $\psi^{[\pm2]}$ on each spacelike slice.
\item To obtain global spacetime integral estimates for $\psi^{[+2]}$ and $\psi^{[-2]}$ we first use elliptic estimates in Section~\ref{sec:ellestpsipm2}, with $\Psi^{[\pm2]}$ as source terms, to estimate the spacetime integrals of $\ellmode\psi^{[+2]}$ and $\ellmode w^{-1}\psi^{[-2]}$.
\item The spacetime integrals of $\ellmode\psi^{[+2]}$ and $\ellmode w^{-1}\psi^{[-2]}$ being controlled we can prove a global Morawetz estimate (with these terms as source terms) in Section~\ref{sec:Morawetzpsipm2}. This controls the spacetime integrals of $L\psi^{[+2]}$ and $\Lb\psi^{[-2]}$. Since $\Lb\psi^{[+2]}$ is directly controlled by the Chandrasekhar transformations~\eqref{eq:Chandra}, and also provides the desired weights at the horizon, this closes the estimates for $\psi^{[+2]}$.
\item The Morawetz estimates of Section~\ref{sec:Morawetzpsipm2} do not feature the desired weight at the horizon for $\psi^{[-2]}$ since our goal is to estimate $w^{-1}\psi^{[-2]}$ instead of $\psi^{[-2]}$. To close the estimates for $w^{-1}\psi^{[-2]}$ we thus need an additional redshift estimate for the wave equation satisfies by $w^{-1}\psi^{[-2]}$, which is carried out in Section~\ref{sec:redshiftpsim2}.
\item Arguing similarly for $\alt^{[+2]}$, $w^{-2}\alt^{[-2]}$ and commuting and combining these estimates, we can conclude the proof of Proposition~\ref{prop:TeukfromRW}, which is done in Section \ref{sec:fipo}.
\end{itemize}

\subsection{The hierarchy of Regge-Wheeler equations}\label{sec:hierarchyRW}
We have the following proposition, which follows by direct computation. 
\begin{proposition}\label{prop:ChandraAux}
 Let $(\al^{[\pm2]})$  be solutions of the Teukolsky problem on $\mathcal{M}$ as in Definition~\ref{def:sysTeuk}. Let $\psi^{[\pm2]},\Psi^{[\pm2]}$ be the Chandrasekhar transformations defined by~\eqref{eq:Chandra}. Then, the following inhomogeneous Regge-Wheeler equations hold:
  \begin{align}
    \begin{aligned}
      \RW^{[+2]}\widetilde{\al}^{[+2]} = - 2w'\psi^{[+2]} - w\le(2-\frac{12M}{r}\ri)\widetilde{\al}^{[+2]}, \\
      \RW^{[-2]}\widetilde{\al}^{[-2]} = + 2w'\psi^{[-2]} - w\le(2-\frac{12M}{r}\ri)\widetilde{\al}^{[-2]}, \label{eq:Teukinho}
    \end{aligned}\\ \nonumber \\
    \begin{aligned}
      \RW^{[+2]}\psi^{[+2]} = - w'\Psi^{[+2]} + 6Mw\widetilde{\al}^{[+2]}, \\
      \RW^{[-2]}\psi^{[-2]} = + w'\Psi^{[-2]} - 6Mw\widetilde{\al}^{[-2]}, \label{eq:Rwinterminho}
    \end{aligned}\\ \nonumber \\
    \begin{aligned}
      \RW^{[+2]}\Psi^{[+2]} = 0, \\
      \RW^{[-2]}\Psi^{[-2]} = 0.\label{eq:RWinhorev}
    \end{aligned}
  \end{align}
  Moreover,
  \begin{itemize}
  \item $\psi^{[+2]}$, $\De^{-1}\psi^{[-2]}$ are regular at the horizon in the sense of~\eqref{eq:defreghorgeneral} and satisfy the boundary conditions
    \begin{subequations}\label{eq:RWintermBC}
      \begin{align}
        \psi^{[+2]}-(\psi^{[-2]})^\ast & \xrightarrow{r\to+\infty} 0,\label{eq:RWintermBCD}\\
        r^2\pr_r\psi^{[+2]} + r^2\pr_r(\psi^{[-2]})^\ast & \xrightarrow{r\to+\infty} 0,\label{eq:RWintermBCN}                    
      \end{align}
    \end{subequations}
  \item $\Psi^{[+2]},\Psi^{[-2]}$ are regular at the horizon in the sense of~\eqref{eq:defreghorgeneral} and satisfy the conformal Regge-Wheeler boundary conditions~\eqref{eq:RWBC}.
  \end{itemize}
\end{proposition}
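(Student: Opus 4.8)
The statement is proved by a direct, if lengthy, computation, and I would organise it into four steps carried out in order: (1) rewriting the Teukolsky equations~\eqref{eq:Teukoriginal} in null-frame form to obtain~\eqref{eq:Teukinho}; (2) commuting the transport operators $w^{-1}\Lb$, $w^{-1}L$ through the Regge--Wheeler operator to obtain~\eqref{eq:Rwinterminho}, and one more time to recover~\eqref{eq:RWinhorev}; (3) propagating the regularity at $\mathcal{H}^+$; (4) propagating the boundary conditions at the conformal boundary. For Step~1, I would pass to the regular $(t,r^\ast,\vartheta,\varphi)$ coordinates, in which $L=\partial_t+\partial_{r^\ast}$, $\Lb=\partial_t-\partial_{r^\ast}$ and $r^2\partial_r=w^{-1}\partial_{r^\ast}$, substitute these into~\eqref{eq:Teukoriginal}, and normalise by an appropriate power of $r$ and $\De$. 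The angular part of the Teukolsky operator then reassembles exactly into $-\frac{\De}{r^4}\LL^{[\pm2]}$ up to an explicit zeroth-order remainder, its principal part becomes $-L\Lb$, and --- crucially --- after conjugating $\al^{[+2]}$ by $\De^2r^{-3}$ and $\al^{[-2]}$ by $r^{-3}$ (i.e.\ passing to $\widetilde{\al}^{[\pm2]}$ as in~\eqref{refo}) the surviving first-order terms become proportional to $\mp\,\Lb$, respectively $L$, while the potential is shifted to the Regge--Wheeler one plus a remainder. Recognising $w^{-1}\Lb\widetilde{\al}^{[+2]}=\psi^{[+2]}$ and $w^{-1}L\widetilde{\al}^{[-2]}=\psi^{[-2]}$ then yields~\eqref{eq:Teukinho}, and the weights in~\eqref{refo} are precisely those normalising the first-order coefficient to $\mp 2w'$ and the zeroth-order remainder to $-w(2-\frac{12M}{r})$.

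For Step~2, I would apply $w^{-1}\Lb$ to the first line of~\eqref{eq:Teukinho} and $w^{-1}L$ to the second, commuting the transport operator through $\RW^{[\pm2]}$. Since $[L,\Lb]=0$ and $w$ depends on $r$ only, the commutators $[\RW^{[\pm2]},w^{-1}\Lb]$ and $[\RW^{[\pm2]},w^{-1}L]$ are first order, with coefficients built out of $w$, $w'$, $w''$ and $\le(\frac{6M}{r}\ri)'$; inserting the explicit $w=\De/r^4$ produces the cancellations, and using $\Psi^{[+2]}=w^{-1}\Lb\psi^{[+2]}$, $\Psi^{[-2]}=w^{-1}L\psi^{[-2]}$ on the right-hand side gives~\eqref{eq:Rwinterminho}. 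Applying $w^{-1}\Lb$ (resp.\ $w^{-1}L$) once more and repeating the computation gives~\eqref{eq:RWinhorev}, consistently with~\eqref{eq:RW}. For Step~3, I note that $\De=\De_-$ implies $w^{-1}\Lb=r^4(\De^{-1}\Lb)$ and $w^{-1}L=r^4\De^{-1}L$, where $\De^{-1}\Lb$ and $L$ extend smoothly to $\mathcal{H}^+$; so $\psi^{[+2]}$, $\De^{-1}\psi^{[-2]}$ and $\Psi^{[\pm2]}$ are obtained from $\widetilde{\al}^{[+2]}$ and $\De^{-2}\widetilde{\al}^{[-2]}$ by applying $(\De^{-1}\Lb)$- and $L$-derivatives with smooth $r$-dependent coefficients (the extra terms from differentiating the weights $r^4$, $\De^{-1}$ are again regular), and the future regularity of $(\al^{[\pm2]})$ from Definition~\ref{def:alparel} propagates to the claimed regularity at $\mathcal{H}^+$ in the sense of~\eqref{eq:defreghorgeneral}; regularity at infinity is analogous and was already recorded in Section~\ref{sec:chandraintro}.

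For Step~4, near $r=+\infty$ I would use $\Lb=\partial_t-\partial_{r^\ast}$, $L=\partial_t+\partial_{r^\ast}$, $\partial_{r^\ast}=wr^2\partial_r$ and $w\xrightarrow{r\to+\infty}k^2\neq 0$. Then
\[
\psi^{[+2]}-(\psi^{[-2]})^\ast = w^{-1}\le(\partial_t\big(\widetilde{\al}^{[+2]}-(\widetilde{\al}^{[-2]})^\ast\big)-\partial_{r^\ast}\big(\widetilde{\al}^{[+2]}+(\widetilde{\al}^{[-2]})^\ast\big)\ri),
\]
which tends to $0$ by $\partial_t$ of~\eqref{eq:TeukBCDirichlet} and by~\eqref{eq:TeukBCNeumann}; this is~\eqref{eq:RWintermBCD}. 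For~\eqref{eq:RWintermBCN} one writes $\psi^{[+2]}+(\psi^{[-2]})^\ast = w^{-1}\big(\partial_t(\widetilde{\al}^{[+2]}+(\widetilde{\al}^{[-2]})^\ast)-\partial_{r^\ast}(\widetilde{\al}^{[+2]}-(\widetilde{\al}^{[-2]})^\ast)\big)$ and applies $\partial_{r^\ast}$; the only term not vanishing by inspection is $w^{-1}\partial_{r^\ast}^2(\widetilde{\al}^{[+2]}-(\widetilde{\al}^{[-2]})^\ast)$, which I would rewrite, using~\eqref{eq:Teukinho} and its complex conjugate (via $(\LL^{[-2]}\,\cdot\,)^\ast=\LL^{[+2]}(\cdot)^\ast$) to trade $\partial_{r^\ast}^2$ for $\partial_t^2$, $\LL^{[\pm2]}$ and lower-order terms applied to $\widetilde{\al}^{[+2]}-(\widetilde{\al}^{[-2]})^\ast$; all of these tend to $0$ by~\eqref{eq:TeukBCDirichlet}, by the continuity up to the conformal boundary in Definition~\ref{def:regular} (which lets one commute $\partial_t$ and $\LL^{[\pm2]}$ with the limit $r\to+\infty$), and because $w'/w\to0$. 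The conformal Regge--Wheeler boundary conditions~\eqref{eq:RWBC} for $\Psi^{[\pm2]}$ were already derived in Section~\ref{sec:chandraintro}; alternatively, they follow from~\eqref{eq:RWintermBC} and~\eqref{eq:RWinhorev} by exactly the same manipulations repeated once more.

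The main obstacle is the algebra of Step~2: one must verify that after two applications of $w^{-1}\Lb$ (respectively $w^{-1}L$), and use of the explicit weight $w=\De/r^4$, the first-order and zeroth-order remainders of~\eqref{eq:Teukinho} cancel \emph{exactly}, leaving the homogeneous operator $\RW^{[\pm2]}$ with precisely the potential $-\frac{\De}{r^4}(\LL^{[\pm2]}-\frac{6M}{r})$. This ``Chandrasekhar miracle'' is what forces the weights in~\eqref{eq:Chandra} and~\eqref{refo} and dictates the particular way the intermediate equations~\eqref{eq:Rwinterminho} couple back to $\widetilde{\al}^{[\pm2]}$; everything else is a bounded amount of bookkeeping. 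A secondary technical point, in Step~4, is justifying the interchange of $\partial_t$ and $\LL^{[\pm2]}$ with the limit $r\to+\infty$, which rests on the continuity-up-to-the-boundary statements attached to Definitions~\ref{def:regular}--\ref{def:alparel}.
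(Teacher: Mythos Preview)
Your proposal is correct and takes essentially the same approach as the paper, which simply states that the proposition ``follows by direct computation'' without further elaboration. Your four-step organisation (rewriting the Teukolsky equations in null-frame form, commuting the transport operators through $\RW^{[\pm2]}$, propagating horizon regularity, and propagating the boundary conditions) is exactly the natural way to carry out that computation, and your treatment of the boundary conditions~\eqref{eq:RWintermBC} is in fact more detailed than anything the paper records.
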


\subsection{Preliminary transport estimates with loss of derivatives}\label{sec:prelimtrans}

\begin{proposition} \label{prop:bastra}
Under the assumptions of Proposition \ref{prop:TeukfromRW} we have for any $t_2^\ast \geq t_1^\ast$ the estimates
\begin{align}\label{est:prelimtransaltbis-2w}
  \begin{aligned}
    & \norm{w^{-2}\alt^{[-2]}}^2_{L^2(\Si_{t^\ast_2})} + \int_{t^\ast_1}^{t^\ast_2}\norm{w^{-2}\alt^{[-2]}}^2_{L^2(\Si_{t^\ast})}\,\d t^\ast + \lim_{r\to +\infty}\int_{t^\ast_1}^{t^\ast_2}|\alt^{[-2]}|^2\,\d t^\ast\\
    \les_{M,k} &  \norm{w^{-2}\alt^{[-2]}}^2_{L^2(\Si_{t^\ast_1})} + \norm{w^{-1}\psi^{[-2]}}^2_{L^2(\Si_{t^\ast_1})} + \int_{t^\ast_1}^{t^\ast_2}\norm{\Psi^{[-2]}}^2_{L^2(\Si_{t^{\ast}})}\,\d t^{\ast}.
  \end{aligned}
\end{align}
\begin{align}\label{est:prelimtransalt+2bisnow}
  \begin{aligned}
    & \norm{\alt^{[+2]}}^2_{L^2(\Si_{t^\ast_2})} + \int_{t^\ast_1}^{t^\ast_2}\norm{\alt^{[+2]}}^2_{L^2(\Si_{t^\ast})}\,\d t^\ast \\
    & + \lim_{r\to r_+} \int_{t^\ast_1}^{t^\ast}|\alt^{[+2]}|^2\,\d t^\ast + \lim_{r\to+\infty}\int_{t^\ast_1}^{t^\ast_2}|\alt^{[+2]}|^2\,\d t^\ast \\
    \les_{M,k} & \norm{\alt^{[+2]}}^2_{L^2(\Si_{t^\ast_1})} + \norm{w^{-2}\alt^{[-2]}}^2_{L^2(\Si_{t^\ast_1})} \\
    & + \norm{\psi^{[+2]}}^2_{L^2(\Si_{t^\ast_1})} + \norm{w^{-1}\psi^{[-2]}}^2_{L^2(\Si_{t^\ast_1})} + \int_{t^\ast_1}^{t^\ast_2}\norm{\Psi^{[\pm2]}}^2_{L^2(\Si_{t^{\ast}})}\,\d t^{\ast}.
  \end{aligned}
\end{align}
\end{proposition}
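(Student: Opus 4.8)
\textbf{Proof proposal for Proposition~\ref{prop:bastra}.}

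The plan is to exploit the Chandrasekhar transport equations~\eqref{eq:Chandra} as genuine first-order ODEs in the null directions, integrating them fibrewise along the appropriate characteristics and then adjoining redshift-type weights and boundary contributions. I will treat~\eqref{est:prelimtransaltbis-2w} first, since it is the cleaner of the two and serves as a template.

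First I would recall from~\eqref{eq:Chandram2} that $\psi^{[-2]} = w^{-1}L\widetilde{\al}^{[-2]}$, i.e.\ $L\widetilde{\al}^{[-2]} = w\,\psi^{[-2]}$, and similarly $L\psi^{[-2]} = w\,\Psi^{[-2]}$. To obtain the weighted quantity $w^{-2}\alt^{[-2]}$, I would compute $L(w^{-2}\alt^{[-2]}) = w^{-2}L\alt^{[-2]} + (Lw^{-2})\alt^{[-2]} = w^{-1}\psi^{[-2]} - 2w^{-3}(Lw)\alt^{[-2]}$; here the key structural point is that $Lw = \tfrac{\De}{r^2}\pr_r w$ has a favourable sign (and the right vanishing order) near the horizon $r=r_+$, so the coefficient $-2w^{-3}Lw$ in front of $w^{-2}\alt^{[-2]}$ behaves like a redshift term --- it is negative and bounded below in absolute value near $\mathcal{H}^+$ --- while near infinity $w \sim k^2/r^2$ and the whole transport coefficient is integrable in $r^\ast$. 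One then integrates the transport equation for $w^{-2}\alt^{[-2]}$ along the integral curves of $L$ (the outgoing direction): the curves emanate from the initial slice $\Si_{t^\ast_1}$ and reach all of $\mathcal{M}\cap J^+(\Si_{t^\ast_1})$ and the conformal boundary. A Gr\"onwall argument in $r^\ast$ along each characteristic, followed by integration over the sphere and over the set of characteristics (a change of variables with bounded Jacobian between null coordinates and $(t^\ast, r^\ast)$), yields the pointwise-in-$t^\ast$ bound, the integrated-in-time bound, and --- because the flux through the conformal boundary is gained with the right sign from the redshift structure --- the boundary term $\lim_{r\to\infty}\int |\alt^{[-2]}|^2\,\d t^\ast$. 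The source term on the right is exactly $\int\int w|\Psi^{[-2]}|^2$, which after the $w$-weighted Cauchy--Schwarz is controlled by $\int_{t^\ast_1}^{t^\ast_2}\norm{\Psi^{[-2]}}^2_{L^2(\Si_{t^\ast})}\,\d t^\ast$; the initial-data terms $\norm{w^{-2}\alt^{[-2]}}^2_{L^2(\Si_{t^\ast_1})}$ and $\norm{w^{-1}\psi^{[-2]}}^2_{L^2(\Si_{t^\ast_1})}$ enter through the data at the start of each characteristic (one first bounds $w^{-1}\psi^{[-2]}$ from $L\psi^{[-2]}=w\Psi^{[-2]}$ by the same scheme, feeding $\int\norm{\Psi^{[-2]}}^2$ and the $\psi^{[-2]}$ data, then feeds that into the $\alt^{[-2]}$ estimate).

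For~\eqref{est:prelimtransalt+2bisnow} the structure is the one sketched in Section~\ref{sec:SolvingChandraintro}: here $\psi^{[+2]} = w^{-1}\Lb\widetilde{\al}^{[+2]}$, so the relevant transport is in the \emph{ingoing} direction $\Lb$, and $\Lb w$ near the horizon has the \emph{opposite} sign from what one wants --- so integrating naively along $\Lb$ from the horizon would be a blueshift. The fix is to integrate from the conformal boundary inward, using the boundary condition~\eqref{eq:TeukBCDirichlet} relating $\widetilde{\al}^{[+2]}$ to $(\widetilde{\al}^{[-2]})^\ast$ at $r=\infty$ to supply the "data" at the start of the $\Lb$-characteristic; this is why the right-hand side of~\eqref{est:prelimtransalt+2bisnow} contains the $w^{-2}\alt^{[-2]}$ and $w^{-1}\psi^{[-2]}$ data and the full $\Psi^{[\pm2]}$ spacetime integral (the $[-2]$-part entering through the boundary coupling). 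Concretely: (i) bound $\psi^{[+2]}$ by integrating $\Lb\psi^{[+2]} = w\Psi^{[+2]}$ inward from the boundary, using~\eqref{eq:RWintermBCN}/\eqref{eq:RWintermBCD} to control $\psi^{[+2]}|_{r=\infty}$ in terms of $\psi^{[-2]}|_{r=\infty}$, which in turn was bounded in the previous step; (ii) bound $\widetilde{\al}^{[+2]}$ similarly by integrating $\Lb\widetilde{\al}^{[+2]} = w\psi^{[+2]}$ inward from the boundary, using~\eqref{eq:TeukBCDirichlet} to control $\widetilde{\al}^{[+2]}|_{r=\infty}$ via $\alt^{[-2]}|_{r=\infty}$, already controlled by~\eqref{est:prelimtransaltbis-2w}. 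The horizon flux term $\lim_{r\to r_+}\int|\alt^{[+2]}|^2\,\d t^\ast$ is then \emph{gained} for free, because reaching the horizon along $\Lb$ from the boundary means one accumulates, not loses, control there; the weight $w$ (rather than $w^{-2}$) on $\alt^{[+2]}$ means no extra redshift commutation is needed in this direction.

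The main obstacle I anticipate is bookkeeping the boundary terms consistently: one must check that every limit $\lim_{r\to\infty}\int|\cdot|^2\,\d t^\ast$ or $\lim_{r\to r_+}\int|\cdot|^2\,\d t^\ast$ produced as a flux in one transport step is either absorbed with a good sign or appears on the left-hand side of the final estimate, and that the boundary conditions~\eqref{eq:TeukBC}, \eqref{eq:RWintermBC} are invoked in exactly the order (outgoing $\psi^{[-2]}$, then ingoing $\psi^{[+2]}$ via BC, then outgoing $\alt^{[-2]}$, then ingoing $\alt^{[+2]}$ via BC) that keeps the system triangular and avoids circular dependence. A secondary technical point is verifying the precise vanishing orders of $w$, $Lw$, $\Lb w$ at $r=r_+$ and $r=\infty$ so that the Gr\"onwall coefficients are integrable and the redshift sign is genuinely favourable --- this is where the specific form $w=\De/r^4$ with $\De=\De_-$ matters --- but these are routine computations using~\eqref{wdef} and~\eqref{eq:defLLb}. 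No Regge-Wheeler equation content is needed here beyond the transport relations~\eqref{eq:Chandra}; the inhomogeneous equations~\eqref{eq:Teukinho}--\eqref{eq:RWinhorev} and the higher-order structure enter only in the later sections.
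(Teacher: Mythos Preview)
Your proposal is correct and follows the same architecture as the paper: transport along $L$ for the $[-2]$ quantities from the initial slice, then transport along $\Lb$ for the $[+2]$ quantities using the Dirichlet boundary conditions~\eqref{eq:TeukBCDirichlet}, \eqref{eq:RWintermBCD} to supply data at the conformal boundary, in the triangular order you describe.

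The paper's implementation differs in one technical respect worth noting. Rather than commuting the weight $w^{-\kappa}$ through the transport equation and relying on the sign of the resulting zeroth-order coefficient $-2w^{-1}Lw = -2w'/w = 4/r - 12M/r^2$ (which is damping only for $r<3M$ and anti-damping beyond, so that your Gr\"onwall closes only because the coefficient is integrable in $r^\ast$, not because it has a global sign), the paper proves an abstract Transport Lemma by applying the multiplier $(1+\tfrac{1}{r-r_+})^{2\kappa-1}$ directly to $f=\psi^{[-2]}$, $\alt^{[-2]}$ in the identity $\Re(f^\ast w^{-1}Lf)$. This weight is $\sim (r-r_+)^{-(2\kappa-1)}$ at the horizon and $\sim 1$ at infinity, and the resulting bulk term
\[
\frac{2\kappa-1}{2}\,\frac{r^2}{(r-r_+)^2}\Bigl(1+\tfrac{1}{r-r_+}\Bigr)^{2\kappa-2}|f|^2
\]
is manifestly positive for all $r>r_+$; the horizon boundary term is then eliminated not by a sign argument but by invoking the regularity $\psi^{[-2]}=O(\De)$, $\alt^{[-2]}=O(\De^2)$ from Definition~\ref{def:alparel}. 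This gives the spacetime $L^2$ estimate and the flux at infinity in one stroke, without any characteristic-by-characteristic integration or change of variables. Your redshift-commutation route would also close, but the paper's multiplier avoids the bookkeeping around the sign change at $r=3M$ and makes the passage from pointwise to integrated control immediate.
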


The proposition will follow from some general transport estimates, which are proven in the next lemma. 
\begin{lemma}[Transport lemma]\label{lem:transport}
  Let $f$ be a smooth spacetime function. For all $t^\ast_2\geq t^\ast_1$, we have
  \begin{align}\label{est:transportL}
    \begin{aligned}
      & \norm{w^{-\kappa}f}^2_{L^2(\Si_{t^\ast_2})} + \int_{t^\ast_1}^{t^\ast_2}\norm{w^{-\kappa}f}^2_{L^2(\Si_{t^\ast})}\,\d t^\ast + \lim_{r\to+\infty}\int_{t^\ast_1}^{t^\ast_2}|f|^2\,\d t^\ast \\
      \les_{M,k} & \norm{w^{-\kappa}f}^2_{L^2(\Si_{t^\ast_1})} + \lim_{r\to r_+}\int_{t^\ast_1}^{t^\ast_2}w^{-2\kappa+1}|f|^2\,\d t^\ast + \int_{t^\ast_1}^{t^\ast_2}\norm{w^{-\kappa}L f}^2_{L^2(\Si_{t^\ast})}\,\d t^\ast,
    \end{aligned}
  \end{align}
  for $\kappa=1,2$, and
  \begin{align}\label{est:transportLb}
    \begin{aligned}
      & \norm{f}_{L^2(\Si_{t^\ast_2})}^2 + \int_{t^\ast_1}^{t^\ast_2}\norm{f}^2_{L^2(\Si_{t^\ast})}\,\d t^\ast + \lim_{r\to r_+}\int_{t^\ast_1}^{t^\ast_2}|f|^2\,\d t^\ast \\
      \les_{M,k} & \norm{f}^2_{L^2(\Si_{t^\ast_1})} + \lim_{r\to+\infty}\int_{t^\ast_1}^{t^\ast_2}|f|^2\,\d t^\ast + \int_{t^\ast_1}^{t^\ast_2}\norm{w^{-1}\Lb f}^2_{L^2(\Si_{t^\ast})}\,\d t^\ast.
    \end{aligned}
  \end{align}
\end{lemma}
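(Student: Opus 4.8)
The plan is to treat the two estimates \eqref{est:transportL} and \eqref{est:transportLb} by the same mechanism, namely multiplying the transport equation by a suitable weighted multiple of the conjugate of $f$, taking real parts to produce a perfect derivative along the relevant null direction, and integrating over the region $\{t^\ast_1 \leq t^\ast \leq t^\ast_2\}$. Recall from \eqref{eq:defLLb} that $L = \pr_{t^\ast}|_{\dots} \cdot \tfrac{\De_+}{\De_0} + \tfrac{\De}{r^2}\pr_r$ (in regular coordinates) and similarly for $\Lb$, and that $\Delta^{-1}\underline{L}$ and $L$ extend smoothly to $\mathcal{H}^+$ while $r^2(L-\underline{L})$ is the good vectorfield at infinity. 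The key algebraic fact I will use is that for a density of the form $\rho |f|^2$ with $\rho$ an appropriate power of $w$ (times $r$-powers coming from the volume element $\tfrac{dr}{r^2}$ on $\Sigma_{t^\ast}$), one has $L(\rho|f|^2) = 2\rho\,\Re(\bar f\, L f) + (L\rho)|f|^2$, and the coefficient $L\rho$ can be arranged to have a favourable sign (this is the analogue of the redshift/blueshift weight choice). Concretely, for \eqref{est:transportL} I integrate $L(w^{-2\kappa}r^{-2}|f|^2)$ — or rather the divergence form that this multiplier produces after accounting for the spacetime volume element — and for \eqref{est:transportLb} I integrate the analogous quantity built from $\Lb$ with weight chosen so that the bulk term $(\Lb\rho)|f|^2$ is nonnegative with a definite lower bound $\gtrsim \rho|f|^2$.

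The steps in order: (1) Write the divergence theorem on $\mathcal{M}\cap\{t^\ast_1 \leq t^\ast \leq t^\ast_2\}$ for the current $L(\rho|f|^2)$, producing boundary terms on $\Sigma_{t^\ast_2}$, $\Sigma_{t^\ast_1}$, on $\mathcal{H}^+$, and at $r=+\infty$, plus a bulk term $\int (L\rho)|f|^2$ and an inhomogeneous term $\int \rho\,\Re(\bar f\, Lf)$. (2) Check, using the explicit form of $w=\De/r^4$ and its behaviour $w = O_{r\to r_+}(r-r_+)$ near the horizon and $w\sim k^2/r^2$ as $r\to\infty$, that with the right power of $w$ the boundary term at one end (at $\mathcal{H}^+$ for the $L$-equation, at $r=+\infty$ for the $\Lb$-equation) comes with a controlled sign, contributing a positive term to be kept on the left, while the boundary term at the opposite end appears on the right with the stated weight ($w^{-2\kappa+1}|f|^2$ at the horizon, $|f|^2$ at infinity). (3) Verify the sign of the bulk coefficient $L\rho$ (resp.\ $\Lb\rho$): since $L$ (resp.\ $\Lb$) is, up to a positive factor, $\pr_{t^\ast}$ plus a multiple of $\pr_{r^\ast}$ with definite sign, and $w$ is monotone in $r$, the weight can be chosen so that $L\rho \gtrsim_{M,k}\rho$ away from the relevant boundary; this produces both the $\int_{t^\ast_1}^{t^\ast_2}\|w^{-\kappa}f\|^2_{L^2(\Si_{t^\ast})}\,dt^\ast$ spacetime term and, after evaluating on the slice $\Sigma_{t^\ast_2}$, the top-slice term. (4) Absorb the inhomogeneity by Cauchy–Schwarz: $\int \rho\,|\bar f||Lf| \leq \epsilon\int\rho|f|^2 + C_\epsilon\int \rho|Lf|^2$, absorbing the first piece into the good bulk term and leaving $\int_{t^\ast_1}^{t^\ast_2}\|w^{-\kappa}Lf\|^2_{L^2(\Si_{t^\ast})}\,dt^\ast$ on the right — note $\rho \sim w^{-2\kappa}r^{-2}$ matches the claimed $L^2(\Si_{t^\ast})$ norm of $w^{-\kappa}Lf$.

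The main obstacle is the bookkeeping of weights at the two ends: one must simultaneously arrange (a) the correct power of $w$ so that $w^{-\kappa}f$ (and $w^{-\kappa}Lf$) appears with exactly the exponent claimed, (b) a favourable sign for the bulk coefficient $L\rho$ over the whole range $[r_+,\infty)$, and (c) the boundary term at the "wrong" end to be either absent or of the precisely stated weak form ($w^{-2\kappa+1}|f|^2$ on $\mathcal{H}^+$ is weaker than $w^{-2\kappa}|f|^2$ by exactly one power of $w$, which is the Jacobian factor from $dt^\ast dr^\ast$ versus $dt^\ast\, dr/r^2$ — one needs to track this carefully). For \eqref{est:transportLb} the additional subtlety is that one is integrating \emph{towards} the horizon, so the horizon boundary term now has the good sign and sits on the left as $\lim_{r\to r_+}\int|f|^2\,dt^\ast$, while at infinity one loses the weight and picks up $\lim_{r\to+\infty}\int|f|^2\,dt^\ast$ on the right; checking that no unwanted derivative of $f$ survives at infinity uses the regularity-at-infinity condition \eqref{eq:defreginfgeneral}. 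I expect the $\kappa=1$ and $\kappa=2$ cases of \eqref{est:transportL} to be essentially identical, differing only in which power of the horizon weight the boundary term carries, so they can be handled in one computation with $\kappa$ kept as a parameter.
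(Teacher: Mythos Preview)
Your overall strategy is exactly the paper's: multiply by a weighted copy of $\bar f$, take real parts to obtain a transport identity, integrate over the slab, and absorb the cross term by Cauchy--Schwarz. The gap is in the choice of weight and the justification you give for the bulk sign.

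You assert that $w$ is monotone in $r$ and that therefore $\rho=w^{-2\kappa}r^{-2}$ will give $L\rho\gtrsim\rho$. This is false: $w=\De/r^4$ satisfies $\pr_r w=r^{-3}(-2+6M/r)$, so $w$ increases on $(r_+,3M)$ and decreases on $(3M,\infty)$. Consequently $L(w^{-2\kappa})=wr^2\pr_r(w^{-2\kappa})$ is strictly negative on $(r_+,3M)$, and your bulk term has the wrong sign precisely where you need the strongest control. The paper avoids this by multiplying not by a power of $w$ but by the weight $\bigl(1+\tfrac{1}{r-r_+}\bigr)^{2\kappa-1}$, which is globally strictly decreasing in $r$ and has the correct $(r-r_+)^{-(2\kappa-1)}\sim w^{-(2\kappa-1)}$ blow-up at the horizon; the remaining factor of $w^{-1}$ enters through writing $w^{-1}L=w^{-1}\tfrac{\De_+}{\De_0}\pr_{t^\ast}+r^2\pr_r$, whose coefficients extend regularly to $\HH^+$. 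With this choice the bulk coefficient is $-\tfrac12 r^2\pr_r(\text{weight})=\tfrac{2\kappa-1}{2}\tfrac{r^2}{(r-r_+)^2}\bigl(1+\tfrac{1}{r-r_+}\bigr)^{2\kappa-2}$, manifestly positive everywhere and comparable to the $w^{-2\kappa}$ density you want.

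A second slip: in your step (2) you have the boundary ends swapped. Since $L=\tfrac{\De_+}{\De_0}\pr_{t^\ast}+\pr_{r^\ast}$, integrating the $\pr_{r^\ast}$-divergence produces $+|f|^2$ at $r=+\infty$ (this is the good term on the left of \eqref{est:transportL}) and $-|f|^2$ at $r=r_+$ (which, after moving to the right, carries the weight $w^{-2\kappa+1}$). The roles reverse for $\Lb$, consistently with \eqref{est:transportLb}. Once you replace your weight by a monotone one such as the paper's and correct the boundary bookkeeping, your outline goes through verbatim.
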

\begin{proof}
  Using~\eqref{eq:defLLb}, we have
  \begin{align}\label{eq:transportid}
    \begin{aligned}
      \le(1+\frac{1}{r-r_+}\ri)^{2\kappa-1}\Re\le(f^\ast w^{-1}L f\ri) & = \half\pr_{t^\ast}\le(w^{-1}\le(1+\frac{1}{r-r_+}\ri)^{2\kappa-1}\frac{\De_+}{\De_0}|f|^2\ri) \\
      & \quad + \half r^2\pr_{r}\le(\le(1+\frac{1}{r-r_+}\ri)^{2\kappa-1}|f|^2\ri) \\
      & \quad + \frac{2\kappa-1}{2} \frac{r^2}{(r-r_+)^2}\le(1+\frac{1}{r-r+}\ri)^{2\kappa-2} |f|^2.
    \end{aligned}
  \end{align}
  Integrating~\eqref{eq:transportid}, we have
  \begin{align*}
    \begin{aligned}
      & \int_{t^\ast_1}^{t^\ast_2}\int_{r_+}^{+\infty} \le(1+\frac{1}{r-r_+}\ri)^{2\kappa-1}\Re\le(f^\ast w^{-1}L f\ri) \, \frac{\d r}{r^2}\d t^\ast \\
      = & \; \le[\half \int_{r_+}^{+\infty}\le(w^{-1}\le(1+\frac{1}{r-r_+}\ri)^{2\kappa-1}\frac{\De_+}{\De_0}|f|^2\ri) \, \frac{\d r}{r^2} \ri]_{t^\ast_1}^{t^\ast_2} \\
      & \quad + \le[\half \int_{t^\ast_1}^{t^\ast_2} \le(1+\frac{1}{r-r+}\ri)^{2\kappa-1}|f|^2\,\d t^\ast \ri]_{r_+}^{+\infty}  \\
      & \quad + \frac{2\kappa-1}{2} \int_{t^\ast_1}^{t^\ast_2}\int_{r_+}^{+\infty} \frac{r^2}{(r-r_+)^2}\le(1+\frac{1}{r-r_+}\ri)^{2\kappa-2}|f|^2\,\frac{\d r}{r^2}\d t^\ast.
    \end{aligned}
  \end{align*}
  Hence,
  \begin{align*}
    \begin{aligned}
      & \quad \half \int_{r_+}^{+\infty}\le(w^{-1}\le(1+\frac{1}{r-r_+}\ri)^{2\kappa-1}\frac{\De_+}{\De_0}|f|^2\ri) \, \frac{\d r}{r^2} \bigg|_{t^\ast=t^\ast_2}  + \half \lim_{r\to+\infty} \int_{t^\ast_1}^{t^\ast_2} |f|^2\,\d t^\ast \\
      & + \frac{2\kappa-1}{2} \int_{t^\ast_1}^{t^\ast_2}\int_{r_+}^{+\infty} \frac{r^2}{(r-r_+)^2}\le(1+\frac{1}{r-r_+}\ri)^{2\kappa-2}|f|^2\,\frac{\d r}{r^2}\d t^\ast \\
      \les_{M,k} & \; \half \int_{r_+}^{+\infty}\le(w^{-1}\le(1+\frac{1}{r-r_+}\ri)^{2\kappa-1}\frac{\De_+}{\De_0}|f|^2\ri) \, \frac{\d r}{r^2} \bigg|_{t^\ast=t^\ast_1} + \half \lim_{r\to r_+}\int_{t^\ast_1}^{t^\ast_2} \le(1+\frac{1}{r-r+}\ri)^{2\kappa-1}|f|^2\,\d t^\ast\\
      & + \int_{t^\ast_1}^{t^\ast_2}\norm{(r-r_+)^{-\kappa}f}_{L^2(\Si_{t^\ast})}\norm{w^{-\kappa}Lf}_{L^2(\Si_{t^\ast})}\,\d t^\ast, 
    \end{aligned}
  \end{align*}
  from which~\eqref{est:transportL} follows by an absorption argument. Estimate~\eqref{est:transportLb} follows along analogous lines and this finishes the proof of the lemma.
\end{proof}

\begin{proof}[Proof of Proposition \ref{prop:bastra}]
Applying the transport estimates~\eqref{est:transportL} with $\kappa=1$ and $\kappa=2$ respectively to the Chandrasekhar transformations~\eqref{eq:Chandra} of $\psi^{[-2]}$ and $\alt^{[-2]}$ respectively, using that $\psi^{[-2]}=O(\De)$ and $\alt^{[-2]}=O(\De^2)$ when $r\to r_+$, we have for all $t^\ast_2\geq t^\ast_1$
\begin{align}
  \begin{aligned}\label{est:prelimtranspsi-2w}
     & \norm{w^{-1}\psi^{[-2]}}^2_{L^2(\Si_{t^\ast_2})} + \int_{t^\ast_1}^{t^\ast_2}\norm{w^{-1}\psi^{[-2]}}^2_{L^2(\Si_{t^\ast})}\,\d t^\ast + \lim_{r\to +\infty}\int_{t^\ast_1}^{t^\ast_2}|\psi^{[-2]}|^2\,\d t^\ast \\
    \les_{M,k} & \norm{w^{-1}\psi^{[-2]}}^2_{L^2(\Si_{t^\ast_1})}  + \int_{t^\ast_1}^{t^\ast_2}\norm{\Psi^{[-2]}}^2_{L^2(\Si_{t^{\ast}})}\,\d t^{\ast}.
  \end{aligned}
\end{align}
and 
\begin{align}
  \begin{aligned}\label{est:prelimtransalt-2w}
    & \norm{w^{-2}\alt^{[-2]}}^2_{L^2(\Si_{t^\ast_2})} + \int_{t^\ast_1}^{t^\ast_2}\norm{w^{-2}\alt^{[-2]}}^2_{L^2(\Si_{t^\ast})}\,\d t^\ast + \lim_{r\to +\infty}\int_{t^\ast_1}^{t^\ast_2}|\alt^{[-2]}|^2\,\d t^\ast\\
    \les_{M,k} &  \norm{w^{-2}\alt^{[-2]}}^2_{L^2(\Si_{t^\ast_1})} + \int_{t^\ast_1}^{t^\ast_2}\norm{w^{-1}\psi^{[-2]}}^2_{L^2(\Si_{t^{\ast}})}\,\d t^{\ast}.
  \end{aligned}
\end{align}
From~\eqref{est:prelimtranspsi-2w} and~\eqref{est:prelimtransalt-2w} we infer the desired (\ref{est:prelimtransaltbis-2w}).

Applying the transport estimate~\eqref{est:transportLb} to the definition of the Chandrasekhar transformations~\eqref{eq:Chandra}, using~\eqref{est:prelimtranspsi-2w} and~\eqref{est:prelimtransalt-2w} and that $\alt^{[+2]} - (\alt^{[-2]})^\ast\to0$ and $\psi^{[+2]}-(\psi^{[-2]})^\ast\to0$ when $r\to+\infty$ (see~\eqref{eq:TeukBCDirichlet} and~\eqref{eq:RWintermBCD}), we have for all $t^\ast_2\geq t^\ast_1$
\begin{align}\label{est:prelimtranspsi+2now}
  \begin{aligned}
    & \norm{\psi^{[+2]}}^2_{L^2(\Si_{t^\ast_2})} + \int_{t^\ast_1}^{t^\ast_2}\norm{\psi^{[+2]}}^2_{L^2(\Si_{t^\ast})}\,\d t^\ast \\
    & + \lim_{r\to r_+} \int_{t^\ast_1}^{t^\ast}|\psi^{[+2]}|^2\,\d t^\ast + \lim_{r\to+\infty}\int_{t^\ast_1}^{t^\ast_2}|\psi^{[+2]}|^2\,\d t^\ast \\
    \les_{M,k} &  \norm{\psi^{[+2]}}^2_{L^2(\Si_{t^\ast_1})} + \norm{w^{-1}\psi^{[-2]}}^2_{L^2(\Si_{t^\ast_1})} + \int_{t^\ast_1}^{t^\ast_2}\norm{\Psi^{[\pm2]}}^2_{L^2(\Si_{t^{\ast}})}\,\d t^{\ast},
  \end{aligned}
\end{align}
\begin{align}\label{est:prelimtransalt+2now}
  \begin{aligned}
    & \norm{\alt^{[+2]}}^2_{L^2(\Si_{t^\ast_2})} + \int_{t^\ast_1}^{t^\ast_2}\norm{\alt^{[+2]}}^2_{L^2(\Si_{t^\ast})}\,\d t^\ast \\
    & + \lim_{r\to r_+} \int_{t^\ast_1}^{t^\ast}|\alt^{[+2]}|^2\,\d t^\ast + \lim_{r\to+\infty}\int_{t^\ast_1}^{t^\ast_2}|\alt^{[+2]}|^2\,\d t^\ast \\
    \les_{M,k} & \norm{\alt^{[+2]}}^2_{L^2(\Si_{t^\ast_1})} + \norm{w^{-2}\alt^{[-2]}}^2_{L^2(\Si_{t^\ast_1})} \\
    & + \int_{t^\ast_1}^{t^\ast_2}\norm{\psi^{[+2]}}^2_{L^2(\Si_{t^{\ast}})}\,\d t^{\ast}+ \int_{t^\ast_1}^{t^\ast_2}\norm{w^{-1}\psi^{[-2]}}^2_{L^2(\Si_{t^{\ast}})}\,\d t^{\ast},
  \end{aligned}
\end{align}
from which we infer (\ref{est:prelimtransalt+2bisnow}).
\end{proof}
\subsection{Energy and redshift estimates for $\psi^{[\pm2]}$}\label{sec:energyestpsipm2}
We derive a preliminary estimate for $\psi^{[\pm2]}$ from the inhomogeneous Regge-Wheeler equation satisfied by these quantities. 
\begin{proposition}\label{prop:psiintv1}
Under the assumptions of Proposition \ref{prop:TeukfromRW} we have for any $t_2^\ast \geq t_1^\ast$ the estimate
\begin{align}\label{est:psiintv1}
  \begin{aligned}
    \widetilde{\mathrm{E}}[\psi^{[\pm2]}](t^\ast_2) + \widetilde{\mathrm{E}}^\HH[\psi^{[\pm2]}](t^\ast_2;t^\ast_1) 
     &\les_{M,k} \widetilde{\mathrm{E}}[\psi^{[\pm2]}](t^\ast_1) \\
     & \qquad + \norm{\psi^{[\pm2]}}^2_{L^2(\Si_{t^\ast_2})}  + \int_{t^\ast_1}^{t^\ast_2} \left[ \norm{\Psi^{[\pm2]}}^2_{L^2(\Si_{t^\ast})} \d t^\ast   + \norm{\alt^{[\pm2]}}_{L^2(\Si_{t^\ast})}^2 \right]\d t^\ast .
  \end{aligned}
\end{align}
\end{proposition}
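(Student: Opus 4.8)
The plan is to obtain \eqref{est:psiintv1} by running the same degenerate $T$-energy estimate that underlies Lemma~\ref{lem:energyid} and Proposition~\ref{prop:energybounded}, but now for the \emph{inhomogeneous} Regge-Wheeler equations \eqref{eq:Rwinterminho} satisfied by $\psi^{[\pm2]}$. First I would record the analogue of the energy identity \eqref{eq:pfenergid3}: multiplying the projected equation $\RW^{[\pm2]}_{m\ellmode}\psi^{[\pm2]} = \mp w'\Psi^{[\pm2]} \pm 6Mw\widetilde\al^{[\pm2]}$ (using the $[-2]$ version for $\psi^{[-2]}$, and that $\big(\RW^{[-2]}\psi^{[-2]}\big)^\ast = \RW^{[+2]}(\psi^{[-2]})^\ast$) by $\pr_{t^\ast}\psi^{[\pm2]\,\ast}$ and taking the real part, then integrating over $(t^\ast_1,t^\ast_2)\times(-\infty,\tfrac\pi2)$. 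The left-hand side produces $\mathring{\mathrm{E}}[\psi^{[\pm2]}](t^\ast_2) + \widetilde{\mathrm{E}}^{\HH}[\psi^{[\pm2]}](t^\ast_2;t^\ast_1) - \mathring{\mathrm{E}}[\psi^{[\pm2]}](t^\ast_1)$ together with the boundary term at infinity $-\lim_{r\to\infty}\int_{t^\ast_1}^{t^\ast_2}\Re\big(\pr_{r^\ast}\psi^{[\pm2]}\,\pr_{t^\ast}\psi^{[\pm2]\,\ast}\big)\,\d t^\ast$; the right-hand side produces the spacetime integral of $\Re\big((\mp w'\Psi^{[\pm2]}\pm 6Mw\widetilde\al^{[\pm2]})\pr_{t^\ast}\psi^{[\pm2]\,\ast}\big)$.

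Next I would handle the three problematic terms. The inhomogeneity is controlled by Cauchy--Schwarz: $|w'| \lesssim_{M,k} w$ and $w \lesssim_{M,k} 1$, so $\int\!\!\int |\,{\mp}w'\Psi^{[\pm2]}\pm 6Mw\widetilde\al^{[\pm2]}\,|\,|\pr_{t^\ast}\psi^{[\pm2]}| \lesssim_{M,k} \int_{t^\ast_1}^{t^\ast_2}\big(\norm{\Psi^{[\pm2]}}_{L^2(\Si)} + \norm{\widetilde\al^{[\pm2]}}_{L^2(\Si)}\big)\big(\widetilde{\mathrm{E}}[\psi^{[\pm2]}]\big)^{1/2}\,\d t^\ast$, which after a Gr\"onwall/absorption argument (using that $\int_{t^\ast_1}^{t^\ast_2}\widetilde{\mathrm{E}}[\psi^{[\pm2]}]\,\d t^\ast$ does not yet appear on the right --- so one should first integrate a differential inequality, or more simply absorb using Young's inequality after noting the estimate is only claimed with the time-integral of the \emph{source}, not of $\widetilde{\mathrm{E}}[\psi]$; in fact the cleanest route is Young with weight, $ab \le \varepsilon a^2 + \varepsilon^{-1} b^2$, producing $\varepsilon\sup_{[t^\ast_1,t^\ast_2]}\widetilde{\mathrm{E}}[\psi^{[\pm2]}]$ which is absorbed after taking the sup over the endpoint). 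The coercivity of $\mathring{\mathrm{E}}$ needs the Hardy inequality of Lemma~\ref{lem:Hardy} exactly as in Section~\ref{sec:proofpropenergybounded}, \emph{but} the boundary term at infinity for $\psi^{[\pm2]}$ individually is not of the good sign --- here one uses the decoupled structure or, more directly, the boundary conditions \eqref{eq:RWintermBC}: one estimates the pair $(\psi^{[+2]},\psi^{[-2]})$ together so that the Dirichlet-type relation \eqref{eq:RWintermBCD} and the Neumann-type relation \eqref{eq:RWintermBCN} combine the two boundary terms $-\lim\Re(\pr_{r^\ast}\psi^{[\pm2]}\pr_{t^\ast}\psi^{[\pm2]\,\ast})$ into a total derivative in $t^\ast$ plus controlled terms, precisely as in the argument for $\Psi^D,\Psi^R$. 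Finally, the $\norm{\psi^{[\pm2]}}^2_{L^2(\Si_{t^\ast_2})}$ term on the right of \eqref{est:psiintv1} is exactly what is needed to close the Hardy step (the $H^1$-energy controls $\pr_{r^\ast}\psi$ and $\ellmode\psi$ but the Hardy inequality only gives back the $\ellmode\psi$ part up to the zeroth-order slice term, which is kept on the RHS rather than absorbed).

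The main obstacle will be the treatment of the boundary terms at infinity: unlike for $\Psi^D,\Psi^R$, the quantities $\psi^{[\pm2]}$ satisfy coupled boundary conditions \eqref{eq:RWintermBC} rather than decoupled Dirichlet/Robin ones, so one cannot work with a single mode in isolation. The key point is that summing the energy identities for $\psi^{[+2]}$ and for $(\psi^{[-2]})^\ast$ and using \eqref{eq:RWintermBCD}--\eqref{eq:RWintermBCN}, the combined infinity boundary term becomes $-\lim_{r\to\infty}\int_{t^\ast_1}^{t^\ast_2}\Re\big(\pr_{r^\ast}(\psi^{[+2]}+(\psi^{[-2]})^\ast)\,\pr_{t^\ast}(\psi^{[+2]})^\ast\big)\,\d t^\ast + \cdots$ and the Neumann relation forces $r^2\pr_r(\psi^{[+2]}+(\psi^{[-2]})^\ast)\to 0$, so this term vanishes in the limit (modulo bookkeeping with the $w$-weights and the $\De_\pm/\De_0$ factors, which tend to $1$ at infinity). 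Once the boundary terms are disposed of, the rest is a routine combination of the Hardy inequality, Cauchy--Schwarz on the inhomogeneity, and an absorption argument, yielding \eqref{est:psiintv1} for the pair and hence (by the symmetry of the two estimates) for each of $\psi^{[+2]}$ and $\psi^{[-2]}$ separately with the stated right-hand side.
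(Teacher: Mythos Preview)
Your approach is correct and essentially coincides with the paper's. The only difference is organisational: instead of summing the energy identities for $\psi^{[+2]}$ and $(\psi^{[-2]})^\ast$ and arguing that the coupled boundary conditions \eqref{eq:RWintermBC} produce a cancellation at infinity, the paper introduces $\psi^D:=\psi^{[+2]}-(\psi^{[-2]})^\ast$ and $\psi^N:=\psi^{[+2]}+(\psi^{[-2]})^\ast$, which satisfy \emph{decoupled} Dirichlet and Neumann conditions respectively, and applies a general inhomogeneous energy lemma (Lemma~\ref{lem:inhoenergest}) to each. By the parallelogram identity the two procedures are equivalent, but the $\psi^D/\psi^N$ decomposition makes the vanishing of the boundary term $\Re(\pr_{r^\ast}\Phi\,\pr_{t^\ast}\Phi^\ast)$ at infinity immediate (one factor vanishes in each case), whereas your direct-summation expression for the combined boundary term is not quite algebraically correct as written --- the sum of the two $\Re(\pr_{r^\ast}\psi\,\pr_{t^\ast}\psi^\ast)$ terms does not factor as $\Re\big(\pr_{r^\ast}(\psi^{[+2]}+(\psi^{[-2]})^\ast)\,\pr_{t^\ast}(\psi^{[+2]})^\ast\big)+\cdots$ without first polarising. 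This is a cosmetic issue; once you pass to $\psi^D,\psi^N$ the argument is exactly yours.

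One further simplification in the paper: the zeroth-order term $\|\psi^{[\pm2]}\|^2_{L^2(\Si_{t^\ast_2})}$ on the right arises not from Hardy but simply from the identity $\widetilde{\mathrm{E}}=\mathring{\mathrm{E}}+\int w\frac{6M}{r}|\Phi|^2\,\d r^\ast$, i.e.\ the difference between the coercive and non-coercive energies is a zeroth-order slice term which one keeps on the right rather than absorbing (and in the Dirichlet case $\mathring{\mathrm{E}}$ is already coercive by Lemma~\ref{lem:Hardy}, so the term is not even needed there). Your Gr\"onwall/Young absorption of the inhomogeneity is the same as the paper's ``take the sup and absorb'' step.
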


The proposition will follow immediately from the following general lemma. 
\begin{lemma}[Inhomogeneous energy and redshift estimates]\label{lem:inhoenergest}
  Let $\Phi$ be a smooth spacetime function, regular at the horizon and at infinity in the sense of~\eqref{eq:defreghorgeneral} and~\eqref{eq:defreginfgeneral}, and assume that $\Phi$ satisfies either the Dirichlet or the Neumann condition
  \begin{align}\label{eq:PsiDorNBC}
    \Phi & \xrightarrow{r\to+\infty} 0, & r^2\pr_r\Phi & \xrightarrow{r\to+\infty} 0,
  \end{align}
Then, for all $t^\ast_2\geq t^\ast_1$, we have
  \begin{align}\label{est:inhoenergestPsi}
    \begin{aligned}
      \overline{\mathrm{E}}[\Phi](t_2^\ast) + \overline{\mathrm{E}}^\HH[\Phi](t^\ast_2,t^\ast_1) & \les_{M,k} \overline{\mathrm{E}}[\Phi](t_1^\ast) + \sup_{t^\ast_1\leq t^\ast\leq t^\ast_2}\norm{\Phi}^2_{L^2(\Si_{t^\ast})} + \int_{t^\ast_1}^{t^\ast_2}\norm{w^{-1}\RW\Phi}_{L^2(\Si_{t^\ast})}^2\,\d t^\ast,
    \end{aligned}
  \end{align}
  with the term $\norm{\Phi}^2_{L^2(\Si_{t^\ast})}$ only needed on the RHS in the Neumann case.
\end{lemma}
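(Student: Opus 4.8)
The plan is to prove Lemma~\ref{lem:inhoenergest} by the now-standard combination of the Killing energy identity plus the redshift multiplier, treating $\RW\Phi$ as an inhomogeneity and being careful about the boundary terms at the conformal boundary. First I would run the energy identity: multiply the Regge-Wheeler equation~\eqref{eq:pfenergid1} (written for $\Phi$, with $\RW\Phi$ on the right) by $\pr_{t^\ast}\Phi^\ast$, take the real part, and integrate over $(t^\ast_1,t^\ast_2)\times(-\infty,\tfrac\pi2)$ exactly as in Lemma~\ref{lem:energyid}. The new features compared to that lemma are (i) the inhomogeneous spacetime term $\int\int \Re(\RW\Phi\,\pr_{t^\ast}\Phi^\ast)$, which is controlled by $\int \norm{w^{-1}\RW\Phi}_{L^2(\Si_{t^\ast})}(\mathring{\mathrm{E}}[\Phi])^{1/2}\,\d t^\ast$ and absorbed via Grönwall, and (ii) the boundary term at infinity $-\lim_{r\to\infty}\int \Re(\pr_{r^\ast}\Phi\,\pr_{t^\ast}\Phi^\ast)\,\d t^\ast$, which \emph{vanishes} under either Dirichlet ($\Phi\to0$) or Neumann ($r^2\pr_r\Phi\to0$, hence $\pr_{r^\ast}\Phi\to0$) by~\eqref{eq:PsiDorNBC} — this is the reason the hypothesis on $\Phi$ is stated as a dichotomy. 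The potential term $\tfrac12\tfrac{\De}{r^4}(\ellmode(\ellmode+1)-\tfrac{6M}{r})\pr_{t^\ast}(|\Phi|^2)$ is not sign-definite, but its indefinite part lives in a compact $r$-region; after integration by parts in $t^\ast$ it contributes $\lesssim \norm{\Phi}_{L^2(\Si_{t^\ast})}^2$ at the endpoints, explaining the $\sup_{t^\ast}\norm{\Phi}^2_{L^2(\Si_{t^\ast})}$ term on the right-hand side (and, as the lemma notes, this is only genuinely needed in the Neumann case, since in the Dirichlet case the Hardy estimate of Lemma~\ref{lem:Hardy} recovers coercivity of $\mathring{\mathrm{E}}$ directly; with Neumann the analogue of Remark~\ref{rem:Neumann} shows $\mathring{\mathrm{E}}$ alone is not coercive and one must keep a zeroth-order term). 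The upshot of this step is control of the degenerate energy $\widetilde{\mathrm{E}}[\Phi](t^\ast_2)+\widetilde{\mathrm{E}}^\HH[\Phi](t^\ast_2;t^\ast_1)$.

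Second I would upgrade the degenerate energy to the non-degenerate one $\overline{\mathrm{E}}[\Phi]$ by invoking the redshift Lemma~\ref{lem:redshift} directly: that lemma is stated for \emph{arbitrary} spacetime functions regular at the horizon, with an explicit $\int\int w^{-1}|\RW\Phi|^2$ term on the right, so it applies verbatim here. Combining~\eqref{est:redshift3} with the degenerate bound from Step~1 — and noting that the auxiliary energy $\mathrm{E}^w[\Psi]$ appearing on the right of~\eqref{est:redshift3} is dominated by $\widetilde{\mathrm{E}}$, hence by the output of Step~1 — yields~\eqref{est:inhoenergestPsi} after absorbing the time-integrated degenerate energy (which is $\lesssim (t^\ast_2-t^\ast_1)\sup_{t^\ast}\widetilde{\mathrm{E}}$ on a bounded time interval, or better, is controlled by iterating the identity). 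Actually the cleanest route avoids any $t$-dependence: apply the energy identity on a generic subinterval to get $\widetilde{\mathrm{E}}[\Phi](t^\ast)\lesssim \widetilde{\mathrm{E}}[\Phi](t^\ast_1)+\sup\norm\Phi^2+\int\norm{w^{-1}\RW\Phi}_{L^2}(\widetilde{\mathrm{E}})^{1/2}$, close by Grönwall (or by the elementary $ab\le \epsilon a^2+\epsilon^{-1}b^2$ since $\widetilde{\mathrm{E}}$ appears to the power $1/2$), and then feed $\sup_{t^\ast_1\le t^\ast\le t^\ast_2}\widetilde{\mathrm{E}}[\Phi](t^\ast)$ and $\int_{t^\ast_1}^{t^\ast_2}\widetilde{\mathrm{E}}[\Phi]$ into the right-hand side of~\eqref{est:redshift3}.

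The main obstacle I anticipate is bookkeeping the boundary term at infinity cleanly enough that \emph{both} the Dirichlet and the Neumann cases are handled by the same computation, and correctly identifying which zeroth-order terms survive: in the Dirichlet case the boundary term is trivially zero and Lemma~\ref{lem:Hardy} gives full coercivity so no $\norm\Phi^2_{L^2(\Si)}$ is needed, whereas in the Neumann case one both loses the boundary term's help and loses Hardy coercivity of the potential near infinity, so the $\sup\norm\Phi^2_{L^2(\Si_{t^\ast})}$ term is genuinely necessary — one must verify it enters only through the endpoint contributions of $\tfrac12 V\,\pr_{t^\ast}(|\Phi|^2)$ integrated by parts and through the non-coercive part of $\mathring{\mathrm{E}}$, and in particular does not get multiplied by a growing-in-$\ellmode$ constant (it will, harmlessly, carry $\ellmode^2$, which is fine since in the application one controls $\norm{\psi^{[\pm2]}}^2_{L^2(\Si_{t^\ast_2})}$ with whatever weights are available from the transport estimates of Proposition~\ref{prop:bastra}). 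Everything else — the multiplier identities, the absorption of $\RW\Phi$, the horizon boundary terms having the right sign via regularity~\eqref{eq:defreghorgeneral} — is routine and already packaged in Lemmas~\ref{lem:Moridentity},~\ref{lem:energyid} and~\ref{lem:redshift}. Proposition~\ref{prop:psiintv1} then follows by applying~\eqref{est:inhoenergestPsi} with $\Phi=\psi^{[\pm2]}$, using the inhomogeneous Regge-Wheeler equations~\eqref{eq:Rwinterminho} to write $w^{-1}\RW\psi^{[\pm2]}=\mp w^{-1}w'\Psi^{[\pm2]}\pm 6M\alt^{[\pm2]}$ and noting $|w^{-1}w'|\lesssim_{M,k}1$, the boundary conditions~\eqref{eq:RWintermBCD}--\eqref{eq:RWintermBCN} putting $\psi^{[\pm2]}$ into the dichotomy~\eqref{eq:PsiDorNBC} (Dirichlet for $\psi^{[+2]}$, Neumann for $(\psi^{[-2]})^\ast$, up to the coupling which is handled exactly as in Section~\ref{sec:chandraintro}), and finally $\overline{\mathrm{E}}\ge\widetilde{\mathrm{E}}$ together with $\sup_{t^\ast}\norm{\psi^{[\pm2]}}^2_{L^2(\Si_{t^\ast})}\lesssim \norm{\psi^{[\pm2]}}^2_{L^2(\Si_{t^\ast_2})}+\int_{t^\ast_1}^{t^\ast_2}\norm{\psi^{[\pm2]}}^2_{L^2(\Si_{t^\ast})}\,\d t^\ast$ absorbed into the stated right-hand side.
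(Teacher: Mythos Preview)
Your proposal is correct and follows essentially the same route as the paper: the energy identity~\eqref{eq:pfenergid3} with the boundary term $\Re(\pr_{r^\ast}\Phi\,\pr_{t^\ast}\Phi^\ast)$ vanishing under either~\eqref{eq:PsiDorNBC}, coercivity handled by Lemma~\ref{lem:Hardy} in the Dirichlet case and by adding the missing zeroth-order piece in the Neumann case, then upgrading via Lemma~\ref{lem:redshift} and a pigeonhole argument. Two small corrections: (i) the zeroth-order correction is precisely $\widetilde{\mathrm{E}}-\mathring{\mathrm{E}}=\int w\tfrac{6M}{r}|\Phi|^2\,\d r^\ast$, which carries \emph{no} $\ellmode$-dependence, so your worry about an $\ellmode^2$ factor is unfounded; (ii) in the application to Proposition~\ref{prop:psiintv1}, neither $\psi^{[+2]}$ nor $\psi^{[-2]}$ individually satisfies~\eqref{eq:PsiDorNBC} --- the paper applies the lemma to $\psi^D:=\psi^{[+2]}-(\psi^{[-2]})^\ast$ (Dirichlet) and $\psi^N:=\psi^{[+2]}+(\psi^{[-2]})^\ast$ (Neumann), as dictated by~\eqref{eq:RWintermBC}.
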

\begin{proof}[Proof of Lemma~\ref{lem:inhoenergest}]
We consider the energy identity~\eqref{eq:pfenergid3} of Lemma~\ref{lem:energyid} noting that with Dirichlet or Neumann conditions~\eqref{eq:PsiDorNBC}, the boundary term $\Re\le(\pr_{r^\ast}\Phi\pr_{t^\ast}\Phi^\ast\ri)$ vanishes. In the Dirichlet case, the energy $\mathring{\mathrm{E}}_{m\ellmode}\le[\Phi\ri]$ is already coercive by Lemma \ref{lem:Hardy}, hence in particular $\mathring{\mathrm{E}}_{m\ellmode}\le[\Phi\ri] \simeq_{M,k}\widetilde{\mathrm{E}}_{m\ellmode}\le[\Phi\ri]$ and we conclude (\ref{est:inhoenergestPsiproof}) without the $\|\Phi\|^2$-term. In general, we have by definition
  \begin{align*}
    \widetilde{\mathrm{E}}[\Phi] & = \mathring{\mathrm{E}}[\Phi] + \int_{-\infty}^{\frac{\pi}{2}}w \frac{6M}{r}|\Phi|^2\,\d r^\ast,
  \end{align*}
and deduce
 \begin{align}\label{est:inhoenergestPsiproof}
    \begin{aligned}
      \widetilde{\mathrm{E}}[\Phi](t_2^\ast) + \widetilde{\mathrm{E}}^\HH[\Phi](t^\ast_2,t^\ast_1) & \les_{M,k} \widetilde{\mathrm{E}}[\Phi](t_1^\ast) + \norm{\Phi}^2_{L^2(\Si_{t^\ast_2})} + \int_{t^\ast_1}^{t^\ast_2}\norm{w^{-1}\RW\Phi}_{L^2(\Si_{t^\ast})}\le(\widetilde{\mathrm{E}}[\Phi](t^\ast)\ri)^{1/2}\,\d t^\ast,
    \end{aligned}
  \end{align}
  for all $t^\ast_2\geq t^\ast_1$. Taking the sup and using a standard absorption argument, we can conclude (\ref{est:inhoenergestPsi}) for the $\widetilde{E}$ energies instead of the $\overline{E}$ energies. To upgrade the statement we can apply the redshift Lemma~\ref{lem:redshift} and use the standard pigeonhole argument to conclude (\ref{est:inhoenergestPsi})
as stated. 
\end{proof}

\begin{proof}[Proof of Proposition \ref{prop:psiintv1}]
From \eqref{eq:RWintermBC}, we know that the quantities $\psi^D := \psi^{[+2]}-(\psi^{[-2]})^\ast$ and $\psi^N:=\psi^{[+2]}+(\psi^{[-2]})^\ast$ satisfy respectively the Dirichlet and Neumann conditions~\eqref{eq:PsiDorNBC} as well as the inhomogeneous Regge-Wheeler equations (see~\eqref{eq:Rwinterminho}). Thus, applying~\eqref{est:inhoenergestPsi} directly yields the desired estimate.
\end{proof}

\subsection{Elliptic estimates for $\psi^{[+2]}$ and $w^{-1}\psi^{[-2]}$}\label{sec:ellestpsipm2}

The next estimate can be understood as \emph{angular} derivatives of $(\psi^{[+2]})$ without loss of derivatives.
\begin{proposition}
Under the assumptions of Proposition \ref{prop:TeukfromRW} we have for any $t_2^\ast \geq t_1^\ast$ the estimate (\ref{est:spacetimeell2psipm2}) below.
\end{proposition}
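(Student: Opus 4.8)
The plan is to prove an elliptic estimate for $\psi^{[+2]}$ and $w^{-1}\psi^{[-2]}$ that controls their angular derivatives (i.e.\ the $\mathcal{L}^{1/2}\psi^{[\pm2]}$-type terms) by the non-angular energies plus the Regge-Wheeler source terms $\Psi^{[\pm2]}$, without loss of derivatives. Concretely, I expect the target estimate (\ref{est:spacetimeell2psipm2}) to read schematically
\begin{align*}
  \int_{t^\ast_1}^{t^\ast_2}\le(\norm{\mathcal{L}^{1/2}\psi^{[+2]}}^2_{L^2(\Si_{t^\ast})} + \norm{\mathcal{L}^{1/2}w^{-1}\psi^{[-2]}}^2_{L^2(\Si_{t^\ast})}\ri)\,\d t^\ast
  \les_{M,k} \text{(energies of $\psi^{[\pm2]}$, $\alt^{[\pm2]}$, $\Psi^{[\pm2]}$ on $\Si_{t^\ast_1}$, $\Si_{t^\ast_2}$ and integrated)},
\end{align*}
using the inhomogeneous Regge-Wheeler equations (\ref{eq:Rwinterminho}). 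The strategy mirrors the elliptic identity of Lemma~\ref{lem:ellipticFourierbasic} but now applied to $\Phi=\psi^{[\pm2]}$ (and the rescaled $w^{-1}\psi^{[-2]}$) with a well-chosen weight $f$, exploiting that $\RW^{[\pm2]}$ contains the coercive operator $\frac{\De}{r^4}(\mathcal{L}^{[\pm2]}-\frac{6M}{r})$.

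First I would set up the elliptic identity. Taking the inhomogeneous equation $\RW^{[+2]}\psi^{[+2]} = -w'\Psi^{[+2]} + 6Mw\widetilde{\al}^{[+2]}$, projected onto a fixed angular mode so that $\mathcal{L}$ acts as multiplication by $\ellmode(\ellmode+1)$, I would multiply by $f\psi^{[+2],\ast}$, take the real part, and integrate over $(t^\ast_1,t^\ast_2)\times(-\infty,\frac{\pi}{2})$, exactly as in the proof of Lemma~\ref{lem:ellipticFourierbasic}. Choosing $f=1$ (or a mild cutoff near the horizon where the weights degenerate) makes the term $fV|\psi^{[+2]}|^2 = w\le(\ellmode(\ellmode+1)-\frac{6M}{r}\ri)|\psi^{[+2]}|^2$ the leading positive contribution; for $\ellmode\geq2$ and away from $r=r_+$ this dominates, yielding control of $\int w\ellmode^2|\psi^{[+2]}|^2$. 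The price is the $|\pr_{t^\ast}\psi^{[+2]}|^2$ term and the boundary terms at infinity, both of which are controlled by $\widetilde{\mathrm{E}}[\psi^{[+2]}]$ on $\Si_{t^\ast_1}$, $\Si_{t^\ast_2}$ (handled by Proposition~\ref{prop:psiintv1}) and by the boundary conditions (\ref{eq:RWintermBCD})--(\ref{eq:RWintermBCN}); note that, just as for $\Psi^D,\Psi^N$ in the previous section, passing to $\psi^D=\psi^{[+2]}-(\psi^{[-2]})^\ast$ and $\psi^N=\psi^{[+2]}+(\psi^{[-2]})^\ast$ turns the boundary conditions into Dirichlet/Neumann ones for which the boundary terms vanish or have a good sign. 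The inhomogeneity $-w'\Psi^{[+2]}+6Mw\widetilde{\al}^{[+2]}$ is absorbed via Cauchy--Schwarz: $w^{-1}\RW^{[+2]}\psi^{[+2]}$ is bounded in $L^2(\Si_{t^\ast})$ by $\norm{\Psi^{[+2]}}_{L^2(\Si_{t^\ast})}+\norm{\alt^{[+2]}}_{L^2(\Si_{t^\ast})}$ (using $|w'|\lesssim w$ away from infinity, $|w'|\lesssim 1$ near infinity with the appropriate measure), and the product $\int\!\int f\Re(\psi^{[+2],\ast}\RW^{[+2]}\psi^{[+2]})$ is split with a small constant $\delta$ to absorb $\delta\int w\ellmode^2|\psi^{[+2]}|^2$ back on the left.

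For $w^{-1}\psi^{[-2]}$ I would argue analogously but first rewrite the equation satisfied by $\phi:=w^{-1}\psi^{[-2]}$. Conjugating $\RW^{[-2]}$ by $w^{-1}$ produces a modified Regge-Wheeler-type operator $\widetilde{\RW}$ for $\phi$ with a shifted potential and lower-order terms involving $\pr_{r^\ast}\phi$, $\phi$, and the source $\frac{1}{w}(w'\Psi^{[-2]}-6Mw\widetilde{\al}^{[-2]})$; crucially the angular part $\frac{\De}{r^4}\ellmode(\ellmode+1)$ is unchanged so the coercive structure survives. Running the same elliptic multiplier argument on $\phi$ (again via the Dirichlet/Neumann combinations $\psi^D,\psi^N$, now rescaled) controls $\int w\ellmode^2|w^{-1}\psi^{[-2]}|^2$ modulo the $\pr_{t^\ast}$-term, boundary terms, and sources, all estimated by Proposition~\ref{prop:psiintv1} and Proposition~\ref{prop:bastra}. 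Finally I would combine the $[+2]$ and $[-2]$ estimates and use the already-established energy bounds to put everything in terms of $\widetilde{\mathrm{E}}[\psi^{[\pm2]}]$, $\widetilde{\mathrm{E}}[\alt^{[\pm2]}]$ on the two time-slices and integrated quantities of $\Psi^{[\pm2]}$, yielding (\ref{est:spacetimeell2psipm2}).

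The main obstacle I anticipate is handling the degeneration at the horizon: the natural elliptic multiplier $f=1$ produces boundary/bulk terms at $r=r_+$ that do not have an obvious sign once one works with $w^{-1}\psi^{[-2]}$ rather than $\psi^{[-2]}$ (the extra $w^{-1}$ weight amplifies terms near $r=r_+$, and $\psi^{[-2]}=O(\De)$ only barely compensates). I expect this to be dealt with exactly as in the redshift arguments of the previous section: use a cutoff $f$ supported away from $r_+$ for the clean elliptic estimate, control the remaining horizon region by the redshift Lemma~\ref{lem:redshift} applied to the inhomogeneous equation (this is precisely why Section~\ref{sec:redshiftpsim2} on the redshift for $w^{-1}\psi^{[-2]}$ is flagged as needed), and absorb the cross terms with small constants. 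A secondary technical point is verifying that $w^{-1}\RW^{[-2]}(w\phi)$, when expanded, has all lower-order coefficients bounded with the correct weights so that the Cauchy--Schwarz absorption closes; this is a routine but careful computation using the explicit form of $w$ and $\De_\pm$.
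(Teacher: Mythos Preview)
Your approach has a genuine gap. When you apply the elliptic identity of Lemma~\ref{lem:ellipticFourierbasic} with $f=1$ to $\psi^{[+2]}$, the term $-\int_{t^\ast_1}^{t^\ast_2}\int_{-\infty}^{\frac{\pi}{2}}\frac{\De_-\De_+}{\De_0^2}|\pr_{t^\ast}\psi^{[+2]}|^2\,\d t^\ast\d r^\ast$ is a \emph{spacetime} integral with the wrong sign, not a boundary term on $\Si_{t^\ast_1},\Si_{t^\ast_2}$. Proposition~\ref{prop:psiintv1} controls only the energy on a fixed slice, not this integrated quantity. Integrated control of $\pr_{t^\ast}\psi^{[+2]}$ is precisely what the Morawetz estimate of Section~\ref{sec:Morawetzpsipm2} is designed to provide, but that estimate in turn uses the present elliptic bound as input. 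So your scheme is circular as written.

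The paper sidesteps this entirely by exploiting the Chandrasekhar structure: since $\Lb\psi^{[+2]}=w\Psi^{[+2]}$ by~\eqref{eq:Chandra}, the principal part $L\Lb\psi^{[+2]}$ is rewritten as $L(w\Psi^{[+2]})$ \emph{before} multiplying by $(\psi^{[+2]})^\ast$. No $|\pr_{t^\ast}\psi|^2$ term ever appears. After one integration by parts in $L$, the only derivative of $\psi^{[+2]}$ that survives is $L\psi^{[+2]}$, paired with $\Psi^{[+2]}$; this cross term is \emph{left on the right-hand side} of~\eqref{est:spacetimeell2psipm2} and absorbed only later when the estimate is combined with the Morawetz bound~\eqref{est:spacetimeLpsip2Lbpsim2}. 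The $[-2]$ equation is treated symmetrically with weight $k^4w^{-2}$, and the two are summed so that the boundary terms at $r\to\infty$ cancel via the Dirichlet relations $\psi^{[+2]}-(\psi^{[-2]})^\ast\to0$ and $\Psi^{[+2]}-(\Psi^{[-2]})^\ast\to0$ (this is the reason for the specific $k^4$ coefficient, since $w\to k^2$ at infinity). Thus~\eqref{est:spacetimeell2psipm2} is deliberately an \emph{unclosed} intermediate estimate, not the finished product you were aiming for; your anticipated horizon difficulties and the conjugation of $\RW^{[-2]}$ by $w^{-1}$ are likewise not needed at this stage.
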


\begin{proof}
Multiplying the first equation of~\eqref{eq:Rwinterminho} by $(\psi^{[+2]})^\ast$ and the second by $k^4w^{-2}(\psi^{[-2]})^\ast$ and summing them, using~\eqref{eq:Chandra} and~\eqref{eq:RW}, gives
\begin{align}\label{eq:sumellidpsipm2}
  \begin{aligned}
    0 & = L(w\Psi^{[+2]})(\psi^{[+2]})^\ast + w\ellmode(\ellmode+1)|\psi^{[+2]}|^2 - w\frac{6M}{r}|\psi^{[+2]}|^2 -w'\Psi^{[+2]}(\psi^{[+2]})^\ast + 6Mw\alt^{[+2]}(\psi^{[+2]})^\ast \\
    & \quad + k^4\Lb\Psi^{[-2]}(w^{-1}\psi^{[-2]})^\ast + k^4w\ellmode(\ellmode+1)|w^{-1}\psi^{[-2]}|^2 -k^4 w\frac{6M}{r}|w^{-2}\psi^{[-2]}|^2\\
    & \quad - 6Mk^4\alt^{[-2]}(w^{-1}\psi^{[-2]})^\ast.
  \end{aligned}
\end{align}
Taking the real part of~\eqref{eq:sumellidpsipm2} and integrating on $(t^\ast_1,t^\ast_2)_{t^\ast}\times(-\infty,\frac{\pi}{2})_{r^\ast}$ we get
\begin{align}\label{est:sumellpsipm2}
  \begin{aligned}
    & \int_{t^\ast_1}^{t^\ast_2}\int_{-\infty}^{\frac{\pi}{2}} w\ellmode(\ellmode+1)\le(|\psi^{[+2]}|^2 + |w^{-1}\psi^{[-2]}|^2\ri)  \,\d t^\ast\d r^\ast \\
    \les_{M,k} & \; \bigg|\int_{t^\ast_1}^{t^\ast_2}\int_{-\infty}^{\frac{\pi}{2}} \Re\bigg(L(w\Psi^{[+2]})(\psi^{[+2]})^\ast + k^4(\Lb\Psi^{[-2]})(w^{-1}\psi^{[-2]})^\ast \bigg) \,\d t^\ast\d r^\ast \bigg|\\
    & \quad + \int_{t^\ast_1}^{t^\ast_2}\int_{-\infty}^{\frac{\pi}{2}} w\le(|\alt^{[+2]}|^2 + |w^{-2}\alt^{[-2]}|^2 + |\psi^{[+2]}|^2+|w^{-1}\psi^{[-2]}|^2\ri) \,\d t^\ast\d r^\ast \\
    & \quad + \int_{t^\ast_1}^{t^\ast_2}\int_{-\infty}^{\frac{\pi}{2}} w|\Psi^{[\pm2]}|^2 \,\d t^\ast\d r^\ast.
  \end{aligned}
\end{align}
Integrating by part, using~\eqref{eq:defLLb}, we have 
\begin{align}\label{est:sourceellpsipm2}
  \begin{aligned}
    & \le|\int_{t^\ast_1}^{t^\ast_2}\int_{-\infty}^{\frac{\pi}{2}} \Re\le(L(w\Psi^{[+2]})(\psi^{[+2]})^\ast + k^4(\Lb\Psi^{[-2]})(w^{-1}\psi^{[-2]})^\ast\ri) \,\d t^\ast\d r^\ast\ri| \\
    \les & \; \int_{t^\ast_1}^{t^\ast_2}\int_{-\infty}^{\frac{\pi}{2}} w |\Psi^{[+2]}||L(\psi^{[+2]})| + \int_{t^\ast_1}^{t^\ast_2}\int_{-\infty}^{\frac{\pi}{2}} |\Psi^{[-2]}||\Lb(w^{-1}\psi^{[-2]})| \,\d t^\ast\d r^\ast \\
    & + \le[\int_{-\infty}^{\frac{\pi}{2}} \le(w\frac{\De_{+}}{\De_0}\Re\le(\Psi^{[+2]}(\psi^{[+2]})^\ast\ri) + k^4\frac{\De_{-}}{\De_0}\Re\le(\Psi^{[-2]}(w^{-1}\psi^{[-2]})^\ast\ri)\ri)\,\d r^\ast\ri]_{t^\ast_1}^{t^\ast_2} \\
    & + \le[\int_{t^\ast_1}^{t^\ast_2} \Re\le(w\Psi^{[+2]}(\psi^{[+2]})^\ast - k^4\Psi^{[-2]}(w^{-1}\psi^{[-2]})^\ast\ri) \,\d t^\ast\ri]_{-\infty}^{\frac{\pi}{2}}.
  \end{aligned}
\end{align}
Using the boundary conditions~\eqref{eq:RWintermBCD},~\eqref{eq:RWBCD} at infinity\footnote{In fact, we summed the $\pm2$ estimates at the beginning of the proof in order to have cancellation of the boundary terms at infinity.} and the regularity at the horizon~\eqref{eq:defreghorgeneral}, for $\psi^{[+2]}, w^{-1}\psi^{[-2]}$ and $\Psi^{[\pm2]}$, three terms in the last line of the RHS of~\eqref{est:sourceellpsipm2} vanish. Moreover the second line can be estimated using the transport estimates~\eqref{est:prelimtranspsi-2w} and~\eqref{est:prelimtranspsi+2now}. Thus, plugging~\eqref{est:sourceellpsipm2} in~\eqref{est:sumellpsipm2} and using the again the transport estimates~\eqref{est:prelimtranspsi-2w},~\eqref{est:prelimtransaltbis-2w},~\eqref{est:prelimtranspsi+2now} and~\eqref{est:prelimtransalt+2now} to estimate the integrals of $\alt^{[\pm2]},\psi^{[\pm2]}$ in the RHS of~\eqref{est:sumellpsipm2}, we obtain
\begin{align}\label{est:spacetimeell2psipm2}
  \begin{aligned}
    & \int_{t^\ast_1}^{t^\ast_2}\int_{-\infty}^{\frac{\pi}{2}} w\ellmode(\ellmode+1)\le(|\psi^{[+2]}|^2 + |w^{-1}\psi^{[-2]}|^2\ri)  \,\d t^\ast\d r^\ast \\
    \les_{M,k} & \; \int_{t^\ast_1}^{t^\ast_2}\int_{-\infty}^{\frac{\pi}{2}} w |\Psi^{[+2]}||L(\psi^{[+2]})|\,\d t^\ast\d r^\ast + \int_{t^\ast_1}^{t^\ast_2}\int_{-\infty}^{\frac{\pi}{2}} |\Psi^{[-2]}||\Lb(w^{-1}\psi^{[-2]})| \,\d t^\ast\d r^\ast \\
    & \quad + \lim_{r\to r_+} \int_{t^\ast_1}^{t^\ast_2}|\Psi^{[-2]}||w^{-1}\psi^{[-2]}|\,\d t^\ast \\
    & \quad + \norm{\alt^{[+2]}}_{L^2(\Si_{t^\ast_1})}^2 +\norm{\psi^{[+2]}}^2_{L^2(\Si_{t^\ast_1})} + \norm{w^{-2}\alt^{[-2]}}_{L^2(\Si_{t^\ast_1})}^2 +\norm{w^{-1}\psi^{[-2]}}^2_{L^2(\Si_{t^\ast_1})} \\
    & \quad +\norm{\Psi^{[\pm2]}}^2_{L^2(\Si_{t^\ast_1})} +\norm{\Psi^{[\pm2]}}^2_{L^2(\Si_{t^\ast_2})} + \int_{t^\ast_1}^{t^\ast_2} \norm{\Psi^{[\pm2]}}_{L^2(\Si_{t^\ast})}^2 \,\d t^\ast.
  \end{aligned}
\end{align}
\end{proof} 

\subsection{Morawetz estimates for $\psi^{[\pm2]}$}\label{sec:Morawetzpsipm2}
Having obtained estimates for the angular derivatives in the previous subsection we turn to estimating the spacetime integrals of $L\psi^{[+2]}$ and $\Lb\psi^{[-2]}$ derivatives. (Note that by definition, $\underline{L}\psi^{[+2]}=w\Psi^{[+2]}$ and $L \psi^{[-2]}=w\Psi^{[-2]}$ and that $\Psi^{[\pm 2]}$ has already been estimated.)
\begin{proposition}
Under the assumptions of Proposition \ref{prop:TeukfromRW} we have for any $t_2^\ast \geq t_1^\ast$ the estimate (\ref{est:spacetimeLpsip2Lbpsim2}) below.
\end{proposition}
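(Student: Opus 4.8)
The plan is to prove a Morawetz (integrated local energy decay) estimate for the inhomogeneous Regge-Wheeler equations \eqref{eq:Rwinterminho} satisfied by $\psi^{[\pm2]}$, treating $\Psi^{[\pm2]}$ and the lower-order Teukolsky quantities $\alt^{[\pm2]}$ as source terms. Concretely, I would apply the Morawetz identity of Lemma~\ref{lem:Moridentity} with a suitably chosen multiplier $p=p(r)$ to the quantities $\psi^D := \psi^{[+2]}-(\psi^{[-2]})^\ast$ and $\psi^N := \psi^{[+2]}+(\psi^{[-2]})^\ast$, which by Proposition~\ref{prop:ChandraAux} satisfy respectively the Dirichlet and Neumann boundary conditions \eqref{eq:PsiDorNBC} and the inhomogeneous Regge-Wheeler equations. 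The function $p$ should be chosen essentially as in the classical Morawetz/Blue-Soffer constructions for the Regge-Wheeler potential: increasing, vanishing (or bounded) appropriately at the horizon and growing towards infinity, so that $p'$ controls the radial derivative term $|\pr_{r^\ast}\psi|^2$, and $-\tfrac12(pV)' + \tfrac14 p'''$ (up to the combinations appearing in Lemma~\ref{lem:basicintegrals}) controls the bulk $|\psi|^2$ term away from the photon-sphere-type degeneracy. Since the already-established angular estimate \eqref{est:spacetimeell2psipm2} controls $\int\int w\ellmode(\ellmode+1)(|\psi^{[+2]}|^2+|w^{-1}\psi^{[-2]}|^2)$, the potentially lossy degenerate $|\psi|^2$ bulk term of the Morawetz estimate can simply be absorbed by \eqref{est:spacetimeell2psipm2} (this is the point of having proved the elliptic estimate first), so one does not even need a sharp trapping analysis here.

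The key steps, in order: (1) apply Lemma~\ref{lem:basicintegrals} (equivalently the raw identity \eqref{est:pfCarleman1int1} combined with the elliptic identity \eqref{eq:ellipticFourier}) with a fixed, $\ellmode$-independent multiplier $p$ to $\psi^D$ and $\psi^N$, producing on the left the spacetime integral of $p'|\pr_{r^\ast}\psi|^2$ plus a $|\psi|^2$ term plus boundary terms at infinity and at the horizon; (2) handle the boundary terms at infinity using the Dirichlet/Neumann conditions \eqref{eq:RWintermBCD}, \eqref{eq:RWintermBCN} (the summed $\pm2$ combination was precisely arranged so that the cross boundary terms cancel, as already exploited in the proof of \eqref{est:spacetimeell2psipm2}); (3) handle the boundary terms at the horizon using the regularity \eqref{eq:defreghorgeneral} and, where a transversal derivative appears with the wrong weight, feed in the horizon flux bounds from the energy/redshift estimate \eqref{est:psiintv1}; (4) estimate the source-term contributions $\int\int w^{-1}|\RW\psi^{[\pm2]}| |\pr_{r^\ast}\psi^{[\pm2]}|$, where by \eqref{eq:Rwinterminho} one has $w^{-1}\RW\psi^{[\pm2]} = \mp w^{-1}w'\Psi^{[\pm2]} \pm 6M\alt^{[\pm2]}$; since $w^{-1}w' = O(1)$ at infinity but only $O((r-r_+)^{-1})$ at the horizon, one must be slightly careful near $\mathcal{H}^+$, but the $\psi^{[\pm2]}$ factor vanishes there at the corresponding rate ($\psi^{[-2]}=O(\De)$, and $\psi^{[+2]}$ regular), so Cauchy–Schwarz against $\int\int w|\Psi^{[\pm2]}|^2$ (controlled) and $\int\int w|\alt^{[\pm2]}|^2$ (controlled via the transport estimates \eqref{est:prelimtransaltbis-2w}, \eqref{est:prelimtransalt+2bisnow} and \eqref{est:spacetimeell2psipm2}) closes; (5) absorb the degenerate $|\psi|^2$ bulk term by \eqref{est:spacetimeell2psipm2} and collect, obtaining control of $\int_{t_1^\ast}^{t_2^\ast}\int_{-\infty}^{\pi/2}(|L\psi^{[+2]}|^2 + |\Lb\psi^{[-2]}|^2)\,\d t^\ast\d r^\ast$ (with appropriate weights) plus the corresponding horizon and null-infinity flux terms, in terms of the data energies of $\alt^{[\pm2]},\psi^{[\pm2]}$ at $t_1^\ast$ and spacetime integrals of $\Psi^{[\pm2]}$.

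I expect the main obstacle to be the interaction of the Morawetz multiplier with the boundary terms — both at the conformal boundary and at the horizon — in the \emph{inhomogeneous} setting. At infinity, unlike in \cite{Hol.Smu13}, the Regge-Wheeler potential $V\sim wk^2\ellmode(\ellmode+1)$ does not decay, so the multiplier $p$ cannot be taken to grow too fast; one needs $p$ bounded (or at most slowly growing) near $r=\infty$ and must check that the boundary terms $\tfrac32 p(|\pr_{r^\ast}\psi|^2+|\pr_{t^\ast}\psi|^2)$ and $k^2\ellmode(\ellmode+1) p|\psi|^2$ at infinity either vanish by the boundary conditions for the $\psi^D$ combination, or, for $\psi^N$, are controlled by $\overline{\mathrm{E}}^\II$-type boundary integrals which one must either include on the left (with a good sign) or bootstrap — here the sign structure emphasised in the excerpt's Lemma~\ref{lem:basicintegrals} (the boundary terms at infinity entering \eqref{est:easyCarleman}, \eqref{est:bdyineasyCarleman} with favourable signs) is what makes this work. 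At the horizon, the subtlety is that we ultimately want weighted norms ($w^{-1}\Lb\psi^{[+2]}$, controlled directly by \eqref{eq:Chandra}, but the $w^{-1}\psi^{[-2]}$ story requires the separate redshift step of Section~\ref{sec:redshiftpsim2}, so in this proposition one only proves the \emph{degenerate} Morawetz bound), and one must choose $p$ so that the horizon boundary term $pV|\psi|^2|_{r=r_+}$ and $p|\pr_{r^\ast}\psi|^2|_{r=r_+}$ are either absorbed into $\overline{\mathrm{E}}^\HH[\psi]$ or controlled via a horizon Hardy inequality of the type \eqref{hardy2}. Everything else — the explicit construction of $p$, the positivity of the bulk terms — is routine and parallels the Schwarzschild arguments of \cite{Daf.Hol.Rod19, Daf.Hol.Rod.Tay21}.
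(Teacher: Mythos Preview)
Your approach is plausible but takes a genuinely different route from the paper, and the paper's route is substantially simpler. You propose to apply Lemma~\ref{lem:basicintegrals} (i.e.\ the standard $p\pr_{r^\ast}\Phi$ multiplier combined with the Lagrangian) to the Dirichlet/Neumann combinations $\psi^D,\psi^N$, then worry about the resulting boundary terms at infinity and reconstruct $|L\psi^{[+2]}|^2$, $|\Lb\psi^{[-2]}|^2$ from $|\pr_{t^\ast}\psi|^2+|\pr_{r^\ast}\psi|^2$. The paper instead multiplies \eqref{eq:Rwinterminho} directly by the \emph{null} multipliers $pL(\psi^{[+2]})^\ast$ and $p\Lb(\psi^{[-2]})^\ast$ (no $\psi^D,\psi^N$ decomposition), takes the \emph{difference}, and chooses $p=-1/r$, so that $p'=w$ and $p\to 0$ at infinity. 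This buys two things: first, the bulk terms $\tfrac12 p'|L\psi^{[+2]}|^2$ and $\tfrac12 p'|\Lb\psi^{[-2]}|^2$ appear directly with the correct weight $w$; second, and more importantly, all boundary terms at infinity in \eqref{eq:Moridpsipm2bis} and \eqref{eq:Moridpsipm2bisbis} vanish \emph{trivially} because $p\to 0$, so the entire discussion you anticipate about Dirichlet/Neumann sign structure at the conformal boundary is bypassed. The angular bulk term $(pV)'|\psi^{[\pm2]}|^2$ is then simply moved to the right-hand side (as $\widetilde{\mathrm{E}}[\psi^{[\pm2]}]+\int\!\!\int w\ellmode(\ellmode+1)|\psi^{[\pm2]}|^2$), exactly as you anticipate, to be controlled by \eqref{est:psiintv1} and \eqref{est:spacetimeell2psipm2}.

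Your approach would likely close too, but your anticipated ``main obstacle'' (the boundary terms at infinity for a non-vanishing multiplier) is self-inflicted: once you allow $p$ to vanish at infinity and use null multipliers, the obstacle disappears. Note also that the source terms in \eqref{est:spacetimeLpsip2Lbpsim2} are kept as products $w|L\psi^{[+2]}||\Psi^{[+2]}|$ etc.\ rather than Cauchy--Schwarz'd immediately; the absorption happens only later in the combined estimate \eqref{est:spacetimepsipm2combined}.
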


\begin{proof}
We multiply each of the equations of~\eqref{eq:Rwinterminho} by respectively $pL(\psi^{[+2]})^\ast$ and $p(\Lb\psi^{[-2]})^\ast$, where $p$ is a function to be determined. Taking the difference of these identities gives
\begin{align}\label{eq:Moridpsipm2}
  \begin{aligned}
    0 & = \int_{t^\ast_1}^{t^\ast_2}\int_{-\infty}^{\frac{\pi}{2}} \le(p(L\psi^{[+2]})^\ast\Lb L\psi^{[+2]} - p(\Lb\psi^{[-2]})^\ast L\Lb(\psi^{[-2]}) \ri)\,\d t^\ast\d r^\ast\\
    & \quad + \int_{t^\ast_1}^{t^\ast_2}\int_{-\infty}^{\frac{\pi}{2}} pw\le(\ellmode(\ellmode+1)-\frac{6M}{r}\ri)\le((L\psi^{[+2]})^\ast\psi^{[+2]} - (\Lb\psi^{[-2]})^\ast\psi^{[-2]}\ri) \,\d t^\ast\d r^\ast\\
    & \quad + \int_{t^\ast_1}^{t^\ast_2}\int_{-\infty}^{\frac{\pi}{2}} \le(-w'pL(\psi^{[+2]})^\ast\Psi^{[+2]} - w'p\Lb(\psi^{[-2]})^\ast\Psi^{[-2]}\ri) \,\d t^\ast\d r^\ast\\
    & \quad + \int_{t^\ast_1}^{t^\ast_2}\int_{-\infty}^{\frac{\pi}{2}} \le(6Mw pL(\psi^{[+2]})^\ast\alt^{[+2]} +6Mw p\Lb(\psi^{[-2]})^\ast\alt^{[-2]}\ri) \,\d t^\ast\d r^\ast.
  \end{aligned}
\end{align}
For the term in the first line of (\ref{eq:Moridpsipm2}) we have 
\begin{align}\label{eq:Moridpsipm2bis}
  \begin{aligned}
    & \Re\le(\int_{t^\ast_1}^{t^\ast_2}\int_{-\infty}^{\frac{\pi}{2}} \le(p(L\psi^{[+2]})^\ast\Lb L\psi^{[+2]} - p(\Lb\psi^{[-2]})^\ast L\Lb(\psi^{[-2]}) \ri)\,\d t^\ast\d r^\ast\ri) \\
    = & \; \half \int_{t^\ast_1}^{t^\ast_2}\int_{-\infty}^{\frac{\pi}{2}} \le(p' |L\psi^{[+2]}|^2 +p' |\Lb(\psi^{[-2]})|^2 \ri)\,\d t^\ast\d r^\ast \\
    & + \half \le[\int_{-\infty}^{\frac{\pi}{2}}p \le(\frac{\De_{-}}{\De_0}|L\psi^{[+2]}|^2 - \frac{\De_+}{\De_0}|\Lb\psi^{[-2]}|^2\ri)\,\d r^\ast\ri]_{t^\ast_1}^{t^\ast_2} \\
    & + \half\bigg[\int_{t^\ast_1}^{t^\ast_2}\le(-p|L\psi^{[+2]}|^2 - p|\Lb\psi^{[-2]}|^2\ri)\,\d t^\ast\bigg]_{-\infty}^{\frac{\pi}{2}}.
  \end{aligned}
\end{align}
For the term in the second line of (\ref{eq:Moridpsipm2}) we have 
\begin{align}\label{eq:Moridpsipm2bisbis}
  \begin{aligned}
    & \Re\le(\int_{t^\ast_1}^{t^\ast_2}\int_{-\infty}^{\frac{\pi}{2}} pw\le(\ellmode(\ellmode+1)-\frac{6M}{r}\ri)\le((L\psi^{[+2]})^\ast\psi^{[+2]} - (\Lb\psi^{[-2]})^\ast\psi^{[-2]}\ri) \,\d t^\ast\d r^\ast\ri)\\
    = & \; -\half\int_{t^\ast_1}^{t^\ast_2}\int_{-\infty}^{\frac{\pi}{2}} \le(pw\le(\ellmode(\ellmode+1)-\frac{6M}{r}\ri)\ri)' \le(|\psi^{[+2]}|^2 + |\psi^{[-2]}|^2\ri)\,\d t^\ast\d r^\ast \\
    & + \le[\half\int_{-\infty}^{\frac{\pi}{2}}pw\le(\ellmode(\ellmode+1)-\frac{6M}{r}\ri) \le(|\psi^{[+2]}|^2 - |\psi^{[-2]}|^2\ri)\,\d r^\ast\ri]_{t^\ast_1}^{t^\ast_2} \\
    & + \le[\half\int_{t^\ast_1}^{t^\ast_2}pw\le(\ellmode(\ellmode+1)-\frac{6M}{r}\ri) \le(|\psi^{[+2]}|^2 + |\psi^{[-2]}|^2\ri)\,\d t^\ast\ri]_{-\infty}^{\frac{\pi}{2}}.
  \end{aligned}
\end{align}
Taking $p(r)=-\frac{1}{r}$ (so that $p'=w$ and $p$ vanishes at infinity), plugging~\eqref{eq:Moridpsipm2bis} and~\eqref{eq:Moridpsipm2bisbis} in the real part of~\eqref{eq:Moridpsipm2}, using the regularity at the horizon~\eqref{eq:defreghorgeneral} for $\psi^{[+2]}$ and $w^{-1}\psi^{[-2]}$, we get
\begin{align}\label{est:spacetimeLpsip2Lbpsim2}
  \begin{aligned}
    & \half \int_{t^\ast_1}^{t^\ast_2}\int_{-\infty}^{\frac{\pi}{2}} \le(w |L\psi^{[+2]}|^2 + w |\Lb\psi^{[-2]}|^2 \ri)\,\d t^\ast\d r^\ast + \half \int_{-\infty}^{\frac{\pi}{2}}\frac{\De_+}{\De_0} |\Lb\psi^{[-2]}|^2\,\d r^\ast \bigg|_{t^\ast_2} \\
    \les_{M,k} & \; \half \int_{-\infty}^{\frac{\pi}{2}}\frac{\De_+}{\De_0} |\Lb\psi^{[-2]}|^2\,\d r^\ast \bigg|_{t^\ast_1} + \widetilde{\mathrm{E}}[\psi^{[\pm2]}](t^\ast_1) + \widetilde{\mathrm{E}}[\psi^{[\pm2]}](t^\ast_2) \\
    & + \widetilde{\mathrm{E}}^\HH[\psi^{[\pm2]}](t^\ast_2;t^\ast_1) + \int_{t^\ast_1}^{t^\ast_2}\int_{-\infty}^{\frac{\pi}{2}} w\ellmode(\ellmode+1)|\psi^{[\pm2]}|^2 \,\d t^\ast\d r^\ast \\
    & \quad + \int_{t^\ast_1}^{t^\ast_2}\int_{-\infty}^{\frac{\pi}{2}} \le(w |L\psi^{[+2]}||\Psi^{[+2]}| + w |\Lb\psi^{[-2]}||\Psi^{[-2]}|\ri) \,\d t^\ast\d r^\ast\\
    & \quad + \int_{t^\ast_1}^{t^\ast_2}\int_{-\infty}^{\frac{\pi}{2}} \le(w |L\psi^{[+2]}||\alt^{[+2]}| +w|\Lb\psi^{[-2]}||\alt^{[-2]}|\ri) \,\d t^\ast\d r^\ast.
  \end{aligned}
\end{align}
\end{proof}

\subsection{Redshift estimate for $w^{-1}\psi^{[-2]}$}\label{sec:redshiftpsim2}
Estimate~\eqref{est:spacetimeLpsip2Lbpsim2} does yet estimate  $\Lb(w^{-1}\psi^{[-2]})$ with the weights (near the horizon) claimed in Proposition \ref{prop:TeukfromRW}.  We apply another redshift estimate summarised in the next proposition. 
\begin{proposition}
Under the assumptions of Proposition \ref{prop:TeukfromRW} we have for any $t_2^\ast \geq t_1^\ast$ the estimate 
\begin{align} \label{muffo}
  \begin{aligned}
    &  \int_{-\infty}^{3M}\le( \frac{1}{w} \le|\Lb\le(w^{-1}\psi^{[-2]}\ri)\ri|^2 +  \ellmode(\ellmode+1) \le|w^{-1}\psi^{[-2]}\ri|^2 \ri) \,\d r^\ast\bigg|_{t^\ast=t^\ast_2} \\
    & + \int_{t^\ast_1}^{t^\ast_2}\int_{-\infty}^{r=3M}  w^{-1} \le|\Lb(w^{-1}\psi^{[-2]})\ri|^2 \,\d t^\ast \d r^\ast  + \lim_{r\to r_+}\int_{t^\ast_1}^{t^\ast_2} \ellmode(\ellmode+1) \le|w^{-1}\psi^{[-2]}\ri|^2  \,\d t^\ast \\
    \les_{M,k} & \;  \int_{-\infty}^{4M}\le( \frac{1}{w} \le|\Lb\le(w^{-1}\psi^{[-2]}\ri)\ri|^2 +  \ellmode(\ellmode+1) \le|w^{-1}\psi^{[-2]}\ri|^2 \ri) \,\d r^\ast\bigg|_{t^\ast=t^\ast_1}  \\
    & + \norm{w^{-2}\alt^{[-2]}}_{L^2(\Si_{t^\ast_1})}^2 + \norm{w^{-1}\psi^{[-2]}}_{L^2(\Si_{t^\ast_1})}^2 + \int_{t^\ast_1}^{t^\ast_2}\norm{\Psi^{[-2]}}^2_{L^2(\Si_{t^\ast})}\,\d t^\ast \\
    & + \int_{t^\ast_1}^{t^\ast_2}\int_{-\infty}^{\frac{\pi}{2}} w \le(|\Lb\psi^{[-2]}|^2 + |w^{-1}\psi^{[-2]}|^2 \ri)\,\d t^\ast\d r^\ast.
  \end{aligned}
\end{align}
\end{proposition}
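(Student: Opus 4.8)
Estimate \eqref{muffo} is a redshift estimate, localised near $\mathcal{H}^+$, for the quantity $\chi := w^{-1}\psi^{[-2]}$, which is regular at the horizon since $\De^{-1}\psi^{[-2]}$ is (Proposition~\ref{prop:ChandraAux}). First I would derive the wave equation satisfied by $\chi$, then run a redshift vectorfield argument in the spirit of Lemma~\ref{lem:redshift}, localised by a cutoff supported in $\{r\le 4M\}$.

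\textbf{Step 1: the equation for $\chi$.} Conjugating the Regge-Wheeler operator by $w^{-1}=w^{-1}(r)$ gives, for any $f$, the algebraic identity $\RW^{[-2]}(w^{-1}f) = w^{-1}\RW^{[-2]}f + (\pr_{r^\ast}^2 w^{-1})\,f + (\pr_{r^\ast}w^{-1})\,(Lf - \Lb f)$. Applying this with $f=\psi^{[-2]}$, using the inhomogeneous equation for $\psi^{[-2]}$ in \eqref{eq:Rwinterminho} together with the transport relation $L\psi^{[-2]}=w\Psi^{[-2]}$ coming from \eqref{eq:Chandra}, the crucial point is that the two $\Psi^{[-2]}$-contributions (from $w^{-1}\RW^{[-2]}\psi^{[-2]}$ and from $(\pr_{r^\ast}w^{-1})L\psi^{[-2]}$) cancel \emph{exactly}, leaving
\[
  \RW^{[-2]}\chi = (\pr_{r^\ast}\log w)\,\Lb\chi + V_0\,\chi - 6M\,\alt^{[-2]},
\]
where $V_0 = -\pr_{r^\ast}^2\log w$ is smooth on $[r_+,\infty)$ and vanishes at $\mathcal{H}^+$ to order $w$, while $\pr_{r^\ast}\log w \to 2\kappa>0$ (twice the surface gravity) at $\mathcal{H}^+$. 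Equivalently $\Lb\chi$ satisfies a transport equation along $L$ with \emph{positive} damping coefficient near the horizon, $L(\Lb\chi) + (\pr_{r^\ast}\log w)(\Lb\chi) = -w\LL^{[-2]}\chi + \tfrac{6M}{r}w\chi - V_0\chi + 6M\alt^{[-2]}$: this is precisely the redshift structure.

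\textbf{Step 2: the redshift multiplier.} I would feed the equation of Step~1 into the Morawetz identity \eqref{est:pfCarleman1int1} with the multiplier $p=-w^{-1}\zeta$, where $\zeta$ is a cutoff equal to $1$ on $\{r\le 3M\}$ and supported in $\{r\le 4M\}$ — the same choice as in the proof of Lemma~\ref{lem:redshift}. The first-order term $(\pr_{r^\ast}\log w)\Lb\chi$ is \emph{not} treated as an error but combined with the $p'|\pr_{r^\ast}\chi|^2$-term of \eqref{est:pfCarleman1int1}: near the horizon, where $\Lb$ degenerates to $-\pr_{r^\ast}$ and $\pr_{r^\ast}\log w>0$, this produces the coercive bulk $c_{M,k}\int_{t^\ast_1}^{t^\ast_2}\int_{\{r\le 3M\}}w^{-1}|\Lb\chi|^2$, the slice term $c_{M,k}\int_{\{r\le 3M\}}w^{-1}|\Lb\chi|^2\,\d r^\ast|_{t^\ast_2}$, and, through the boundary term on $\mathcal{H}^+$ together with the $w\LL^{[-2]}\chi$-term, the angular horizon flux $c_{M,k}\lim_{r\to r_+}\int \ellmode(\ellmode+1)|\chi|^2\,\d t^\ast$. (Applying Lemma~\ref{lem:redshift} as a black box — i.e. treating the first-order term as inhomogeneous — would lose a power of $w$, hence the need to keep it on the coercive side.) All commutator terms carry $\zeta'$ and are supported in $\{3M\le r\le 4M\}$, where $w\simeq_{M,k}1$; they are controlled by the weak Morawetz bulk $\int_{t^\ast_1}^{t^\ast_2}\int_{-\infty}^{\frac{\pi}{2}}w\big(|\Lb\psi^{[-2]}|^2 + |w^{-1}\psi^{[-2]}|^2\big)$ established in Section~\ref{sec:Morawetzpsipm2}. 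The slice term at $t^\ast_1$ appears on the right over $\{r\le 4M\}$.

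\textbf{Step 3: lower-order terms and conclusion; main obstacle.} The zeroth-order term $V_0\chi$, vanishing like $w$, contributes terms bounded by Cauchy-Schwarz by a small multiple of $\int\int w^{-1}|\Lb\chi|^2$ (absorbed on the left) plus $\int\int w|\chi|^2$ (the weak Morawetz bulk); the $w\LL^{[-2]}\chi$-term contributes, besides the good-signed horizon flux, bulk terms supported in $\{3M\le r\le 4M\}$ again controlled by the weak Morawetz bulk; and the genuine source $-6M\alt^{[-2]}=-6Mw^2(w^{-2}\alt^{[-2]})$ is handled using the extra $w^2$-vanishing, so that its contribution is bounded by a small multiple of $\int\int w^{-1}|\Lb\chi|^2$ plus $\int\int w^3|w^{-2}\alt^{[-2]}|^2 \les_{M,k}\int\norm{w^{-2}\alt^{[-2]}}^2_{L^2(\Si_{t^\ast})}\,\d t^\ast$, which \eqref{est:prelimtransaltbis-2w} bounds by the data and $\int\norm{\Psi^{[-2]}}^2_{L^2(\Si_{t^\ast})}\,\d t^\ast$. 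Collecting the bounds, performing the absorption, and using \eqref{est:prelimtransaltbis-2w} and \eqref{est:prelimtranspsi-2w} to express the lower-order $L^2$-norms of $w^{-2}\alt^{[-2]}$ and $w^{-1}\psi^{[-2]}$ on intermediate slices in terms of data and $\int\norm{\Psi^{[-2]}}^2_{L^2(\Si_{t^\ast})}$, gives \eqref{muffo}. The delicate part is Step~1: one must verify the exact cancellation of $\Psi^{[-2]}$ and the precise vanishing orders at $\mathcal{H}^+$ ($V_0=O(w)$, $\Lb\chi=O(w)$, and the $O(w^2)$ vanishing of the $\alt^{[-2]}$-source), since it is exactly this structure that makes the redshift argument close without losing a power of $w$; a secondary point is to check that the horizon-flux boundary term keeps a good sign for every $\ellmode\ge 2$ (so that the $\ellmode(\ellmode+1)$-contribution from $w\LL^{[-2]}\chi$ dominates the bounded contributions from $V_0$ and the geometry), with constants uniform in $\ellmode$.
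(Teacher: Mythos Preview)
Your proposal is correct and follows the same overall strategy as the paper. Both derive the wave equation for $\chi = w^{-1}\psi^{[-2]}$ (the paper by rewriting the Teukolsky equation for $\alt^{[-2]}$ as \eqref{eq:ellw-2alt} and applying $L$; you by conjugating $\RW^{[-2]}$ through $w^{-1}$ --- the results coincide, and your cancellation of $\Psi^{[-2]}$ and the identification $V_0 = -\partial_{r^\ast}^2\log w = w(-2+12M/r)$ are both correct). Both then apply a redshift-type multiplier with weight $w^{-1}$ localised by a cutoff supported in $\{r\le 4M\}$, keeping the first-order term $(\partial_{r^\ast}\log w)\,\Lb\chi$ on the coercive side rather than as a source.

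The one genuine technical difference: the paper multiplies the equation \eqref{eq:RWw-1psibis} directly by $p\,\Lb\chi^\ast$ with $p = w^{-1}\zeta$, \emph{not} via Lemma~\ref{lem:Moridentity}. This is cleaner because $\Re\big(L\Lb\chi\cdot p\Lb\chi^\ast\big) = \tfrac{p}{2}L|\Lb\chi|^2$ and the potential paired with $\Lb\chi^\ast$ gives $\tfrac{pw}{2}\Lb\big(\ellmode(\ellmode+1)|\chi|^2\big)$, so the resulting identity \eqref{eq:energyidpsi-1int} contains no $\partial_{t^\ast}\chi$-terms whatsoever, and the horizon flux reduces exactly to $+\tfrac{1}{2}\ellmode(\ellmode+1)|\chi|^2$. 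Your route through the $\partial_{r^\ast}$-multiplier identity \eqref{est:pfCarleman1int1} also works (since $\partial_{r^\ast}\approx-\Lb$ at $\mathcal{H}^+$), but it produces extra $\partial_{t^\ast}\chi$-cross terms in the bulk, on the slices, and on the horizon. These can be handled via $\partial_{t^\ast}\chi = \tfrac{1}{2}(L\chi+\Lb\chi)$ together with $L\chi = \Psi^{[-2]} - (\partial_{r^\ast}\log w)\chi$, but at the horizon this introduces a term $\lim_{r\to r_+}\int|\Psi^{[-2]}|^2\,\d t^\ast$ on the right-hand side which is not present in \eqref{muffo} --- harmless for the downstream Proposition~\ref{prop:TeukfromRW}, but an artefact of the less adapted multiplier. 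The paper's $\Lb$-multiplier avoids this entirely and delivers \eqref{muffo} exactly as stated.
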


\begin{proof}
First, we remark that the Teukolsky equation~\eqref{eq:Teukinho} for $\alt^{[-2]}$ rewrites as
\begin{align}\label{eq:ellw-2alt}
  \begin{aligned}
    0 & = w^{-1}\Lb(w^{-1}\psi^{[-2]}) + \le(\ellmode(\ellmode+1) - 2 +\frac{6M}{r} \ri)\le(w^{-2}\alt^{[-2]}\ri).
  \end{aligned}
\end{align}
Taking a $L$ derivative in $w$ times~\eqref{eq:ellw-2alt}, using the definition of the Chandrasekhar transformations~\eqref{eq:Chandra} and re-using~\eqref{eq:ellw-2alt}, we get
\begin{align}\label{eq:RWw-1psi}
  \begin{aligned}
    0 & = L\Lb\le(w^{-1}\psi^{[-2]}\ri) + w\le(\ellmode(\ellmode+1) - 2 +\frac{6M}{r} \ri)(w^{-1}\psi^{[-2]}) + w^{-1}w'\Lb(w^{-1}\psi^{[-2]}) - 6M \alt^{[-2]},
  \end{aligned}
\end{align}
which we rewrite as
\begin{align}\label{eq:RWw-1psibis}
  \begin{aligned}
    -\mathfrak{F} & = L\Lb\le(w^{-1}\psi^{[-2]}\ri) + w\ellmode(\ellmode+1)(w^{-1}\psi^{[-2]})  + w^{-1}w'\Lb(w^{-1}\psi^{[-2]}),
  \end{aligned}
\end{align}
with
\begin{align*}
  \mathfrak{F} & := \le(-2 +\frac{6M}{r}\ri)\psi^{[-2]} - 6M\alt^{[-2]}.
\end{align*}
Let $p$ be a radial function to be determined. Multiplying~\eqref{eq:RWw-1psibis} by $p\Lb(w^{-1}\psi^{[-2]})^\ast$ and taking the real part, using~\eqref{eq:defLLb}, we have
\begin{align}\label{eq:energyidpsi-1}
  \begin{aligned}
    -\Re\le(\mathfrak{F}p\Lb(w^{-1}\psi^{[-2]})^\ast\ri) &  = \half p L\le(\le|\Lb\le(w^{-1}\psi^{[-2]}\ri)\ri|^2\ri) + \half pw\Lb\le(\ellmode(\ellmode+1) \le|w^{-1}\psi^{[-2]}\ri|^2 \ri) \\
                                                              & \quad + pw^{-1}w'\le|\Lb(w^{-1}\psi^{[-2]})\ri|^2 \\
                                                              & = \half p\le(\frac{\De_+}{\De_0}\pr_{t^\ast} + \pr_{r^\ast}\ri)\le(\le|\Lb\le(w^{-1}\psi^{[-2]}\ri)\ri|^2\ri) \\
                                                              & \quad + \half pw\le(\frac{\De_-}{\De_0}\pr_{t^\ast}-\pr_{r^\ast}\ri)\le(\ellmode(\ellmode+1) \le|w^{-1}\psi^{[-2]}\ri|^2 \ri) \\
                                                              & \quad + pw^{-1}w'\le|\Lb(w^{-1}\psi^{[-2]})\ri|^2.
  \end{aligned}
\end{align}
Integrating~\eqref{eq:energyidpsi-1} on $(t^\ast_1,t^\ast_2)_{t^\ast}\times(-\infty,\frac{\pi}{2})_{r^\ast}$ gives
\begin{align*}
  \begin{aligned}
    & -\int_{t^\ast_1}^{t^\ast_2}\int_{-\infty}^{\frac{\pi}{2}}\Re\le(\mathfrak{F}p\Lb(w^{-1}\psi^{[-2]})^\ast\ri)\,\d t^\ast\d r^\ast \\
    = & \; \int_{t^\ast_1}^{t^\ast_2}\int_{-\infty}^{\frac{\pi}{2}}  \le(pw^{-1}w' - \half p'\ri)\le|\Lb(w^{-1}\psi^{[-2]})\ri|^2 \,\d t^\ast \d r^\ast \\
    & + \int_{t^\ast_1}^{t^\ast_2}\int_{-\infty}^{\frac{\pi}{2}} \half (pw)' \ellmode(\ellmode+1)|w^{-1}\psi^{[-2]}|^2 \,\d t^\ast \d r^\ast \\
    & + \half \le[\int_{-\infty}^{\frac{\pi}{2}}\le(p\frac{\De_+}{\De_0}\le|\Lb\le(w^{-1}\psi^{[-2]}\ri)\ri|^2 +  (pw)\frac{\De_-}{\De_0} \ellmode(\ellmode+1) \le|w^{-1}\psi^{[-2]}\ri|^2 \ri) \,\d r^\ast\ri]_{t^\ast_1}^{t^\ast_2} \\
    & + \half \le[\int_{t^\ast_1}^{t^\ast_2}\le(p\le|\Lb\le(w^{-1}\psi^{[-2]}\ri)\ri|^2 - (pw)\ellmode(\ellmode+1) \le|w^{-1}\psi^{[-2]}\ri|^2  \ri) \,\d t^\ast\ri]_{-\infty}^{\frac{\pi}{2}}.
  \end{aligned}
\end{align*}
We take $p=w^{-1}\chi$ in~\eqref{est:pfCarleman1int1}, where $\chi$ is a standard cut-off function with support in $[r_+,4M]$, equal to $1$ on $[r_+,3M]$. Using that $p'=\le(\frac{2}{r}-\frac{6M}{r^2}\ri)w^{-1}$ on $(r_+,3M)$ and the regularity at the horizon~\eqref{eq:defreghorgeneral} for $w^{-1}\psi^{[-2]}$, the above identity rewrites as
\begin{align}\label{eq:energyidpsi-1int}
  \begin{aligned}
    &  \half\int_{-\infty}^{\frac{\pi}{2}}\le(w^{-1}\chi\frac{\De_+}{\De_0}\le|\Lb\le(w^{-1}\psi^{[-2]}\ri)\ri|^2 +  \frac{\De_-}{\De_0} \chi \ellmode(\ellmode+1) \le|w^{-1}\psi^{[-2]}\ri|^2 \ri) \,\d r^\ast\bigg|_{t^\ast=t^\ast_2} \\
    & + \frac{3}{2} \int_{t^\ast_1}^{t^\ast_2}\int_{-\infty}^{r=3M}  w^{-1}\chi\le(\frac{6M}{r^2}-\frac{2}{r}\ri)\le|\Lb(w^{-1}\psi^{[-2]})\ri|^2 \,\d t^\ast \d r^\ast \\
    & + \half\lim_{r\to r_+}\int_{t^\ast_1}^{t^\ast_2} \ellmode(\ellmode+1) \le|w^{-1}\psi^{[-2]}\ri|^2  \,\d t^\ast \\
    = & \; \half\int_{-\infty}^{\frac{\pi}{2}}\le(w^{-1}\chi\frac{\De_+}{\De_0}\le|\Lb\le(w^{-1}\psi^{[-2]}\ri)\ri|^2 +  \frac{\De_-}{\De_0}\chi \ellmode(\ellmode+1) \le|w^{-1}\psi^{[-2]}\ri|^2 \ri) \,\d r^\ast\bigg|_{t^\ast=t^\ast_1} \\
    & +  \frac{3}{2}  \int_{t^\ast_1}^{t^\ast_2}\int_{r=3M}^{\frac{\pi}{2}}  w^{-1}\le(\chi\le(\frac{2}{r}-\frac{6M}{r^2}\ri)-\frac{1}{3}\chi'\ri)\le|\Lb(w^{-1}\psi^{[-2]})\ri|^2 \,\d t^\ast \d r^\ast \\
    & - \int_{t^\ast_1}^{t^\ast_2}\int_{-\infty}^{\frac{\pi}{2}} \half \chi' \ellmode(\ellmode+1)|w^{-1}\psi^{[-2]}|^2 \,\d t^\ast \d r^\ast \\
    & -\int_{t^\ast_1}^{t^\ast_2}\int_{-\infty}^{\frac{\pi}{2}}\Re\le(\mathfrak{F}w^{-1}\Lb(w^{-1}\psi^{[-2]})^\ast\ri)\,\d t^\ast\d r^\ast.
  \end{aligned}
\end{align}
Using the transport estimates~\eqref{est:prelimtranspsi-2w} and~\eqref{est:prelimtransaltbis-2w} to estimate the lower order terms composing $\mathfrak{F}$, we have
\begin{align}\label{est:energyidpsi-1int2}
  \begin{aligned}
    & \le|\int_{t^\ast_1}^{t^\ast_2}\int_{-\infty}^{\frac{\pi}{2}}\Re\le(\mathfrak{F}w^{-1}\Lb(w^{-1}\psi^{[-2]})^\ast\ri)\,\d t^\ast\d r^\ast\ri| \\
    \les_{M,k} & \; \de \int_{t^\ast_1}^{t^\ast_2}\int_{-\infty}^{\frac{\pi}{2}}w^{-1}\le|\Lb(w^{-1}\psi^{[-2]})^\ast\ri|^2\,\d t^\ast\d r^\ast + \de^{-1} \int_{t^\ast_1}^{t^\ast_2}\int_{-\infty}^{\frac{\pi}{2}}w^{-1}\le|\mathfrak{F}\ri|^2\,\d t^\ast\d r^\ast \\
    \les_{M,k} & \; \de \int_{t^\ast_1}^{t^\ast_2}\int_{-\infty}^{r=3M}w^{-1}\le|\Lb(w^{-1}\psi^{[-2]})^\ast\ri|^2\,\d t^\ast\d r^\ast + \int_{t^\ast_1}^{t^\ast_2}\int_{r=3M}^{\frac{\pi}{2}}w^{-1}\le|\Lb(w^{-1}\psi^{[-2]})^\ast\ri|^2\,\d t^\ast\d r^\ast\\
    & \quad +\de^{-1} \int_{t^\ast_1}^{t^\ast_2}\int_{-\infty}^{\frac{\pi}{2}}w\le(|w^{-1}\psi^{[-2]}|^2 + |w^{-2}\alt^{[-2]}|^2\ri)\,\d t^\ast\d r^\ast \\
    \les_{M,k} & \; \de \int_{t^\ast_1}^{t^\ast_2}\int_{-\infty}^{r=3M}w^{-1}\le|\Lb(w^{-1}\psi^{[-2]})^\ast\ri|^2\,\d t^\ast\d r^\ast + \int_{t^\ast_1}^{t^\ast_2}\int_{r=3M}^{\frac{\pi}{2}}w^{-1}\le|\Lb(w^{-1}\psi^{[-2]})^\ast\ri|^2\,\d t^\ast\d r^\ast\\
    & \quad + \norm{w^{-2}\alt^{[-2]}}_{L^2(\Si_{t^\ast_1})}^2 + \norm{w^{-1}\psi^{[-2]}}_{L^2(\Si_{t^\ast_1})}^2 + \int_{t^\ast_1}^{t^\ast_2}\norm{\Psi^{[-2]}}^2_{L^2(\Si_{t^\ast})}\,\d t^\ast,
  \end{aligned}
\end{align}
where $\de=\de(M,k)>0$. We have
\begin{align}\label{est:energyidpsi-1int1}
  \begin{aligned}
    & \int_{t^\ast_1}^{t^\ast_2}\int_{r=3M}^{\frac{\pi}{2}}  w^{-1}\le|\Lb(w^{-1}\psi^{[-2]})\ri|^2 \,\d t^\ast \d r^\ast \\
    & +  \frac{3}{2}  \int_{t^\ast_1}^{t^\ast_2}\int_{r=3M}^{\frac{\pi}{2}}  w^{-1}\le(\chi\le(\frac{2}{r}-\frac{6M}{r^2}\ri)-\frac{1}{3}\chi'\ri)\le|\Lb(w^{-1}\psi^{[-2]})\ri|^2 \,\d t^\ast \d r^\ast \\
    \les_{M,k} & \; \int_{t^\ast_1}^{t^\ast_2}\int_{-\infty}^{\frac{\pi}{2}} w \le(|\Lb\psi^{[-2]}|^2 + |w^{-1}\psi^{[-2]}|^2 \ri)\,\d t^\ast\d r^\ast.
  \end{aligned}
\end{align}
Plugging~\eqref{est:energyidpsi-1int2} and~\eqref{est:energyidpsi-1int1} in~\eqref{eq:energyidpsi-1int}, taking $\de$ sufficiently small to absorb the first term on the RHS of~\eqref{est:energyidpsi-1int2} into the LHS of~\eqref{eq:energyidpsi-1int}, we get
\begin{align}\label{est:energyidpsi-1int}
  \begin{aligned}
    &  \int_{-\infty}^{\frac{\pi}{2}}\le(w^{-1}\chi\frac{\De_+}{\De_0}\le|\Lb\le(w^{-1}\psi^{[-2]}\ri)\ri|^2 +  \frac{\De_-}{\De_0} \chi\ellmode(\ellmode+1) \le|w^{-1}\psi^{[-2]}\ri|^2 \ri) \,\d r^\ast\bigg|_{t^\ast=t^\ast_2} \\
    & + \frac{3}{2} \int_{t^\ast_1}^{t^\ast_2}\int_{-\infty}^{r=3M}  w^{-1}\le(\frac{6M}{r^2}-\frac{2}{r}\ri)\le|\Lb(w^{-1}\psi^{[-2]})\ri|^2 \,\d t^\ast \d r^\ast \\
    & + \lim_{r\to r_+}\int_{t^\ast_1}^{t^\ast_2} \ellmode(\ellmode+1) \le|w^{-1}\psi^{[-2]}\ri|^2  \,\d t^\ast \\
    \les_{M,k} & \; \int_{-\infty}^{\frac{\pi}{2}}\le(w^{-1}\chi\frac{\De_+}{\De_0}\le|\Lb\le(w^{-1}\psi^{[-2]}\ri)\ri|^2 +  \frac{\De_-}{\De_0} \chi\ellmode(\ellmode+1) \le|w^{-1}\psi^{[-2]}\ri|^2 \ri) \,\d r^\ast\bigg|_{t^\ast=t^\ast_1} \\
    & + \norm{w^{-2}\alt^{[-2]}}_{L^2(\Si_{t^\ast_1})}^2 + \norm{w^{-1}\psi^{[-2]}}_{L^2(\Si_{t^\ast_1})}^2 + \int_{t^\ast_1}^{t^\ast_2}\norm{\Psi^{[-2]}}^2_{L^2(\Si_{t^\ast})}\,\d t^\ast \\
    & + \int_{t^\ast_1}^{t^\ast_2}\int_{-\infty}^{\frac{\pi}{2}} w \le(|\Lb\psi^{[-2]}|^2 + |w^{-1}\psi^{[-2]}|^2 \ri)\,\d t^\ast\d r^\ast.
  \end{aligned}
\end{align}
from which (\ref{muffo}) follows.
\end{proof}

\subsection{Completing the proof of Proposition \ref{prop:TeukfromRW}} \label{sec:fipo}
We now have all the estimates in place to complete the proof of Proposition \ref{prop:TeukfromRW}. We first collect the integral estimates for $\psi^{[+2]}$ and $w^{-1}\psi^{[-2]}$ in Section \ref{sec:conclusonpsip2w-1psim2} and then the ones for $\al^{[+2]}$ and $w^{-2}\al^{[-2]}$ in Section \ref{sec:estaltpm2proofpropTeukfromRW}. We finally combine the two in Section \ref{sec:finishit} concluding the proof.

\subsubsection{Final estimates for $\psi^{[+2]}$ and $w^{-1}\psi^{[-2]}$} \label{sec:conclusonpsip2w-1psim2}
We first record that  \eqref{eq:Chandra} directly gives us
\begin{align}\label{est:Lbpsip2Lpsim2easy}
  \begin{aligned}
    \int_{-\infty}^{\frac{\pi}{2}} w^{-1}\le|\Lb\psi^{[+2]}\ri|^2 \,\d r^\ast & \les_{M,k} \norm{\Psi^{[+2]}}^2_{L^2(\Si_{t^\ast})},
  \end{aligned}
\end{align}
and
\begin{align}\label{est:Lbpsip2Lpsim2easybis}
  \begin{aligned}
    \int_{-\infty}^{\frac{\pi}{2}} w\le|L\le(w^{-1}\psi^{[-2]}\ri)\ri|^2 \,\d r^\ast & \les_{M,k} \norm{\Psi^{[-2]}}^2_{L^2(\Si_{t^\ast})} + \norm{w^{-1}\psi^{[-2]}}_{L^2(\Si_{t^\ast})}^2.
  \end{aligned}
\end{align}
We leave to the reader to check that
\begin{align*}
  \begin{aligned}
    w|L\psi|^2 + w^{-1}|\Lb\psi|^2 & \simeq_{M,k} w|\pr_{t^\ast}\psi|^2 + w^{-1}|\pr_{r^\ast}\psi|^2,
  \end{aligned}
\end{align*}
from which we infer that 
\begin{align}\label{est:compoverlinemathrmELLb}
  \begin{aligned}
    \overline{\mathrm{E}}[\psi^{[+2]}] & \simeq_{M,k} \int_{-\infty}^{\frac{\pi}{2}} \le(w|L\psi^{[+2]}|^2 + w^{-1}|\Lb\psi^{[+2]}|^2 + \ellmode^2|\psi^{[+2]}|^2 \ri)\,\d r^\ast \\
    \overline{\mathrm{E}}[w^{-1}\psi^{[-2]}] & \simeq_{M,k} \int_{-\infty}^{\frac{\pi}{2}} \le(w|L(w^{-1}\psi^{[-2]})|^2 + w^{-1}|\Lb(w^{-1}\psi^{[-2]})|^2 + \ellmode^2|w^{-1}\psi^{[-2]}|^2 \ri)\,\d r^\ast.
  \end{aligned}
\end{align}
Combining now the estimates \eqref{est:Lbpsip2Lpsim2easy},~\eqref{est:Lbpsip2Lpsim2easybis}, 
the transport estimate~\eqref{est:prelimtranspsi-2w}, the energy estimate~\eqref{est:psiintv1}, the spacetime elliptic estimates~\eqref{est:spacetimeell2psipm2} for $\psi^{[+2]}$ and $w^{-1}\psi^{[-2]}$, the estimates~\eqref{est:spacetimeLpsip2Lbpsim2} for $L\psi^{[+2]}$ and $\Lb\psi^{[-2]}$~, the estimates~\eqref{est:energyidpsi-1int} for $\Lb(w^{-1}\psi^{[-2]})$, using~\eqref{est:compoverlinemathrmELLb}, we obtain
\begin{align}\label{est:spacetimepsipm2combined}
  \begin{aligned}
    & \quad \overline{\mathrm{E}}[\psi^{[+2]}](t^\ast_2) + \int_{t^\ast_1}^{t^\ast_2}\overline{\mathrm{E}}[\psi^{[+2]}](t^\ast)\,\d t^\ast \\
    & +\overline{\mathrm{E}}[w^{-1}\psi^{[-2]}](t^\ast_2) + \int_{t^\ast_1}^{t^\ast_2}\overline{\mathrm{E}}[w^{-1}\psi^{[-2]}](t^\ast)\,\d t^\ast \\
    & + \overline{\mathrm{E}}^\HH[\psi^{[+2]}](t^\ast_2;t^\ast_1) + \overline{\mathrm{E}}^\HH[w^{-1}\psi^{[-2]}](t^\ast_2;t^\ast_1) \\
    \les_{M,k} & \; \overline{\mathrm{E}}[\psi^{[+2]}](t^\ast_1) + \overline{\mathrm{E}}[w^{-1}\psi^{[-2]}](t^\ast_1) \\
    & + \int_{t^\ast_1}^{t^\ast_2}\int_{-\infty}^{\frac{\pi}{2}} \le(w |L\psi^{[+2]}||\Psi^{[+2]}| + w |\Lb\psi^{[-2]}||\Psi^{[-2]}|\ri) \,\d t^\ast\d r^\ast\\
    & + \int_{t^\ast_1}^{t^\ast_2}\int_{-\infty}^{\frac{\pi}{2}} \le(w |L\psi^{[+2]}||\alt^{[+2]}| +w|\Lb\psi^{[-2]}||\alt^{[-2]}|\ri) \,\d t^\ast\d r^\ast \\
    & + \int_{t^\ast_1}^{t^\ast_2}\int_{-\infty}^{\frac{\pi}{2}} w |\Psi^{[+2]}||L(\psi^{[+2]})|\,\d t^\ast\d r^\ast + \int_{t^\ast_1}^{t^\ast_2}\int_{-\infty}^{\frac{\pi}{2}} |\Psi^{[-2]}||\Lb(w^{-1}\psi^{[-2]})| \,\d t^\ast\d r^\ast \\
    & + \lim_{r\to r_+} \int_{t^\ast_1}^{t^\ast_2}|\Psi^{[-2]}||w^{-1}\psi^{[-2]}|\,\d t^\ast \\
    & + \norm{\alt^{[+2]}}_{L^2(\Si_{t^\ast_1})}^2 +\norm{\psi^{[+2]}}^2_{L^2(\Si_{t^\ast_1})} + \norm{w^{-2}\alt^{[-2]}}_{L^2(\Si_{t^\ast_1})}^2 +\norm{w^{-1}\psi^{[-2]}}^2_{L^2(\Si_{t^\ast_1})} \\
    & +\norm{\Psi^{[\pm2]}}^2_{L^2(\Si_{t^\ast_1})} +\norm{\Psi^{[\pm2]}}^2_{L^2(\Si_{t^\ast_2})} + \int_{t^\ast_1}^{t^\ast_2} \norm{\Psi^{[\pm2]}}_{L^2(\Si_{t^\ast})}^2 \,\d t^\ast.
  \end{aligned}
\end{align}
By absorption on the LHS of~\eqref{est:spacetimepsipm2combined}, using the transport estimates~\eqref{est:prelimtransaltbis-2w}, \eqref{est:prelimtransalt+2bisnow} to control the $\alt$ terms, we deduce
\begin{align}\label{est:spacetimepsipm2combinedfinal}
  \begin{aligned}
    & \quad \overline{\mathrm{E}}[\psi^{[+2]}](t^\ast_2) + \int_{t^\ast_1}^{t^\ast_2}\overline{\mathrm{E}}[\psi^{[+2]}](t^\ast)\,\d t^\ast \\
    & +\overline{\mathrm{E}}[w^{-1}\psi^{[-2]}](t^\ast_2) + \int_{t^\ast_1}^{t^\ast_2}\overline{\mathrm{E}}[w^{-1}\psi^{[-2]}](t^\ast)\,\d t^\ast \\
    & + \overline{\mathrm{E}}^\HH[\psi^{[+2]}](t^\ast_2;t^\ast_1) + \overline{\mathrm{E}}^\HH[w^{-1}\psi^{[-2]}](t^\ast_2;t^\ast_1) \\
    \les_{M,k} & \; \overline{\mathrm{E}}[\psi^{[+2]}](t^\ast_1) + \overline{\mathrm{E}}[w^{-1}\psi^{[-2]}](t^\ast_1) +\norm{\alt^{[+2]}}_{L^2(\Si_{t^\ast_1})}^2 + \norm{w^{-2}\alt^{[-2]}}_{L^2(\Si_{t^\ast_1})}^2 \\
    & + \lim_{r\to r_+}\int_{t^\ast_1}^{t^\ast_2}|\Psi^{[-2]}|^2\,\d t^\ast +\norm{\Psi^{[\pm2]}}^2_{L^2(\Si_{t^\ast_1})} +\norm{\Psi^{[\pm2]}}^2_{L^2(\Si_{t^\ast_2})} + \int_{t^\ast_1}^{t^\ast_2} \norm{\Psi^{[\pm2]}}_{L^2(\Si_{t^\ast})}^2 \,\d t^\ast.
  \end{aligned}
\end{align}

\subsubsection{Final estimates for $\alt^{[+2]}$ and $w^{-2}\alt^{[-2]}$}\label{sec:estaltpm2proofpropTeukfromRW}
Arguing along the same lines as in the previous sections but with $\psi^{[+2]}$ replaced by $\alt^{[+2]}$, $\Psi^{[+2]}$ replaced by $\psi^{[+2]}$, $w^{-1}\psi^{[-2]}$ replaced by $w^{-2}\alt^{[-2]}$ and $\Psi^{[-2]}$ replaced by $w^{-1}\psi^{[-2]}$, we get 

\begin{align}\label{est:spacetimealtpm2combinedfinal}
  \begin{aligned}
    & \quad \overline{\mathrm{E}}[\alt^{[+2]}](t^\ast_2) + \int_{t^\ast_1}^{t^\ast_2}\overline{\mathrm{E}}[\alt^{[+2]}](t^\ast)\,\d t^\ast \\
    & +\overline{\mathrm{E}}[w^{-2}\alt^{[-2]}](t^\ast_2) + \int_{t^\ast_1}^{t^\ast_2}\overline{\mathrm{E}}[w^{-2}\alt^{[-2]}](t^\ast)\,\d t^\ast \\
    & + \overline{\mathrm{E}}^\HH[\alt^{[+2]}](t^\ast_2;t^\ast_1) + \overline{\mathrm{E}}^\HH[w^{-2}\alt^{[-2]}](t^\ast_2;t^\ast_1) \\
    \les_{M,k} & \; \overline{\mathrm{E}}[\alt^{[+2]}](t^\ast_1) + \overline{\mathrm{E}}[w^{-2}\alt^{[-2]}](t^\ast_1) + \lim_{r\to r_+}\int_{t^\ast_1}^{t^\ast_2}|w^{-1}\psi^{[-2]}|^2\,\d t^\ast \\
    & +\norm{\psi^{[+2]}}^2_{L^2(\Si_{t^\ast_2})} +\norm{\psi^{[+2]}}^2_{L^2(\Si_{t^\ast_1})}  + \int_{t^\ast_1}^{t^\ast_2} \norm{\psi^{[+2]}}_{L^2(\Si_{t^\ast})}^2 \,\d t^\ast \\
    & +\norm{w^{-1}\psi^{[-2]}}^2_{L^2(\Si_{t^\ast_1})} +\norm{w^{-1}\psi^{[-2]}}^2_{L^2(\Si_{t^\ast_2})} + \int_{t^\ast_1}^{t^\ast_2} \norm{w^{-1}\psi^{[-2]}}_{L^2(\Si_{t^\ast})}^2 \,\d t^\ast.
  \end{aligned}
\end{align}
Using that the Teukolsky problem (see Definition~\eqref{def:sysTeuk}) commutes with $\pr_{t^\ast}$ and $\LL^{1/2}$ angular derivatives (multiplication by $\ellmode$ in the angular projected setting), the estimate~\eqref{est:spacetimealtpm2combinedfinal} also holds with the quantities multiplies by $\pr_{t^\ast}$ and $\LL^{1/2}$, and combining this with the fact that $\mathrm{E}^{\mathfrak{T},2}[\alt] = \mathrm{E}^{\mathfrak{T}}[\pr_t\alt] + \mathrm{E}^{\mathfrak{T}}[\LL^{1/2}\alt]$, we obtain
\begin{align}\label{est:secondorderalt}
  \begin{aligned}
    & \quad \mathrm{E}^{\mathfrak{T},2}[\alt](t^\ast_2) + \int_{t^\ast_1}^{t^\ast_2}\mathrm{E}^{\mathfrak{T},2}[\alt](t^\ast)\,\d t^\ast \\
    & + \overline{\mathrm{E}}^\HH[\pr_{t^\ast}\alt^{[+2]}](t^\ast_2;t^\ast_1) + \overline{\mathrm{E}}^\HH[\LL^{1/2}\alt^{[+2]}](t^\ast_2;t^\ast_1) \\
    & + \overline{\mathrm{E}}^\HH[w^{-2}\pr_{t^\ast}\alt^{[-2]}](t^\ast_2;t^\ast_1) + \overline{\mathrm{E}}^\HH[w^{-2}\LL^{1/2}\pr_{t^\ast}\alt^{[-2]}](t^\ast_2;t^\ast_1) \\
    \les_{M,k} & \mathrm{E}^{\mathfrak{T},2}[\alt](t^\ast_1) + \underbrace{\lim_{r\to r_+}\int_{t^\ast_1}^{t^\ast_2}|w^{-1}\pr_{t^\ast}\psi^{[-2]}|^2\,\d t^\ast}_{\simeq_{M,k} \widetilde{\mathrm{E}}^\HH[w^{-1}\psi^{[-2]}](t^\ast_2;t^\ast_1)} + \underbrace{\lim_{r\to r_+}\int_{t^\ast_1}^{t^\ast_2}|w^{-1}\LL^{1/2}\psi^{[-2]}|^2\,\d t^\ast}_{\simeq_{M,k} \lim_{r\to r_+}\int_{t^\ast_1}^{t^\ast_2}\ellmode(\ellmode+1)|w^{-1}\psi^{[-2]}|^2\,\d t^\ast}\\
    & + \overline{\mathrm{E}}[\psi^{[+2]}](t^\ast_2) + \overline{\mathrm{E}}[\psi^{[+2]}](t^\ast_1)  + \int_{t^\ast_1}^{t^\ast_2} \overline{\mathrm{E}}[\psi^{[+2]}](t^\ast) \,\d t^\ast \\
    & + \overline{\mathrm{E}}[w^{-1}\psi^{[-2]}](t^\ast_2) + \overline{\mathrm{E}}[w^{-1}\psi^{[-2]}](t^\ast_1) + \int_{t^\ast_1}^{t^\ast_2} \overline{\mathrm{E}}[w^{-1}\psi^{[-2]}](t^\ast) \,\d t^\ast.
  \end{aligned}
\end{align}

\subsubsection{Combining the estimates} \label{sec:finishit}
Combining~\eqref{est:secondorderalt} with~\eqref{est:spacetimepsipm2combinedfinal}, we get
\begin{align}\label{est:secondorderaltpsifinal}
  \begin{aligned}
    & \quad \mathrm{E}^{\mathfrak{T},2}[\alt](t^\ast_2) + \int_{t^\ast_1}^{t^\ast_2}\mathrm{E}^{\mathfrak{T},2}[\alt](t^\ast)\,\d t^\ast \\
    & + \overline{\mathrm{E}}^\HH[\pr_{t^\ast}\alt^{[+2]}](t^\ast_2;t^\ast_1) + \overline{\mathrm{E}}^\HH[\LL^{1/2}\alt^{[+2]}](t^\ast_2;t^\ast_1) \\
    & + \overline{\mathrm{E}}^\HH[w^{-2}\pr_{t^\ast}\alt^{[-2]}](t^\ast_2;t^\ast_1) + \overline{\mathrm{E}}^\HH[w^{-2}\LL^{1/2}\pr_{t^\ast}\alt^{[-2]}](t^\ast_2;t^\ast_1) \\
    & + \overline{\mathrm{E}}[\psi^{[+2]}](t^\ast_2) + \int_{t^\ast_1}^{t^\ast_2}\overline{\mathrm{E}}[\psi^{[+2]}](t^\ast)\,\d t^\ast \\
    & +\overline{\mathrm{E}}[w^{-1}\psi^{[-2]}](t^\ast_2) + \int_{t^\ast_1}^{t^\ast_2}\overline{\mathrm{E}}[w^{-1}\psi^{[-2]}](t^\ast)\,\d t^\ast \\
    & + \overline{\mathrm{E}}^\HH[\psi^{[+2]}](t^\ast_2;t^\ast_1) + \overline{\mathrm{E}}^\HH[w^{-1}\psi^{[-2]}](t^\ast_2;t^\ast_1) \\
    \les_{M,k} & \;  \mathrm{E}^{\mathfrak{T},2}[\alt](t^\ast_1) + \overline{\mathrm{E}}[\psi^{[+2]}](t^\ast_1) + \overline{\mathrm{E}}[w^{-1}\psi^{[-2]}](t^\ast_1) + \lim_{r\to r_+}\int_{t^\ast_1}^{t^\ast_2}|\Psi^{[-2]}|^2\,\d t^\ast \\
    & +\norm{\Psi^{[\pm2]}}^2_{L^2(\Si_{t^\ast_1})} +\norm{\Psi^{[\pm2]}}^2_{L^2(\Si_{t^\ast_2})} + \int_{t^\ast_1}^{t^\ast_2} \norm{\Psi^{[\pm2]}}_{L^2(\Si_{t^\ast})}^2 \,\d t^\ast.
  \end{aligned}
\end{align}
Using that the Teukolsky equation commutes with $\pr_{t^\ast}$ and $\LL^{1/2}$ derivatives, the desired estimate~\eqref{est:TeukfromRW} follows from~\eqref{est:secondorderaltpsifinal} by further commutation. This finishes the proof of Proposition~\ref{prop:TeukfromRW}.

\subsection{Proof of Theorem~\ref{thm:Teukboundednessdecay}}\label{sec:proofthmTeukbdddecay}
Relations~\eqref{eq:defPsiDR} rewrite as
\begin{align}\label{eq:defPsiDRinv}
  \begin{aligned}
    \Psi^{[+2]} & = +\half\Psi^D+\half\le(\Psi^R-12M\LL^{-1}(\LL-2)^{-1} \pr_t\Psi^D\ri),\\
    \big(\Psi^{[-2]}\big)^\ast & = -\half\Psi^D+\half\le(\Psi^R-12M\LL^{-1}(\LL-2)^{-1}\pr_t\Psi^D\ri).
  \end{aligned}
\end{align}
Using~\eqref{eq:defPsiDR} and~\eqref{eq:defPsiDRinv}, we have
\begin{align}\label{est:compPsipm2PsiDRv1}
  \begin{aligned}
    & \mathrm{E}^{\mathfrak{R}}[\Psi^{[\pm2]}] + \mathrm{E}^{\mathfrak{R}}\le[\LL^{-1}(\LL-2)^{-1}\pr_{t^\ast}\le(\Psi^{[+2]}-(\Psi^{[-2]})^\ast\ri)\ri] \\
    \simeq_{M,k} & \; \mathrm{E}^{\mathfrak{R}}[\Psi^{D}] + \mathrm{E}^{\mathfrak{R}}\le[\Psi^{R}\ri] + \mathrm{E}^{\mathfrak{R}}\le[\LL^{-1}(\LL-2)^{-1}\pr_{t^\ast}\Psi^{D}\ri],
  \end{aligned}
\end{align}
with analogous relations for $\overline{\mathrm{E}}^\HH$. Hence, since $\Psi^D,\Psi^{R},\LL^{-1}(\LL-2)^{-1}\pr_{t^\ast}\Psi^{D}$ are solutions to the Dirichlet or Robin Regge-Wheeler problem~\eqref{sys:RWdeco}, the boundedness and decay result of Theorems~\ref{thm:mainRW1a} and~\ref{thm:mainRW1b} for $\Psi^D,\Psi^{R},\LL^{-1}(\LL-2)^{-1}\pr_{t^\ast}\Psi^{D}$ gives
\begin{align}\label{est:boundednessPsipm2}
  \begin{aligned}
    & \mathrm{E}^{\mathfrak{R}}[\Psi^{[\pm2]}](t^\ast_2) + \mathrm{E}^{\mathfrak{R}}\le[\LL^{-1}(\LL-2)^{-1}\pr_{t^\ast}\Psi^D\ri](t^\ast_2) \\
    \les_{M,k} & \; \mathrm{E}^{\mathfrak{R}}[\Psi^{[\pm2]}](t^\ast_1) +\mathrm{E}^{\mathfrak{R}}\le[\LL^{-1}(\LL-2)^{-1}\pr_{t^\ast}\Psi^D\ri](t^\ast_1),
  \end{aligned}
\end{align}
and
\begin{align}\label{est:fullestimatesPsipm2}
  \begin{aligned}
    & \quad \mathrm{E}^{\mathfrak{R}}_{m\ellmode}[\Psi^{[\pm2]}](t^\ast_2) + \int_{t^\ast_1}^{t^\ast_2} \mathrm{E}^{\mathfrak{R}}_{m\ellmode}[\Psi^{[\pm2]}](t^\ast)\,\d t^\ast + \overline{\mathrm{E}}^\HH_{m\ellmode}[\Psi^{[\pm2]}](t^\ast_2;t^\ast_1) + \overline{\mathrm{E}}^\II_{m\ellmode}[\Psi^{[\pm2]}](t^\ast_2;t^\ast_1) \\
    & + \mathrm{E}^{\mathfrak{R}}_{m\ellmode}\le[\LL^{-1}(\LL-2)^{-1}\pr_{t^\ast}\Psi^D\ri](t^\ast_2) + \int_{t^\ast_1}^{t^\ast_2} \mathrm{E}^{\mathfrak{R}}_{m\ellmode}\le[\LL^{-1}(\LL-2)^{-1}\pr_{t^\ast}\Psi^D\ri](t^\ast)\,\d t^\ast  \\
    & + \overline{\mathrm{E}}^\HH_{m\ellmode}\le[\LL^{-1}(\LL-2)^{-1}\pr_{t^\ast}\Psi^D\ri](t^\ast_2;t^\ast_1) + \overline{\mathrm{E}}^\II_{m\ellmode}\le[\LL^{-1}(\LL-2)^{-1}\pr_{t^\ast}\Psi^D\ri](t^\ast_2;t^\ast_1) \\
    \leq & \; e^{C\ellmode^\mathfrak{p}}\bigg(\mathrm{E}^{\mathfrak{R}}_{m\ellmode}[\Psi^{[\pm2]}](t^\ast_1) + \mathrm{E}^{\mathfrak{R}}_{m\ellmode}\le[\LL^{-1}(\LL-2)^{-1}\pr_{t^\ast}\Psi^D\ri](t^\ast_1)\bigg),
  \end{aligned}
\end{align}
for all $t^\ast_2\geq t^\ast_1$. Combining~\eqref{est:TeukfromRW} and the boundedness of $\Psi^{[\pm2]},\LL^{-1}(\LL-2)^{-1}\pr_t\Psi^D$ of~\eqref{est:boundednessPsipm2}, and using the redshift estimate~\eqref{est:redshiftmain} to estimate the flux $\overline{\mathrm{E}}^{\HH,n-2}[\Psi^{[\pm2]}](t^\ast_2;t^\ast_1)$, we have
\begin{align*}
  \begin{aligned}
    & \quad \mathrm{E}^{\mathfrak{T},\mathfrak{R},n}[\alt](t^\ast_2) + \int_{t^\ast_1}^{t^\ast_2} \mathrm{E}^{\mathfrak{T},\mathfrak{R},n}[\alt](t^\ast)\,\d t^\ast \\
    \les_{M,k,n} & \; \mathrm{E}^{\mathfrak{T},\mathfrak{R},n}[\alt](t^\ast_1) + \int_{t^\ast_1}^{t^\ast_2}\le(\mathrm{E}^{\mathfrak{R},n-2}[\Psi^{[\pm2]}](t^\ast)+\mathrm{E}^{\mathfrak{R},n-2}\le[\LL^{-1}(\LL-2)^{-1}\pr_{t^\ast}\Psi^D\ri](t^\ast)\ri)\,\d t^{\ast} \\
    \les_{M,k,n} & \; \mathrm{E}^{\mathfrak{T},\mathfrak{R},n}[\alt](t^\ast_1) + \le(t^\ast_2-t^\ast_1\ri)\le(\mathrm{E}^{\mathfrak{R},n-2}[\Psi^{[\pm2]}](t^\ast_1)+\mathrm{E}^{\mathfrak{R},n-2}\le[\LL^{-1}(\LL-2)^{-1}\pr_{t^\ast}\Psi^D\ri](t^\ast_1)\ri),
  \end{aligned}
\end{align*}
for all $t^\ast_2\geq t^\ast_1$. Estimate~\eqref{est:boundednessalt} then follows from a classical pigeonhole argument. The integral decay estimate~\eqref{est:Teukfullestimates} follows directly from a combination of~\eqref{est:TeukfromRW} with~\eqref{est:fullestimatesPsipm2}. This finishes the proof of Theorem~\ref{thm:Teukboundednessdecay}.

\subsection{Proof of Theorem~\ref{thm:main1}}\label{sec:proofexpodecay}
The proof of Theorem~\ref{thm:main1} follows from the estimates of Theorem~\ref{thm:Teukboundednessdecay} for $\mathrm{E}^{\mathfrak{T},\mathfrak{R},n}_{m\ellmode}[\alt]$, taking $e_{m\ellmode}=\mathrm{E}^{\mathfrak{T},\mathfrak{R},n}_{m\ellmode}[\alt]$ and $\mathfrak{p}=1$ in the following general calculus lemma.
\begin{lemma}\label{lem:interpolation}
  Let $\le(e_{m\ellmode}(t^\ast)\ri)_{\ellmode\geq2,|m|\leq\ellmode}$ be a sequence of non-negative real functions such that $\sum_{m\ellmode} \ellmode^2e_{m\ellmode}(t^\ast)<\infty$. Assume that there exists $\mathfrak{p}>0$ and a constant $C>0$ such that 
  \begin{align}
    \label{est:Einebound}
    e_{m\ellmode}(t^\ast_2) & \leq Ce_{m\ellmode}(t^\ast_1), 
  \end{align}
  and
  \begin{align}\label{est:Eineqdiff}
    e_{m\ellmode}(t^\ast_2) + \int_{t^\ast_1}^{t^\ast_2} e_{m\ellmode}(t^\ast)\,\d t^\ast & \leq e^{C\ellmode^\mathfrak{p}} e_{m\ellmode}(t^\ast_1),
  \end{align}
  for all $t^\ast_2\geq t^\ast_1$. Then, for all $\ellmode\geq2, |m|\leq\ellmode$,
  \begin{align}\label{est:expodecayaltlemma}
    e_{m\ellmode}(t^\ast) & \leq 10C \, \exp\le(e^{-C\ellmode^\mathfrak{p}}(t^\ast_0-t^\ast)\ri)e_{m\ellmode}(t^\ast_0),
  \end{align}
  for all $t^\ast\geq t^\ast_0$. Moreover, 
  \begin{align}\label{est:logdecayaltlemma}
    \sum_{\ellmode\geq 2}\sum_{|m|\leq\ellmode} e_{m\ellmode}(t^\ast) & \les_{C,\mathfrak{p}} \le(\log(t^\ast-t^\ast_0)\ri)^{-\frac{2}{\mathfrak{p}}}\le(\sum_{\ellmode\geq 2}\sum_{|m|\leq\ellmode} \ellmode^2e_{m\ellmode}(t^\ast_0)\ri).
  \end{align}
\end{lemma}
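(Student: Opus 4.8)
The plan is to prove \eqref{est:expodecayaltlemma} first for each fixed mode by a standard ODE/pigeonhole argument, and then deduce the inverse logarithmic decay \eqref{est:logdecayaltlemma} by optimising a split of the sum into low and high angular frequencies. For a fixed mode $(m,\ellmode)$, write $T_{\ellmode} := e^{C\ellmode^{\mathfrak p}}$. First I would observe that, by \eqref{est:Eineqdiff} applied on $[t^\ast_1, t^\ast_1 + 2T_{\ellmode}]$, the time-integral of $e_{m\ellmode}$ over that interval is at most $T_{\ellmode} e_{m\ellmode}(t^\ast_1)$, so by the mean value theorem there is some $t^{\ast}_\sharp \in [t^\ast_1, t^\ast_1 + 2T_{\ellmode}]$ with $e_{m\ellmode}(t^{\ast}_\sharp) \leq \tfrac12 e_{m\ellmode}(t^\ast_1)$. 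Combining this with the boundedness \eqref{est:Einebound} (to go back from $t^\ast_\sharp$ to any intermediate time, losing a factor $C$), one gets that over every window of length $2T_{\ellmode}$ the energy drops by a factor $\tfrac12$, i.e. $e_{m\ellmode}(t^\ast_1 + 2T_{\ellmode}) \leq \tfrac12 e_{m\ellmode}(t^\ast_1)$. Iterating over $\lfloor (t^\ast - t^\ast_0)/(2T_\ellmode) \rfloor$ such windows and using \eqref{est:Einebound} once more to absorb the leftover partial window yields
\begin{align*}
  e_{m\ellmode}(t^\ast) \leq C \, 2^{-\lfloor (t^\ast-t^\ast_0)/(2T_\ellmode)\rfloor} e_{m\ellmode}(t^\ast_0) \leq 10 C \, \exp\le(-c\, e^{-C\ellmode^{\mathfrak p}}(t^\ast - t^\ast_0)\ri) e_{m\ellmode}(t^\ast_0)
\end{align*}
for an absolute constant $c>0$; after renaming the constant $C$ in the exponent this is exactly \eqref{est:expodecayaltlemma}.

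For \eqref{est:logdecayaltlemma}, fix $t^\ast > t^\ast_0$ and set $\tau := t^\ast - t^\ast_0$; choose a cutoff frequency $L = L(\tau)$ to be optimised. For the high modes $\ellmode > L$ I would simply use the boundedness \eqref{est:Einebound}, which gives
\begin{align*}
  \sum_{\ellmode > L} \sum_{|m|\leq\ellmode} e_{m\ellmode}(t^\ast) \leq C \sum_{\ellmode > L}\sum_{|m|\leq\ellmode} e_{m\ellmode}(t^\ast_0) \leq \frac{C}{L^2} \sum_{\ellmode\geq2}\sum_{|m|\leq\ellmode} \ellmode^2 e_{m\ellmode}(t^\ast_0).
\end{align*}
For the low modes $2 \leq \ellmode \leq L$, \eqref{est:expodecayaltlemma} gives a bound by $10C\exp(-e^{-CL^{\mathfrak p}}\tau)$ times the initial energy (monotonicity of $\ellmode \mapsto e^{-C\ellmode^{\mathfrak p}}$ in the exponent makes $\ellmode = L$ the worst case). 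To balance the two contributions one wants $\exp(-e^{-CL^{\mathfrak p}}\tau)$ comparable to $L^{-2}$, i.e. $e^{-CL^{\mathfrak p}}\tau \sim \log(L^2)$; the natural choice is $L^{\mathfrak p} = \tfrac{1}{2C}\log\tau$, i.e. $L = \big(\tfrac{1}{2C}\log\tau\big)^{1/\mathfrak p}$, which makes $e^{-CL^{\mathfrak p}}\tau = \tau^{1/2} \to \infty$ so the low-mode term decays faster than any power and is harmless, while the high-mode term is $\lesssim L^{-2} \sim (\log\tau)^{-2/\mathfrak p}$. Summing the two pieces gives \eqref{est:logdecayaltlemma} with the implied constant depending only on $C$ and $\mathfrak p$, valid for $\tau$ large (the small-$\tau$ regime being trivial after adjusting the constant).

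The only mildly delicate point is bookkeeping in the pigeonhole step — making sure the factors of $C$ picked up from \eqref{est:Einebound} when passing from the good slice $t^\ast_\sharp$ back to an arbitrary time do not accumulate over the iteration. This is handled by noting that the factor-$\tfrac12$ decay per window of length $2T_\ellmode$ is a clean multiplicative statement between the endpoints $t^\ast_1 + 2jT_\ellmode$, so the single extra factor $C$ is incurred only once, at the very end, when we drop down to the final (incomplete) window; this is why the constant $10C$ (rather than something growing in $\tau$ or $\ellmode$) appears in \eqref{est:expodecayaltlemma}. Everything else is elementary calculus and the geometric-series/optimisation computation sketched above.
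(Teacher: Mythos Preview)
Your argument for the logarithmic decay \eqref{est:logdecayaltlemma} is correct and matches the paper's proof almost verbatim: split at a cutoff $L$, use boundedness for $\ellmode>L$ and exponential decay for $\ellmode\leq L$, then choose $L^{\mathfrak p}\sim \tfrac{1}{2C}\log\tau$.

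There is, however, a real gap in your derivation of \eqref{est:expodecayaltlemma}. From the mean value theorem you correctly find $t^\ast_\sharp\in[t^\ast_1,t^\ast_1+2T_\ellmode]$ with $e_{m\ellmode}(t^\ast_\sharp)\leq\tfrac12 e_{m\ellmode}(t^\ast_1)$. But to pass from $t^\ast_\sharp$ to the fixed endpoint $t^\ast_1+2T_\ellmode$ you must invoke the boundedness \eqref{est:Einebound}, which costs a factor $C$ and yields only $e_{m\ellmode}(t^\ast_1+2T_\ellmode)\leq \tfrac{C}{2}e_{m\ellmode}(t^\ast_1)$. Your last paragraph asserts that the factor-$\tfrac12$ decay is ``a clean multiplicative statement between the endpoints $t^\ast_1+2jT_\ellmode$'', but this is precisely what has not been shown; iterating the endpoint statement gives $(C/2)^j$, which is worthless for $C\geq 2$. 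The standard repair is to iterate from the \emph{good} times: find $t^\ast_{\sharp,1}\in[t^\ast_0,t^\ast_0+2T_\ellmode]$, then $t^\ast_{\sharp,2}\in[t^\ast_{\sharp,1},t^\ast_{\sharp,1}+2T_\ellmode]$, and so on, giving $e_{m\ellmode}(t^\ast_{\sharp,j})\leq 2^{-j}e_{m\ellmode}(t^\ast_0)$ with $t^\ast_{\sharp,j}\leq t^\ast_0+2jT_\ellmode$; then apply \eqref{est:Einebound} exactly once at the end.

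For comparison, the paper sidesteps this bookkeeping entirely: it first shows that $f(t^\ast):=\exp(e^{-C\ellmode^{\mathfrak p}}t^\ast)\int_{t^\ast}^{+\infty}e_{m\ellmode}$ is non-increasing (directly from \eqref{est:Eineqdiff}), which gives exponential decay of the tail integral; then a single mean-value-theorem step on the fixed partition $t^\ast_L=Le^{C\ellmode^{\mathfrak p}}+t^\ast_0$ followed by one application of \eqref{est:Einebound} yields \eqref{est:expodecayaltlemma} with no iteration.
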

\begin{proof}
  From~\eqref{est:Eineqdiff} we infer in particular that $e_{m\ellmode}$ is integrable in time, and we define
  \begin{align*}
    f(t^\ast) & := \exp\le(e^{-C\ellmode^\mathfrak{p}}t^\ast\ri)\int_{t^\ast}^{+\infty} e_{m\ellmode}(t^{\ast,'}) \,\d t^{\ast,'}.
  \end{align*}
  Using~\eqref{est:Eineqdiff}, we have
  \begin{align*}
    f'(t^\ast)\exp\le(-e^{-C\ellmode^\mathfrak{p}}t^\ast\ri) & = e^{-C\ellmode^\mathfrak{p}}\int_{t^\ast}^{+\infty} e_{m\ellmode}(t^{\ast,'}) \,\d t^{\ast,'} -e_{m\ellmode}(t^{\ast}) \leq 0.
  \end{align*}
  Thus, $f$ decreases and
  \begin{align}\label{est:pfcorCarl2}
    \int_{t^\ast}^{+\infty} e_{m\ellmode}(t^{\ast,'}) \,\d t^{\ast,'} & \leq \exp\le(-e^{-C\ellmode^\mathfrak{p}}(t^\ast-t^\ast_0)\ri) e^{C\ellmode^\mathfrak{p}} e_{m\ellmode}(t^\ast_0).
  \end{align}
  Let us define $t^\ast_L := L e^{C\ellmode^\mathfrak{p}}+t^\ast_0$ for some $L\in\mathbb{N}$. From the mean value theorem applied to (\ref{est:Eineqdiff}), for any $L\in\mathbb{N}$, there exists $\widetilde{t}^\ast_L\in(t^\ast_L,t^\ast_{L+1})$ such that
  \begin{align*}
    e_{m\ellmode}(\widetilde{t}^\ast_L) & = e^{-C\ellmode^\mathfrak{p}} \int_{t^\ast_L}^{t^\ast_{L+1}}e_{m\ellmode}(t^\ast)\,\d t^\ast.
  \end{align*}
  Using~\eqref{est:pfcorCarl2}, we have
  \begin{align*}
    e_{m\ellmode}(\widetilde{t}^\ast_L) & \leq \exp\le(-L\ri)e_{m\ellmode}(t^\ast_0).
  \end{align*}
  Using~\eqref{est:Einebound}, we infer from the above estimate that, for any $L\in\mathbb{N}$ and all $t^\ast\in(t^\ast_{L+1},t^\ast_{L+2})$, we have
  \begin{align*}
    e_{m\ellmode}(t^\ast) & \leq C \exp\le(-L\ri)e_{m\ellmode}(t^\ast_0) = Ce^2 \exp(-(L+2))e_{m\ellmode}(t^\ast_0) \leq Ce^2\exp\le(-e^{-C\ellmode^\mathfrak{p}}(t^\ast-t^\ast_0)\ri)e_{m\ellmode}(t^\ast_0),
  \end{align*}
  which is the desired exponential decay estimate~\eqref{est:expodecayaltlemma}.\\
  
  Summing the exponential decay estimates~\eqref{est:expodecayaltlemma} in $m,\ellmode$, we have for any $\ell_{max}\geq 2$ and any $t^\ast\geq t^\ast_0$
  \begin{align*}
    \sum_{\ellmode\geq 2}\sum_{|m|\leq\ellmode} e_{m\ellmode}(t^\ast) & \leq 10C \sum_{\ellmode\geq 2} \sum_{|m|\leq\ellmode}e_{m\ellmode}(t^\ast_0)\exp\le(e^{-C\ellmode^\mathfrak{p}}(t^\ast_0-t^\ast)\ri) \\
                                                        & \leq 10C \Big(\sum_{2\leq \ellmode \leq \ell_{max}}\sum_{|m|\leq\ellmode}e_{m\ellmode}(t^\ast_0)\Big)\exp\le(e^{-C \ell_{max}^\mathfrak{p}}(t^\ast_0-t^\ast)\ri) + \frac{10C}{\ell_{max}^2} \Big(\sum_{\ellmode\geq \ell_{max}}\sum_{|m|\leq\ellmode}\ellmode^2e_{m\ellmode}(t^\ast_0)\Big) \, .
  \end{align*}
Fixing now $C \ell_{max}^{\mathfrak{p}} = \log(t^\ast-t^\ast_0)/2$, we infer
  \begin{align*}
    \sum_{\ellmode\geq 2}\sum_{|m|\leq\ellmode} e_{m\ellmode}(t^\ast) & \leq  10C \le(\sum_{\ellmode\geq 2}\sum_{|m|\leq\ellmode}\ellmode^2e_{m\ellmode}(t^\ast_0)\ri)\le(\exp\le(-(t^\ast-t^\ast_0)^{1/2}\ri) + (2{C})^{\frac{2}{\mathfrak{p}}}(\log(t^\ast-t^\ast_0))^{-\frac{2}{\mathfrak{p}}}\ri), \\
                                                        & \les_{C,\mathfrak{p}} \le(\sum_{\ellmode\geq 2}\sum_{|m|\leq\ellmode} \ellmode^2e_{m\ellmode}(t^\ast_0)\ri)\le(\log(t^\ast-t^\ast_0)\ri)^{-\frac{2}{\mathfrak{p}}},
  \end{align*}
  which finishes the proof of~\eqref{est:logdecayaltlemma} and of the lemma.
\end{proof}

\appendix

\section{Asymptotics of the potential in the Carleman estimates}\label{sec:Carlemanmultipliers}
In this section we prove the estimates that are used in Section~\ref{sec:Carlemanestimatesfirst}. Recall that $p=e^f$, with $f=\frac{\Ka}{r}$. Using that $f' = -\Ka w$, and that $w' = w \le(-\frac{2}{r}+\frac{6M}{r^2}\ri)$, we have
\begin{align*}
  \begin{aligned}
    (pV)' & = -\le(\frac{2}{r}-\frac{6M}{r^2}\ri)\le(\ellmode(\ellmode+1) -\frac{6M}{r}\ri)w e^f  -\le(\Ka\le(\ellmode(\ellmode+1) -\frac{6M}{r}\ri)-6M\ri)  w^2 e^f.
  \end{aligned}
\end{align*}
and
\begin{align*}
  \begin{aligned}
    -pV' - \half p''' & = w\le(\frac{2}{r}-\frac{6M}{r^2}\ri)\le(\ellmode(\ellmode+1)-\frac{6M}{r}\ri)e^f - 6Mw^2e^f + \half\le(-f''' - 2f''f' -(f')^3\ri)e^f \\
    & = w\le(\frac{2}{r}-\frac{6M}{r^2}\ri)\le(\ellmode(\ellmode+1)-\frac{6M}{r}\ri)e^f \\
    & \quad + \half\le(\Ka w \le(\frac{2}{r}-\frac{6M}{r^2}\ri)^2 + \Ka w^2 \le(2-\frac{12M}{r}\ri)  +2\Ka^2w^2\le(\frac{2}{r}-\frac{6M}{r^2}\ri)+\Ka^3w^3\ri)e^f \\
    & = w\le(\frac{2}{r}-\frac{6M}{r^2}\ri)\le(\ellmode(\ellmode+1)-\frac{6M}{r}\ri)e^f + \half\Ka w^2 \le(2-\frac{12M}{r}\ri)e^f \\
                      & \quad  + \half \Ka w\le(\Ka w+\le(\frac{2}{r}-\frac{6M}{r^2}\ri) \ri)^2e^f.
  \end{aligned}
\end{align*}
Combining the above, we have the following expression for the potential in~\eqref{est:easyCarleman}
\begin{align}\label{eq:easypotential}
  \begin{aligned}
    \frac{\le(2(pV)'+\half\le(-pV' - \half p'''\ri)\ri)}{we^f} & = -\frac{3}{2}\le(\frac{2}{r}-\frac{6M}{r^2}\ri)\le(\ellmode(\ellmode+1) -\frac{6M}{r}\ri)  -2\Ka w\le(\ellmode(\ellmode+1) -\quar -\frac{9M}{2r}\ri) \\
    & \quad +12Mw + \quar \Ka \le(\Ka w+\le(\frac{2}{r}-\frac{6M}{r^2}\ri) \ri)^2 \\
    & =: \VV.
  \end{aligned}
\end{align}

\subsection{Proof of Lemma~\ref{lem:easypotentialsigns}}\label{sec:prooflem:easypotentialsigns}
Let us fix a $r_m\in(r_+,+\infty)$ (depending only on $M,k$) such that
\begin{align}\label{eq:prooflemrK2rc}
  \pr_r^2\le(w^2\ri) = 2(\pr_rw)^2+2w\pr_{r}^2w & \geq \le(\frac{1}{r_+^2}\le(\frac{2}{r_+}-\frac{6M}{r_+^2}\ri)\ri)^2, & \text{for all $r\leq r_m$.}
\end{align}
Provided that $\Ka^{-1}\ellmode^2\leq 1$, we have
\begin{align*}
  \le|\frac{\d^2 \VV}{\d r^2} - \quar\Ka^3\pr_r^2\le(w^2\ri) \ri|\les_{M,k} \Ka^2. 
\end{align*}
Hence, using~\eqref{eq:prooflemrK2rc}, and provided that $\Ka^{-1}\ellmode^2$ is sufficiently small depending only on $M,k$, we have
\begin{align*}
  \frac{\d^2 \VV}{\d r^2}(r) > \frac{\Ka^3}{8}\le(\frac{1}{r_+^2}\le(\frac{2}{r_+}-\frac{6M}{r_+^2}\ri)\ri)^2>0, 
\end{align*}
uniformly for all $r\in (r_+,r_{m})$. Hence $\VV$ is convex on $(r_+,r_m)$. Provided that $\Ka^{-1}\ellmode^2$ is sufficiently small depending only on $M,k$, we have
\begin{align*}
  \VV(r_+) & = -\frac{3}{2}\le(\frac{2}{r_+}-\frac{6M}{r_+^2}\ri)\le(\ellmode(\ellmode+1)-\frac{6M}{r_+}\ri) + \quar \Ka \le(\frac{2}{r_+}-\frac{6M}{r_+^2}\ri)^2 > 0.
\end{align*}
Moreover, provided that $\Ka^{-1}\ellmode^2$ is sufficiently small depending only on $M,k$, we also have 
\begin{align*}
  \VV(r) & > 0, & \text{for all $r\geq r_m$.}
\end{align*}
Recall that $r_{\Ka,0} = r_++\Ka^{-1}r_+^2$. We have the following Taylor expansion for $w$ 
\begin{align}\label{eq:Taylorwrrp}
  w(r) = \frac{1}{r_+^2}\le(\frac{2}{r_{+}}-\frac{6M}{r^2_{+}}\ri)(r_+-r) + O_{r\to r_+}\le((r-r_+)^2\ri).
\end{align}
Using~\eqref{eq:Taylorwrrp}, we have
\begin{align*}
  \le|\Ka w(r_{\Ka,0})+\le(\frac{2}{r_{\Ka,0}}-\frac{6M}{r^2_{\Ka,0}}\ri) \ri| & \les_{M,k} \Ka^{-1}.
\end{align*}
Plugging the above in~\eqref{eq:easypotential}, we have
\begin{align*}
  \bigg|\VV(r_{\Ka,0}) - \underbrace{\le(-\frac{3}{2}\le(\frac{2}{r_+}-\frac{6M}{r^2_+}\ri)\le(\ellmode(\ellmode+1)-\frac{6M}{r_+}\ri) +2\le(\frac{2}{r_+}-\frac{6M}{r^2_+}\ri)\le(\ellmode(\ellmode+1)-\quar-\frac{9M}{2r_+}\ri)\ri)}_{=\half\big(\frac{2}{r_+}-\frac{6M}{r^2_+}\big)\le(\ellmode(\ellmode+1)-\half\ri) < 0} \bigg| \les_{M,k} \Ka^{-1}\ellmode^2, 
\end{align*}
and $\VV(r_{\Ka,0})<0$ provided that $\Ka^{-1}\ellmode^2$ is sufficiently small depending only on $M,k$. Hence, there exists
\begin{align*}
  r_+ < r_{\Ka,-1} < r_{\Ka,0} < r_{\Ka,+1}
\end{align*}
such that~\eqref{eq:defrK0rK2insec} holds. Let $r_{\Ka,3} = r_++2\Ka^{-1}r_+^2$. Using~\eqref{eq:Taylorwrrp}, we have
\begin{align*}
  \le|\Ka w(r_{\Ka,3}) + 2\le(\frac{2}{r_+}-\frac{6M}{r_+^2}\ri)\ri| \les_{M,k}\Ka^{-1}
\end{align*}
Plugging the above in~\eqref{eq:easypotential}, we have
\begin{align*}
  \le|\VV(r_{\Ka,3}) - \quar\Ka\le(\frac{2}{r_+}-\frac{6M}{r_+^2}\ri)^2\ri| & \les_{M,k} \ellmode^2,
\end{align*}
hence, for $\Ka^{-1}\ellmode^2$ sufficiently small depending only on $M,k$, we have $\VV(r_{\Ka,3})>0$. Thus, $r_{\Ka,\pm1} < r_{\Ka,3}$ and $|r_{\Ka,\pm1}-r_+| \les_{M,k} \Ka^{-1}$. Hence $\Ka w(r_{\Ka,\pm1}) \les_{M,k}1$, which plugged in~\eqref{eq:easypotential}, using that $\VV(r_{\Ka,\pm1})=0$, implies that
\begin{align}\label{est:KawrKapm1proof}
  \le|\Ka w (r_{\Ka,\pm1}) + \le(\frac{2}{r_{\Ka,\pm1}}-\frac{6M}{r^2_{\Ka,\pm1}}\ri)\ri| & \les_{M,k} \Ka^{-1/2}\ellmode.
\end{align}
Re-plugging~\eqref{est:KawrKapm1proof} in~\eqref{eq:easypotential}, using that $\VV(r_{\Ka,\pm1})=0$, we deduce that 
\begin{align*}
  \le|\half \Ka \le(\Ka w(r_{\Ka,\pm1})+\le(\frac{2}{r_{\Ka,\pm1}}-\frac{6M}{r_{\Ka,\pm1}^2}\ri) \ri)^2 +\half \le(\frac{2}{r_+}-\frac{6M}{r^2_+}\ri)\le(\ellmode(\ellmode+1)-\half\ri) \ri| \les_{M,k} \Ka^{-1/2}\ellmode^3,
\end{align*}
provided that $\Ka^{-1/2}\ellmode$ is sufficiently small depending on $M,k$. Hence
\begin{align}\label{eq:KawarKa+-==}
  \le|\Ka w(r_{\Ka,\pm1}) +\le(\frac{2}{r_+}-\frac{6M}{r_+^2}\ri) \pm\Ka^{-1/2}\sqrt{\le(\frac{6M}{r_+^2}-\frac{2}{r_+}\ri)\le(\ellmode(\ellmode+1)-\half\ri)} \ri| & \les_{M,k} \Ka^{-1}\ellmode^2,
\end{align}
provided that $\Ka^{-1/2}\ellmode$ is sufficiently small depending on $M,k$. Since $w$ is increasing on $(r_+,3M)$, we have $\Ka w(r_{\Ka,-1}) < \Ka w (r_{\Ka,0}) < \Ka w(r_{\Ka,+1})$, hence, from~\eqref{eq:KawarKa+-==}, we deduce that
\begin{align*}
  \le|\Ka w(r_{\Ka,-1}) +\le(\frac{2}{r_+}-\frac{6M}{r_+^2}\ri) +\Ka^{-1/2}\sqrt{\le(\frac{6M}{r_+^2}-\frac{2}{r_+}\ri)\le(\ellmode(\ellmode+1)-\half\ri)} \ri| & \les_{M,k} \Ka^{-1}\ellmode^2,\\
  \le|\Ka w(r_{\Ka,+1}) +\le(\frac{2}{r_+}-\frac{6M}{r_+^2}\ri) -\Ka^{-1/2}\sqrt{\le(\frac{6M}{r_+^2}-\frac{2}{r_+}\ri)\le(\ellmode(\ellmode+1)-\half\ri)} \ri| & \les_{M,k} \Ka^{-1}\ellmode^2.
\end{align*}
Using~\eqref{eq:Taylorwrrp}, we infer~\eqref{eq:DLrK0rK2insec} from the above and this finishes the proof of Lemma~\ref{lem:easypotentialsigns}.

\subsection{Proof of Lemma~\ref{lem:easyboundspotential}}
From the proof of Lemma~\ref{lem:easypotentialsigns} in Section~\ref{sec:prooflem:easypotentialsigns}, we recall that there exists $r_m>r_+$ -- depending only on $M,k$ -- such that, for $\Ka^{-1}\ellmode^2$ sufficiently small, $\VV$ is convex on $(r_+,r_m)$, positive at $r_+$ and $r_m$, and has two simple zeros $r_{\Ka,-1}$ and $r_{\Ka,+1}$ in $(r_+,r_m)$. Hence, using that by definition $r_{\Ka,\pm2} = r_+ + \Ka^{-1}r_+^2 \pm \Ka^{-1}r_+^2\le(\Ka^{-1}\ellmode^2\ri)^{1/4}$ and the estimates~\eqref{eq:DLrK0rK2insec} from Lemma~\ref{lem:easypotentialsigns}, we have
\begin{itemize}
\item $\VV$ is decreasing on $(r_+,r_{\Ka,-2})$,
\item $\VV$ is increasing on $(r_{\Ka,+2},r_m)$.
\end{itemize}
Thus,
\begin{align}\label{est:BgeqBrKapm2}
  \begin{aligned}
    \VV & \geq \VV(r_{\Ka,-2}), && \text{for $r_+<r<r_{\Ka,-2}$,}\\
    \VV & \geq \VV(r_{\Ka,+2}), && \text{for $r_{\Ka,+2}<r<r_m$.}
  \end{aligned}
\end{align}
Using~\eqref{eq:Taylorwrrp} and the definition of $r_{\Ka,\pm2}$, we have
\begin{align}\label{est:DLKawrKapm2}
  \begin{aligned}
    \le|\Ka w(r_{\Ka,\pm2}) + \le(\frac{2}{r_+}-\frac{6M}{r_+^2}\ri)\le(1 \pm \le(\Ka^{-1}\ellmode^2\ri)^{1/4}\ri) \ri| & \les_{M,k} \Ka^{-1},
  \end{aligned}
\end{align}
provided that $\Ka^{-1}\ellmode^2$ is sufficiently small depending on $M,k$. Using~\eqref{est:DLKawrKapm2} in~\eqref{eq:easypotential}, we get
\begin{align*}
  \begin{aligned}
    \VV\le(r_{\Ka,\pm2}\ri) & = \underbrace{-\frac{3}{2}\le(\frac{2}{r_{\Ka,\pm2}}-\frac{6M}{r_{\Ka,\pm2}^2}\ri)\le(\ellmode(\ellmode+1)-\frac{6M}{r_{\Ka,\pm2}}\ri)}_{\les_{M,k}\ellmode^2}  \\
    & \quad +\underbrace{12Mw(r_{\Ka,\pm2})}_{\les_{M,k}\Ka^{-1}} - \underbrace{2\Ka w(r_{\Ka,\pm2})\le(\ellmode(\ellmode+1) -\quar -\frac{9M}{2r_{\Ka,\pm2}}\ri)}_{\les_{M,k}\ellmode^2} \\
    & \quad + \quar \Ka \Bigg(\underbrace{\Ka w(r_{\Ka,\pm2})+\le(\frac{2}{r_{\Ka,\pm2}}-\frac{6M}{r_{\Ka,\pm2}^2}\ri)}_{=\underbrace{\Ka w(r_{\Ka,\pm2})+\le(\frac{2}{r_+}-\frac{6M}{r_+^2}\ri)+\les_{M,k}\Ka^{-1}}_{=\pm \le(\frac{2}{r_+}-\frac{6M}{r_+^2}\ri)\le(\Ka^{-1}\ellmode^2\ri)^{1/4}+\les_{M,k}\Ka^{-1}}}\Bigg)^2.
  \end{aligned}
\end{align*}
Hence,
\begin{align*}
  \begin{aligned}
    \le|\Ka^{-1}\VV\le(r_{\Ka,\pm2}\ri) -\half\le(\frac{2}{r_+}-\frac{6M}{r_+^2}\ri)^2 \le(\Ka^{-1}\ellmode^2 \ri)^{1/2} \ri| \les_{M,k} \Ka^{-1}\ellmode^2,
  \end{aligned}
\end{align*}
from which we infer
\begin{align}\label{est:BrKapm2}
  \begin{aligned}
    \VV\le(r_{\Ka,\pm2}\ri) & \gtrsim_{M,k} \frac{\Ka}{4}\le(\frac{2}{r_+}-\frac{6M}{r_+^2}\ri)^2 \le(\Ka^{-1}\ellmode^2 \ri)^{1/2},
  \end{aligned}
\end{align}
provided that $\Ka^{-1}\ellmode^2$ is sufficiently small depending on $M,k$. Combining~\eqref{eq:easypotential},~\eqref{est:BgeqBrKapm2} and~\eqref{est:BrKapm2}, we get
\begin{align*}
  \begin{aligned}
    \le((pV)'-\half pV' - \quar p'''\ri) & = \VV we^f \gtrsim_{M,k} \Ka\le(\Ka^{-1}\ellmode^2\ri)^{1/2}we^f \gtrsim_{M,k} \ellmode^2we^f,
  \end{aligned}
\end{align*}
for $r_+<r<r_{\Ka,-2}$. This finishes the proof of~\eqref{est:Br+rKa-2insec}. Let $\eta>0$ a constant of $M,k$ to be determined. Let us define $r_{\Ka,\eta^{-1}} = r_++ \eta^{-1}\Ka^{-1}r_+^2$. Combining~\eqref{eq:easypotential},~\eqref{est:BgeqBrKapm2} and~\eqref{est:BrKapm2}, we get
\begin{align}\label{est:BwefrKap2rm}
  \begin{aligned}
    \le((pV)'-\half pV' - \quar p'''\ri) & = \VV we^f \gtrsim_{M,k} \Ka\le(\Ka^{-1}\ellmode^2\ri)^{1/2}we^f \gtrsim_{M,k}\le(\Ka^{-1}\ellmode^2\ri)^{1/2}(\Ka w)^2e^f,
  \end{aligned}
\end{align}
for $r_{\Ka,+2}<r<r_{\Ka,\eta^{-1}}$ (we have $r_{\Ka,\eta^{-1}}<r_m$ for $\Ka$ sufficiently large depending on $\eta$), and where, in the last inequality, we used that $\Ka w\les_{M,k}1$ for $r_{\Ka,+2}<r<r_{\Ka,\eta^{-1}}$.\footnote{This last bound $\Ka w\les_{M,k}1$ cannot hold in the full range $(r_{\Ka,+2},r_m)$ which is the reason why we introduced $r_{\Ka,\eta^{-1}}$.} For all $r_+<r<r_m$, we have
\begin{align}\label{eq:upperboundlogwr+rm}
  \le|\frac{\pr_rw}{w}\ri| & \leq \frac{c}{r-r_+},
\end{align}
where $c=c(M,k)>0$ is a constant depending only on $M,k$. On $r_{\Ka,\eta^{-1}} < r < r_m$, using that $\pr_rw = -\frac{1}{r^2}\le(\frac{2}{r}-\frac{6M}{r^2}\ri)$ and the bound~\eqref{eq:upperboundlogwr+rm}, we have
\begin{align*}
  \begin{aligned}
    \VV(r) & = -\frac{3}{2}\le(\frac{2}{r}-\frac{6M}{r^2}\ri)\le(\ellmode(\ellmode+1)-\frac{6M}{r}\ri) - 2\Ka w \le(\ellmode(\ellmode+1) - \quar - \frac{9M}{2r}\ri) + 12Mw+ \quar \Ka \le(\Ka w+\le(\frac{2}{r}-\frac{6M}{r^2}\ri) \ri)^2 \\
    & \geq -C\ellmode^2 - C\ellmode^2(\Ka w) + \quar \Ka (\Ka w)^2 \le(1 - \le|\frac{r^2\pr_rw}{\Ka w}\ri|\ri)^2 \\
    & \geq -C\ellmode^2 - C\ellmode^2(\Ka w) + \quar \Ka (\Ka w)^2 \le(1 - c \frac{\Ka^{-1} r^{2}}{r-r_+}\ri)^2 \\
    & \geq -C\ellmode^2 - C\ellmode^2(\Ka w) + \quar \Ka (\Ka w)^2 \le(1 - c\eta\frac{r_m^2}{r_+^2}\ri)^2,
  \end{aligned}
\end{align*}
where $C=C(M,k)>0$ is a constant depending only on $M,k$. Fixing $\eta=\half c^{-1}\frac{r_+^2}{r_m^2}$ (which only depends on $M,k$), we get
\begin{align*}
  \begin{aligned}
    \VV(r) & \geq -C\ellmode^2 - C\ellmode^2(\Ka w) + \frac{1}{16} \Ka (\Ka w)^2.
  \end{aligned}
\end{align*}
Using that $\Ka w \gtrsim_{M,k} 1$ for $r_{\Ka,\eta^{-1}} < r < r_m$, we deduce, for $\Ka^{-1}\ellmode^2$ sufficiently small depending only on $M,k$, that
\begin{align}\label{est:BrKavareprm}
  \VV(r) & \gtrsim_{M,k} \Ka (\Ka w)^2, && \text{for all $r_{\Ka,\eta^{-1}}<r<r_m$.}
\end{align}
For $r>r_m$, we easily have from~\eqref{eq:easypotential}
\begin{align}\label{est:Brm+infty}
  \begin{aligned}
    \VV(r) \gtrsim_{M,k} \Ka^3 \gtrsim_{M,k} (\Ka w)^2,
  \end{aligned}
\end{align}
provided that $\Ka^{-1}\ellmode^2$ is sufficiently small depending on $M,k$. Combining~\eqref{est:BwefrKap2rm},~\eqref{est:BrKavareprm} and~\eqref{est:Brm+infty} we obtain the desired~\eqref{est:BrKa2+inftyinsec}.\\

From a direct observation of the signs in~\eqref{eq:easypotential} and using that $\Ka w \les_{M,k}1$ in $(r_{\Ka,-1},r_{\Ka,+1})$, we have
\begin{align*}
  \VV(r) & \geq -\frac{3}{2}\le(\frac{2}{r}-\frac{6M}{r^2}\ri)\le(\ellmode(\ellmode+1)-\frac{6M}{r}\ri) - 2\Ka w \le(\ellmode(\ellmode+1) - \quar - \frac{9M}{2r}\ri) \\
         & \gtrsim_{M,k} -(1+\Ka w)\ellmode^2 \gtrsim_{M,k} -\ellmode^2,
\end{align*}
for all $r_{\Ka,-1}<r<r_{\Ka,+1}$, which, using~\eqref{eq:defrK0rK2insec}, proves~\eqref{est:BrKa+-rKa++insec} and finishes the proof of Lemma~\ref{lem:easyboundspotential}.

\section{Index of the energy norms}\label{sec:energynorms}
Let $\ellmode\geq 2$ and $|m|\leq\ellmode$ and $\Phi$ a smooth spacetime function. We recapitulate below the definitions of most of the energies which are used in the paper. All energies will be finite if $\Phi$ is regular at the horizon and regular at infinity as in Definition \ref{def:regular}.
\begin{align}\label{eq:defenergynorms}
  \begin{aligned}
    \mathrm{E}^w_{m\ellmode}[\Phi] & := \int_{-\infty}^{\frac{\pi}{2}}\le(w|\pr_{t^\ast}\Phi_{m\ellmode}|^2 + w|\pr_{r^\ast}\Phi_{m\ellmode}|^2 + w\ellmode^2|\Phi_{m\ellmode}|^2\ri)\,\d r^\ast,\\
    \mathring{\mathrm{E}}_{m\ellmode}[\Phi](t^\ast) & := \half \int_{-\infty}^{\frac{\pi}{2}} \le(\frac{\De_-\De_+}{\De_0^2}|\pr_{t^\ast}\Phi_{m\ellmode}|^2 + |\pr_{r^\ast}\Phi_{m\ellmode}|^2 + \frac{\De}{r^2}\le(\frac{\ellmode(\ellmode+1)}{r^2}-\frac{6M}{r^3}\ri)|\Phi_{m\ellmode}|^2\ri) \,\d r^\ast,\\
    \widetilde{\mathrm{E}}_{m\ellmode}[\Phi](t^\ast) & := \half \int_{-\infty}^{\frac{\pi}{2}} \le(\frac{\De_-\De_+}{\De_0^2}|\pr_{t^\ast}\Phi_{m\ellmode}|^2 + |\pr_{r^\ast}\Phi_{m\ellmode}|^2 + \frac{\De}{r^4}\ellmode(\ellmode+1)|\Phi_{m\ellmode}|^2\ri) \,\d r^\ast,\\
    \overline{\mathrm{E}}_{m\ellmode}[\Phi](t^\ast) & := \widetilde{\mathrm{E}}_{m\ellmode}[\Phi](t^\ast) + \int_{-\infty}^{\frac{\pi}{2}}w^{-1}|\pr_{r^\ast}\Phi_{m\ellmode}|^2\,\d r^\ast,\\
    \Einfty_{m\ellmode}[\Phi](t^\ast) & := \lim_{r\to+\infty}\frac{6M}{\ellmode(\ellmode+1)\le(\ellmode(\ellmode+1)-2\ri)} \le[|\pr_{t^\ast}\Phi_{m\ellmode}|^2 + k^2\frac{\ellmode(\ellmode+1)}{2}|\Phi_{m\ellmode}|^2\ri],\\   
    \widetilde{\mathrm{E}}^\HH_{m\ellmode}[\Phi](t^\ast_2;t^\ast_1) & := \lim_{r\to r_+}\int_{t^\ast_1}^{t^\ast_2} |\pr_{t^\ast}\Phi_{m\ellmode}|^2 \,\d t^{\ast},\\
    \overline{\mathrm{E}}^\HH_{m\ellmode}[\Phi](t^\ast_2;t^\ast_1) & := \widetilde{\mathrm{E}}^\HH_{m\ellmode}[\Phi](t^\ast_2;t^\ast_1) + \lim_{r\to r_+}\int_{t^\ast_1}^{t^\ast_2} \ellmode^2|\Phi_{m\ellmode}|^2 \,\d t^{\ast},\\
    \overline{\mathrm{E}}^\II_{m\ellmode}[\Phi](t^\ast_2;t^\ast_1) & := \lim_{r\to+\infty}\int_{t^\ast_1}^{t^\ast_2}\le(|\pr_{t^\ast}\Phi_{m\ellmode}|^2 + |\pr_{r^\ast}\Phi_{m\ellmode}|^2 + k^2\ellmode(\ellmode+1)|\Phi_{m\ellmode}|^2\ri)\,\d t^\ast.
  \end{aligned}
\end{align}
where $\pr_{t^\ast},\pr_{r^\ast}$ are the coordinate vectorfields of the $(t^\ast,r^\ast,\varth,\varphi)$ coordinate system and where $\Phi$ is any spin-$\pm2$-weighted complex function.
We recall from Sections \ref{sec:normsintro} and \ref{sec:RWbound} our consistent conventions
\begin{align} \label{mgc}
  \mathrm{E}[\Phi] = \sum_{\ellmode\geq 2}\sum_{|m|\leq\ellmode} \mathrm{E}_{m\ellmode}[\Phi] , \ \ \ 
\mathrm{E}_{m \ell} [\Phi] = \mathrm{E}[\Phi_{m \ell} e^{\pm i m \varphi} S_{m \ell}(\vartheta)] ,
\end{align}
which allow one to define the ``total energy" from the individual mode energy and \emph{vice versa}. In particular, the above defines the energies (\ref{eq:defenergynorms}) without the subscripts. Conversely, (\ref{mgc}) defines from the Teukolsky and Regge-Wheeler energies $\mathrm{E}^{\mathfrak{T}}[\alt]$ and $  \mathrm{E}^{\mathfrak{R}}[\Psi]$ introduced in (\ref{Tenergy}) and (\ref{Renergy}) the energies $\mathrm{E}_{m \ell}^{\mathfrak{T}}[\alt]$ and $  \mathrm{E}_{m \ell}^{\mathfrak{R}}[\Psi]$. We also recall that for any energy, we have its higher order version, obtained by commuting with $\LL$, $\pr_t$ and $w^{-1}\Lb$, see Section~\ref{sec:normsintro}.
\begin{remark}
From the definitions of the Sobolev norms in Section~\ref{sec:normsintro}, we have for all $t^\ast\in\RRR$ the equivalences
  \begin{align*}
    \overline{\mathrm{E}}[\Phi](t^\ast) & \simeq_{M,k} \norm{\pr_{t^\ast}\Phi}^2_{L^2(\Si_{t^\ast})} + \norm{\Phi}^2_{H^1(\Si_{t^\ast})}, \\
    \mathrm{E}^{\mathfrak{R}}[\Psi](t^\ast)    &\simeq_{M,k} \overline{\mathrm{E}}[\Psi](t^\ast) + \overset{{\scriptscriptstyle\infty}}{\mathrm{E}}[\Psi](t^\ast) ,
\\
    \mathrm{E}^{\mathfrak{T}}[\alt](t^\ast) &\simeq_{M,k} \overline{\mathrm{E}}[\alt^{[+2]}](t^\ast) + \overline{\mathrm{E}}[w^{-2}\alt^{[-2]}](t^\ast).  
\end{align*}
\end{remark}
\bibliographystyle{graf_GR_alpha}
\bibliography{graf_GR}
\end{document}